\newcommand{\var}{\mathrm{var}}
\newcommand{\sol}{\operatorname{sol}}
\newcommand{\triv}{\mathrm{triv}}
\newcommand{\coclone}[1]{\langle #1 \rangle}
\newtheorem{proposition}[theorem]{Proposition}
\newtheorem*{claim*}{Claim}
\newtheorem*{fact*}{Fact}
\newtheorem{claim}[theorem]{Claim}
\theoremstyle{definition}
\newtheorem{definition}[theorem]{Definition}
\newtheorem{example}[theorem]{Example}
\newtheorem{observation}[theorem]{Observation}
\title{A characterization of efficiently compilable constraint languages}
\author[1]{Christoph Berkholz\thanks{Funded by the Deutsche Forschungsgemeinschaft (DFG, German Research Foundation) - project number 414325841.}}
\author[2]{Stefan Mengel\thanks{Partially supported by the ANR project EQUUS ANR-19-CE48-0019.}} 
\author[1]{Hermann Wilhelm\thanks{Funded by the Deutsche Forschungsgemeinschaft (DFG, German Research Foundation) - project number 414325841.}}
\affil[1]{Technische Universit{\"a}t Ilmenau, Ilmenau, Germany}
\affil[2]{Univ.~Artois, CNRS, Centre de Recherche en Informatique de Lens (CRIL)}
\begin{document}
\maketitle

\begin{abstract}
A central task in \emph{knowledge compilation} is to compile a
CNF-SAT instance into a succinct representation
format that allows efficient operations such as testing satisfiability, counting, or enumerating all
solutions.
Useful representation formats studied in this area range from ordered
binary decision diagrams (OBDDs) to circuits in decomposable negation normal
form (DNNFs).

While it is known that there exist CNF formulas that require
exponential size representations, the situation is less well studied
for other types of constraints than Boolean disjunctive clauses.
The \emph{constraint satisfaction problem} (CSP) is a powerful framework that generalizes
CNF-SAT by allowing arbitrary sets of constraints over any finite
domain.
The main goal of our work is to understand for which type of
constraints (also called the \emph{constraint language}) it is possible to
efficiently compute representations of polynomial size.
We answer this question completely and prove two tight characterizations of
efficiently compilable constraint languages, depending on whether
target format is structured.

We first identify the combinatorial property of ``strong blockwise
  decomposability'' and show that if a constraint language has this
property, we can compute DNNF representations of linear size. For all other
constraint languages we construct families of CSP-instances that provably
require DNNFs of exponential
size. For a subclass of ``strong \emph{uniformly} blockwise
  decomposable'' constraint languages we obtain a similar dichotomy for
  \emph{structured} DNNFs.
In fact, strong (uniform) blockwise decomposability even allows efficient
compilation into multi-valued analogs of OBDDs and FBDDs,
respectively. Thus, we get complete characterizations for all knowledge
compilation classes between O(B)DDs and DNNFs.
\end{abstract}

\newpage

%

%
%
%
%

\section{Introduction}

One of the main aims of \emph{knowledge compilation} is to encode
solution sets of computational problems into a succinct but usable
form~\cite{DarwicheM02}. Typical target formats for this compilation
process are different forms of decision
diagrams~\cite{Wegener00} or restricted classes of Boolean
circuits. One of the most general representation formats are circuits in
\emph{decomposable negation normal form} (DNNF), which have been
introduced in \cite{Darwiche01} as a compilation target for Boolean functions. 
Related notions, which also rely on the central decomposability
property, have been independently considered in databases~\cite{OlteanuZ15,Olteanu16} and
circuit complexity~\cite{RazSY08,AlonKV20}. Besides these, DNNF circuits and related
compilation classes have also been proven useful in areas like probabilistic inference~\cite{BroeckS17}, constraint satisfaction~\cite{KoricheLMT15, BerkholzV23, AmilhastreFNP14, MateescuD06}, MSO evaluation~\cite{AmarilliBJM17}, QBF
solving~\cite{CapelliM19}, to
name a few. 
The DNNF representation format has become a popular
data structure because it has a particularly good balance
between generality and usefulness~\cite{DarwicheM02}.
There has also been a large amount of practical work on compiling solution
sets into DNNF or its fragments, see
e.g.~\cite{LagniezM17,Darwiche04,MuiseMBH12,ChoiD13,OztokD15,KoricheLMT15}. In
all these works, it assumed that the solutions to be compiled are
given as a system of constraints, often as a set of disjunctive
Boolean clauses, i.\,e., in conjunctive normal form (CNF).

In this setting, however, strong lower bounds are known: it was shown that
there are CNF-formulas whose representation as DNNF requires
exponential
size~\cite{BovaCMS14,BovaCMS16,Capelli17,AmarilliCMS20}. The
constraints out of which the hard instances
in~\cite{BovaCMS14,AmarilliCMS20,Capelli17} are constructed are very
easy---they are only monotone clauses of two variables. Faced with
these negative results, it is natural to ask if there are any classes
of constraints that guarantee efficient compilability into DNNF.

We answer this question completely and prove a tight characterization for every constraint language $\Gamma$.
We first examine the combinatorial property of \emph{strong blockwise decomposability} and show that if a constraint language $\Gamma$ has this
property, any system of constraints over $\Gamma$ can be compiled into
a DNNF representation of linear size within polynomial time.
Otherwise, there are systems of
constraints that require exponential size DNNF representations. In the
tractable case, one can even compile to the restricted fragment of
\emph{free decision diagrams} (FDD) that are in general known to be
exponentially less succinct than DNNF~\cite{Wegener00,DarwicheM02}.

We also consider the important special case of so-called structured
DNNF~\cite{PipatsrisawatD08} which are a generalization of the
well-known ordered binary decision diagrams~\cite{Bryant86}. We show
that there is a restriction of strong blockwise decomposability that
determines if systems of constraints over a set $\Gamma$ can be
compiled into structured DNNF in polynomial time. In the tractable case, we can in fact
again compile into a restricted fragment, this time ordered decision
diagrams (ODD).

Furthermore, we separate both notions of constraint languages admitting
structured and only unstructured representations and thus give a complexity picture of
the tractability landscape. We also show that it is decidable, whether
a given constraint language strongly (uniformly) blockwise
decomposable (a question left open in the conference version \cite{DBLP:conf/stacs/BerkholzMW24}).
Let us stress that all our lower bounds provide unconditional size lower
bounds on (structured) DNNFs and thus do not depend on unproven complexity assumptions.

\subparagraph*{Further related work.}  Our work is part of a long line
of work in constraint satisfaction, where the goal is to precisely
characterize those constraint languages that are
``tractable''. Starting with the groundbreaking work of
Schaefer~\cite{Schaefer78} which showed a dichotomy for deciding
consistency of systems of \emph{Boolean} constraints, there has been much
work culminating in the dichotomy for general constraint
languages~\cite{Bulatov17,Zhuk17}. Beyond decision, there are
dichotomies for counting~\cite{CreignouH96,Bulatov13,DyerR13},
enumeration~\cite{CreignouOS11},
optimization~\cite{Creignou95,KhannaSTW00,CreignouKS2001} and
in the context of valued CSPs \cite{ThapperZ16,DBLP:journals/jacm/CaiC17}. Note that the hardness part (showing that certain
constraint languages do not admit efficient algorithms) always relies
on some complexity theoretic assumption.

The complexity of constraint satisfaction has also been studied ``from the
other side'', where the constraint language is unrestricted and the
structure of the \emph{constraint network} (i.\,e. how the constraints
are arranged) has been analyzed. In this setting, characterizations of
(bounded arity) constraint networks have been obtained for the
deciding the existence of solutions \cite{Grohe2007} and counting solutions \cite{DalmauJ04}
(again, under some complexity theoretic assumption), while for 
enumerating solutions only partial results exist
\cite{DBLP:journals/jcss/BulatovDGM12}. Tractability classifications of the constraint network have also been
obtained in the context of valued CSPs \cite{CarbonnelRZ22}. 
Recently, an unconditional characterization of (bounded arity) constraint networks
that allow efficient compilation into DNNFs has been proven \cite{BerkholzV23}. 

Out of these works, our results are most closely related
to the counting dichotomy of~\cite{DyerR13}. The tractable classes we obtain in our dichotomies are also
tractable for counting, and in fact we use the
known counting algorithm in our compilation algorithms
as a subroutine. Also, the tractability criterion has a similar flavor, but
wherever in~\cite{DyerR13} the count of elements in certain relations
is important, for our setting one actually has to understand their
structure and how exactly they decompose. That said, while the
tractability criterion is related to the one in~\cite{DyerR13}, the
techniques to show our results are very
different. In particular, where~\cite{DyerR13} use reductions from
\#P-complete problems to show hardness, we
make an explicit construction and then use communication complexity to
get strong, unconditional lower bounds. Also, our two algorithms for
construction ODDs and FDDs are quite different to the counting algorithm.

\subparagraph*{Outline of the paper.}  After some preliminaries in
Section~\ref{sct:preliminaries}, we introduce the decomposability
notions that we need to formulate our results in
Section~\ref{sct:decomposability}. We also give the formal formulation
of our main results there. In Section~\ref{sct:properties}, we show
some of the properties of the constraints we defined in the section
before. These properties will be useful throughout the rest of the
paper. In Section~\ref{sct:algorithms}, we present the algorithms for
the positive cases of our dichotomies, then, in
Section~\ref{sct:lower}, we show the corresponding lower bounds. We
specialize our results to the case of Boolean relations in
Section~\ref{sct:boolean}, showing that the tractable cases are
essentially only equalities and disequalities. The decidability of our dichotomy criterion is treated in Section~\ref{sct:decidability}. Finally, we conclude in
Section~\ref{sct:conclusion}.

\section{Preliminaries}\label{sct:preliminaries}

\subparagraph*{Constraints.} Throughout the paper we let $D$ be a finite \emph{domain} and $X$ a
finite set of \emph{variables} that can take values over
$D$. We typically denote variables by $u,v,w,x,y,z$ and
domain elements by $a,b,c,d$. A $k$-tuple $(x_1,\ldots,x_k)\in X^k$ of
variables is also denoted by
$\vec{x} = x_1x_2\cdots x_k$ and we let
$\tilde{x}:=\{x_1,\ldots,x_k\}$ be the set of variables occurring in $\vec{x}$. We use the same notation for tuples of domain elements. 
A \emph{$k$-ary relation} $R$ (over $D$) is a set $R\subseteq
D^k$. A \emph{$k$-ary constraint} (over $X$ and $D$), denoted by $R(\vec{x})$, consists
of a $k$-tuple of (not necessarily distinct) variables
$\vec{x}\in X^k$ and a $k$-ary relation $R\subseteq D^k$.
For a constraint $R(\vec{x})$ we call $\tilde{x}$ the \emph{scope} of
the constraint and $R$ the \emph{constraint relation}.
The \emph{solution
set} of a constraint $R(x_1,\ldots,x_k)$ is defined by
$\operatorname{sol}(R(x_1,\ldots,x_k)) := \{\beta \mid \beta\colon
\{x_1,\ldots,x_k\}\to D;\, (\beta(x_1),\ldots,\beta(x_k))\in R\}$.
Since the order of the columns in a constraint relation is not of great
importance for us, we often identify constraints with their solution
set and
treat sets $\mathcal S$ of mappings from $Y\subseteq X$ to $D$ as a constraint with
scope $Y$ and solution set $\mathcal S$. Moreover, for readability we will often write,
e.\,g., 
``a constraint $R(x,y,\vec{w})$, such that $x$ and $y$ \ldots'' when
we do not strictly require $x$ and $y$ to be at the first and second
position and actually refer to any constraint having $x$ and $y$ in its scope.

\subparagraph*{CSP-instance and constraint language.} A \emph{constraint satisfaction instance} $I=(X,D,C)$ consists of a
finite set of variables $X$, a finite domain $D$ and a finite set $C$
of constraints. The solution set $\operatorname{sol}(I):= \{\alpha
\mid \alpha\colon X\to D;\; \alpha|_{\tilde{x}} \in
\operatorname{sol}(R(\vec{x}))\text{ for all } R(\vec{x})\in C\}$ of an
instance $I$ is the set of mappings from $X$ to $D$ that satisfies all
constraints.
A \emph{constraint language} $\Gamma$ is a finite set of relations (over some
finite domain $D$). A constraint satisfaction instance $I$ is
a $\operatorname{CSP}(\Gamma)$-instance if every constraint relation
$R$ occurring in $I$ is contained in the constraint language $\Gamma$.

\subparagraph*{Conjunction and projection.} A \emph{conjunction} $R(\vec{u}) = S(\vec{v})\land
T(\vec{w})$ of two constraints $S(\vec{v})$ and $T(\vec{w})$ defines a
constraint $R(\vec{u})$ over $\vec{u}=u_1\cdots u_\ell$ with scope
$\tilde{u}=\tilde{v} \cup \tilde{w}$ and
constraint relation $R := \{(\beta(u_1),\ldots,\beta(u_\ell)) \mid
\beta\colon \tilde{u}\to D;\; \beta|_{\tilde v}\in
\operatorname{sol}(S(\vec{v})) \text{ and } \beta|_{\tilde w}\in
\operatorname{sol}(T(\vec{w}))\}$.\footnote{Again, we may just write ``$S(\vec{v})\land
T(\vec{w})$'' instead of ``$R(\vec{u}) = S(\vec{v})\land
T(\vec{w})$'' if the order or multiple occurrences of variables in
$\vec{u}$ does not matter (the solution set $\operatorname{sol}(R(\vec{u}))$ is
always the same).}
If $\tilde{v}\cap\tilde{w}=\emptyset$, then a conjunction is called a
\emph{(Cartesian) product} and written $R(\vec{u}) = S(\vec{v})\times
        T(\vec{w})$. 
For a constraint $R(\vec{x})$ and a set $Y\subseteq \tilde x$ we let
the \emph{projection} $\pi_Y R(\vec{x})$ be the constraint $S(\vec{y})$ with scope
$\tilde y = Y$ obtained by projecting the constraint relation to corresponding
coordinates, i.\,e., for $\vec{x}=x_1\cdots x_k$ let $1\leq i_1<\cdots
< i_\ell\leq k$ s.\,t. $\{i_1,\ldots,i_\ell\} = \{i\mid
x_i\in Y\}$, $\vec{y}:=x_{i_1}\cdots x_{i_\ell}$, and $S :=
\{(a_{i_1},\ldots,a_{i_\ell})\mid (a_1,\ldots,a_k)\in R\}$.

\subparagraph*{Formulas and pp-definability.} A \emph{$\Gamma$-formula} $F(\vec x)=\bigwedge_{i} S_i(\vec{x}_i)$ is a conjunction over several
constraints whose constraint relations are in $\Gamma$. To
  avoid notational clutter, we use ``$F(\vec x)$'' for both, the conjunction
as syntactic expression \emph{and} the constraint defined by this formula. Note that any
CSP($\Gamma$)-instance $I=(X,D,C)$ corresponds to a $\Gamma$-formula
$F(\vec{x}) = \bigwedge_{R_i(\vec{x}_i)\in C}R_i(\vec{x}_i)$ with
$\tilde x = X$ and
$\operatorname{sol}(F(\vec{x}))=\operatorname{sol}(I)$.
Thus, we can treat CSP($\Gamma$)-instances as $\Gamma$-formulas and
vice versa. 
A \emph{primitive positive (pp)} formula over $\Gamma$ is an
expression $F(\vec{y}) = \pi_{\tilde y}(\bigwedge_{i}S_i(\vec{x}_i)
\land \bigwedge_{(j,k)}x_j=x_k)$ consisting of a projection applied to
a $\Gamma\cup \big\{\{(a,a)\mid a\in D\}\big\}$-formula, which uses
constraint relations from $\Gamma$ and the equality constraint.
Note that conjunctions of pp-formulas can be written
as pp-formula by putting one projection at the front and renaming
variables. A constraint is \emph{pp-definable} over $\Gamma$, if it
can be defined by a pp-formula over $\Gamma$. 
Moreover, a relation $R\subseteq D^k$ is \emph{pp-definable} over
$\Gamma$ if it is the constraint relation of a pp-definable constraint
$R(x_1,\ldots,x_k)$ for pairwise distinct
$x_1,\ldots,x_k$. The co-clone $\langle\Gamma\rangle$ of $\Gamma$ is the set
of all pp-definable relations over $\Gamma$.

\subparagraph*{Selection.} It is often helpful to use additional unary relations $S\subseteq D$
and to write
$U_S(x)$ for the constraint $S(x)$ with scope $\{x\}$ and constraint
relation $S$. We use $U_a(x)$ as an abbreviation for $U_{\{a\}}(x)$.
For a constraint $R(\vec{u})$, a variable $x\in \tilde u$, and a
domain element $a\in D$ we define
the \emph{selection} $R(\vec{u})|_{x=a}$ be the constraint obtained by
forcing $x$ to take value $a$, that is, $R(\vec{u})|_{x=a} :=
R(\vec{u}) \land U_{a}(x)$. Similarly, for a set $S\subseteq D$ we
write $R(\vec{u})|_{x\in S} := R(\vec{u}) \land U_{S}(x)$.

\subparagraph*{DNNF.}
We will be interested in circuits representing assignments of variables to a finite set of values. To this end, we introduce a multi-valued variant of DNNF; we remark that usually DNNF are only defined over the Boolean domain $\{0,1\}$~\cite{Darwiche01}, but the extension we make here is straightforward and restricted variants have been studied e.g.~in~\cite{KoricheLMT15,AmilhastreFNP14,MateescuDM08,FargierM06} under different names.

Let $X$ be a set of variables and $D$ be a finite set of values. A circuit over the operations $\times$ and $\cup$ is a directed acyclic graph with a single sink, called output gate, and whose inner nodes, called gates, are labeled with $\times$ or $\cup$. The source-nodes, called inputs of the circuit, are labeled by expressions of the form $x\mapsto a$ where $x\in X$ and $a\in D$. We say that $\times$-gate $v$ is \emph{decomposable} if there are no two inputs labeled with $x\mapsto a$ and $x\mapsto b$ (with possibly $a=b$) that have a path to $v$ going through different children of $v$. A DNNF is a circuit in which all $\times$-gates are decomposable. Note that in a DNNF, for every $\times$-gate $v$, every variable $x\in X$ can only appear below one child of $v$.

For every DNNF $O$ we define the set $S(O)$ of assignments \emph{captured} by $O$ inductively:
\begin{itemize}
        \item The set captured by an input $v$ with label $x\mapsto d$ is the single element set $S(v)=\{x\mapsto d\}$.
        \item For a $\cup$-gate with children $v_1, v_2$,  we set $S(v):= S(v_1)\cup S(v_2)$.
        \item For a $\times$-gate with children $v_1, v_2$,  we set $S(v):= S(v_1)\times S(v_2)$ where for two assignments $a: X_1\rightarrow D$ and $b:X_2\rightarrow D$, we interpret $(a,b)$ as the joined assignment $c:X_1\cup X_2\rightarrow D$ with 
     $           c(x):= \begin{cases} a(x), & \text{ if } x\in X_1\\ b(x), & \text{ if } x\in X_2\end{cases}.$
\end{itemize}
We define $S(O):= S(v_o)$ where $v_o$ is the output gate of $O$.
Note that $S(O)$ is well-defined in the case of $\times$, since $X_1$ and $X_2$ are disjoint because of decomposability. 
Finally, we say that $O$ accepts an assignment $\alpha$ to $X$ if there is an assignment $\beta\in S(O)$ such that $\alpha$ is an extension of $\beta$ to all variables in $X$. We say that $O$ represents a constraint $C(\vec x)$ if $O$ accepts exactly the assignments in $\sol(C(\vec x))$. 

A v-tree of a variable set $X$ is a rooted binary tree whose leaves are in bijection to $X$. For a v-tree $T$ of $X$ and a node $t$ of $T$ we define $\var(t)$ to be the variables in $X$ that appear as labels in the subtree of $T$ rooted in $t$. We say that a DNNF $O$ over $X$ is \emph{structured by $T$} if for every sub-representation $O'$ of $O$ there is a node $t$ in $T$ such that $O'$ is exactly over the variables $\var(t)$~\cite{PipatsrisawatD08}. We say that $O$ is \emph{structured} if there is a v-tree $T$ of $X$ such that $O$ is structured by $T$.

In the constructions in the remainder of this section, it will be convenient to have a slightly generalized variant of DNNF: besides the inputs of the form $\{x\mapsto a\}$, we also allow inputs $\emptyset$ and $\{\varepsilon\}$. To extend the semantics of DNNF to these new inputs, we define $S(\emptyset) = \emptyset$ and $S(\varepsilon):= \{\varepsilon\}$ where $\varepsilon$ is the unique assignment to $D$ with empty variable scope. We observe that this slight extension does not increase the expressivity of the model.
    
\begin{lemma}\label{lem:emptysetgate}
    Every DNNF with inputs $\emptyset$ and $\{\epsilon\}$ can be turned into a DNNF without such inputs in polynomial time.
\end{lemma}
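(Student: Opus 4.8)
The plan is to proceed by a bottom-up rewriting of the DNNF, eliminating the two new kinds of inputs by propagating them through the circuit. I would argue that, semantically, an $\emptyset$-input behaves like a ``false'' constant and a $\{\varepsilon\}$-input behaves like a ``true'' constant (the empty assignment), and that the operations $\cup$ and $\times$ interact with these constants exactly like disjunction and conjunction do with $0$ and $1$: for a $\cup$-gate $v$ with children $v_1,v_2$ we have $S(v_1)\cup \emptyset = S(v_1)$, and for a $\times$-gate $S(v_1)\times\{\varepsilon\} = S(v_1)$ (since joining with the empty-scope assignment changes nothing), while $S(v_1)\times\emptyset=\emptyset$ and $S(v_1)\cup\{\varepsilon\}$ adds the empty assignment to the captured set.

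Concretely, I would classify every gate $v$ of the given DNNF $O$ into three types by a single bottom-up pass: (i) gates with $S(v)=\emptyset$, (ii) gates with $S(v)=\{\varepsilon\}$, and (iii) all remaining gates. This classification is computable in linear time from the inputs upward using the obvious rules ($\times$ of anything with an $\emptyset$-child is type (i); $\times$ of two type-(ii) children is type (ii); $\cup$ is type (i) iff both children are type (i); $\cup$ is type (ii) iff both children are type (ii), etc.). Next I would handle the output gate separately: if $v_o$ is type (i) the whole DNNF captures $\emptyset$ and I output a fixed tiny gadget for it — but note we cannot in general represent $\emptyset$ without such an input, so strictly I keep one $\emptyset$-input only in this degenerate case, or (cleaner) I observe the statement is about ``turning into a DNNF without such inputs'' and treat the empty-constraint case as trivially already in the desired form up to this single unavoidable exception; similarly if $v_o$ is type (ii), output the single $\{\varepsilon\}$-gadget. (I would add a remark clarifying this boundary case, or restrict the claim to DNNF not capturing $\emptyset$ or $\{\varepsilon\}$ globally, which is the only sensible reading since those two sets provably require the new inputs.) Assuming $v_o$ is type (iii), I would then delete all type-(i) gates; every type-(iii) $\cup$-gate that loses a child becomes a pass-through of its surviving child (contract it), and any $\times$-gate with a deleted child would itself have been type (i), contradiction, so no $\times$-gate loses a child. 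Likewise I would delete all type-(ii) gates: a $\times$-gate losing a type-(ii) child becomes a pass-through of the other child; a $\cup$-gate that had a type-(ii) child $v_i$ replaces that edge — here one must be slightly careful, since $S(v_1)\cup\{\varepsilon\}\neq S(v_1)$ in general. However, I claim that in a DNNF representing a genuine constraint over $X$ (the standard semantics ``accepts $\alpha$ iff some captured $\beta$ extends to $\alpha$''), adding $\varepsilon$ to an intermediate captured set never changes the set of accepted assignments when that subcircuit sits under a $\times$-gate with a nonempty sibling scope, and at the top level we excluded the degenerate case; a short argument via the ``every variable appears below at most one child of a $\times$-gate'' decomposability property shows the relevant subcircuits always have their variables ``covered'' by a sibling, so the $\varepsilon$ can be dropped. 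After these contractions, check that decomposability is preserved — it is, since deleting edges/gates and contracting single-child gates never creates a new path between two $x$-labeled inputs through distinct children of a surviving $\times$-gate.

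The main obstacle I anticipate is precisely the interaction of a stray $\{\varepsilon\}$-input with $\cup$-gates, because there the constant is genuinely not absorbed at the level of captured sets; the resolution is to push the analysis up to the semantics of \emph{acceptance} rather than of \emph{captured sets}, and to exploit decomposability to show the extra empty assignment is harmless except at the root. A secondary, more pedestrian point is bookkeeping for $\cup$-gates with $\emptyset$-children where both children are type (i) but the gate itself is not the output — those are simply deleted as part of the type-(i) sweep. Everything else is routine linear-time graph surgery, so the total running time is clearly polynomial (indeed linear) in the size of $O$.
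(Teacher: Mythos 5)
Your overall architecture (a bottom-up sweep classifying gates as capturing $\emptyset$, $\{\varepsilon\}$, or something else, and propagating the two constants) is close to the paper's, and your treatment of $\emptyset$-inputs and of $\times$-gates with an $\{\varepsilon\}$-child is correct and matches the paper. The genuine error is in the remaining case, a $\cup$-gate $v$ with a normal child $v_1$ and an $\{\varepsilon\}$-child: you delete the $\{\varepsilon\}$-child and keep $v_1$, arguing that the lost $\varepsilon$ is harmless because the relevant variables are ``covered by a sibling''. Decomposability guarantees exactly the opposite: the variables occurring below $v_1$ are disjoint from those below any sibling of an ancestor $\times$-gate, so nothing else in the proof tree constrains them, and dropping the $\varepsilon$-branch genuinely adds constraints. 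Concretely, take $O=\bigl(\{\varepsilon\}\cup\{x\mapsto 0\}\bigr)\times\{y\mapsto 1\}$: it captures both $y\mapsto 1$ and $(x\mapsto 0,\,y\mapsto 1)$ and hence accepts every $\alpha$ with $\alpha(y)=1$, whereas your transformed circuit $\{x\mapsto 0\}\times\{y\mapsto 1\}$ accepts only those $\alpha$ with $\alpha(x)=0$ and $\alpha(y)=1$. The root here is of your type (iii), so excluding the degenerate top-level cases does not rescue the argument.

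The fix is to collapse in the other direction, as the paper does: replace the whole $\cup$-gate $v$ by its $\{\varepsilon\}$-child in all parents of $v$, discarding $v_1$. This shrinks the captured set, but every captured assignment $a$ of the root that is lost decomposes as $(b,c)$ with $c\in S(v_1)$, and the sub-assignment $(b,\varepsilon)=b$ is still captured; since acceptance only asks for an extension of \emph{some} captured assignment and any accepted $\alpha$ satisfies $\alpha\supseteq a\supseteq b$, the accepted set is unchanged. In short, acceptance is preserved when a captured assignment is replaced by a \emph{restriction} of itself, not by an \emph{extension} of itself, and your rule does the latter. Your side remark about the two unavoidable degenerate outputs ($S(O)=\emptyset$ or $S(O)=\{\varepsilon\}$ at the root) is a fair presentational point that the paper also treats only informally, but it is orthogonal to the error above.
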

\begin{proof}
    We first show this for $\emptyset$-gates. So let $O$ be a DNNF and $v$ an $\emptyset$-input in $O$. There are two cases: if the parent of $v$ is an $\cup$-gate $u$, then we delete $v$ and substitute $u$ by its other child. By definition of the semantics, this does not change the assignments captured in any of the gates still in the DNNF after the operation. If $u$ is a $\times$-gate, then by definition $S(u) = \emptyset$. But then we can simplify the circuit by substituting $u$ by $v$ and iterating the construction.
    
    To eliminate the $\{\epsilon\}$-inputs in a DNNF $O$ we apply the following rules in any order while possible:
    \begin{enumerate}
        \item \label{rule1}If there is a $\cup$-gate $v$ with an input $v'$ with label $\{\epsilon\}$, we substitute $v$ by $v'$ in all its parents.
        \item \label{rule2} If there is a $\times$-gate with children $v_1, v_2$ such that $v_2$ is labeled with $\{\epsilon\}$, we substitute $v$ by $v_1$ in all parents of $v$.
    \end{enumerate}
    
    Let $O'$ a circuit we get when we cannot apply any of the rules anymore.
    Clearly applying either of the rules does not break the decomposability condition, so $O'$ is still a DNNF. We claim that it also accepts the same assignments. To see this, consider the DNNF $O''$ we get after applying a single step of the rules. For Rule~\ref{rule2}, it is immediate that the assignments accepted by $O''$ are the same as for $O$, since for every assignment $a$ we have that $(a, \epsilon)$ is identical to $a$ and thus $S(v)= S(v_1)$, so $S(O) = S(O'')$. For Rule~\ref{rule1} the situation is slightly less clear since $S(O)$ and $S(O'')$ might differ. Since $S(O'')\subseteq S(O)$ by construction, we have that every assignment accepted by $O''$ is also accepted by~$O$. So consider the other direction, i.e., an assignment $\alpha$ accepted by $O$. Then~$\alpha$ is an extension of some assignment $a\in S(v_o)$ in $O$, so captured by the root of $O$. If $a$ is also captured by the root of $O''$, we are done, so let us assume that $a$ is not captured there. Then $a$ must decompose as $(b,c)$ where $c\in S(v)$ since this is the only place we changed in Rule~\ref{rule1}. But then $\alpha$ is also an extension of $(b, \epsilon) = b$ which is captured by $v_o$ both in $O$ and $O''$, so $\alpha$ still gets accepted in~$O''$.%
    So we have shown that $O''$ accepts the same assignments of $O$ and thus by induction the same is true for $O'$.
    
    To finish the proof, assume that in $O'$ there are no gates that are not connected to the output gate $v_o$ (deleting such gates obviously does not change the accepted assignments). Then the only case in which any gates with label $\{\epsilon\}$ can occur if this is the case of $v_o$ itself. But in this trivial case, $O$ and thus $O'$ accept all assignments so we can substitute $O$ by an empty circuit. In all other cases, we have eliminated all $\{\epsilon\}$-gates, as desired.
\end{proof}
    
We will use the following basic results on DNNF which correspond to projection and selection on constraints.

\begin{lemma}\label{lem:restrictDNNF}
    Let $C(\vec u)$ be a constraint for which there exists a \textup{DNNF} of size $s$, $x \in \tilde u$ and $A \subseteq D$. Then there exists a \textup{DNNF} for the selection $C(\vec u)|_{x\in A}$ of size at most $s$. 
\end{lemma}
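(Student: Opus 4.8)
The plan is to begin with a DNNF $O$ of size $s$ representing $C(\vec u)$ and to forbid, at the input level, those values of $x$ outside $A$. First I would replace every input gate of $O$ labelled $x\mapsto a$ with $a\notin A$ by the input $\emptyset$; call the result $O_1$. Since this only relabels inputs, $O_1$ lives on the same gate set, has size $s$, and is still decomposable. Then I would invoke Lemma~\ref{lem:emptysetgate} to convert $O_1$ into a genuine DNNF $O'$: the transformation used there only ever deletes gates or reroutes edges and never creates a new gate, so $|O'|\le|O_1|=s$, which is the bound claimed.

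For correctness I would prove, by induction over the gates of $O$ taken in topological order, that for every gate $v$ of $O$, with image $v_1$ in $O_1$, one has $S(v_1)=\{\beta\in S(v)\mid x\notin\tilde\beta \text{ or } \beta(x)\in A\}$. The base cases follow directly from the relabelling: an input $x\mapsto a$ with $a\in A$ is kept and already captures an assignment with $\beta(x)\in A$; one with $a\notin A$ becomes $\emptyset$ and its captured assignment is dropped; every other input (including the auxiliary $\emptyset$ and $\{\varepsilon\}$) is untouched and captures an assignment that does not mention $x$. At a $\cup$-gate the defining predicate distributes over the union. At a $\times$-gate, decomposability guarantees that $x$ appears below at most one child, so each $\beta\in S(v)$ factors as $(\beta_1,\beta_2)$ with $x\in\tilde\beta$ if and only if $x\in\tilde\beta_1$; hence the restriction acts on only that factor and the invariant survives. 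Applying the invariant at the output gate and moving from captured assignments to accepted ones yields exactly $\sol(C(\vec u))\cap\{\alpha\mid\alpha(x)\in A\}=\sol(C(\vec u)|_{x\in A})$.

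I expect the $\times$-gate step to be the one carrying the weight: it is precisely decomposability that lets the selection be pushed into a single child rather than being duplicated, which is also what keeps the construction from enlarging the circuit. One further point that must be checked is that an assignment captured by $O$ that does not mention $x$ stands for ``$x$ unconstrained'' and, after selection, should stand for ``$x\in A$''; this holds automatically when every assignment captured at the output of $O$ mentions $x$, which one may assume for the DNNFs produced by our compilation algorithms, and in general is repaired by first conjoining any maximal $x$-free part that is unioned with an $x$-mentioning part by $\bigcup_{a\in A}(x\mapsto a)$ before the relabelling.
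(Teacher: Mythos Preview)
Your approach---replace every input $x\mapsto a$ with $a\notin A$ by $\emptyset$, argue by induction that each gate now captures precisely those assignments from before that do not send $x$ outside $A$, then clean up with Lemma~\ref{lem:emptysetgate}---is exactly the paper's; the paper compresses your inductive invariant to a single sentence but the content is identical.

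One remark on your final paragraph: you are right that an assignment $\beta\in S(O)$ not mentioning $x$ survives the relabelling unchanged and would, at the acceptance level, still license extensions with $\alpha(x)\notin A$. The paper's proof does not address this edge case either. Your proposed repair (conjoining with $\bigcup_{a\in A}(x\mapsto a)$) fixes the semantics but costs $O(|A|)$ additional gates, so strictly speaking the bound ``at most $s$'' would become ``at most $s+O(|D|)$''; for all uses of the lemma in the paper this is harmless, but it is worth being aware that the exact statement needs either this small slack or the assumption that every assignment captured at the output mentions $x$.
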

\begin{proof}
Let $O$ be the DNNF computing $C(\vec u)$. We simply substitute all inputs in~$O$ that have the form $x\mapsto a$ for $a\notin A$ by $\emptyset$. A simple induction shows that for every gate $v$ of $O$ we then have that $S(v)$ contains exactly all assignments from before that do not assign a value from $D\setminus A$ to $x$. The claim directly follows with Lemma~\ref{lem:emptysetgate}.
\end{proof}

\begin{lemma}\label{lem:projectDNNF}
    Let $C(\vec u)$ be a constraint for which there exists a \textup{DNNF} of size $s$ and $\tilde v \in \tilde u$. Then there exists a \textup{DNNF} for the projection $\pi_{\tilde v}(C(\vec u))$ of size at most $s$. 
\end{lemma}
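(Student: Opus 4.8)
The plan is to mirror the proof of Lemma~\ref{lem:restrictDNNF}, but instead of killing the inputs over the projected-away variables we \emph{forget} them: let $O$ be a DNNF of size $s$ representing $C(\vec u)$, and obtain $O'$ from $O$ by replacing every input labeled $x\mapsto a$ with $x\notin\tilde v$ by the input $\{\varepsilon\}$. This operation only removes occurrences of variables below gates, so it cannot destroy decomposability; hence $O'$ is again a (generalized) DNNF, and it has at most $s$ gates. The first step is then a structural induction showing that for every gate $w$ of $O'$ one has $S'(w)=\{\,\beta|_{\tilde v\cap\operatorname{dom}(\beta)} : \beta\in S(w)\,\}$, where $S$ and $S'$ are the sets captured in $O$ and $O'$ respectively. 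The base case is immediate (an input $x\mapsto a$ with $x\notin\tilde v$ now captures $\{\varepsilon\}$, which is exactly the projection of $\{x\mapsto a\}$ onto the empty scope), the $\cup$-case is trivial because projection commutes with union, and the $\times$-case is where decomposability is used: the two children of a $\times$-gate range over disjoint variable scopes, hence so do their projections to $\tilde v$, and projecting a joined assignment equals joining the projected assignments.

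Applied to the output gate $v_o$, this gives $S'(v_o)=\{\beta|_{\tilde v}:\beta\in S(v_o)\}$, and it remains to match the assignments accepted by $O'$ with $\sol(\pi_{\tilde v}C(\vec u))=\{\alpha|_{\tilde v}:\alpha\in\sol(C(\vec u))\}$. For one inclusion, a solution $\alpha$ of $C(\vec u)$ extends some $\beta\in S(v_o)$, so $\alpha|_{\tilde v}$ extends $\beta|_{\tilde v}\in S'(v_o)$ and is thus accepted by $O'$. For the other, if $\gamma\colon\tilde v\to D$ is accepted by $O'$ then it extends some $\beta|_{\tilde v}$ with $\beta\in S(v_o)$; since $\gamma$ and $\beta$ agree on $\tilde v\cap\operatorname{dom}(\beta)$ we may complete $\gamma\cup\beta$ to a total assignment $\alpha$ of $\tilde u$ extending $\beta$ (choosing values arbitrarily on the remaining coordinates), and since $O$ represents $C(\vec u)$ this $\alpha$ is a solution with $\alpha|_{\tilde v}=\gamma$. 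Hence $O'$ represents $\pi_{\tilde v}(C(\vec u))$. Finally, we apply Lemma~\ref{lem:emptysetgate} to remove the newly created $\{\varepsilon\}$-inputs (and any $\emptyset$-inputs), which does not increase the size, yielding the desired DNNF of size at most $s$.

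The only delicate point I anticipate is the last bookkeeping step: because a DNNF may \emph{capture} partial assignments while \emph{accepting} refers to total assignments to the variable scope, one has to argue carefully that a captured partial assignment over $\tilde v$ can always be completed first to a captured partial assignment of $O$ and then to a solution of $C(\vec u)$. Everything else — preservation of decomposability, the commutation of projection with $\cup$ and $\times$, and the size bound — is routine.
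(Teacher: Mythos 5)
Your proof is correct, and it performs the same underlying construction as the paper: replace every input $x\mapsto a$ with $x\notin\tilde v$ by $\{\varepsilon\}$ and then clean up with Lemma~\ref{lem:emptysetgate}. Where you differ is in the verification. The paper reduces to one projected variable at a time and argues via \emph{proof trees}: it observes that $O$ and $O'$ have the same proof trees up to input labels, and that by decomposability each proof tree contains at most one input mentioning the projected variable, so the captured set of each proof tree simply loses that one coordinate. You instead prove the stronger gate-level invariant $S'(w)=\{\beta|_{\tilde v\cap\operatorname{dom}(\beta)}:\beta\in S(w)\}$ by structural induction, with decomposability entering only to justify that projection distributes over the $\times$-product of disjointly-scoped assignment sets. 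Your route is somewhat more self-contained (it does not need the proof-tree machinery, which the paper introduces anyway for the lower bounds) and it handles all projected variables in one pass; the paper's route is shorter once proof trees are available. Your handling of the one genuinely delicate point --- passing from \emph{captured} partial assignments to \emph{accepted} total ones in the backward inclusion, by completing $\gamma\cup\beta$ arbitrarily on $\tilde u\setminus(\tilde v\cup\operatorname{dom}(\beta))$ and using that the completion still extends $\beta$ --- is exactly right, and is in fact spelled out more carefully than in the paper's own proof.
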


For the proof of Lemma~\ref{lem:projectDNNF}, we will use the concept of \emph{proof trees}, which are a classical notion from circuit complexity, sometimes under different names like \emph{parse trees} or simply \emph{certificates}, see e.g.~\cite{Venkateswaran87,MalodP08,BovaCMS16}. A proof tree $T$ of a DNNF $O$ is defined to be a subcircuit of $O$ that
\begin{itemize}
    \item contains the output gate of $O$,
    \item contains, for every $\times$-gate of $O$, both children of that gate,
    \item contains, for every $\cup$-gate of $O$, exactly one child of that gate, and
    \item contains no other gate.
\end{itemize}
Note that $O$ may in general have an exponential number of proof trees. Every proof tree of $O$ is a DNNF that has as its underlying graph a tree. As a DNNF,~$T$ captures a set of assignments to the variables of $O$, and, from the definition of proof trees, we directly get $S(T)\subseteq S(O)$. Moreover, if we let $\mathcal T(O)$ denote all proof trees of $O$, then 
\begin{align}
    S(O):= \bigcup_{T\in \mathcal T(O)} S(T),\label{eq:prooftrees}
\end{align}
so every assignment captured by $O$ gets captured by (at least) one of the proof trees of $O$.

\begin{proof}
Assume w.l.o.g.~that $\tilde u \setminus \tilde v$ contains only a single variable $x$. If this is not the case, we can simply iterate the construction, projecting away one variable at a time.    

Let $O$ be the DNNF computing $C(\vec u)$. We construct a new DNNF $O'$ over $\tilde v$ by substituting every input $x\mapsto a$ for $a\in D$ by $\varepsilon$. We claim that $O'$ computes $\pi_{\vec v}(C(\vec u))$. Since up to the input labels $O$ and $O'$ have the same proof trees, it suffices to show the claim for every proof tree. So let $T$ be a proof tree in $O$ and $T'$ the corresponding proof tree in $O'$. Let $a$ be an assignment captured by $T$. If $T$ does not contain an input of the form $x\mapsto a$, then $T$ and $T'$ are equal, so there is nothing to show. If $T$ contains such an input, then it contains exactly one of them due to decomposability. Let $\alpha$ be an assignment captured by $T$, then $\alpha|_{\tilde v}$ is captured by $T'$, so $\pi_{\tilde v}(S(T))\subseteq S(T')$. For the other direction, fix $x\in S(T')$. Then $\alpha$ does not assign a value to $x$. But by assumption there is an input $x\mapsto a$ in $T$, so all assignments in $S(T)$ assign $x$ to $a$. In particular, there is an assignment $\beta\in S(T)$ that extends $\alpha$ to $x$. Thus $a\in \pi_{\tilde v}(T)$ and the claim follows.  
\end{proof}

\subparagraph*{Decision Diagrams.}

A decision diagram $O$ over a variable set $X$ is a directed acyclic graph with a single source and two sinks and in which all non-sinks have $|D|$ outgoing edges. The sinks are labeled with $0$ and $1$, respectively, while all other nodes are labeled with variables from $x$. For every non-sink, the $|D|$ outgoing edges are labeled in such a way that every value in $D$ appears in exactly one label. Given an assignment $\alpha$ to $X$, the value computed by $O$ is defined as follows: we start in the source and iteratively follow the edge labeled by $\alpha(x)$ where $x$ is the label of the current node. We continue this process until we end up in a leaf whose label then gives the value of $O$ on $\alpha$. Clearly, this way~$O$ computes a constraint over $X$ with relation $\{\alpha \mid O \text{ computes } 1 \text{ on input } \alpha\}$. 

We are interested in decision diagrams in which on every source-sink-path every variable appears at most once as a label. We call these diagrams \emph{free decision diagrams (FDD)}. An FDD for which there is an order $\pi$ such that when a variable $x$ appears before $y$ on a path then $x$ also appears before $y$ in $\pi$ is called \emph{ordered decision diagrams}. We remark that FDD and ODD are in the literature mostly studied for the domain $\{0,1\}$ in which case they are called FBDD and OBDD, respectively, where the ``B'' stands for binary. Note also that there is an easy linear time translation of FDD into DNNF and ODD into structured DNNF, see e.g.~\cite{DarwicheM02}. In the other direction there are no efficient translations, see again~\cite{DarwicheM02}.

\subparagraph*{Rectangles.}
Let $\vec u$ be a variable vector, $\tilde u =\tilde x \cup \tilde y$ and $\tilde x \cap \tilde y = \emptyset$. Then we say that the constraint  $R(\vec u)$  is a \emph{(combinatorial) rectangle} with respect to the partition $(\tilde x, \tilde y)$ if and only if $R(\vec u) = \pi_{\tilde x}(R(\vec u)) \times \pi_{\tilde y}(R(\vec u))$. 
Let $Z\subseteq \tilde u$ and $\beta>0$. Then we call the partition $(\tilde x, \tilde y)$ \emph{$Z$-$\beta$-balanced} if $\frac{\beta |Z|}{2} \le |\tilde x \cap Z| \le \beta |Z|$. A constraint $R(\vec u)$ is called a \emph{$Z$-$\beta$-balanced rectangle} if it is a rectangle with respect to a $Z$-$\beta$-balanced partition.
A \emph{$Z$-$\beta$-balanced rectangle cover} $\mathcal R$ of a constraint $R(\vec u)$ is defined to be a set of $Z$-balanced rectangles such that $\sol(R(\vec u)) := \bigcup_{\mathfrak r\in \mathcal R} \sol(\mathfrak r(\vec u))$. The size of $\mathcal R$ is the number of rectangles in it.

\section{Blockwise decomposability}\label{sct:decomposability}

In this section we introduce our central notion of blockwise and
uniformly blockwise decomposable constraints and formulate our main theorems that lead to a characterization of
efficiently representable constraint languages.

The first simple insight is the following. Suppose two constraints
$S(\vec{v})$ and $T(\vec{w})$ with disjoint scopes are efficiently
representable, e.\,g., by a small ODD. Then their Cartesian product
$R(\vec{v},\vec{w})=S(\vec{v}) \times T(\vec{w})$ also has a small ODD: given an assignment $(\vec{a},\vec{b})$, we just need
to check independently whether $\vec{a}\in S$ and $\vec{b}\in T$, for
example, by first using the ODD for $S(\vec{v})$ and then using the
ODD for $T(\vec{w})$. Thus, if a constraint can be expressed as a
Cartesian product of two constraints, we only have to investigate
whether the two parts are easy to represent. This brings us to our
first definition. 

\begin{definition}
Let $R(\vec{u})$ be a constraint and $(V_1,\ldots,V_\ell)$ be a partition of its scope.
We call $R(\vec{u})$ \emph{decomposable w.r.t. $(V_1,\ldots,V_\ell)$} if 
$R(\vec{u}) = \pi_{V_1}(R(\vec{u})) \times \cdots \times
\pi_{V_\ell}(R(\vec{u}))$.
A constraint $R(\vec{u})$ is \emph{indecomposable} if it is only
decomposable w.r.t. trivial partitions $(V_1,\ldots,V_\ell)$ where
$V_i=\emptyset$ or $V_i=\tilde{u}$ for $i\in[\ell]$.
\end{definition}

Next, we want to relax this notion to constraints that are ``almost''
decomposable. Suppose we have four relations $S_1,S_2$ of arity $s$
and $T_1,T_2$ of arity $t$ and let $a,b$ be two distinct domain
elements. Let
\begin{equation}
  \label{eq:2}
R := (\{(a,a)\}\times S_1\times T_1) \cup  (\{(b,b)\}\times S_2\times T_2).   
\end{equation}

The constraint $R(x,y,\vec{v},\vec{w})$ may now not be decomposable in any
non-trivial variable partition. However, after fixing values for $x$ and $y$ the
remaining selection $R(x,y,\vec{v},\vec{w})|_{x=c,y=d}$ 
is
decomposable in $(\tilde{v},\tilde{w})$ for any pair $(c,d)\in
D^2$. Thus, an ODD could first read values for $x,y$ and then use ODDs
for $S_1(\vec{v})$ and $T_1(\vec{w})$ if $x=y=a$, ODDs for
$S_2(\vec{v})$ and $T_2(\vec{w})$ if $x=y=b$, or reject
otherwise. This requires, of course, that $S_1(\vec{v})$ and
$S_2(\vec{v})$, as well as $T_1(\vec{w})$ and $T_2(\vec{w})$, have small
ODDs over the \emph{same} variable order. For FDDs and DNNFs, however, we would
not need this requirement on the variable orders.

To reason about the remaining constraints after two variables have been fixed, it is helpful to
use the following matrix notation. Let $R(\vec{u})$ be a
constraint and $x,y\in \tilde{u}$ be two variables in its scope. The \emph{selection
  matrix} $M^R_{x,y}$ is the $|D|\times |D|$ matrix where the rows and
columns are indexed by domain elements $a_i,a_j\in D$ and the entries are
the constraints 
\begin{equation}
  \label{eq:3}
  M^R_{x,y}[a_i,a_j] := \pi_{\tilde{u}\setminus \{x,y\}}(R(\vec{u})|_{x=a_i,y=b_j}).
\end{equation}

\begin{example} \label{example:blockmatrix}
Let $D=\left\{a,b,c\right\}$ and $R(x,y,z,v)$ a constraint with
constraint relation
$
R=\left\{ \left(a,a,a,a\right),\left(b,b,a,b\right),\left(b,b,a,c\right),\left(b,b,c,c\right),\left(c,b,c,a\right)\right\} .
$ %
The selection matrix in $x$ and $y$ is depicted below,
where the first line and column
are the indices from $D$ and the matrix entries contain the
constraint relations of the corresponding constraints $M^R_{x,y}[a_i,a_j](z,v)$:
\begin{equation}
  \label{eq:5}
\left(\begin{array}{c|ccc}
x\backslash y & a & b & c\\
\hline a & \left\{ \left(a,a\right)\right\}  & \emptyset & \emptyset\\
b & \emptyset & \left\{ \left(a,b\right),\left(a,c\right),\left(c,c\right)\right\}  & \emptyset\\
c & \emptyset & \left\{ \left(c,a\right)\right\}  & \emptyset
\end{array}\right)
\end{equation}
\end{example}

A \emph{block} in the selection matrix is a subset of rows $A\subseteq
D$ and columns $B\subseteq D$. We also associate with a block $(A,B)$ the
corresponding constraint
$R(\vec{u})|_{x\in A,y\in B}$. 
A selection matrix is a \emph{proper block matrix}, if there exist
pairwise disjoint $A_1,\ldots, A_k \subseteq D$ and pairwise disjoint
$B_1,\ldots, B_k \subseteq D$ such that for all $a_i,a_j\in D$:
\begin{equation}
        \label{eq:6}
        \operatorname{sol}(M^R_{x,y}[a_i,a_j]) \neq \emptyset \quad
        \Longleftrightarrow \quad \text{there is $\ell\in [k]$ such that $a_i\in A_\ell$ and $a_j\in B_\ell$}.
\end{equation}

\begin{example}
The selection matrix in Example~\ref{example:blockmatrix} is a proper
block matrix with $A_1=\{a\}$, $A_2=\{b,c\}$, $B_1=\{a\}$, $B_2=\{b\}$. 
\end{example}

We will make use of the following alternative characterization of
proper block matrices. 
The simple proof is similar to \cite[Lemma~1]{DyerR13}.

\begin{lemma}\label{lem:2x2blockmatrix}
\label{lem: 2x2 submatrix} A selection matrix $M^R_{x,y}$ is a proper block matrix
if and only if it has no $2\times2$-submatrix with exactly one
empty
entry.
\end{lemma}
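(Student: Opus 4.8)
The statement is an ``iff'', so I would prove both directions, with the forward direction being essentially immediate and the converse being the substantive one. For the forward direction: suppose $M^R_{x,y}$ is a proper block matrix witnessed by disjoint $A_1,\dots,A_k$ and disjoint $B_1,\dots,B_k$ as in \eqref{eq:6}. Take any $2\times 2$-submatrix on rows $a_i,a_{i'}$ and columns $a_j,a_{j'}$. If it had exactly one nonempty entry, say at $(a_i,a_j)$, then $a_i\in A_\ell$ and $a_j\in B_\ell$ for some $\ell$. But then consider the entry at $(a_i,a_{j'})$: since $a_{j'}\neq a_j$ lies in no $B_m$ with $m\ne \ell$ forced by emptiness of that entry... actually the cleanest phrasing is via the contrapositive of the converse, so I would instead state the forward direction as: a proper block matrix has the property that whenever $(a_i,a_j)$ and $(a_i,a_{j'})$ are both nonempty, the entire ``pattern'' is consistent, hence any $2\times2$ submatrix has $0$, $2$, $3$, or $4$ nonempty entries but never exactly $1$. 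Concretely: if exactly one of the four entries, say $(a_i,a_j)$, is nonempty, then $a_i\in A_\ell,a_j\in B_\ell$; emptiness of $(a_{i'},a_j)$ forces $a_{i'}\notin A_\ell$ (as $A_m$ are disjoint this is fine), emptiness of $(a_i,a_{j'})$ forces $a_{j'}\notin B_\ell$; these are not yet a contradiction, so I actually need to rule out ``exactly one'' differently — a $2\times 2$ submatrix with exactly one empty entry is the forbidden pattern, not exactly one nonempty. Let me restate: the forbidden configuration is three nonempty and one empty. If $(a_i,a_j),(a_i,a_{j'}),(a_{i'},a_j)$ are nonempty and $(a_{i'},a_{j'})$ is empty, then from the nonempty entries $a_i\in A_\ell$, $a_j\in B_\ell$, $a_j\in B_m$, $a_{i'}\in A_m$ and $a_i\in A_n$, $a_{j'}\in B_n$ for indices $\ell,m,n$; disjointness of the $B$'s forces $\ell=m$, so $a_{i'}\in A_\ell$ and $a_{j'}\in B_\ell$ (using $a_i\in A_\ell\cap A_n\Rightarrow \ell=n$), whence $(a_{i'},a_{j'})$ must be nonempty, a contradiction. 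That is the forward direction.

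For the converse I would argue by constructing the blocks explicitly. Assume $M^R_{x,y}$ has no $2\times2$ submatrix with exactly one empty entry. Let $N=\{a\in D : \text{row } a \text{ is not identically empty}\}$ (the ``nonempty rows''), and define a relation on $N$ by $a\sim a'$ iff rows $a$ and $a'$ have the same support, i.e.\ $\{b : M^R_{x,y}[a,b]\neq\emptyset\} = \{b : M^R_{x,y}[a',b]\neq\emptyset\}$. This is trivially an equivalence relation; let $A_1,\dots,A_k$ be its classes and let $B_\ell$ be the common support of the rows in $A_\ell$. The $A_\ell$ are disjoint by construction and cover exactly $N$. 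The key claim is that the $B_\ell$ are pairwise disjoint: if $b\in B_\ell\cap B_m$ with $\ell\neq m$, pick $a\in A_\ell$, $a'\in A_m$; since the rows have different supports there is a $b'$ in one support but not the other, say $b'\in B_\ell\setminus B_m$. Then the $2\times 2$ submatrix on rows $a,a'$ and columns $b,b'$ has entries nonempty at $(a,b)$, $(a',b)$, $(a,b')$ and empty at $(a',b')$ — exactly one empty entry, contradicting the hypothesis. Hence the $B_\ell$ are disjoint, and \eqref{eq:6} holds by definition of the $A_\ell,B_\ell$ (for rows outside $N$ and for the ``missing'' columns the right-hand side of \eqref{eq:6} is false, matching the empty entries).

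The main obstacle — really the only place anything happens — is the disjointness of the $B_\ell$'s in the converse; everything else is bookkeeping about supports and equivalence classes. I would make sure the case analysis picks $b'$ on the correct side (it suffices that \emph{some} $b'$ distinguishes the supports, and by symmetry of the roles of $\ell,m$ we may assume $b'\in B_\ell\setminus B_m$). One small subtlety worth a sentence: the definition of ``proper block matrix'' allows the index $\ell$ on the right of \eqref{eq:6} to range over a fixed finite set while some rows/columns belong to no block at all; the construction above handles this automatically since empty rows form no class and columns outside every support belong to no $B_\ell$. Since the paper notes this mirrors \cite[Lemma~1]{DyerR13}, I would keep the write-up terse.
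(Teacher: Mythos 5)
Your proof is correct. The forward direction, once you discard the false starts (you twice begin with the wrong forbidden pattern before settling on ``three nonempty, one empty''), is in substance the same as the paper's: both arguments hinge on the disjointness of the $A_\ell$'s and of the $B_\ell$'s, though the paper phrases it as a case count on how many blocks a $2\times 2$ submatrix can meet ($0$, $1$, or $2$), which yields the possible empty-entry counts $0,2,3,4$ directly, while you derive a contradiction from the specific three-nonempty pattern. The converse is where you genuinely diverge: the paper picks a nonempty entry, permutes rows and columns so that the initial runs of nonempty entries in the first row and column have lengths $t$ and $s$, uses the $2\times2$ condition to show the resulting $s\times t$ corner is entirely nonempty and the rest of its row- and column-shadow is empty, and then recurses on the complementary submatrix; you instead define the blocks in one shot as the equivalence classes of nonempty rows under ``equal support,'' with $B_\ell$ the common support, and reduce everything to the single claim that distinct supports sharing a column would produce the forbidden $2\times2$ pattern. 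Your construction is arguably cleaner --- it avoids the reordering and the induction, and the handling of all-empty rows and uncovered columns falls out of the definitions --- at the cost of having to verify that ``same support'' really partitions the columns, which is exactly your disjointness claim. Both are complete; the only editorial request would be to delete the exploratory first half of your forward direction before it goes in a write-up.
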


\begin{proof}
Let $M^R_{x,y}$ be a proper block matrix and $A_1,\ldots, A_k
\subseteq D$, $B_1,\ldots, B_k \subseteq D$ be the pairwise disjoint
sets provided by the definition. 
A $2\times2$-submatrix $M'$ may intersect $0$, $1$ or $2$ blocks. If it intersects $0$ blocks, then all $4$ of the entries of $M'$ are empty.
If $M'$ intersects only one block, then either $1$, $2$ or $4$ entries are in that block and thus non-empty. So the number of empty entries in $M'$ is $3$, $2$ or $0$.
If $M'$ intersect two blocks, then exactly $2$ of its entries lie in those blocks and are thus non-empty. So $M'$ has $2$ empty entries in that case. 
Overall, $M'$ may have $0$, $2$, $3$, or $4$ empty entries, so it satisfies the claim.

For the other direction, we must show how to find the block
structure. Take one non-empty entry and reorder the rows to $a_1,a_2,\ldots$ and
columns to $b_1,b_2,\ldots$ 
such that this entry is in the first column and first row, and the
first row (column) starts with $s$ ($t$) non-empty entries followed by
empty entries. Consider for all $i,j$ the $2\times2$-submatrix indexed
by $a_1,a_i$ and $b_1,b_j$. Since it does not have exactly
one empty entry, we get:
\begin{itemize}
\item If $i\leq s$ and $j\leq t$, then $M^R_{x,y}[a_i,b_j]\neq\emptyset$.
\item If $i\leq s$ and $j > t$, then $M^R_{x,y}[a_i,b_j]=\emptyset$.
\item If $i> s$ and $j \leq t$, then $M^R_{x,y}[a_i,b_j]=\emptyset$.
\end{itemize}
Thus, we can choose $A_1=\{a_1,\ldots,a_s\}$ $B_1=\{b_1,\ldots,b_t\}$
and proceed with the submatrix on rows $D\setminus A_1$ and columns $D\setminus B_1$ inductively.
\end{proof}

Now we can define our central tractability criterion for constraints
that have small ODDs, namely that any selection matrix is a proper
block matrix whose blocks are decomposable over the same variable partition
that separates $x$ and~$y$.

\begin{definition}[Uniform blockwise decomposability]
        A constraint $R(\vec u)$ is \emph{uniformly blockwise
          decomposable in $x,y$} if $M^R_{x,y}$ is a proper block
        matrix with partitions $(A_1,\ldots,A_k)$, $(B_1,\ldots,B_k)$ and there is a partition $(V,W)$ of $\tilde u$ with $x\in V$ and $y\in W$ such that each block $R(\vec{u})|_{x\in A_i,y\in B_i}$ is decomposable in $(V,W)$. A constraint $R(\vec u)$ is \emph{uniformly blockwise decomposable} if it is uniformly decomposable in any pair $x,y \in \tilde u$.
\end{definition}

In the non-uniform version of blockwise decomposability, it is allowed that the
blocks are decomposable over different partitions. This property will
be used to characterize constraints having small FDD representations. 

\begin{definition}[Blockwise decomposability] \label{def:blockwisedecomposable}
  A constraint $R(\vec u)$ is \emph{blockwise decomposable
        in $x,y$} if $M^R_{x,y}$ is a proper block matrix with partitions
  $(A_1,\ldots,A_k)$, $(B_1,\ldots,B_k)$ and for each $i\in[k]$ 
  there is a partition $(V_i,W_i)$ of $\tilde u$ with $x\in V_i$ and
  $y\in W_i$
  such that each block $R(\vec{u})|_{x\in A_i,y\in B_i}$
  is decomposable in
  $(V_i,W_i)$.
  A constraint $R(\vec u)$ is \emph{blockwise decomposable}
  if it is decomposable in any pair $x,y \in \tilde u$.
\end{definition}

Note that every uniformly blockwise decomposable relation is also
blockwise decomposable. The next example illustrates that the converse
does not hold.

\begin{example}\label{exa:separateNotions}
  Consider the 4-ary constraint relations
  \begin{align}
        R_1 &:= \{(a,a,a,a),\,(a,b,a,b),\,(b,a,b,a),\,(b,b,b,b)\}\\
        R_2 &:= \{(c,c,c,c),\,(c,d,d,c),\,(d,c,c,d),\,(d,d,d,d)\}\\
        R   &:= R_1\cup R_2
  \end{align}
  Then the selection matrix $M^R_{x,y}$ of the constraint $R(x,y,u,v)$
  has two non-empty blocks:
  \begin{align*}
        \label{eq:7}
M_{x,y}^{R}=\left(\begin{array}{c|cccc}
x\backslash y & a & b & c & d\\
\hline a & \left\{ \left(a,a\right)\right\}  & \left\{ \left(a,b\right)\right\}  & \emptyset & \emptyset\\
b & \left\{ \left(b,a\right)\right\}  & \left\{ \left(b,b\right)\right\}  & \emptyset & \emptyset\\
c & \emptyset & \emptyset & \left\{ \left(c,c\right)\right\}  & \left\{ \left(d,c\right)\right\} \\
d & \emptyset & \emptyset & \left\{ \left(c,d\right)\right\}  & \left\{ \left(d,d\right)\right\} 
\end{array}\right)
  \end{align*}
The first block $R(x,y,u,v)|_{x\in\{a,b\},y\in\{a,b\}} = R_1(x,y,u,v)$
is decomposable in $\{x,u\}$ and  $\{y,v\}$, while the second block $R(x,y,u,v)|_{x\in\{c,d\},y\in\{c,d\}} = R_2(x,y,u,v)$ is
decomposable in $\{x,v\}$ and $\{y,u\}$. Thus, the constraint is blockwise
decomposable in $x,y$, but not uniformly blockwise decomposable.
\end{example}

Finally, we transfer these characterizations to relations and constraint languages:
  A $k$-ary relation $R$ is (uniformly) blockwise decomposable, if the
  constraint $R(x_1,\ldots, x_k)$ is (uniformly) blockwise decomposable for
  pairwise distinct variables $x_1,\ldots, x_k$.
        A constraint language $\Gamma$ is (uniformly) blockwise decomposable
  if every relation in $\Gamma$ is (uniformly) blockwise
  decomposable.
        A constraint language $\Gamma$ is \emph{strongly (uniformly) blockwise
  decomposable}
  if its co-clone $\langle\Gamma\rangle$ is (uniformly) blockwise
  decomposable.

Now we are ready to formulate our main theorems. The first one states that
the strongly uniformly blockwise decomposable constraint languages are
precisely those that can be efficiently compiled to a structured
representation format (anything between ODDs and structured
DNNFs). 

 \begin{theorem}\label{thm:ODD}
    Let $\Gamma$ be a constraint language.
    \begin{enumerate}
    \item \label{thm:ODD:upper}
    If $\Gamma$ is strongly uniformly blockwise decomposable, then there is a polynomial
    time algorithm that constructs an
    {\upshape ODD} for a given {\upshape CSP($\Gamma$)}-instance.
    \item \label{thm:ODD:lower}
      If $\Gamma$ is not strongly uniformly blockwise decomposable, 
    then there is a family $(I_n)$ of {\upshape
      CSP($\Gamma$)}-instances such that any structured {\upshape DNNF} for $I_n$ has size $2^{\Omega(\|I_n\|)}$.
    \end{enumerate}
\end{theorem}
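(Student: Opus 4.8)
The proof splits into the upper bound (Part~\ref{thm:ODD:upper}) and the matching lower bound (Part~\ref{thm:ODD:lower}), and the first thing I would do is reduce both to a statement about a single relation. Given a CSP($\Gamma$)-instance $I=(X,D,C)$, its solution set $\sol(I)$, viewed as a relation over an enumeration of $X$, is the constraint relation of a conjunction with variable identifications over $\Gamma$, hence lies in the co-clone $\langle\Gamma\rangle$; conversely, every $R\in\langle\Gamma\rangle$ arises as a projection of $\sol(I)$ for a suitable instance. So ``$\Gamma$ is strongly uniformly blockwise decomposable'' is equivalent to ``$\sol(I)$ is uniformly blockwise decomposable for every CSP($\Gamma$)-instance $I$'', and in the negative case we obtain an explicit witness relation to build hard instances from. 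Throughout I will freely use the closure properties of (uniform) blockwise decomposability under projection, selection, and taking decomposition factors that are established in Section~\ref{sct:properties}.

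For Part~\ref{thm:ODD:upper} I would build the ODD by a top-down recursion on $R(\vec u):=\sol(I)$ with two regimes. If $R(\vec u)$ decomposes nontrivially as $\pi_{V_1}(R)\times\cdots\times\pi_{V_m}(R)$, I recurse on each factor and concatenate the resulting ODDs, which concatenates their variable orders and keeps the size additive. If $R(\vec u)$ is indecomposable and $|\tilde u|\ge 2$, I single out a variable $x$ to be read first, pick a second variable $y$, and apply uniform blockwise decomposability in $x,y$: this yields a proper block matrix with blocks $(A_i,B_i)$ and, crucially, one common partition $(V,W)$ with $x\in V$, $y\in W$ along which \emph{every} block decomposes. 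Since the $A_i$ are disjoint, the value read for $x$ already pins down the block, and one checks $R|_{x\in A_i}=R|_{x\in A_i,\,y\in B_i}=\pi_V(R|_{A_i,B_i})\times\pi_W(R|_{A_i,B_i})$ with $\pi_W(R|_{A_i,B_i})=\pi_W(R)|_{y\in B_i}$ not depending on which element of $A_i$ was read. This lets me recursively build one ODD for $\pi_V(R)$ over a fixed order on $V$ beginning with $x$, with its accept-sink split per block, one ODD for $\pi_W(R)$ over a fixed order on $W$ beginning with $y$, and then wire the block-$i$ accept-sink of the $V$-part to the root of the $\{y\in B_i\}$-selection of the $W$-part (which is free by Lemma~\ref{lem:restrictDNNF}); the global order $x\cdot(\text{order on }V\setminus\{x\})\cdot(\text{order on }W)$ works simultaneously for all blocks, which is exactly what uniformity buys over plain blockwise decomposability. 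The main obstacle in this part is the size analysis: the recursion spawns two subproblems at every indecomposable node, so I have to argue that the projections and selections of $\sol(I)$ that actually occur come from a pool of only polynomially many relations and memoize, keeping the ODD polynomial in $\|I\|$; here I would borrow the bookkeeping of the counting algorithm of \cite{DyerR13}, and I expect this to be the technically heaviest step.

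For Part~\ref{thm:ODD:lower} I start from a witness $R\in\langle\Gamma\rangle$ that is not uniformly blockwise decomposable: there is a pair $x,y$ such that either (a) $M^R_{x,y}$ is not a proper block matrix, or (b) it is a proper block matrix but no admissible partition $(V,W)$ decomposes all of its blocks. In case (a), Lemma~\ref{lem:2x2blockmatrix} produces a $2\times2$ submatrix with exactly one empty entry; in case (b), for every candidate $(V,W)$ some block $R|_{A_\ell,B_\ell}$ is not a rectangle along it, hence contains two tuples whose ``crossing'' leaves the block and therefore leaves $R$. In both cases I extract a constant-size gadget exhibiting a forbidden crossing relative to a fixed bipartition of its variables, realize it honestly inside CSP($\Gamma$) (simulating the selections used to expose it by $\Gamma$-constraints and variable identifications), and then take $n$ disjoint copies tied together along the edges of a bounded-degree expander in the style of the CNF lower bounds of \cite{BovaCMS14,AmarilliCMS20,Capelli17}, so that every balanced bipartition of the variables of $I_n$ splits the two sides of $\Omega(n)$ gadget copies. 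Since every v-tree has a node $t$ with $\var(t)$ of size between $|X|/3$ and $2|X|/3$, a structured DNNF of size $s$ for $I_n$ yields (via its proof-tree decomposition and the rectangle language set up in Section~\ref{sct:preliminaries}) an $X$-$\beta$-balanced rectangle cover of $\sol(I_n)$ of size polynomial in $s$; the forbidden crossings give a fooling set of size $2^{\Omega(n)}$ for any such cover, so $s=2^{\Omega(n)}=2^{\Omega(\|I_n\|)}$. The delicate point here is ensuring the bad crossing survives the embedding---$\langle\Gamma\rangle$ is closed under pp-definitions but not under arbitrary unary selections, so the gadget must be built only from the primitives available in CSP($\Gamma$)-instances, and the expander wiring must not create spurious solutions that break the fooling set.
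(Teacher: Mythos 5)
Your lower bound (Part~\ref{thm:ODD:lower}) is essentially the paper's proof of Proposition~\ref{prop:lowerODD}: case (a) is Lemma~\ref{lem:NoBlockNoDNNF} verbatim, and case (b) is the expander formula $F(G_n)$, the single-partition rectangle cover that Lemma~\ref{lem:DNNFmakeRectangles} extracts from a structured DNNF, and a fooling set as in Lemma~\ref{lem:lowerboundmatchingformula}. Two details you should make explicit: the bad block witnessing non-decomposability depends on the cut that the v-tree induces on each edge's private variables $\vec z_e$, so you must pigeonhole over the at most $|D|$ blocks to find a \emph{single} block that is bad for $\Omega(n)$ matching edges simultaneously; and passing to the subformula on that block (restricting domains and projecting away all other variables) requires that every block-consistent assignment to the selected endpoint variables extends to a full solution, which is exactly where the fact that $D_1\times D_2$ is a block of the selection matrix is used. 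With those two points added, this half matches the paper.

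The upper bound (Part~\ref{thm:ODD:upper}) is where the genuine gap lies, and it is exactly the step you flag as ``technically heaviest'' and then do not carry out. The recursion you describe spawns, at each indecomposable node, one subproblem on $V$ and up to $|D|$ selected subproblems on $W$; nothing forces the partition $(V,W)$ delivered by Lemma~\ref{lem:onlycheckprojections} to be balanced (it may well be $V=\{x\}$ at every level), so the recursion tree has depth $\Theta(n)$ and branching factor $|D|$, i.e.\ $|D|^{\Theta(n)}$ nodes before memoization. The proposed rescue --- that only polynomially many distinct subproblems occur --- is not established: a subproblem over a variable set $Y$ carries the accumulated block selections of the entire root-to-$Y$ path, and showing that these collapse to a constant number of variants of $\pi_Y(R)$ (restricted only on the entry variable of $Y$) needs an inductive argument exploiting the rectangle structure of the blocks; you would additionally have to show that all variants over the same $Y$ select the \emph{same} sub-partition, since otherwise the recursive orders do not nest into one global ODD order. ``Borrowing the bookkeeping of \cite{DyerR13}'' does not supply any of this, because the counting algorithm never commits to a variable order. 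The paper sidesteps the entire issue by a different route: Lemma~\ref{lem:thereisatree} shows a uniformly blockwise decomposable constraint equals the conjunction of its binary projections along a \emph{tree} on the variables, and the ODD is then built by branching on a \emph{centroid} of that tree, so every component has at most $n/2$ variables, the recursion depth is $\log n$, and the bound $n|D|^{\log n}=O(n^{\log|D|+1})$ is immediate. Either adopt that route, or close the memoization argument; as written, the upper bound is not proved.
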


Our second main theorem states that the larger class of strongly
blockwise decomposable constraint languages captures CSPs that can be
efficiently compiled in an unstructured format between FDDs and DNNFs.

 \begin{theorem}\label{thm:FDD}
    Let $\Gamma$ be a constraint language.
    \begin{enumerate}
    \item \label{thm:FDD:upper}
    If $\Gamma$ is strongly blockwise decomposable, then there is a polynomial
    time algorithm that constructs an
    {\upshape FDD} for a given {\upshape CSP($\Gamma$)}-instance.
    \item \label{thm:FDD:lower}
      If $\Gamma$ is not strongly blockwise decomposable, 
    then there is a family $(I_n)$ of {\upshape
      CSP($\Gamma$)}-instances such that any {\upshape DNNF} for $I_n$ has size $2^{\Omega(\|I_n\|)}$.
    \end{enumerate}
\end{theorem}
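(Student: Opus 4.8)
The two directions are proven by unrelated techniques. For part~\ref{thm:FDD:upper} I would give a recursive compilation procedure that exploits the structural consequences of strong blockwise decomposability: peel off product structure, and otherwise use the block structure of a selection matrix to read variables ``two at a time''. For part~\ref{thm:FDD:lower} I would extract a small forbidden gadget from a witness relation $R\in\coclone{\Gamma}$ that fails blockwise decomposability, blow it up along an expander, and convert a balanced rectangle cover lower bound into a \textup{DNNF} size lower bound.

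\textbf{Upper bound.}
The solution set $\sol(I)$ of a $\mathrm{CSP}(\Gamma)$-instance $I$ lies in $\coclone{\Gamma}$, hence is blockwise decomposable, and the same should hold for every constraint obtained from it by selections and projections, provided we stay inside a fixed enlargement of $\coclone{\Gamma}$. I would therefore first establish, as a structural prerequisite (the content of Section~\ref{sct:properties}), that blockwise decomposability of $\coclone{\Gamma}$ is robust under adding the singleton unary relations $\{a\}$, $a\in D$, and that it forces every relation of the resulting co-clone to be highly constrained: up to its (unique) finest product decomposition, each indecomposable factor is, morally, a chain of bijective functional dependencies together with unary restrictions. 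Granting this, the algorithm is a recursion on a current constraint $C=\pi_Y(\sol(I)|_\sigma)$. If $C$ decomposes as $\pi_{V_1}(C)\times\cdots\times\pi_{V_\ell}(C)$ into indecomposable factors, recurse on each factor and concatenate the resulting \textup{FDD}s at their $1$-sinks. If $C$ is indecomposable with at most one variable, emit a trivial \textup{FDD}. If $C$ is indecomposable with at least two variables, pick $x,y\in\tilde{u}$, compute the proper block matrix $M^C_{x,y}$ with blocks $(A_i,B_i)$ and partitions $(V_i,W_i)$, and build an \textup{FDD} that reads $x$, routes a value $a\in A_i$ to the \textup{FDD} of $\pi_{V_i\setminus\{x\}}(C|_{x=a})$ followed (at its $1$-sink) by the \textup{FDD} of the block's $y$-side $\pi_{W_i}(C|_{x\in A_i})$ --- which depends only on the block and is thus shared among all $a\in A_i$ --- and routes values of $x$ occurring in no $A_i$ to the $0$-sink; here $\pi_{V_i\setminus\{x\}}(C|_{x=a})$ and $\pi_{W_i}(C|_{x\in A_i})$ both have strictly fewer variables than $C$ because $x\in V_i$ and $y\in W_i$. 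Correctness is a direct induction on the \textup{DNNF} semantics and the definitions of (blockwise) decomposability. For size and running time, the structural prerequisite is exactly what keeps the recursion polynomial: the sub-constraints that can arise form a family of size $O(\mathrm{poly}(\|I\|))$ (the ``bijections $+$ unaries'' picture even gives only $O(|X|\cdot|D|)$ of them and a linear-size \textup{FDD}), and the auxiliary tasks --- testing decomposability, extracting the finest decomposition, computing selection matrices and their block structure --- reduce to projections, selections and emptiness tests, for which the counting algorithm of~\cite{DyerR13} may be invoked as a subroutine.

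\textbf{Lower bound.}
If $\Gamma$ is not strongly blockwise decomposable, some $R\in\coclone{\Gamma}$ is not blockwise decomposable, so there are $x,y\in\tilde u$ such that either (a) $M^R_{x,y}$ is not a proper block matrix, or (b) it is, but some block $R|_{x\in A_i,y\in B_i}$ admits no decomposition separating $x$ and $y$. In case (a), Lemma~\ref{lem:2x2blockmatrix} yields domain elements $a_1,a_2,b_1,b_2$ with, up to symmetry, $M^R_{x,y}[a_1,b_1]$, $M^R_{x,y}[a_1,b_2]$, $M^R_{x,y}[a_2,b_1]$ non-empty and $M^R_{x,y}[a_2,b_2]$ empty; fixing witnessing values for the remaining variables and projecting to $\{x,y\}$ produces an implication-like binary gadget forbidding exactly the corner $(x,y)=(a_2,b_2)$. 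In case (b), restricting to the indecomposable component of the block that contains $x$ and $y$ gives an indecomposable relation whose projection to $\{x,y\}$ is a full rectangle, i.e.\ a hidden correlation between $x$ and $y$ that is mediated by other variables and cannot be factored out --- a linear/parity-type gadget. In both cases I would build $(I_n)$ by placing $\Theta(n)$ fresh copies of a pp-definition of $R$ over $\Gamma$ along the edges of a constant-degree expander on $n$ nodes, each node carrying a pair of primary variables in the roles of $x,y$ and each edge linking the primary variables of its endpoints through a copy of the gadget, so that $\|I_n\|=\Theta(n)$. Since the expander has no sparse balanced cut, any partition of the variables of $I_n$ into two roughly equal halves severs $\Omega(n)$ gadget copies, and a fooling-set argument on the severed copies --- a staircase for the implication gadget, a rank argument from the linear-code structure for the hidden-correlation gadget --- shows $\sol(I_n)$ has no $Z$-$\beta$-balanced rectangle cover of size $2^{o(n)}$ for a suitable constant $\beta$. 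As a \textup{DNNF} of size $s$ over $Z$ yields, via its proof trees and a balancing step, a $Z$-$\beta$-balanced rectangle cover of size at most $s$, every \textup{DNNF} for the $R$-instance has size $2^{\Omega(n)}$; projecting away the auxiliary variables of the pp-definitions with Lemmas~\ref{lem:restrictDNNF} and~\ref{lem:projectDNNF} transfers this to the genuine $\mathrm{CSP}(\Gamma)$-instance $I_n$, giving the claimed $2^{\Omega(\|I_n\|)}$ bound.

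\textbf{Main obstacles.}
Algorithmically, essentially all the difficulty is in the structural prerequisite: proving that blockwise decomposability of the (singleton-enlarged) co-clone constrains its relations enough that the recursion never leaves a polynomial family and outputs a linear-size object; once this is available, the construction and its analysis are routine, and the counting algorithm handles the low-level steps. On the lower-bound side, the two delicate points are (i) designing a single construction that defeats \emph{every} balanced partition at once --- which is precisely what the expander provides --- and (ii) producing a usable hard gadget in case (b), where the obstruction is not a missing tuple but an irreducible multi-variable correlation; here one must carefully project $R$ to a minimal entangled sub-relation and argue it embeds a communication problem of large rectangle complexity.
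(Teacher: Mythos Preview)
Your recursion has the right shape but the termination/size analysis is the hard part, and the argument you sketch does not close it. You branch on a value of $x$ and recurse on subconstraints with one fewer variable; with branching factor up to $|D|$ this gives $(|D|)^{n}$ nodes unless you can prove a polynomial bound on the number of \emph{distinct} subconstraints that arise. You assert this via a ``bijections + unaries'' structural picture, but that picture is too strong: already in Example~\ref{exa:separateNotions} the binary projections have non-singleton blocks such as $\{c,d\}^2$, so indecomposable factors are not forced to be chains of bijections. The paper's algorithm avoids this issue by a different, simpler mechanism that you are missing: when the current constraint is indecomposable, \emph{every} selection matrix $M^R_{x_1,x_i}$ has at least two blocks, so after fixing $x_1=a$ the domain of \emph{each} remaining variable $x_i$ loses at least one value. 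Hence the branching depth is bounded by $O(|D|)$, not by $n$, and the recursion tree has at most $n\cdot|D|^{O(|D|)}$ leaves. Note that the algorithm never computes or uses the per-block partitions $(V_i,W_i)$ at all; it only tests emptiness of selection-matrix entries, which Corollary~\ref{cor:counting-block-factors} makes polynomial.

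\textbf{Lower bound.} Your plan is close to the paper's, with two differences worth noting. First, the construction: the paper puts \emph{one} variable per vertex of the bipartite expander and one constraint $R(x_u,x_v,\vec z_e)$ per edge (with private $\vec z_e$); the roles of $x$ and $y$ are played by the two endpoints of an edge, not by a pair sitting at a single node. Second, and more substantively, the paper treats cases~(a) and~(b) by the \emph{same} counting argument rather than a fooling-set/rank dichotomy: for a fixed rectangle $\mathfrak r$ and each matching edge crossing the partition one exhibits a tuple of $R$ outside the projected rectangle (from the missing corner in case~(a), from $V\times W\subsetneq R|_B$ in case~(b)), then builds $2^{\Omega(|\mathcal M|)}$ solutions of $F_n$ by independently flipping edges of an induced matching $\mathcal M$ to these witnesses, and bounds preimage collisions using the constant degree of $G_n$. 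This yields $|\sol(F_n')|\ge (1+\delta)^{\Omega(n)}|\sol(\mathfrak r)|$, hence a $2^{\Omega(n)}$ lower bound on any balanced rectangle cover. Your ``rank argument from a linear-code structure'' for case~(b) would need additional justification, since an indecomposable block need not carry any linear structure; the paper's counting argument sidesteps this entirely. Finally, no projection step is needed at the end: the hard instances are already $\Gamma$-formulas once you replace each occurrence of $R$ by its (projection-free) $\Gamma$-definition, using Lemma~\ref{lem:closedProjectionSeletion} to drop the outer projection from the pp-definition.
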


Moreover, we will show that both properties (strong blockwise decomposability
and strong uniform blockwise decomposability) are decidable
(Theorem~\ref{thm:decidability}) and that $\Gamma=\{R\}$ for the relation $R$ from
Example~\ref{exa:separateNotions} separates both notions (Theorem~\ref{thm:separation}).

\section{Properties of the Decomposability Notions}\label{sct:properties}

In this section we state important properties about (uniform)
blockwise decomposability.
We start by observing that these notions
are closed under projection and selection. 

\subsection{Basic Properties}\label{sec:basic-properties}

We start with some basic properties that will be useful throughout the rest of this paper.

\begin{lemma}\label{lem:closedProjectionSeletion}
        Let $R(\vec u)$ be a blockwise decomposable, resp.~uniformly blockwise decomposable, constraint and let $Y\subseteq
        \tilde u$, $z\in \tilde u$, and $S \subseteq D$. Then the
        projection $\pi_Y R(\vec u)$ as well as the selection $R(\vec
        u)|_{z\in S}$  are also
        blockwise decomposable, resp.~uniformly blockwise decomposable. 
\end{lemma}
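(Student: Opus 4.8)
The plan is to prove the two closure statements separately, handling projection and selection in turn, and in each case reducing the claim about $\pi_Y R(\vec u)$ or $R(\vec u)|_{z\in S}$ to the blockwise decomposability of $R(\vec u)$ itself. Throughout, the key technical objects to control are the selection matrices $M^{R}_{x,y}$ and how they transform under these operations. I would first record the trivial reductions: it suffices to treat projection away of a single variable $z$ (iterating gives general $Y$), and for selection it suffices to treat an arbitrary $S\subseteq D$ in one step. I expect selection to be the easier of the two.

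\textbf{Selection.} Let $R'(\vec u') := R(\vec u)|_{z\in S}$, so $\tilde u' = \tilde u$ and $\sol(R'(\vec u')) = \{\beta \in \sol(R(\vec u)) \mid \beta(z)\in S\}$. Fix $x,y\in\tilde u$; I want blockwise decomposability of $R'$ in $x,y$. There are two cases. If $z\notin\{x,y\}$, then a direct computation shows $M^{R'}_{x,y}[a_i,a_j] = M^{R}_{x,y}[a_i,a_j]\big|_{z\in S}$ for every pair $a_i,a_j$ (selection commutes with selection and with the projection $\pi_{\tilde u\setminus\{x,y\}}$ since $z$ survives that projection). In particular $\sol(M^{R'}_{x,y}[a_i,a_j])$ is either empty or equal to a nonempty sub-constraint of $M^{R}_{x,y}[a_i,a_j]$; the crucial point is that nonemptiness of the new entry implies nonemptiness of the old one, but that is not enough by itself — some entries that were nonempty may become empty, which could in principle destroy the block structure. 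To handle this cleanly I would instead argue via Lemma~\ref{lem:2x2blockmatrix}: any $2\times 2$-submatrix of $M^{R'}_{x,y}$ with exactly one empty entry would have to come from a $2\times 2$-submatrix of $M^{R}_{x,y}$ — but here is the subtlety, selection can turn a nonempty entry empty, so I actually need a sharper observation. The right statement is that selection $|_{z\in S}$ applied entrywise corresponds exactly to replacing $R$ by $R\cap\{\beta:\beta(z)\in S\}$, and the blocks $(V_i,W_i)$ of $R$ restricted to $z\in S$ remain Cartesian products (intersecting a product with a ``cylinder'' on one coordinate $z$, which lies in $V_i$ or $W_i$, yields again a product over the same partition, possibly empty). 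So the cleanest route: show that if $M^R_{x,y}$ is a proper block matrix with block partitions $(A_\ell),(B_\ell)$ and each block $R(\vec u)|_{x\in A_\ell,y\in B_\ell}$ decomposes over $(V_\ell,W_\ell)$, then $R'(\vec u)|_{x\in A_\ell, y\in B_\ell} = R(\vec u)|_{x\in A_\ell,y\in B_\ell}\big|_{z\in S}$ is still decomposable over $(V_\ell,W_\ell)$ (because $z$ lies in exactly one of $V_\ell,W_\ell$ and $\pi_{V_\ell}$ of the restricted block is just $\pi_{V_\ell}$ of the block restricted on the $z$-side, the other projection being unchanged), and then the block partitions $(A_\ell\cap\cdot),(B_\ell)$ still ``work'' after deleting any blocks that became entirely empty — and by Lemma~\ref{lem:2x2blockmatrix} applied in reverse, the resulting matrix, having only entries that are empty or sub-constraints sitting inside the old blocks, is automatically still a proper block matrix because emptying out whole sub-rectangles of a block structure keeps the ``no $2\times2$ with exactly one empty entry'' property only if we are careful — so in fact I would prove directly that the new block structure is $A'_\ell, B'_\ell$ where we keep $A_\ell,B_\ell$ but relabel, and verify \eqref{eq:6} by hand. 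If $z\in\{x,y\}$, say $z=x$: then $M^{R'}_{x,y}[a_i,a_j]$ equals $M^R_{x,y}[a_i,a_j]$ if $a_i\in S$ and is $\emptyset$ if $a_i\notin S$; i.e.\ $M^{R'}_{x,y}$ is obtained from $M^R_{x,y}$ by zeroing out the rows outside $S$. This clearly preserves being a proper block matrix (replace $A_\ell$ by $A_\ell\cap S$, drop empty blocks) and the block constraints that survive are the selections of the old ones on $x\in A_\ell\cap S$, which remain products over the same $(V_\ell,W_\ell)$. The uniform version is identical since all the $(V_\ell,W_\ell)$ above can be taken to be the single $(V,W)$.

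\textbf{Projection.} Let $R'(\vec u') := \pi_{\tilde u\setminus\{z\}} R(\vec u)$. Fix $x,y\in\tilde u\setminus\{z\}$. The key identity is $M^{R'}_{x,y}[a_i,a_j] = \pi_{\tilde u\setminus\{x,y,z\}}\big(R(\vec u)|_{x=a_i,y=a_j}\big) = \pi_{(\tilde u\setminus\{x,y\})\setminus\{z\}}\big(M^R_{x,y}[a_i,a_j]\big)$, i.e.\ the new selection matrix is the old one with $z$ projected out of every entry. Since projection does not change whether a constraint is empty ($\sol$ is empty iff the projection's $\sol$ is empty), we get $\sol(M^{R'}_{x,y}[a_i,a_j])\neq\emptyset \iff \sol(M^R_{x,y}[a_i,a_j])\neq\emptyset$, so $M^{R'}_{x,y}$ has exactly the same pattern of empty/nonempty entries as $M^R_{x,y}$; in particular it is a proper block matrix with the very same partitions $(A_\ell),(B_\ell)$. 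It remains to check that each block of $R'$ is decomposable over a suitable partition. We have $R'(\vec u)|_{x\in A_\ell, y\in B_\ell} = \pi_{\tilde u\setminus\{z\}}\big(R(\vec u)|_{x\in A_\ell,y\in B_\ell}\big)$, and the latter block is a product $\pi_{V_\ell}(\cdot)\times\pi_{W_\ell}(\cdot)$. Projecting a Cartesian product $P\times Q$ over the partition $(V_\ell,W_\ell)$ away along the single variable $z$, which lies in exactly one part, say $z\in V_\ell$: then $\pi_{\tilde u\setminus\{z\}}(P\times Q) = \pi_{V_\ell\setminus\{z\}}(P)\times Q$, which is a product over the partition $(V_\ell\setminus\{z\}, W_\ell)$ of $\tilde u\setminus\{z\}$, and $x\in V_\ell\setminus\{z\}$ (as $x\neq z$), $y\in W_\ell$. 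Hence the block is decomposable over $(V_\ell\setminus\{z\},W_\ell)$, establishing blockwise decomposability of $R'$ in $x,y$. For the uniform version, all blocks used the single $(V,W)$, so all projected blocks use the single $(V\setminus\{z\},W)$ — still a valid partition with $x,y$ separated — giving uniform blockwise decomposability of $R'$.

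\textbf{Main obstacle.} The genuinely delicate point is the selection case when $z\notin\{x,y\}$: unlike projection, selection \emph{can} turn nonempty entries of the selection matrix into empty ones, so I cannot simply say the empty/nonempty pattern is preserved. The clean fix is not to reason about the $2\times2$-submatrix criterion after the fact, but to observe that the new block structure is obtained from the old one by intersecting each block constraint with the cylinder $\{\beta:\beta(z)\in S\}$ — and since $z$ belongs to exactly one side of each block's decomposition partition $(V_\ell,W_\ell)$, this intersection is again a Cartesian product over $(V_\ell,W_\ell)$ (possibly empty); discarding the blocks that became empty yields a valid proper-block-matrix witness for $R'$ with the property \eqref{eq:6}. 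Verifying \eqref{eq:6} carefully — in particular that an entry $M^{R'}_{x,y}[a_i,a_j]$ is nonempty iff $a_i,a_j$ land in a common \emph{surviving} block — is the one spot that needs a short but honest argument rather than an appeal to an earlier lemma.
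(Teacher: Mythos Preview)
Your proof is correct and follows essentially the same approach as the paper: analyze how $M^R_{x,y}$ transforms under projection (the nonemptiness pattern is preserved exactly, so the block partitions are identical and each block projects to a product over $(V_\ell\cap Y,W_\ell\cap Y)$) and under selection, with a case split on whether $z\in\{x,y\}$.

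You are in fact more careful than the paper on the one point you flag as the main obstacle. For selection with $z\notin\{x,y\}$ the paper simply asserts that the new matrix ``has the same block structure as $M^R_{x,y}$,'' which is imprecise: as you note, entries can become empty. Your fix is the right one. Since $z$ lies in exactly one side of each block's decomposition partition $(V_\ell,W_\ell)$, the restricted block is still a (possibly empty) product over $(V_\ell,W_\ell)$; hence within the old block $(A_\ell,B_\ell)$ the nonempty entries of the new matrix form a sub-rectangle $A'_\ell\times B_\ell$ or $A_\ell\times B'_\ell$ (depending on which side contains $z$), and these sub-rectangles, being contained in the pairwise disjoint old blocks, witness that $M^{R'}_{x,y}$ is again a proper block matrix with blocks decomposing over the same $(V_\ell,W_\ell)$. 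This is exactly the ``short but honest argument'' you anticipated, and it closes the gap the paper leaves.
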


\begin{proof}
  If $R(\vec u)$ is (uniformly) blockwise decomposable consider $x,y\in Y$. First note that each entry $M_{x,y}^{\pi_Y R(\vec
        u)}[a_i,b_j]$ is non-empty if $M_{x,y}^{R(\vec u)}[a_i,b_j]$ is
  non-empty. Thus, $M_{x,y}^{\pi_Y R(\vec u)}$ is a proper block
  matrix with the same block structure $(A_1,\ldots,A_k)$ and
  $(B_1,\ldots,B_k)$ as $M_{x,y}^{R(\vec u)}$. Furthermore, if a block $R(\vec u)|_{x\in
        A_i,y\in B_i}$ is decomposable in $(V,W)$, then the corresponding
  block $(\pi_Y R(\vec u))|_{x\in A_i,y\in B_i}$ in $M_{x,y}^{\pi_Y R(\vec u)}$ is decomposable in
  $(V\cap Y,W\cap Y)$. It follows that $\pi_Y R(\vec u)$ is
  (uniformly) decomposable.

  For the selection ``$z\in S$'' first consider the case where $x=z$ or
  $y=z$. Then the selection matrix $M_{x,y}^{R(\vec u)|_{z\in S}}$ is
  a submatrix of $M_{x,y}^{R(\vec u)}$ and hence  (uniformly)
  blockwise decomposable if $M_{x,y}^{R(\vec u)}$ was (uniformly)
  blockwise decomposable.
  In case $z\notin\{x,y\}$, the matrix $M_{x,y}^{R(\vec u)|_{z\in S}}$
  has the same block structure as $M_{x,y}^{R(\vec u)}$ and its
  entries decomposable w.r.t.~the same partitions. Hence $R(\vec
  u)|_{z\in S}$ is also (uniformly) blockwise decomposable in the
  pair $x,y$.  
\end{proof}

\begin{corollary}\label{cor:unary}
If a constraint language $\Gamma$ is strongly (uniformly) blockwise
decomposable, then its individualization $\Gamma^{\bullet} :=
\Gamma\cup \big\{ S : S\subseteq D\big\}$ is also strongly (uniformly) blockwise decomposable.
\end{corollary}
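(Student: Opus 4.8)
The plan is to show that $\langle \Gamma^{\bullet} \rangle = \langle \Gamma \rangle^{\bullet}$ (up to what matters for blockwise decomposability) and then apply Lemma~\ref{lem:closedProjectionSeletion}. More precisely, I would argue that every relation pp-definable over $\Gamma^{\bullet}$ can be obtained from a relation pp-definable over $\Gamma$ by a sequence of selections and projections, and then invoke the fact that these operations preserve (uniform) blockwise decomposability.

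First, I would unfold the definitions. Suppose $\Gamma$ is strongly (uniformly) blockwise decomposable, i.e., every relation in $\langle\Gamma\rangle$ is (uniformly) blockwise decomposable. We need that every relation in $\langle\Gamma^{\bullet}\rangle$ is (uniformly) blockwise decomposable. So let $R \in \langle\Gamma^{\bullet}\rangle$, witnessed by a pp-formula $R(\vec y) = \pi_{\tilde y}\bigl(\bigwedge_i S_i(\vec{x}_i) \wedge \bigwedge_{(j,k)} x_j = x_k\bigr)$ over $\Gamma^{\bullet}$, where the constraints $S_i(\vec x_i)$ use relations from $\Gamma^{\bullet} = \Gamma \cup \{S : S \subseteq D\}$. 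Split the conjuncts into those using relations from $\Gamma$ (together with the equality constraints) and those using a unary relation $S \subseteq D$, say $U_{S_t}(z_t)$ for $t \in T$. Let $R'(\vec y\,')$ be the pp-formula over $\Gamma$ obtained by deleting all the unary conjuncts $U_{S_t}(z_t)$; here $\vec y\,'$ is $\vec y$ possibly extended by the variables $z_t$ that would otherwise disappear (to keep them in scope for the selections). Then $R'$ is pp-definable over $\Gamma$, hence (uniformly) blockwise decomposable by assumption. Now observe that $R(\vec y)$ is obtained from $R'$ by performing the selections $|_{z_t \in S_t}$ for all $t \in T$ and then projecting onto $\tilde y$. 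By Lemma~\ref{lem:closedProjectionSeletion}, each selection preserves (uniform) blockwise decomposability, and so does the final projection. Hence $R(\vec y)$, and therefore the relation $R$, is (uniformly) blockwise decomposable.

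The only mild subtlety — and the step I would be most careful about — is bookkeeping around variable scopes: when a variable $z_t$ occurs only in a unary conjunct $U_{S_t}(z_t)$, removing that conjunct could drop $z_t$ from the scope of $R'$, so I would make sure to keep $z_t$ in the projection target of $R'$ (and then project it away at the end), and I would also handle the equality conjuncts as part of the ``$\Gamma$-side'' formula exactly as in the definition of pp-definability given in the preliminaries. One should also note that the statement only claims strong decomposability of $\Gamma^{\bullet}$, i.e., decomposability of $\langle\Gamma^{\bullet}\rangle$, so there is nothing to prove about relations of $\Gamma^{\bullet}$ that are not in $\langle\Gamma\rangle^{\bullet}$ beyond what the pp-definition argument already gives. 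With this reduction in place the corollary follows immediately from Lemma~\ref{lem:closedProjectionSeletion}.
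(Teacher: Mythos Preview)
Your proposal is correct and follows essentially the same approach as the paper: separate the unary conjuncts from the rest, observe that the remaining formula is pp-definable over $\Gamma$ and hence (uniformly) blockwise decomposable by assumption, and then apply Lemma~\ref{lem:closedProjectionSeletion} for the selections and projection. The only cosmetic difference is that the paper uses Lemma~\ref{lem:closedProjectionSeletion} upfront to assume the pp-formula has no projection, whereas you carry the projection through and apply the lemma at the end; your extra care about variables occurring only in unary conjuncts and about equality constraints is fine but not strictly needed once the projection is handled this way.
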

\begin{proof}
  Consider a pp-formula defining 
  $R(\vec{x})\in\langle\Gamma^{\bullet}\rangle$ where with Lemma~\ref{lem:closedProjectionSeletion} we may assume that the formula contains no projection. Then the formula can be rewritten as
  $R(\vec{x})= F(\vec x) \land \bigwedge_{i}U_{S_i}(x_i)$, where
  $F(\vec x)$ is a $\Gamma$-formula. Since by assumption
  $F(\vec x)$ is (uniformly) blockwise decomposable, the same holds for
  $R(\vec{x})$ by repeatedly applying the selection $x_i\in S_i$ and Lemma~\ref{lem:closedProjectionSeletion}.
\end{proof}

We next show that when dealing with blockwise decomposable relations, we can essentially assume that they are binary. To this end, we define for every constraint $R(\vec x)$ the set of binary projections 
\begin{align*}
    \Pi_2(R(\vec x)) := \{\pi_Y(R(\vec x))| Y\subseteq \tilde{x}, |Y| \le 2\}.
\end{align*}

    \begin{figure}
      \centering
      \begin{tikzpicture}

        \node[minimum size = 4mm, inner sep=0pt, circle] (1d) at (1,0*0.5) {$d$};
        \node[minimum size = 4mm, inner sep=0pt, circle] (1c) at (1,1*0.5) {$c$};
        \node[minimum size = 4mm, inner sep=0pt, circle] (1b) at (1,2*0.5) {$b$};
        \node[minimum size = 4mm, inner sep=0pt, circle] (1a) at (1,3*0.5) {$a$};

        \node[minimum size = 4mm, inner sep=0pt, circle] (2d) at (2,0*0.5) {$d$};
        \node[minimum size = 4mm, inner sep=0pt, circle] (2c) at (2,1*0.5) {$c$};
        \node[minimum size = 4mm, inner sep=0pt, circle] (2b) at (2,2*0.5) {$b$};
        \node[minimum size = 4mm, inner sep=0pt, circle] (2a) at (2,3*0.5) {$a$};

        \draw[-,semithick] (1a) -- (2a);
        \draw[-,semithick] (1b) -- (2b);
        \draw[-,semithick] (1c) -- (2c);
        \draw[-,semithick] (1d) -- (2d);

        \draw[-,semithick] (1c) -- (2d);
        \draw[-,semithick] (1d) -- (2c);

        \node[minimum size = 4mm,inner sep=0pt, circle] (3d) at (3,0*0.5) {$d$};
        \node[minimum size = 4mm,inner sep=0pt, circle] (3c) at (3,1*0.5) {$c$};
        \node[minimum size = 4mm,inner sep=0pt, circle] (3b) at (3,2*0.5) {$b$};
        \node[minimum size = 4mm,inner sep=0pt, circle] (3a) at (3,3*0.5) {$a$};

        \node[minimum size = 4mm,inner sep=0pt, circle] (4d) at (4,0*0.5) {$d$};
        \node[minimum size = 4mm,inner sep=0pt, circle] (4c) at (4,1*0.5) {$c$};
        \node[minimum size = 4mm,inner sep=0pt, circle] (4b) at (4,2*0.5) {$b$};
        \node[minimum size = 4mm,inner sep=0pt, circle] (4a) at (4,3*0.5) {$a$};

        \draw[-,semithick] (3a) -- (4a);
        \draw[-,semithick] (3b) -- (4b);
        \draw[-,semithick] (3c) -- (4c);
        \draw[-,semithick] (3d) -- (4d);

        \draw[-,semithick] (3a) -- (4b);
        \draw[-,semithick] (3b) -- (4a);

        \node[minimum size = 4mm,inner sep=0pt, circle] (5d) at (5,0*0.5) {$d$};
        \node[minimum size = 4mm,inner sep=0pt, circle] (5c) at (5,1*0.5) {$c$};
        \node[minimum size = 4mm,inner sep=0pt, circle] (5b) at (5,2*0.5) {$b$};
        \node[minimum size = 4mm,inner sep=0pt, circle] (5a) at (5,3*0.5) {$a$};

        \node[minimum size = 4mm,inner sep=0pt, circle] (6d) at (6,0*0.5) {$d$};
        \node[minimum size = 4mm,inner sep=0pt, circle] (6c) at (6,1*0.5) {$c$};
        \node[minimum size = 4mm,inner sep=0pt, circle] (6b) at (6,2*0.5) {$b$};
        \node[minimum size = 4mm,inner sep=0pt, circle] (6a) at (6,3*0.5) {$a$};

        \draw[-,semithick] (5a) -- (6a);
        \draw[-,semithick] (5b) -- (6b);
        \draw[-,semithick] (5c) -- (6c);
        \draw[-,semithick] (5d) -- (6d);

        \draw[-,semithick] (5a) -- (6b);
        \draw[-,semithick] (5b) -- (6a);
        \draw[-,semithick] (5c) -- (6d);
        \draw[-,semithick] (5d) -- (6c);

        \node[inner sep=0pt,anchor = south east, align=left] at
        (-.8,-.15) {
          $R = \{(a,a,a,a),$\\
          $\phantom{R = \{}(b,b,b,b),$\\
          $\phantom{R = \{}(c,c,c,c),$\\
          $\phantom{R = \{}(d,d,d,d),$\\
          $\phantom{R = \{}(a,b,a,b),$\\
          $\phantom{R = \{}(b,a,b,a),$\\
          $\phantom{R = \{}(c,d,d,c),$\\
          $\phantom{R = \{}(d,c,c,d)\}$
        };

        \node at (1.5,4*0.5) {$R'$};
        \node at (3.5,4*0.5) {$R''$};
        \node at (5.5,4*0.5) {$R'''$};
      \end{tikzpicture}
      \caption{All binary projections of relation $R$
        (Example~\ref{exa:projections1})}
      \label{fig:bin-projections}
    \end{figure}
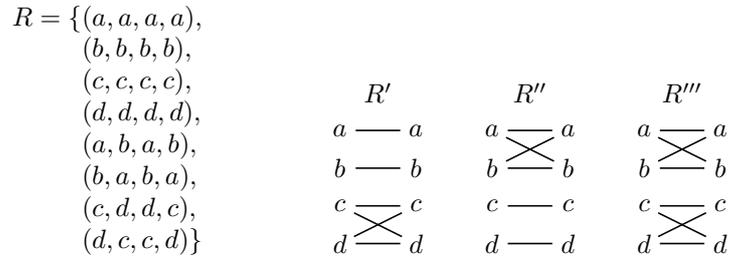

    \begin{figure}
      \centering
      \begin{tikzpicture}
        
        \node[minimum size = 4mm, inner sep=0pt, circle] (xd) at (0,1*0.5) {$d$};
        \node[minimum size = 4mm, inner sep=0pt, circle] (xc) at (0,2*0.5) {$c$};
        \node[minimum size = 4mm, inner sep=0pt, circle] (xb) at (0,3*0.5) {$b$};
        \node[minimum size = 4mm, inner sep=0pt, circle] (xa) at (0,4*0.5) {$a$};
        \node[minimum size = 4mm, inner sep=0pt, draw, rectangle] (x) at (0,5.5*0.5) {$x$};

        \node[minimum size = 4mm, inner sep=0pt, circle] (yd) at (0,-1*0.5) {$d$};
        \node[minimum size = 4mm, inner sep=0pt, circle] (yc) at (0,-2*0.5) {$c$};
        \node[minimum size = 4mm, inner sep=0pt, circle] (yb) at (0,-3*0.5) {$b$};
        \node[minimum size = 4mm, inner sep=0pt, circle] (ya) at (0,-4*0.5) {$a$};
        \node[minimum size = 4mm, inner sep=0pt, draw, rectangle] (y) at (0,-5.5*0.5) {$y$};

        \node[minimum size = 4mm, inner sep=0pt, circle] (vd) at (1*0.5,0) {$d$};
        \node[minimum size = 4mm, inner sep=0pt, circle] (vc) at (2*0.5,0) {$c$};
        \node[minimum size = 4mm, inner sep=0pt, circle] (vb) at (3*0.5,0) {$b$};
        \node[minimum size = 4mm, inner sep=0pt, circle] (va) at (4*0.5,0) {$a$};
        \node[minimum size = 4mm, inner sep=0pt, draw, rectangle] (v) at (5.5*0.5,0) {$v$};

        \node[minimum size = 4mm, inner sep=0pt, circle] (ud) at (-1*0.5,0) {$d$};
        \node[minimum size = 4mm, inner sep=0pt, circle] (uc) at (-2*0.5,0) {$c$};
        \node[minimum size = 4mm, inner sep=0pt, circle] (ub) at (-3*0.5,0) {$b$};
        \node[minimum size = 4mm, inner sep=0pt, circle] (ua) at (-4*0.5,0) {$a$};
        \node[minimum size = 4mm, inner sep=0pt, draw, rectangle] (u) at (-5.5*0.5,0) {$u$};

        \draw[-,semithick] (xa) -- (va);
        \draw[-,semithick] (xb) -- (vb);
        \draw[-,semithick] (xc) -- (vc);
        \draw[-,semithick] (xd) -- (vd);
        \draw[-,semithick] (xa) -- (vb);
        \draw[-,semithick] (xb) -- (va);

        \draw[-,semithick] (va) -- (ya);
        \draw[-,semithick] (vb) -- (yb);
        \draw[-,semithick] (vc) -- (yc);
        \draw[-,semithick] (vd) -- (yd);
        \draw[-,semithick] (vc) -- (yd);
        \draw[-,semithick] (vd) -- (yc);

        \draw[-,semithick] (ya) -- (ua);
        \draw[-,semithick] (yb) -- (ub);
        \draw[-,semithick] (yc) -- (uc);
        \draw[-,semithick] (yd) -- (ud);
        \draw[-,semithick] (ya) -- (ub);
        \draw[-,semithick] (yb) -- (ua);

        \draw[-,semithick] (ua) -- (xa);
        \draw[-,semithick] (ub) -- (xb);
        \draw[-,semithick] (uc) -- (xc);
        \draw[-,semithick] (ud) -- (xd);
        \draw[-,semithick] (uc) -- (xd);
        \draw[-,semithick] (ud) -- (xc);

        \node at (1.3,1.3) {$R''$};
        \node at (1.3,-1.3) {$R'$};
        \node at (-1.3,1.3) {$R'$};
        \node at (-1.3,-1.3) {$R''$};
        
        \node at (0,-3.5) {$R(x,y,u,v)=R'(x,u)\land R''(x,v)\land
          R'(y,v)\land R''(y,u)$};
        
      \end{tikzpicture}
      \caption{Illustration of $R \in \langle\{
        R',R''\}\rangle$ (Example~\ref{exa:projections2}), tuples in
        $R$ correspond to 4-cycles in the figure.}
      \label{fig:bin-definability}
    \end{figure}

\begin{example}\label{exa:projections1}
    Let us compute $\Pi_2(R)$ for the 4-ary relation $R$ from
    Example~\ref{exa:separateNotions}, see
    Figure~\ref{fig:bin-projections}. For the unary relations, observe
    that every column in $R$ can take all
    four values $a,b,c,d$, so the only unary relation in $\Pi_2(R)$ is
    $\{a,b,c,d\}$. The six possible binary projections lead
    to three different constraint relations: 
    \begin{itemize}
    \item 
    projecting $R(x,y,
    u,v)$ to $\{x,u\}$ and $\{y,v\}$ yields the relation\\ $R'=\{(a,a),(b,b)\}\cup\{c,d\}^2$.
    \item 
    projecting $R(x,y,
    u,v)$ to $\{x,v\}$ and $\{y,u\}$ yields the relation\\ $R''=\{a,b\}^2\cup\{(c,c),(d,d)\}$.
    \item 
    projecting $R(x,y,
    u,v)$ to $\{x,y\}$ and $\{u,v\}$ yields the relation\\
    $R'''=\{a,b\}^2\cup\{c,d\}^2$. \qedhere
    \end{itemize}
\end{example}

For blockwise decomposable constraints, binary projections do not change solutions in the following sense.
\begin{lemma}\label{lem:binarize}
    Let $R(\vec x)$ be a blockwise decomposable constraint. Then 
    \begin{align*}
        R(\vec x) = \bigwedge_{R'(\vec y) \in \Pi_2(R(\vec x))} R'(\vec y).
    \end{align*}
\end{lemma}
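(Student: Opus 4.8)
The inclusion $\sol(R(\vec x))\subseteq\sol\bigl(\bigwedge_{R'(\vec y)\in\Pi_2(R(\vec x))}R'(\vec y)\bigr)$ is immediate, since every solution of a constraint satisfies all of its projections, so the plan is to prove the reverse inclusion. Concretely, I would show by induction on $|\tilde x|$ that any assignment $\beta\colon\tilde x\to D$ which satisfies $\pi_Y(R(\vec x))$ for every $Y\subseteq\tilde x$ with $|Y|\le 2$ already lies in $\sol(R(\vec x))$. For $|\tilde x|\le 2$ this is trivial because $R(\vec x)$ itself occurs among the conjuncts (as $\pi_{\tilde x}(R(\vec x))$). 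The real content is the case $|\tilde x|=3$, and the cases $|\tilde x|\ge 4$ will reduce to it together with the induction hypothesis.

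For $|\tilde x|=3$, write $\tilde x=\{x,u,w\}$ and let $\beta$ satisfy all binary projections. I would apply blockwise decomposability of $R(\vec x)$ to the pair $u,w$: then $M^R_{u,w}$ is a proper block matrix with partitions $(A_1,\dots,A_m)$, $(B_1,\dots,B_m)$ and each block $R(\vec x)|_{u\in A_i,\,w\in B_i}$ is decomposable with respect to a partition $(V_i,W_i)$ of $\tilde x$ with $u\in V_i$ and $w\in W_i$. Since $\beta$ satisfies $\pi_{\{u,w\}}(R(\vec x))$, the entry $M^R_{u,w}[\beta(u),\beta(w)]$ is non-empty, so by \eqref{eq:6} the pair $(\beta(u),\beta(w))$ lies in some block $q$, i.e.\ $\beta(u)\in A_q$ and $\beta(w)\in B_q$. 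Assume $x\in V_q$; the case $x\in W_q$ is symmetric, exchanging the roles of $u$ and $w$. Because $\beta$ satisfies $\pi_{\{x,u\}}(R(\vec x))$, there is $t\in\sol(R(\vec x))$ with $t(x)=\beta(x)$ and $t(u)=\beta(u)$; as $t(u)=\beta(u)\in A_q$ and the $A_i$ are pairwise disjoint, non-emptiness of $M^R_{u,w}[t(u),t(w)]$ together with \eqref{eq:6} forces $t(w)\in B_q$, so $t\in\sol\bigl(R(\vec x)|_{u\in A_q,\,w\in B_q}\bigr)$. Likewise, non-emptiness of $M^R_{u,w}[\beta(u),\beta(w)]$ yields some $t'\in\sol\bigl(R(\vec x)|_{u\in A_q,\,w\in B_q}\bigr)$ with $t'(w)=\beta(w)$. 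Since this block is decomposable along $(V_q,W_q)$, the assignment $s$ that agrees with $t$ on $V_q$ and with $t'$ on $W_q$ again lies in $\sol\bigl(R(\vec x)|_{u\in A_q,\,w\in B_q}\bigr)\subseteq\sol(R(\vec x))$, and $s(x)=\beta(x)$, $s(u)=\beta(u)$, $s(w)=\beta(w)$; as $\tilde x=\{x,u,w\}$, this means $s=\beta$, so $\beta\in\sol(R(\vec x))$.

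For $|\tilde x|\ge 4$, let $\beta$ satisfy all binary projections of $R(\vec x)$. First, for every three-element set $Z\subseteq\tilde x$, the constraint $\pi_Z(R(\vec x))$ is blockwise decomposable by Lemma~\ref{lem:closedProjectionSeletion}, has three variables, and its binary projections are among those of $R(\vec x)$; hence by the case just treated $\beta|_Z\in\sol(\pi_Z(R(\vec x)))$, i.e.\ $\beta$ satisfies all ternary projections of $R(\vec x)$ too. Now fix any $z\in\tilde x$ and put $R_1:=\pi_{\tilde x\setminus\{z\}}\bigl(R(\vec x)|_{z=\beta(z)}\bigr)$, which by Lemma~\ref{lem:closedProjectionSeletion} is blockwise decomposable with $|\tilde x|-1$ variables. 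For every $Y\subseteq\tilde x\setminus\{z\}$ with $|Y|\le 2$ one has $\pi_Y(R_1)=\pi_Y\bigl(R(\vec x)|_{z=\beta(z)}\bigr)$, and $\beta|_Y$ satisfies this constraint if and only if $\beta|_{Y\cup\{z\}}\in\sol\bigl(\pi_{Y\cup\{z\}}(R(\vec x))\bigr)$; the latter holds by the binary-projection hypothesis when $|Y\cup\{z\}|\le 2$ and by the previous observation when $|Y\cup\{z\}|=3$. Thus $\beta|_{\tilde x\setminus\{z\}}$ satisfies all binary projections of $R_1$, so the induction hypothesis gives $\beta|_{\tilde x\setminus\{z\}}\in\sol(R_1)$; since $z$ was fixed to $\beta(z)$ before being projected away, the solution of $R(\vec x)|_{z=\beta(z)}$ witnessing this must coincide with $\beta$, so $\beta\in\sol(R(\vec x))$, completing the induction.

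The main obstacle is the three-variable base case, and within it the bookkeeping of which block each witness tuple lies in: the decisive point is that a tuple's value on $u$ (respectively $w$) already pins down its block by disjointness of the $A_i$ (respectively $B_i$), which is exactly what lets the two witnesses be placed in the same block $q$ and then glued along $(V_q,W_q)$. It is also essential to apply blockwise decomposability to the pair $u,w$ rather than to a pair involving $x$, so that $x$ falls entirely on one side of that block's product decomposition.
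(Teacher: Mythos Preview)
Your proof is correct. The core idea---that in the three-variable situation the block structure lets one glue two witnesses along the decomposition $(V_q,W_q)$---is exactly what the paper uses, and your bookkeeping (that the value on $u$ pins down the block by disjointness of the $A_i$) is precisely the point.

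The organization differs slightly from the paper's. The paper fixes the constraint $R(\vec x)$ once and shows by induction on $\ell=2,\ldots,k$ that $\alpha|_I\in\sol(\pi_I(R(\vec x)))$ for every $I$ of size $\le\ell$; the inductive step for arbitrary $\ell\ge3$ picks three variables $z_1,z_2,z_3\in I$, uses the hypothesis on the three sets $I\setminus\{z_i\}$, and runs the same block-gluing argument directly in $M^{\pi_I(R)}_{z_1,z_2}$. You instead induct on the number of variables of the constraint, isolate the three-variable case first, and then for $|\tilde x|\ge4$ pass to the smaller constraint $R_1=\pi_{\tilde x\setminus\{z\}}(R|_{z=\beta(z)})$ via Lemma~\ref{lem:closedProjectionSeletion}. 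Your route uses closure under both projection and selection, while the paper only ever projects; on the other hand, your isolation of the ternary case makes the heart of the argument very visible. Both approaches are equally valid and of comparable length.
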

\begin{proof}
We assume w.\,l.\,o.\,g. that $\vec x=(x_1,\ldots,x_k)$ contains $k$
distinct variables and let $F(\vec x):=\bigwedge_{R'(\vec y) \in \Pi_2(R(\vec x))} R'(\vec y)$. First observe that $
\operatorname{sol}(R(\vec x)) \subseteq \operatorname{sol}(F(\vec x))$ by the definition of projections.
For the other direction, we let $\alpha \in \operatorname{sol}(F(\vec
x))$ and need to show $\alpha \in \operatorname{sol}(R(\vec
x))$. For this, we show by induction on $\ell = 2,\ldots, k$:
\begin{equation}
  \label{eq:8}
  \text{For all } I \subseteq  \{x_1,\ldots,x_k\} \text{ with } |I|\leq
  \ell\colon\quad \alpha|_I \in \operatorname{sol}(\pi_I(R(\vec x))).
\end{equation}
Note that the lemma follows from this claim by setting $\ell=k$ and that the base case
$\ell=2$ follows from the definition of $F$. So let $\ell\geq 3$ and
assume that \eqref{eq:8} holds for all $I$ with $|I|<\ell$. Fix an
arbitrary $I$ with $|I|=\ell$, let $z_1,z_2,z_3$ be three variables in $I$,
and set $I_i:=I\setminus\{z_i\}$ for $i\in \{1,2,3\}$. By induction assumption we get     
  $\alpha|_{I_i} \in \operatorname{sol}(\pi_{I_i}(R(\vec x)))$, which
  implies that there are $b_1,b_2,b_3\in D$ such that
$\alpha|_{I_i}\cup\{z_i\mapsto b_i\} \in \operatorname{sol}(\pi_{I}(R(\vec x)))$.
Let $a_i:=\alpha(z_i)$. In order to show that $\alpha \in
\operatorname{sol}(\pi_{I}(R(\vec x)))$, we consider the selection
matrix $M:=M^{\pi_{I}(R(\vec x))}_{z_1,z_2}$. Since we have that $M[a_1,a_2]$,
$M[b_1,a_2]$, and $M[a_1,b_2]$ are non-empty,
these entries lie in the same block, which is decomposable into some
partition $(U,V)$ of $I$ with $z_1\in U$ and $z_2\in V$. Moreover,
$\alpha|_{U\setminus \{z_1\}}\in \operatorname{sol}(\pi_{U\setminus \{z_1\}}(M[a_1,b_2]))$ and $\alpha|_{V\setminus
  \{z_2\}}\in \operatorname{sol}(\pi_{V\setminus
  \{z_2\}}(M[b_1,a_2]))$. Decomposability implies that $\alpha|_{U\setminus
  \{z_1\}}\cup \alpha|_{V\setminus \{z_2\}} \in \operatorname{sol}(M[a_1,a_2])$ which is
the same statement as $\alpha|_{I} \in \operatorname{sol}(\pi_{I}(R(\vec x)))$.  
\end{proof}

Let $\Gamma$ be a constraint language. Then we define $\Pi_2(\Gamma)$
to be the constraint language consisting of all relations of arity at most $2$ that are pp-definable over~$\Gamma$. Equivalently, $\Pi_2(\Gamma)$ consists of all relations that are constraint relations of constraints in $\Pi_2(R(\vec x))$ for a relation $R\in \coclone{\Gamma}$. Observe that even though $\coclone{\Gamma}$ is infinite, we have that $\Pi_2(\Gamma)$ is finite, since it contains only relations of arity at most two.

\begin{example}\label{exa:projections2}
    Consider again the relation $R$ and its projections $R'$ and $R''$
    from Example~\ref{exa:separateNotions},~\ref{exa:projections1} and
    Figure~\ref{fig:bin-projections}.
    We let $\Gamma:= \{R\}$ and want to compute $\Pi_2(\Gamma)$. We
    first observe that instead of $R$ we could use the two binary
    projections $R'$ and $R''$, that is, for $\Gamma':=
    \{R', R''\}$ we have  $\coclone{\Gamma} = \coclone{\Gamma'}$. To see this, observe first that we have already seen in Example~\ref{exa:projections1} that $R', R''\in \Pi_2(R) \subseteq \coclone{\Gamma}$. For the other direction, we just have to express $R$ as a pp-formula in $\Gamma'$ which we do by
    \begin{equation}
R(x,y,u,v)=R'(x,u)\land R''(x,v)\land
          R'(y,v)\land R''(y,u), \quad\text{see Figure~\ref{fig:bin-definability}.}
    \end{equation}
    So to compute $\Pi_2(\Gamma)$ we can equivalently analyze $\Pi_2(\Gamma')$ which is easier to understand. Note first that the only unary relation that is pp-defiable over $\Gamma'$ is $D:=\{a, b, c, d\}$, so we only have to understand binary relations. To this end, we assign a graph $G_F$ to every $\Gamma'$-formula $F$ where the variables are the variables of $F$ and there is an edge between two variables $x$ and $y$ if there is a constraint $R'(x,y)$ or $R''(x,y)$ in $F$. In the former case, we label the edge with $R'$, in the latter with $R''$. Note that an edge can have both labels. An easy induction shows the following: if two variables $x,y$ are connected by a path whose edges are all labeled with $R'$, then in any solution in which $x$ takes value $a$, $y$ takes value $a$ as well and the same statement is true for $b$. Moreover, if $x$ and $y$ are not connected by such a path, then there is a solution in which $x$ takes value $a$ and $y$ takes value $b$ and vice versa. An analogous statement is true for $c$ and $d$ and paths labeled by $R''$.
    
    From these observations, it follows that the only binary relations
    that are pp-definable over $\Gamma$ and that is not already in
    $\Pi_2(R)$ are the equality relation $R_=:=\{(a,a), (b,b), (c,c),
    (d,d)\}$ and the trivial relation $R_{\text{triv}}:=D^2$.
Thus we get
  $\Pi_2(\Gamma)=\{D, R_=,R_{\text{triv}}, R', R'', R'''\}$. %
\end{example}
    
The following result shows that for strongly blockwise decomposable constraint languages, we may assume that all relations are of arity at most two.
\begin{proposition}\label{prop:restrictionToBinary}
    Let $\Gamma$ be a strongly blockwise decomposable language. Then $\coclone{\Gamma} = \coclone{\Pi_2(\Gamma)}$.
\end{proposition}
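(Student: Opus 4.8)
The plan is to show the two inclusions $\coclone{\Pi_2(\Gamma)} \subseteq \coclone{\Gamma}$ and $\coclone{\Gamma} \subseteq \coclone{\Pi_2(\Gamma)}$ separately. The first inclusion is the easy direction and essentially definitional: every relation in $\Pi_2(\Gamma)$ is by definition pp-definable over $\Gamma$, hence lies in $\coclone{\Gamma}$; since co-clones are closed under pp-definability, $\coclone{\Pi_2(\Gamma)} \subseteq \coclone{\coclone{\Gamma}} = \coclone{\Gamma}$.

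For the interesting inclusion, let $R \in \coclone{\Gamma}$ be $k$-ary and realize it as a constraint $R(x_1,\ldots,x_k)$ with distinct variables. Since $\Gamma$ is strongly blockwise decomposable, $\coclone{\Gamma}$ is blockwise decomposable, so $R(x_1,\ldots,x_k)$ is a blockwise decomposable constraint. Now I would invoke Lemma~\ref{lem:binarize}, which gives
\begin{align*}
    R(x_1,\ldots,x_k) = \bigwedge_{R'(\vec y)\in \Pi_2(R(x_1,\ldots,x_k))} R'(\vec y).
\end{align*}
Every binary (or unary) projection $R'(\vec y)$ appearing in this conjunction has a constraint relation that is, by definition, a member of $\Pi_2(\Gamma)$ (it is a projection onto at most two coordinates of a relation in $\coclone{\Gamma}$). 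Hence the right-hand side is a $\Pi_2(\Gamma)$-formula defining $R$, which shows $R(x_1,\ldots,x_k)$ is pp-definable over $\Pi_2(\Gamma)$, and therefore $R \in \coclone{\Pi_2(\Gamma)}$. Combining the two inclusions yields $\coclone{\Gamma} = \coclone{\Pi_2(\Gamma)}$.

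The only point that needs a little care is making sure the conjunction produced by Lemma~\ref{lem:binarize} is genuinely a pp-formula over $\Pi_2(\Gamma)$ in the formal sense (a conjunction of constraints whose relations lie in $\Pi_2(\Gamma)$, possibly after the harmless identification of constraints with their solution sets and a renaming of variables); this is immediate from the definition of $\Pi_2(\Gamma)$ as the set of all relations of arity at most two that are pp-definable over $\Gamma$, together with the observation in the text that $\Pi_2(\Gamma)$ collects exactly the constraint relations occurring in sets $\Pi_2(R(\vec x))$ for $R \in \coclone{\Gamma}$. I do not anticipate a real obstacle here: the work has already been done in Lemma~\ref{lem:binarize}, and this proposition is essentially a repackaging of that lemma at the level of constraint languages and co-clones.
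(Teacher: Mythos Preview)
Your proposal is correct and takes essentially the same approach as the paper: both directions are proved the same way, with the key step being an application of Lemma~\ref{lem:binarize} to write a blockwise decomposable constraint as a conjunction of its binary projections, whose constraint relations lie in $\Pi_2(\Gamma)$. The only cosmetic difference is that the paper proves the harder inclusion by showing $\Gamma \subseteq \coclone{\Pi_2(\Gamma)}$ (and then taking co-clones), whereas you argue directly for an arbitrary $R \in \coclone{\Gamma}$; both are fine since strong blockwise decomposability guarantees that every such $R$ is blockwise decomposable.
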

\begin{proof}
    We first show that $\coclone{\Pi_2(\Gamma)} \subseteq \coclone{\Gamma}$. For this, it suffices to show that $\Pi_2(\Gamma) \subseteq \coclone{\Gamma}$. So consider a relation $R\in \Pi_2(\Gamma)$. By definition, $R(x,y)$ is in particular pp-definable over $\Gamma$, so $R\in \coclone{\Gamma}$ which shows the first direction of the proof.
    
    For the other direction, is suffices to show that $\Gamma\subseteq \coclone{\Pi_2(\Gamma)}$. So let $R\in \Gamma$ of arity $k$. For the constraint $R(x_1, \ldots, x_k)$ we have by Lemma~\ref{lem:binarize} that $R(\vec x) = \bigwedge_{R'(\vec y) \in \Pi_2(R(\vec x))} R'(\vec y)$ where $\vec x =(x_1, \ldots, x_k)$. Since the relations $R'$ are all in $\Pi_2(\Gamma)$, this shows that $R$ pp-definable over $\Pi_2(\Gamma)$ which proves the claim.
\end{proof}
\subsection{The Relation to Strong Balance}
\label{sec:strong-balance}

Next, we will show that blockwise decomposable relations allow for
efficient counting of solutions by making a connection to the work of Dyer
and Richerby~\cite{DyerR13}. To state their dichotomy theorem, we need the following
definitions. A constraint $R(\vec x,\vec y,\vec z)$ is
\emph{balanced} (w.r.t.~$\vec x;\vec y;\vec z$), if the $|D|^{|\vec x|}\times |D|^{|\vec y|}$
matrix $M^\#_{\vec x,\vec y}$ defined by
$M^\#_{\vec x,\vec y}[\vec a,\vec b] = \left|\sol(R(\vec x,\vec y,\vec
z)|_{\vec x=\vec a,\vec y=\vec b})\right|$ is a proper block matrix, where each block has rank one. A constraint language $\Gamma$ is \emph{strongly
  balanced} if every at least ternary pp-definable constraint is balanced. 

\begin{theorem}[Effective counting dichotomy \cite{DyerR13}]\label{thm:DyerRicherby}
  If $\Gamma$ is strongly balanced, then there is a polynomial time
  algorithm that computes $|\sol(I)|$ for a given
  {\upshape CSP($\Gamma$)}-instance $I$.
  If $\Gamma$ is not strongly balanced, then counting solutions
    for {\upshape CSP($\Gamma$)}-instances is $\#$P-complete.
  Moreover, there is a polynomial time algorithm that decides if a given
    constraint language $\Gamma$
    is strongly balanced.
\end{theorem}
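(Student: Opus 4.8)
Since this is the main result of Dyer and Richerby~\cite{DyerR13}, the plan is not to reprove it but to recall the structure of the argument and to point out where the real work lies. For the \emph{algorithmic} part, one first checks that strong balance behaves well under the reductions used — in particular under conjunction and a suitable individualisation/pinning step — so that, given a CSP($\Gamma$)-instance, domain values may be fixed at will. The count $|\sol(I)|$ is then obtained by a divide-and-conquer recursion on the variable set: if $\sol(I)$ (possibly after pinning a variable) decomposes as a Cartesian product across a nontrivial variable partition, the count is the product of the two sub-counts; otherwise, fixing a variable $x$, one uses that the $|D|\times(\text{rest})$ counting matrix of the relation realised by $x$ against the remaining variables is a proper block matrix whose blocks all have rank one. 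The rank-one property is exactly what lets the $|D|$ restricted instances $I|_{x=a}$ be recombined without independently branching over all of $D$ at every variable, since inside each block the counting function factors into a row factor times a column factor. The point requiring care is to turn this into a genuinely polynomial-time recursion: one maintains the invariant that each instance produced is again a strongly balanced CSP over a domain of size at most $|D|$, and terminates by a well-founded measure on the number of variables.

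For the \emph{hardness} direction, if $\Gamma$ is not strongly balanced, fix a pp-definable relation $R$ of arity at least three and a three-way split $\vec x;\vec y;\vec z$ of its coordinates for which $M^\#_{\vec x,\vec y}$ is \emph{not} a proper block matrix with rank-one blocks. One then shows $\#$P-hardness of counting for CSP($\Gamma$) by a reduction routed through the partition-function classification of Bulatov and Grohe for nonnegative matrices: a matrix that is neither block-structured nor has all blocks of rank one induces a $\#$P-hard partition function, and via pp-definitions, gadget constructions over $\Gamma$, and pinning one simulates such a partition function by CSP($\Gamma$)-instances. I expect the technical heart here to be the design of the gadgets that realise the required weights while staying inside \emph{unweighted} CSP($\Gamma$), together with the bookkeeping needed to handle the auxiliary coordinates $\vec z$.

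Finally, \emph{decidability} — in fact polynomial-time decidability in $|\Gamma|$ and $|D|$ — follows once one shows that, although strong balance quantifies over the infinite co-clone $\coclone{\Gamma}$, an unbalanced witness (if one exists) can always be found among a finite, explicitly bounded family of pp-definable relations: the relevant counting matrices live over the fixed domain $D$, so the forbidden pattern — a $2\times2$ submatrix witnessing non-block structure, or a $2$-dimensional minor witnessing that some block has rank $\ge 2$ — already appears on a pp-definition using a bounded number of constraints. For each relation in this finite family, testing whether its counting matrix is a block matrix with rank-one blocks is routine linear algebra. The main obstacle, and the genuinely ``effective'' content of the dichotomy, is precisely this combinatorial bound on the size of a smallest unbalanced witness.
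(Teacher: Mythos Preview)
The paper does not prove this theorem at all: it is stated with the citation~\cite{DyerR13} and used as a black box (notably in Corollary~\ref{cor:counting-block-factors}). Your proposal correctly recognises this and gives a reasonable high-level summary of the Dyer--Richerby argument, so there is nothing to compare against in the present paper.
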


Our next lemma
connects
blockwise decomposability with strong balance and leads to a
number of useful corollaries in the next section. We sketch the proof for the case

\begin{lemma}\label{lem:blockisbalanced}
  Every strongly blockwise decomposable constraint language is
  strongly balanced.
\end{lemma}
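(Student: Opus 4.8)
The plan is to show that every at least ternary pp-definable constraint $R(\vec x,\vec y,\vec z)$ of a strongly blockwise decomposable $\Gamma$ is balanced, i.e., that each counting matrix $M^\#_{\vec x,\vec y}$ is a proper block matrix whose blocks have rank one. First I would reduce the task to a single-variable split: by grouping tuples of variables, it suffices to understand the interplay between the two ``halves'' $\vec x$ and $\vec y$, and we may further project so that $\vec z$ is a single variable $z$ (projections preserve blockwise decomposability by Lemma~\ref{lem:closedProjectionSeletion}, and they do not change solutions by Lemma~\ref{lem:binarize} in the relevant sense). Actually the cleanest route is to exploit Lemma~\ref{lem:binarize}: since $R(\vec x, \vec y, z)$ is blockwise decomposable, it equals the conjunction of its binary projections, so its structure is governed entirely by the binary relations $\Pi_2(R(\vec x,\vec y,z))$. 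I would use this to describe $\sol(R|_{\vec x=\vec a, \vec y = \vec b})$ explicitly.

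The key steps, in order: (1) Fix $\vec x,\vec y$ and pick any coordinate $z$ outside $\tilde x\cup\tilde y$ (if none exists, $R$ has arity $\le 2$ after merging, and the claim about $M^\#$ being a rank-one proper block matrix is immediate). (2) Establish that the zero-pattern of $M^\#_{\vec x,\vec y}$ is a proper block matrix. Here I would lift the $2\times 2$-submatrix criterion (Lemma~\ref{lem:2x2blockmatrix}): an entry $M^\#_{\vec x,\vec y}[\vec a,\vec b]$ is nonzero iff the selection $R|_{\vec x=\vec a,\vec y=\vec b}$ is satisfiable; using blockwise decomposability pairwise between a coordinate of $\vec x$ and a coordinate of $\vec y$, together with the fact that $R$ is the conjunction of binary projections, one shows that no $2\times 2$ pattern with exactly one zero can occur. (3) Show each block has rank one. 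Inside a block, I would argue that the selection $R(\vec u)|_{x\in A,y\in B}$ is, after restricting to the block, decomposable along a partition $(V,W)$ separating the $\vec x$-part from the $\vec y$-part — this is exactly what uniform/non-uniform blockwise decomposability of each $2\times2$ sub-interaction buys us, but it needs to be upgraded from single variables $x,y$ to the vectors $\vec x,\vec y$. Once the block's solution set is a Cartesian product of a set depending only on $\vec x$-coordinates and one depending only on $\vec y$-coordinates (times the choices for $z$), the count factors as a product $f(\vec a)\cdot g(\vec b)$, which is precisely rank one.

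The main obstacle is step (3): passing from ``blockwise decomposable in every \emph{pair} $x,y$ of single variables'' to ``the counting matrix split by the \emph{vectors} $\vec x;\vec y$ has rank-one blocks''. Blockwise decomposability is, by definition, only about pairs of individual variables, and the partitions $(V_i,W_i)$ witnessing decomposability of a block may a priori differ for different pairs, so one cannot immediately conclude that all of $\vec x$ lies on one side and all of $\vec y$ on the other. The fix is to use that $R$ equals the conjunction of its binary projections (Lemma~\ref{lem:binarize}): for a fixed block, consider the graph on $\tilde u$ whose edges are the binary projections that are \emph{not} full products inside that block; blockwise decomposability of each pair forces this graph to have no edge between a $\vec x$-coordinate and a $\vec y$-coordinate, hence the block decomposes as $\pi_{\tilde x}(\cdot)\times \pi_{(\tilde u\setminus\tilde x)}(\cdot)$, and projecting away $z$ preserves this. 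I expect the write-up to handle the ``sketch for the case'' mentioned just before the statement by carrying this out fully when $|\vec z|=1$ and noting the general case follows by iterating projection. A minor additional care point: one must verify the block \emph{structure} of $M^\#_{\vec x,\vec y}$ (i.e., which $(\vec a,\vec b)$ lie in which block) is consistent across the different variable pairs, which again follows from the conjunction-of-binary-projections description together with Lemma~\ref{lem:2x2blockmatrix}.
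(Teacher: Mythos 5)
Your high-level plan coincides with the paper's: show that $M^\#_{\vec x,\vec y}$ has a proper-block zero-pattern and that inside each block the selection decomposes into a product with no factor containing variables of both $\vec x$ and $\vec y$, whence each block has rank one. The paper packages exactly this as ``blockwise set-decomposability'' (Lemma~\ref{lem:set-decomposable}) and proves it by induction on $|\vec x|+|\vec y|$, peeling one variable off $\vec x$ at a time; the rank-one conclusion (Lemma~\ref{lem:BDimpliesBalanced}) is then immediate. You also correctly identify the crux: upgrading a property stated for \emph{pairs of single variables} to one about \emph{vectors}. However, your proposed resolution of that crux does not go through as stated.

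Concretely, three gaps. First, the preliminary reduction ``project so that $\vec z$ is a single variable'' is unsound for the rank-one part: the entries of $M^\#_{\vec x,\vec y}$ count assignments to all of $\vec z$, and projection merges solutions, so it changes these counts; you need the product decomposition of the block over the \emph{full} variable set. Second, and most importantly, your graph argument for step (3) only excludes \emph{direct} edges between a $\tilde x$-coordinate and a $\tilde y$-coordinate; it says nothing about paths through $\vec z$-coordinates. If some $z\in\tilde z$ were non-trivially coupled (non-product binary interaction inside the block) to both an $\tilde x$-variable and a $\tilde y$-variable, then $\tilde x$ and $\tilde y$ would lie in the same connected component, the block would not decompose with $\vec x$ and $\vec y$ separated, and the count would not factor as $f(\vec a)g(\vec b)$. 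Ruling this out is precisely the hard content of the paper's induction, and your sketch does not supply an argument for it. Third, you invoke Lemma~\ref{lem:binarize} for the block restriction $R|_{(\vec x,\vec y)\in B}$, but a block of $M^\#_{\vec x,\vec y}$ is a selection by a set of \emph{tuples}, not a unary selection, so Lemma~\ref{lem:closedProjectionSeletion} does not show this restriction is still blockwise decomposable; without that, neither the binarized representation of $R|_B$ nor the decomposition along connected components is available. (A smaller instance of the same issue affects step (2): verifying the $2\times 2$ criterion for the vector-indexed matrix requires producing a simultaneous extension $\vec c$ of $(\vec a',\vec b')$, which pairwise information about binary projections does not immediately give; the paper again gets this from the induction hypothesis.)
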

We first sketch the proof for the case
when $R(\vec x,\vec y,\vec z)$ is ternary, before proving the general case
\begin{proofsketch}
  Let $\Gamma$ be strongly blockwise decomposable and $R(x,y,z)$ a
  pp-defined ternary constraint. Then the selection matrix  $M^{R(x,y,z)}_{x,y}$ is a block
  matrix, where each block $(A,B)$ is decomposable in some
  $(V,W)$, either $(\{x,z\},\{y\})$ or $(\{x\},\{y,z\})$. In any case, for the
  corresponding block $(A,B)$ in
  $M^\#_{x,y}$ we have
  $M^\#_{x,y}[a,b]=s_a\cdot t_b$ where $s_a=|\operatorname{sol}(\pi_V(R(x,y,z)|_{x=a,y\in
    B}))|$ and $t_b=|\operatorname{sol}(\pi_W(R(x,y,z)|_{x\in
    A,y=b}))|$. Thus, the block has rank 1.
\end{proofsketch}

To prove the general case of Lemma~\ref{lem:blockisbalanced}, it will be convenient to
have a generalization of the selection matrix $M^R_{x,y}$. So consider a relation $R$ in variables $\vec x, \vec y, \vec z$ and domain $D$. Then $M_{\vec x, \vec y}^R$ is the $(|D|^{|\vec x|})\times (|D|^{|\vec y|})$-matrix whose rows are indexed by assignments $a:\tilde x\rightarrow D$, whose column are indexed by the assignments $b:\tilde y\rightarrow D$ and that have as entry at position $a,b$ the constraint
\begin{align*}
                \{c:\tilde z \rightarrow D\mid (a,b,c)\in \sol(R(\vec x, \vec y, \vec z))\}.
\end{align*}
We say that $R$ is blockwise set-decomposable with respect to $\vec x$ and $\vec y$ if $M_{\vec x, \vec y}^R$ is a proper block matrix and for every non-empty block $B$, the selection $R|_{(\vec x, \vec y)\in B}$ is decomposable such that no factor contains variables of $\vec x$ and $\vec y$. We say that $R$ is blockwise set-decomposable if it is blockwise set-decomposable for all choices $\vec x, \vec y$ of disjoint variable sets.

\begin{lemma}\label{lem:set-decomposable}
Let $R$ be a relation. Then $R$ is blockwise set-decomposable if and only if it is blockwise decomposable.
\end{lemma}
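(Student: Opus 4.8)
The plan is to prove the equivalence in two directions, and the key insight is that the generalized selection matrix $M_{\vec x,\vec y}^R$ can be understood entirely in terms of the ordinary two-variable selection matrices $M_{u,v}^R$ for individual variables $u\in\tilde x$, $v\in\tilde y$. I would first observe that the ``easy'' direction is blockwise set-decomposability $\Rightarrow$ blockwise decomposability: taking singleton sets $\vec x=u$ and $\vec y=v$ recovers exactly the definition of blockwise decomposability in the pair $u,v$ (the condition ``no factor contains variables of both $\vec x$ and $\vec y$'' becomes ``the partition $(V,W)$ separates $u$ and $v$''), so this follows immediately.

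For the converse, assume $R$ is blockwise decomposable; fix disjoint variable sets $\vec x,\vec y$ and let $\vec z$ collect the remaining variables. I would proceed in two steps. First, I would show $M_{\vec x,\vec y}^R$ is a proper block matrix. Using Lemma~\ref{lem:2x2blockmatrix}-style reasoning, it suffices to rule out a $2\times 2$ submatrix with exactly one empty entry, i.e., assignments $a,a'\colon\tilde x\to D$ and $b,b'\colon\tilde y\to D$ with $M^R_{\vec x,\vec y}[a,b]$, $M^R_{\vec x,\vec y}[a,b']$, $M^R_{\vec x,\vec y}[a',b]$ all non-empty but $M^R_{\vec x,\vec y}[a',b']$ empty. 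Here I would invoke Lemma~\ref{lem:binarize}: since $R$ is blockwise decomposable, membership in $R$ is equivalent to the conjunction of all its binary projections, so an assignment extending $(a',b')$ fails to be a solution only because some binary projection $\pi_{\{u,v\}}R$ is violated. A careful case analysis on whether $u,v$ lie in $\tilde x$, $\tilde y$, or $\tilde z$ — combined with the fact that $(a,b),(a,b'),(a',b)$ each extend to solutions, so each relevant pair of coordinates is individually consistent — should derive a contradiction via the proper-block-matrix property of the corresponding $M^R_{u,v}$ (no $2\times 2$ submatrix with one empty entry).

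Second, for a non-empty block $B$ of $M_{\vec x,\vec y}^R$ I must show $R|_{(\vec x,\vec y)\in B}$ decomposes so that no factor touches both $\tilde x$ and $\tilde y$. The natural candidate partition is obtained by merging: put a variable $w$ on the $\vec x$-side if it is separated from $\tilde y$ (equivalently, from some fixed representative) in the pairwise decompositions, and symmetrically; I would define an equivalence-type relation on $\tilde u$ by declaring $w\sim w'$ when they must lie on the same side, then argue using Lemma~\ref{lem:binarize} again that $R|_{(\vec x,\vec y)\in B}$ equals the product of its projections onto the resulting classes, with $\tilde x$ and $\tilde y$ landing in different classes. The main obstacle I expect is exactly this coherence argument: showing that the pairwise separating partitions $(V_{u,v},W_{u,v})$ (one for each pair $u\in\tilde x$, $v\in\tilde y$, and possibly depending on the block) can be glued into a single global partition of $\tilde u$ that simultaneously separates all of $\tilde x$ from all of $\tilde y$ and induces a genuine product decomposition. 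Lemma~\ref{lem:binarize} is the right tool to push the pairwise information up to the full relation, but verifying that the merged partition is consistent across all pairs — i.e., that one cannot be forced to put some $z\in\tilde z$ on the $\vec x$-side for one pair and the $\vec y$-side for another within the same block — will require the proper-block structure to pin down which block each coordinate-restriction lands in, and this bookkeeping is where the proof's real work lies.
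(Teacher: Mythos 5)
Your easy direction and the first half of the converse are fine, and in fact take a different (and arguably cleaner) route than the paper: you reduce the proper-block-matrix property of $M^R_{\vec x,\vec y}$ to the pairwise matrices $M^R_{u,v}$ via Lemma~\ref{lem:binarize}. That works: if $(a,b)$, $(a,b')$, $(a',b)$ all extend to solutions, then every binary projection $\pi_{\{u,v\}}R$ with $u,v\in\tilde x\cup\tilde y$ is satisfied by $(a',b')$ (trivially when $u,v$ lie on the same side, and by the no-bad-$2\times 2$-submatrix property of $M^R_{u,v}$ when they lie on opposite sides), and applying Lemma~\ref{lem:binarize} to $\pi_{\tilde x\cup\tilde y}R$ (blockwise decomposable by Lemma~\ref{lem:closedProjectionSeletion}) shows $(a',b')$ extends to a solution. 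The paper instead proves the whole lemma by induction on $|\vec x|+|\vec y|$, peeling off one variable of $\vec x$ at a time and using the decomposition supplied by the inductive hypothesis for $M^R_{\vec x',\vec y}$; your argument for this half is self-contained and does not need that induction.

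The second half, however, has a genuine gap, and it is exactly the one you flag yourself. Two concrete problems. First, the restriction $R|_{(\vec x,\vec y)\in B}$ for a block $B=A\times C$ of $M^R_{\vec x,\vec y}$ is \emph{not} obtained from $R$ by unary selections ($A$ and $C$ are sets of tuples, not products of unary sets), so neither Lemma~\ref{lem:closedProjectionSeletion} nor Lemma~\ref{lem:binarize} applies to it directly; you cannot simply assert that this restriction equals the conjunction of its binary projections, nor that the pairwise decompositions of the blocks of $M^R_{u,v}$ descend to it, since restricting a product $P\times Q$ by a constraint on $\tilde x$ that straddles both factors need not yield a product. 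Second, even granting pairwise separations $(V_{u,v},W_{u,v})$, the coherence argument --- that no $z\in\tilde z$ is forced to the $\vec x$-side for one pair and the $\vec y$-side for another, and that the merged partition actually induces a product decomposition of the whole block --- is the entire content of the lemma, and "this bookkeeping is where the proof's real work lies" is an accurate self-assessment that the work has not been done. The paper avoids both issues by its one-variable-at-a-time induction: at each step there is only a single new variable $x$ to place, and it goes into whichever factor ($V_{\vec a}$ or $W_{\vec c}$) of the inductively given decomposition already contains it, so no global gluing across pairs is ever needed. To complete your route you would need to supply the gluing argument (e.g., a transitivity claim on the relation "must lie in the same factor", in the spirit of Claim~\ref{claim:trans} in Section~\ref{sct:decidability}) together with a justification that the block restriction still satisfies the binarization identity; as written, the proof is incomplete.
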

\begin{proof}
                If $R$ is blockwise set-decomposable, then it is by definition also blockwise decomposable. It only remains to show the other direction. So assume that $R$ is blockwise decomposable. We proceed by induction on $|\vec x|+|\vec y|$. If $|\vec x|+|\vec y| = 2$, then $\vec x$ and $\vec y$ each consist of one variable, so there is nothing to show.
                
                Now assume that w.l.o.g.~$|\vec x|>1$. Let $\vec z$ be the variables of $R$ not in $\vec x\cup \vec y$. We first show that for all choices of $\vec x, \vec y$ the matrix $M^R_{\vec x, \vec y}$ is a proper block matrix with the criterion of Lemma~\ref{lem: 2x2 submatrix}. Consider assignments $\vec a, \vec b, \vec c, \vec d$ such that $R|_{\vec x= \vec a, \vec y = \vec c}\ne \emptyset$, $R|_{\vec x= \vec a, \vec y = \vec d}\ne \emptyset$, $R|_{\vec x= \vec b, \vec y = \vec c}\ne \emptyset$. We have to show that $R|_{\vec b= \vec a, \vec y = \vec d}\ne \emptyset$ as well. Let $x$ be one of the variables in $\vec x$ and let $\vec x'$ denote the other variables in $\vec x$. Let $\vec a'$ and $a$ denote the restriction of $\vec a$ to $\vec x'$ and $x$, respectively. We have $R|_{\vec x'= \vec a', \vec y = \vec d}\ne \emptyset$, $R|_{\vec x'= \vec a', \vec y = \vec d}\ne \emptyset$ and $R|_{\vec x'= \vec b', \vec y = \vec c}\ne \emptyset$, so by induction $R|_{\vec x'= \vec b', \vec y = \vec d}\ne \emptyset$, so in $M^R_{\vec x', \vec y}$ all four entries lie in a block $B$. By decomposability of $B$, there are relations $V_{\vec a'}, V_{\vec b'}, W_{\vec c}, W_{\vec d}$ such that $\pi_{\tilde z\cup \{x\}} R|_{\vec x'= a', \vec y = \vec c} = V_{\vec a}\times W_{\vec c}$ and analogously for the other entries of $M^R_{\vec x', \vec y}$. Since $R|_{\vec x= \vec a, \vec y = \vec d}\ne \emptyset$ and $R|_{\vec x= \vec b, \vec y = \vec c}\ne \emptyset$, we have that $a\in \pi_{x}(V_{\vec b}\times W_{\vec c})$ and $a\in \pi_{x}(V_{\vec a}\times W_{\vec d})$ and thus also $a\in \pi_{x}(V_{\vec b}\times W_{\vec d})= \pi_x(R|_{\vec x'= \vec b', \vec y = \vec c})$. It follows that $R|_{\vec x = \vec a, \vec y =\vec c} \ne \emptyset$ and thus $M^R_{\vec x, \vec y}$ is a block matrix.
                
                It remains to show that the blocks of $M^R_{\vec x, \vec y}$ decompose. So let $A$ and $B$ be the index sets of a block in the matrix. Consider again $x\in \vec x$ and let the notation be as before. Let $A':=\{\vec a|_{\vec x}\mid \vec a\in A\}$. Then we have for all $(\vec a', \vec c)\in A'\times B$ that $R|_{\vec x'=\vec a', \vec y =\vec c}\ne \emptyset$, so by induction we have that for every $\vec a'\in A'$ there is a relation $V_{a'}$ and for every $\vec c \in B$ there is a relation such that $R|_{\vec x' = \vec a, \vec y = \vec x} = V_{\vec a}\times W_{\vec b}$. Moreover, all $V_{\vec a}$ are in the same variables and the same is true for all the $W_{\vec c}$. If $x$ is a variable of the $V_{\vec a}$, then $R_{\vec x=\vec a, \vec y = \vec c} = R|_{\vec x' = \vec a, x= a, \vec y = \vec x} = V_{\vec a}|_{x=a}\times W_{\vec b}$ so we get the decomposition of the block in $M^R_{\vec x, \vec y}$ directly. Otherwise, so if $x$ is a variable of the $W_{\vec b}$, we have $R_{\vec x=\vec a, \vec y = \vec c} = V_{\vec a}\times W_{\vec b}|_{x=a} =(V_{\vec a}\times \{x\mapsto a\})\times W'_{\vec b}$ where we get $W'_{\vec b}$ from $W_{\vec b}|_{x=a}$ by projecting away $x$. Again, we get decomposability of the block and it follows that $R$ is blockwise set-decomposable.
\end{proof}

The following connection between balance and blockwise decomposability is now easy to show.

\begin{lemma}\label{lem:BDimpliesBalanced}
                Every blockwise decomposable relation is balanced.
\end{lemma}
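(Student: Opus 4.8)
The plan is to bootstrap from Lemma~\ref{lem:set-decomposable} and then carry out the same short cardinality computation that appears in the sketch of the ternary case of Lemma~\ref{lem:blockisbalanced}, only now for arbitrary variable tuples. So let $R$ be a blockwise decomposable relation, let $R(x_1,\dots,x_k)$ be the associated constraint on distinct variables, and fix an arbitrary ordered partition $\vec x;\vec y;\vec z$ of its variables; I must show that $R(\vec x,\vec y,\vec z)$ is balanced w.r.t.\ $\vec x;\vec y;\vec z$. (If $\vec x$ or $\vec y$ is empty, $M^{\#}_{\vec x,\vec y}$ is a single row or column and the claim is immediate; the argument below in fact covers these degenerate cases as well.)

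First I would apply Lemma~\ref{lem:set-decomposable} to obtain that $R$ is blockwise set-decomposable with respect to $\vec x$ and $\vec y$: the set-valued matrix $M^R_{\vec x,\vec y}$ is a proper block matrix with row blocks $(A_1,\dots,A_\ell)$ and column blocks $(B_1,\dots,B_\ell)$, and for each $i$ the selection $R|_{(\vec x,\vec y)\in(A_i,B_i)}$ is decomposable into factors, none of which contains a variable of $\tilde x$ and a variable of $\tilde y$ simultaneously. The next, purely bookkeeping, step is to note that $M^{\#}_{\vec x,\vec y}$ has exactly the same support as $M^R_{\vec x,\vec y}$, since an entry has positive cardinality iff the corresponding constraint entry is non-empty; hence $M^{\#}_{\vec x,\vec y}$ is a proper block matrix with the very same block structure, and all that remains is to verify that each block has rank one.

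For a fixed block $(A,B):=(A_i,B_i)$ I would split the factors of the decomposition of $R|_{(\vec x,\vec y)\in(A,B)}$ into the product $P$ of those factors whose scope meets $\tilde x$ and the product $Q$ of the remaining factors; the ``no factor meets both'' property then yields $R|_{(\vec x,\vec y)\in(A,B)} = P(\tilde x,\tilde z_1)\times Q(\tilde y,\tilde z_2)$, where $\tilde x$ lies in the scope of $P$, $\tilde y$ in the scope of $Q$, the two scopes are disjoint, and $(\tilde z_1,\tilde z_2)$ is a partition of $\tilde z$. Then for $\vec a\in A$ and $\vec b\in B$ the selection $R(\vec x,\vec y,\vec z)|_{\vec x=\vec a,\vec y=\vec b}$ is the Cartesian product $P|_{\vec x=\vec a}\times Q|_{\vec y=\vec b}$ (the scopes $\tilde z_1$ and $\tilde z_2$ being disjoint), so $M^{\#}_{\vec x,\vec y}[\vec a,\vec b] = |\sol(P|_{\vec x=\vec a})|\cdot|\sol(Q|_{\vec y=\vec b})| = r(\vec a)\,c(\vec b)$, where $r$ depends only on the row and $c$ only on the column, and $r(\vec a),c(\vec b)\ge 1$ because the block is non-empty. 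Hence the block equals the outer product $r\,c^{\mathsf{T}}$ and has rank one, which finishes the argument; since the partition $\vec x;\vec y;\vec z$ was arbitrary, $R$ is balanced.

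The only place where anything substantial happens is the appeal to Lemma~\ref{lem:set-decomposable}: that is precisely what promotes blockwise decomposability (a statement about fixing two individual variables) to a statement about fixing the whole tuples $\vec x$ and $\vec y$, and what guarantees that $\tilde x$ and $\tilde y$ can be separated into different factors. Everything after that --- aligning the supports of $M^R_{\vec x,\vec y}$ and $M^{\#}_{\vec x,\vec y}$, regrouping factors into $P\times Q$, and multiplying cardinalities --- is routine; the single point needing a moment's care is checking that the regrouping really places all of $\tilde x$ on one side and all of $\tilde y$ on the other and that the resulting scopes are disjoint, both of which are immediate from the properties supplied by Lemma~\ref{lem:set-decomposable}.
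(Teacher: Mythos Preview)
Your proposal is correct and follows essentially the same route as the paper's own proof: both invoke Lemma~\ref{lem:set-decomposable} to upgrade blockwise decomposability to blockwise set-decomposability, read off the block structure of $M^{\#}_{\vec x,\vec y}$ from that of $M^R_{\vec x,\vec y}$, and then observe that within each block the entry factors as a product of a row term and a column term, yielding rank one. Your write-up is in fact a bit more explicit than the paper's (you spell out the support argument and the regrouping of factors into $P$ and $Q$), but there is no substantive difference in approach.
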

\begin{proof}
                Let $R$ be blockwise decomposable. Let $\vec x, \vec y$ be variable sets. We want to show that $\bar M^\#_{\vec x, \vec y}$ is balanced. Let $B= A\times C$ be a block. Since, by Lemma~\ref{lem:set-decomposable} the relation $R$ is blockwise set-decomposable, every entry of $B$ can be written as $|R_{\vec x= \vec a, \vec y = \vec b}| = |V_{\bar a}\times W_{\bar c}|$ where the $V_a$ and $V_{\bar a}$ and $W_{\bar c}$ are the relations given by the decomposability of $B$. It follows directly that $B$ is a rank-$1$ matrix which shows the lemma.
\end{proof}

Lemma~\ref{lem:blockisbalanced} is now a direct consequence of Lemma~\ref{lem:BDimpliesBalanced}.

\subsection{Consequences of the Relation to Strong Balance}

In this section, we will use Lemma~\ref{lem:blockisbalanced} to derive useful properties of strongly blockwise decomposable languages.

\begin{corollary}\label{cor:counting-block-factors}
  Let $\Gamma$ be a strongly blockwise decomposable constraint
  language.
  \begin{enumerate}
  \item\label{enum:counting}
    Given a
    $\Gamma$-formula $F(\vec u)$ and a (possibly empty) partial
    assignment $\alpha$, the number $\big|\operatorname{sol}(F(\vec u)|_\alpha)\big|$ of solutions that extend
    $\alpha$ can be computed in polynomial time.
  \item\label{enum:block} Given a
    pp-formula $F(\vec u)$ over $\Gamma$ and $x,y\in \tilde u$, the blocks
    of $M^{F(\vec u)}_{x,y}$ can be computed in polynomial time.
  \item\label{enum:factors} Given a
    pp-formula $F(\vec u)$ over $\Gamma$, the indecomposable factors
    of $F(\vec u)$ can be computed in polynomial time.
  \end{enumerate}
\end{corollary}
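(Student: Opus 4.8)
The plan is to reduce all three items to the polynomial-time solution-counting algorithm guaranteed by the connection to strong balance. By Corollary~\ref{cor:unary} the individualization $\Gamma^{\bullet}$ is again strongly blockwise decomposable, hence strongly balanced by Lemma~\ref{lem:blockisbalanced}; since adjoining the equality relation does not change the co-clone, $\Gamma^{\bullet}\cup\{=\}$ is strongly balanced too, so Theorem~\ref{thm:DyerRicherby} yields a polynomial-time algorithm computing $|\operatorname{sol}(J)|$ (and, in particular, deciding satisfiability) for any {\upshape CSP($\Gamma^{\bullet}\cup\{=\}$)}-instance $J$. Item~\ref{enum:counting} is then immediate: encoding the partial assignment $\alpha$ by the unary constraints $U_{\alpha(x)}(x)$ for $x\in\operatorname{dom}(\alpha)$ turns $F(\vec u)|_\alpha$ into such an instance. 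For item~\ref{enum:block} I would write the given pp-formula as $F(\vec u)=\pi_U(H)$ where $H$ is a conjunction of $\Gamma$- and equality-constraints; then the entry $M^{F(\vec u)}_{x,y}[a,b]=\pi_{\tilde u\setminus\{x,y\}}(F(\vec u)|_{x=a,y=b})$ is non-empty iff $H|_{x=a,y=b}$ is satisfiable, which I can decide for each of the $|D|^2$ pairs $(a,b)$ with the engine above. Since $F(\vec u)\in\coclone{\Gamma}$ is blockwise decomposable, $M^{F(\vec u)}_{x,y}$ is a proper block matrix, so feeding the resulting $0/1$ pattern into the constructive procedure in the proof of Lemma~\ref{lem:2x2blockmatrix}---pick a non-empty entry, read off the extents of the non-empty entries in its row and column, split off the block they span, and recurse on the complementary submatrix---produces the blocks $(A_1,\dots,A_k),(B_1,\dots,B_k)$ in time polynomial in $|D|$.

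Item~\ref{enum:factors} needs one extra combinatorial fact: for a blockwise decomposable constraint $F(\vec u)$ and a partition $(V_1,\dots,V_\ell)$ of $\tilde u$ one has $F(\vec u)=\pi_{V_1}(F(\vec u))\times\cdots\times\pi_{V_\ell}(F(\vec u))$ if and only if $\pi_{\{x,y\}}(F(\vec u))=\pi_{\{x\}}(F(\vec u))\times\pi_{\{y\}}(F(\vec u))$ for all $i\neq j$, $x\in V_i$ and $y\in V_j$. The ``only if'' part is trivial; for ``if'', I would glue arbitrary solutions of the factors $\pi_{V_i}(F(\vec u))$ and check, using Lemma~\ref{lem:binarize} to replace membership in $F(\vec u)$ by membership in all binary projections in $\Pi_2(F(\vec u))$, that the glued assignment satisfies every projection in $\bigwedge_{R'(\vec y)\in\Pi_2(F(\vec u))}R'(\vec y)$: a projection whose scope lies inside a single $V_i$ is satisfied by the chosen $i$-th solution, and a straddling binary or unary projection is a product by hypothesis. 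Granting this, I would compute the graph $H_F$ on vertex set $\tilde u$ with an edge $\{x,y\}$ exactly when $\pi_{\{x,y\}}(F(\vec u))\neq\pi_{\{x\}}(F(\vec u))\times\pi_{\{y\}}(F(\vec u))$; each of these $O(|\tilde u|^2)$ tests costs $O(|D|^2)$ satisfiability checks. Its connected components $U_1,\dots,U_\ell$ are the scopes of the indecomposable factors: no edge crosses two components, so $F(\vec u)=\prod_i\pi_{U_i}(F(\vec u))$ by the fact above; and if some $\pi_{U_i}(F(\vec u))$---blockwise decomposable by Lemma~\ref{lem:closedProjectionSeletion}---decomposed non-trivially, the fact applied to it would partition the connected set $U_i$ into two non-empty parts with no crossing $H_F$-edge, which is impossible. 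The algorithm outputs the partition $(U_1,\dots,U_\ell)$, equivalently the factors $\pi_{U_i}(F(\vec u))$ as pp-formulas.

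The main obstacle is the ``if'' direction of the binary-projection characterization in item~\ref{enum:factors}: it is precisely what makes $H_F$ recover the \emph{finest} product decomposition, and hence genuinely indecomposable factors, rather than just some decomposition, and it is the one point where the full strength of blockwise decomposability---via Lemma~\ref{lem:binarize}---enters in an essential way. The rest is bookkeeping around Theorem~\ref{thm:DyerRicherby}, Lemma~\ref{lem:blockisbalanced}, and the constructive proof of Lemma~\ref{lem:2x2blockmatrix}.
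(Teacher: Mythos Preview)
Your proposal is correct. Items~\ref{enum:counting} and~\ref{enum:block} match the paper's proof essentially verbatim. For item~\ref{enum:factors} you also end up with the same graph as the paper---your edge criterion $\pi_{\{x,y\}}(F(\vec u))\ne\pi_{\{x\}}(F(\vec u))\times\pi_{\{y\}}(F(\vec u))$ is equivalent to the paper's ``$M_{x,y}^{F(\vec u)}$ has at least two non-empty blocks'' once one knows $M_{x,y}^{F(\vec u)}$ is a proper block matrix---but the justification that the connected components are the indecomposable factors differs. The paper argues directly from the definition: for a non-edge $\{x,y\}$ the single block of $M_{x,y}^{F(\vec u)}$ decomposes $F(\vec u)$ into $(\tilde v,\tilde w)$ with $x\in\tilde v$, $y\in\tilde w$; since edge-connected variables must share a factor in any decomposition, $\tilde v\supseteq C$ for the component $C\ni x$; intersecting these decompositions over all $y\notin C$ separates $C$. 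You instead prove the auxiliary characterization ``$F(\vec u)=\prod_i\pi_{V_i}(F(\vec u))$ iff all straddling binary projections are products'' via Lemma~\ref{lem:binarize}, which makes the argument cleaner and avoids the (standard but unstated) fact that the common refinement of two product decompositions is again a product decomposition. Both routes are short; yours is a bit more self-contained, the paper's stays closer to the raw definition of blockwise decomposability.
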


  \begin{proof}
  Claim~\ref{enum:counting} follows immediately from the combination of
  Lemma~\ref{lem:blockisbalanced} with Theorem~\ref{thm:DyerRicherby}
  and the fact that strongly blockwise decomposable constraint
  languages are closed under selection
  (Corollary~\ref{cor:unary}). For Claim~\ref{enum:block} let $F(\vec
  u)=\pi_{\tilde u}(F'(\vec v))$ for some $\Gamma$-formula $F'(\vec
  v)$. To compute the blocks of $M^{F(\vec
    u)}_{x,y}$, we can use Claim~\ref{enum:counting} to compute for every $x=a$ and
  $y=b$ whether $|\sol(M^{F'(\vec v)}_{x,y}[a,b])|>0$ and hence $M^{F(\vec
    u)}_{x,y}\neq \emptyset$.

  To prove Claim~\ref{enum:factors}, note that by Claim~\ref{enum:block} we can
  calculate the block structure of $M_{x,y}^{F(\vec u)}$ for every variable pair
  $x,y \in \tilde u$. Consider the graph $G$ with $V(G)=\tilde u$
  and edges between $x$ and $y$ if $M_{x,y}^{F(\vec u)}$ has at least
  two non-empty blocks. If $\{x,y\}\in E(G)$ for some $x$ and $y$,
  then $x$ and $y$ must appear in the same indecomposable factor of
  ${F(\vec u)}$. Let  $C_1, \ldots, C_k$ be the connected components of $G$. All variables of one connected component must appear in the same factor, so $\pi_{C_i}(F(\vec u))$ is indecomposable. We claim that
$
{F(\vec u)} = \pi_{C_1}({F(\vec u)}) \times \cdots \times \pi_{C_k}({F(\vec u)})
$.
It suffices to show ${F(\vec u)} = \pi_{C}({F(\vec u)})\times
\pi_{V(G)\setminus C}({F(\vec u)})$ for one connected component $C$,
since this can be used iteratively to show the claim. We have that for
any $x\in C$ and $y\in V(G)\setminus C$, the selection matrix
$M_{x,y}^{F(\vec u)}$ has only one block, which can be decomposed into
$(\tilde v, \tilde w)$, with $x\in \tilde v$ and $y\in \tilde w$, so that 
\[
F(\vec u) = \pi_{\tilde v}(F(\vec u)) \times \pi_{\tilde w}(F(\vec u)).
\]
Since $x\in\tilde v$, no variable from the same connected component
$C$ can be in $\tilde w$ and therefore 
$\tilde v \supseteq C$.
Since we can obtain such a decomposition for every $x\in C$ and $y\in
V(G)\setminus C$, and the intersection of all these decompositions
induces a decomposition where $C$ is separated from any $y$, we get
that ${F(\vec u)} = \pi_{C}({F(\vec u)})\times
\pi_{V(G)\setminus C}({F(\vec u)})$.
\end{proof}

We close this section by stating the following property that applies only
to \emph{uniformly} decomposable constraints.

\begin{lemma}\label{lem:onlycheckprojections}
        Let $R(\vec u)$ be a constraint that is uniformly blockwise decomposable in $x$ and $y$. Then there exist $\vec v$ and $\vec w$ with $\tilde u = \{x,y\} \mathrel{\dot \cup} \tilde v \mathrel{\dot \cup} \tilde w$ such that 
        \begin{align}\label{eq:onlycheckproj}
                R(\vec u) =  \pi_{x,\tilde v} R(\vec u) \wedge \pi_{y,\tilde w} R(\vec u) \wedge \pi_{x,y} R(\vec u).
        \end{align}
        Furthermore, if $R$ is defined by a pp-formula $F$ over a strongly
        uniformly blockwise decomposable $\Gamma$, then $\vec v$ and
        $\vec w$ can be computed from $F$ in polynomial time.
\end{lemma}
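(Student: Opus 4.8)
The plan is to prove \eqref{eq:onlycheckproj} directly for the partition $(V,W)$ of $\tilde u$ provided by the definition of uniform blockwise decomposability in $x,y$: let $\vec v$ enumerate $V\setminus\{x\}$ and $\vec w$ enumerate $W\setminus\{y\}$, so that $\pi_{x,\tilde v}R(\vec u)=\pi_V R(\vec u)$ and $\pi_{y,\tilde w}R(\vec u)=\pi_W R(\vec u)$. The inclusion ``$\subseteq$'' in \eqref{eq:onlycheckproj} is immediate: every solution of $R(\vec u)$ restricts to a solution of each of the three projections on the right-hand side. So the content is the converse inclusion.

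For ``$\supseteq$'', fix $\alpha$ in the solution set of the right-hand side and put $a:=\alpha(x)$, $b:=\alpha(y)$. From $\alpha|_{\{x,y\}}\in\sol(\pi_{x,y}R(\vec u))$ we get that $M^R_{x,y}[a,b]\neq\emptyset$, hence $(a,b)$ lies in some block, say $a\in A_i$, $b\in B_i$. From $\alpha|_V\in\sol(\pi_V R(\vec u))$ there is a solution $\beta$ of $R(\vec u)$ with $\beta|_V=\alpha|_V$; since $\beta(x)=a\in A_i$ and the $A_j$ are pairwise disjoint, $\beta(y)$ must lie in $B_i$, so $\beta$ is a solution of the block $R(\vec u)|_{x\in A_i,y\in B_i}$. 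Symmetrically, a solution $\gamma$ witnessing $\alpha|_W\in\sol(\pi_W R(\vec u))$ also lies in that block. The block is decomposable in $(V,W)$, so the assignment obtained by combining $\beta|_V=\alpha|_V$ with $\gamma|_W=\alpha|_W$ is again a solution of the block; since $V$ and $W$ partition $\tilde u$, this combined assignment is exactly $\alpha$, and therefore $\alpha\in\sol(R(\vec u))$.

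For the ``furthermore'' part we must compute a usable $(V,W)$ from the pp-formula $F$. First, by Corollary~\ref{cor:counting-block-factors}, item~\ref{enum:block}, we compute the blocks $(A_1,B_1),\dots,(A_k,B_k)$ of $M^{F(\vec u)}_{x,y}$ in polynomial time ($k\le|D|$). For each $i$, the selection $F(\vec u)|_{x\in A_i,y\in B_i}$ is a pp-formula over $\Gamma^{\bullet}$, which is strongly uniformly (hence strongly) blockwise decomposable by Corollary~\ref{cor:unary}; thus Corollary~\ref{cor:counting-block-factors}, item~\ref{enum:factors} gives its indecomposable factorization in polynomial time. We then build a graph $H$ on vertex set $\tilde u$ with an edge between two variables whenever they occur together in some indecomposable factor of some block $F(\vec u)|_{x\in A_i,y\in B_i}$, and set $V$ to be the connected component of $x$ in $H$ and $W:=\tilde u\setminus V$. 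To see this choice works, recall from the first half that some valid partition $(V_0,W_0)$ exists: every indecomposable factor of every block lies entirely in $V_0$ or in $W_0$, so every edge of $H$ stays within $V_0$ or within $W_0$, hence every connected component of $H$ is contained in $V_0$ or in $W_0$. In particular $x$ and $y$ lie in distinct components, so $y\in W$; and since $V$ is a single $H$-component it splits no indecomposable factor of any block and misses $y$'s component, so $V$ is a union of indecomposable factors of each $F(\vec u)|_{x\in A_i,y\in B_i}$ separating $x$ from $y$, i.e. every block decomposes in $(V,W)$. The argument of the first half, applied to this concrete partition, then yields \eqref{eq:onlycheckproj}.

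The routine parts are the two inclusions; the main obstacle is the computational claim, where the existence proof of the first half gives no handle on $(V,W)$. The key idea there is that any admissible $(V,W)$ must be coarser than the common refinement of the blocks' indecomposable factorizations, so the ``tightest'' candidate -- the $H$-components -- is forced, and one only has to check that this tightest candidate still keeps $x$ and $y$ apart, which follows from the mere existence of one admissible partition.
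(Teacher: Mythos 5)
Your proposal is correct and follows essentially the same route as the paper: the existence part uses the uniform partition $(V,W)$ from the definition and recombines the $V$- and $W$-witnesses inside the common block (the paper phrases this via the identity $M^R_{x,y}[a,b]=\pi_{\tilde v}(R(\vec u)|_{x=a})\times\pi_{\tilde w}(R(\vec u)|_{y=b})$, you via explicit extensions $\beta,\gamma$ forced into block $i$ by disjointness of the $A_j$, resp.\ $B_j$ --- the same argument). For the algorithmic part, your connected-component construction on the graph $H$ computes exactly the same set as the paper's greedy closure of $\{x\}$ under the cells of the blocks' indecomposable factorizations, with the same justification that an admissible partition exists and therefore $y$ is never absorbed.
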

\begin{proof}
        Since $R(\vec u)$ is uniformly blockwise decomposable in $x$ and $y$, there exist $\vec v$ and $\vec w$ with $\tilde u = \{x,y\} \mathrel{\dot \cup} \tilde v \mathrel{\dot \cup} \tilde w$ such that for all $a,b \in D$ we have
  \begin{align}
   M^R_{x,y}[a,b] = \pi_{\tilde v}(R(\vec u)|_{x=a}) \times \pi_{\tilde w}(R(\vec u)|_{y=b})  \; \text{ or } \; \operatorname{sol}(M^R_{x,y}[a,b])
        = \emptyset.
  \end{align}
        We will show (\ref{eq:onlycheckproj}) for this choice of $\tilde v$ and $\tilde w$. First, if $\beta: \tilde u \to D$ satisfies $R(\vec u)$, then by definition of projections, we have for every $\tilde x\subseteq \tilde u$ that $\beta|_{\tilde x}$ satisfies $\pi_{\tilde x}(R(\vec u))$, which yields the containment $\subseteq$ of (\ref{eq:onlycheckproj}). Now let $\alpha \in  \pi_{x,\tilde v} R(\vec u) \wedge \pi_{y,\tilde w} R(\vec u) \wedge \pi_{x,y} R(\vec u)$. Since $\alpha |_{\{x,y\}} \in \pi_{x,y}R(\vec u)$ we get $M_{x,y}^R[\alpha(x),\alpha(y)] \neq \emptyset$ and thus
\begin{align}
                M_{x,y}^R[\alpha(x),\alpha(y)] = \pi_{\tilde v}(R(\vec u)|_{x=\alpha(x)}) \times \pi_{\tilde w}(R(\vec u)|_{y=\alpha(y)}).
\end{align}     
Now
$\alpha |_{\{x\}\cup \tilde v} \in \pi_{\{x\}\cup \tilde v} (R(\vec
u))$ implies that
$\alpha |_{\tilde v} \in \pi_{\tilde v} \left(R(\vec u
  )|_{x=a}\right)$. Analogously,
$\alpha |_{\{y\}\cup \tilde w} \in \pi_{\{y\}\cup \tilde  w}
R(\vec u)$ implies that
$\alpha |_{\tilde w} \in \pi_{\tilde w} \left(R(\vec u)
  |_{y=b}\right)$. Thus we get
$\alpha |_{\tilde v\cup \tilde w} \in M_{x,y}^R[\alpha(x),\alpha(y)]$
and $\alpha \in R(\vec u)$.

We now show how to find $\vec v$ and $\vec w$ in polynomial
time in $\|F\|$ if given as pp-definition $F$. 
To this end, we first compute the block structure of $M^F_{x,y}$ with Corollary~\ref{cor:counting-block-factors}.\ref{enum:block}. Let $(A_1, B_1), \ldots , (A_\ell, B_\ell)$ these blocks. Then we compute for every $i\in [\ell]$ the indecomposable factors of the block $(A_i, B_i)$ by applying Corollary~\ref{cor:counting-block-factors}.\ref{enum:factors} to the formula $F_i:= F\land U_{A_i}(x) \land U_{B_i}(y)$ which we can do by Corollary~\ref{cor:unary}. Denote for every $i\in \ell$ the corresponding variable partition by~$\rho_i$.

It remains to compute a variable set $\tilde v^*$ with $x\in \tilde v^*$ such that for all $i\in [\ell]$ we have that $\pi_{\tilde v^*}F_i(\vec u)$ is a factor of $F_i$ and $y\notin \tilde v^*$. To this end, observe that if there is an $i\in [\ell]$ and a set $\tilde v'\in \rho_i$ that contains two variables $x',y'$, then either $x'$ and $y'$ must both be in $\tilde v^*$ or they must both be in $\tilde u\setminus \tilde v^*$. This suggests the following algorithm: initialize $\tilde v^*:= \{x\}$. While there is an $i\in [\ell]$ and a set $\tilde v'\in \rho_i$ such that there are variables $x'\in \tilde v'\cap \tilde v^*$ and $y'\in \tilde v'\setminus \tilde v^*$, add $y'$ to $\tilde v^*$. We claim that when the loop stops, we have a set $\tilde v^*$ with the desired properties. First, observe that we have for all $i\in [\ell]$ that $\pi_{\tilde v^*}F_i(\vec u)$ is a factor of $F_i(\vec u)$, because otherwise we would have continued adding elements. Moreover, $x\in \tilde v^*$ by construction. Finally, since a decomposition with the desired properties exists by what we have shown above, the algorithm will never be forced to add $y$ to $\tilde v^*$. This proves the claim and thus complete the proof of the lemma.
\end{proof}

\section{Algorithms}\label{sct:algorithms}

\subsection{Polynomial time construction of ODDs for strongly uniformly blockwise decomposable constraint languages}\label{subsec:ODDalgo}

The key to the efficient construction of ODD for uniformly
blockwise decomposable constraints is the following lemma, which
states that any such constraint is equivalent to a treelike conjunction of binary projections
of itself. 

\begin{lemma}[Tree structure lemma]\label{lem:thereisatree}
        Let $R(\vec u)$ be a constraint that is
        uniformly blockwise decomposable. Then there is an undirected tree
        $T$ with vertex set $V(T) = \tilde u$ such that 
        \begin{align*}
                R(\vec u) =  \textstyle\bigwedge_{\{p,q\} \in E(T)}
          \pi_{\{p,q\}}(R(\vec u)).
        \end{align*}
        Furthermore, $T$ can be calculated in polynomial
        time in $\|F\|$, if $R$ is uniformly blockwise decomposable
        and given as pp-formula $F$ over a strongly uniformly blockwise decomposable language $\Gamma$.
\end{lemma}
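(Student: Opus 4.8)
The plan is to build the tree $T$ greedily by repeatedly using Lemma~\ref{lem:onlycheckprojections}. Recall that lemma says: if $R(\vec u)$ is uniformly blockwise decomposable in $x,y$, then $\tilde u$ splits as $\{x,y\}\mathrel{\dot\cup}\tilde v\mathrel{\dot\cup}\tilde w$ with
\[
 R(\vec u) = \pi_{\{x\}\cup\tilde v}R(\vec u)\wedge \pi_{\{y\}\cup\tilde w}R(\vec u)\wedge\pi_{\{x,y\}}R(\vec u).
\]
So I would proceed by induction on $|\tilde u|$. The base cases $|\tilde u|\le 2$ are trivial (take the single edge or single vertex). For the inductive step, pick any two distinct variables $x,y\in\tilde u$ and apply Lemma~\ref{lem:onlycheckprojections} to obtain $\tilde v,\tilde w$. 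Set $R_1 := \pi_{\{x\}\cup\tilde v}R(\vec u)$ and $R_2 := \pi_{\{y\}\cup\tilde w}R(\vec u)$. By Lemma~\ref{lem:closedProjectionSeletion} both $R_1$ and $R_2$ are again uniformly blockwise decomposable, and each has strictly fewer variables than $R$ (since $\tilde v$ omits $y$ and $\tilde w$ omits $x$). By induction there are trees $T_1$ on $\{x\}\cup\tilde v$ and $T_2$ on $\{y\}\cup\tilde w$ with $R_i = \bigwedge_{\{p,q\}\in E(T_i)}\pi_{\{p,q\}}(R_i)$. Since a projection of a projection is a projection of $R$, we have $\pi_{\{p,q\}}(R_i)=\pi_{\{p,q\}}(R(\vec u))$ for edges of $T_i$. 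Now form $T$ by taking the disjoint union of $T_1$ and $T_2$ and adding the edge $\{x,y\}$; this is a tree on $\tilde u$, and the conjunction over its edges is exactly $R_1\wedge R_2\wedge\pi_{\{x,y\}}R(\vec u)=R(\vec u)$ by Lemma~\ref{lem:onlycheckprojections}. This proves the existence part.

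For the algorithmic part, I would just make the recursion effective. The key point is that every subconstraint appearing in the recursion is of the form $\pi_Y R(\vec u)$ for some $Y\subseteq\tilde u$, which is pp-definable over $\Gamma$ by taking $F$ and projecting (and $\Gamma$ is strongly uniformly blockwise decomposable, so all these projections are still uniformly blockwise decomposable by Lemma~\ref{lem:closedProjectionSeletion}, hence Lemma~\ref{lem:onlycheckprojections} applies at every recursion node). The recursion tree has depth at most $|\tilde u|$ and at each node the work is dominated by one call to the polynomial-time routine of Lemma~\ref{lem:onlycheckprojections} to compute $\tilde v,\tilde w$ from the current pp-formula; since the total number of recursion nodes is linear in $|\tilde u|$ (each internal node consumes the edge $\{x,y\}$ and the number of edges of a tree on $\tilde u$ is $|\tilde u|-1$), the whole construction runs in polynomial time in $\|F\|$.

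The main obstacle I anticipate is bookkeeping the pp-formulas along the recursion so that the polynomial-time guarantee of Lemma~\ref{lem:onlycheckprojections} remains applicable: that lemma wants $R$ presented as a pp-formula over $\Gamma$, and I must make sure that passing to $\pi_{\{x\}\cup\tilde v}R(\vec u)$ just means syntactically adding a projection to $F$, so the formula size does not blow up and $\Gamma$ does not change. There is also a minor subtlety that the choice of $x,y$ at each step is arbitrary but the correctness is uniform in that choice, so no optimization is needed; and one should note that the edge relations $\pi_{\{p,q\}}(R(\vec u))$ computed deep in the recursion genuinely equal the binary projections of the \emph{original} $R$, which follows since projection composes. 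None of these are serious difficulties — the real content is already packaged in Lemma~\ref{lem:onlycheckprojections}.
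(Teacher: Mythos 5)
Your proposal is correct and follows essentially the same route as the paper: one application of Lemma~\ref{lem:onlycheckprojections} to split off the edge $\{x,y\}$, closure under projection (Lemma~\ref{lem:closedProjectionSeletion}) to recurse on the two parts, and the composition of projections to identify the edge constraints with binary projections of the original $R$. Your write-up is in fact somewhat more explicit than the paper's sketch about the induction and about why $\pi_{\{p,q\}}(R_i)=\pi_{\{p,q\}}(R(\vec u))$, but the content is the same.
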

\begin{proof}
        We first fix $x$ and $y$ arbitrarily and apply
        Lemma~\ref{lem:onlycheckprojections} to obtain a tri-partition ($\tilde v$, $\{x,y\}$, $\tilde w$) of
        $\tilde u$ such that $R(\vec u) =  \pi_{x,\tilde v}(R(\vec
        u))\wedge \pi_{x,y}(R(\vec u)) \wedge \pi_{y,\tilde w}(R(\vec
        u)) $. We add the edge $\{x,y\}$ to $T$.
        By Lemma~\ref{lem:closedProjectionSeletion},
        $\pi_{x,\tilde v}(R(\vec u))$ and $\pi_{y,\tilde w}(R(\vec u))$ are uniformly blockwise 
        decomposable, so Lemma~\ref{lem:onlycheckprojections}
        can be recursively applied on both projections. For (say)
        $\pi_{x,\vec v}(R(\vec u))$ we fix $x$, choose an arbitrary $z \in \tilde
        v$, apply Lemma~\ref{lem:onlycheckprojections}, and add the
        edge $\{x,z\}$ to $T$. Continuing this
        construction recursively until no projections with more than
        two variables are left yields the desired result.
\end{proof}

From the tree structure of Lemma~\ref{lem:thereisatree}, we will construct small ODDs by starting with
a  centroid, i.\,e., a variable whose removal splits
    the tree into connected components of at most $n/2$
    vertices each. From the tree structure lemma it follows that we can
    handle the (projection on the) subtrees independently. A recursive
    application of this idea leads to an ODD of size $O(n^{\log |D|+1})$.%

    \begin{proof}[Proof of part \ref{thm:ODD:upper} in Theorem~\ref{thm:ODD}]
      Let $I$ be a CSP($\Gamma$)-instance and $F_I(\vec u)$ the
      corresponding $\Gamma$-formula. By Lemma~\ref{lem:thereisatree}
      we can compute a tree $T$ such that
      $F_I(\vec u) = \bigwedge_{\{p,q\} \in E(T)}
      \pi_{\{p,q\}}(F_I(\vec u))$. By Corollary~\ref{cor:counting-block-factors}.\ref{enum:counting} we can explicitly
      compute, for each $\{p,q\} \in E(T)$, a binary relation
      $R_{\{p,q\}}\subseteq D^2$ such that
      $R_{\{p,q\}}(p,q) = \pi_{\{p,q\}}(F_I(\vec u))$. Now we define
      the formula
      $F_T(\vec u)=\bigwedge_{\{p,q\} \in E(T)} R_{\{p,q\}}(p,q)$ and
      note that $\sol(F_T(\vec u))=\sol(F_I(\vec u))$. It remains to
      show that such tree-CSP instances can be efficiently compiled to
      ODDs. This follows from the following inductive claim, where for
      technical reasons we also add unary constraints $U_{D_v}(v)$ for
      each vertex $v$ (setting $D_v := D$ implies the theorem).

      \begin{claim*}
        Let $T$ be a tree on $n$ vertices and $F_T(\vec u)=\bigwedge_{\{v,w\}
        \in E(T)} R_{\{v,w\}}(v,w) \land \bigwedge_{v\in\tilde u}U_{D_v}(v)$ be a formula. Then there is an order $<$,
      depending only on $T$, such that an ODD$_<$ of size at most $f(n) :=  n|D|^{\log(n)}$ deciding $F_T(\vec u)$ can be computed
      in $n^{O(1)}$.
      \end{claim*}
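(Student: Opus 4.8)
The plan is to induct on $n$, the number of vertices of $T$. The base case $n=1$ is trivial: $F_T$ consists only of a single unary constraint $U_{D_v}(v)$, which is computed by an ODD with one decision node labeled $v$ whose edges go to the $1$-sink for values in $D_v$ and to the $0$-sink otherwise; this has size $O(1) \le f(1)$. For the inductive step, I would first pick a centroid vertex $c$ of $T$, that is, a vertex whose removal splits $T$ into subtrees $T_1,\dots,T_m$ each with at most $n/2$ vertices. The variable order $<$ is defined by: first $c$, then (recursively, using the orders guaranteed for the $T_i$) the variables of $T_1$, then of $T_2$, and so on; note this depends only on $T$, as required.

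The key structural observation is that once a value for $c$ is fixed, the tree-CSP decomposes. Concretely, for each $a \in D_c$, write $T_i^a$ for the tree $T_i$ together with the modified domain for the neighbor $w_i$ of $c$ inside $T_i$: replace $D_{w_i}$ by $D_{w_i} \cap \{b : (a,b) \in R_{\{c,w_i\}}\}$ (orienting the edge relation appropriately). Then $\sol(F_T(\vec u)|_{c=a}) = \operatorname{sol}(F_{T_1^a}) \times \cdots \times \operatorname{sol}(F_{T_m^a})$, since the only constraints linking different subtrees are the edges incident to $c$, all of which have been absorbed into the unary domains. This is exactly the place where I use that $F_T$ is a tree-structured formula with only binary and unary constraints. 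By induction, each $F_{T_i^a}$ has an ODD$_{<_i}$ of size at most $f(|T_i|) \le f(n/2) = (n/2)|D|^{\log(n/2)}$ over the order $<_i$ inherited from $T_i$, computable in polynomial time.

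Now I assemble the ODD for $F_T$. The root is a decision node on $c$; for each $a \in D_c$ the $a$-edge leads into a ``cascade'' that runs the ODDs for $F_{T_1^a}, F_{T_2^a}, \dots, F_{T_m^a}$ in sequence: concretely, redirect every $1$-sink of the ODD for $F_{T_i^a}$ to the source of the ODD for $F_{T_{i+1}^a}$, and keep $0$-sinks as rejects; the $1$-sink of the last ODD is the global $1$-sink. Because the $T_i$ partition the non-$c$ variables and the orders are concatenated in the same sequence, every variable is read at most once on every path and the global order $<$ is respected, so this is a valid ODD$_<$. For the size bound, summing over the $m \le n-1$ subtrees and over the $|D|$ choices of $a$ gives at most $1 + |D| \sum_{i=1}^m f(|T_i|)$ nodes; using $|T_i| \le n/2$ and $\sum_i |T_i| = n-1 < n$, this is at most $1 + |D| \cdot |D|^{\log(n/2)} \sum_i |T_i| \le 1 + |D|^{\log(n)} \cdot |D| \cdot \frac{n}{2} \cdot \frac{2}{|D|}$ — wait, more carefully: $\sum_i f(|T_i|) \le \sum_i |T_i| \cdot |D|^{\log(n/2)} = (n-1)|D|^{\log(n)-1} < n|D|^{\log n - 1}$, so the total is at most $1 + |D| \cdot n |D|^{\log n -1} = 1 + n|D|^{\log n} $, and absorbing the $+1$ into a slightly more careful accounting (or noting the bound is not tight) yields $\le f(n)$. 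The running time is polynomial since the recursion tree has depth $O(\log n)$, at each level the centroid and the updated domains are computed in polynomial time, and the number of recursive ODD constructions is polynomially bounded.

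The main obstacle I anticipate is making the size bookkeeping genuinely tight: the naive cascade construction duplicates the subtree-ODDs once per value $a \in D_c$, which is what forces the $|D|^{\log n}$ factor, and one has to be slightly careful that the recursion $f(n) \ge 1 + |D|\sum_i f(|T_i|)$ with $|T_i| \le n/2$ actually closes with $f(n) = n|D|^{\log n}$ rather than something larger — this is where the centroid property (each piece has at most half the vertices, so $|D|^{\log|T_i|} \le |D|^{\log(n/2)} = |D|^{\log n}/|D|$) is essential, exactly cancelling the blow-up factor $|D|$. A secondary point to handle cleanly is the argument that $\sol(F_T|_{c=a})$ genuinely factors as a product over the subtrees with the modified domains; this is immediate from the tree structure but should be stated explicitly.
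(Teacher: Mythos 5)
Your proposal is correct and follows essentially the same route as the paper's proof: induction on $n$ via a centroid split into components of size at most $n/2$, branching on the centroid's value, restricting the neighbors' domains accordingly, and cascading the recursively built ODDs, with the same recursion $1+|D|\sum_i f(n_i)$ closed by $\sum_i n_i = n-1$ to give $1+(n-1)|D|^{\log n}\le n|D|^{\log n}$. The only difference is cosmetic: your size computation wobbles briefly before settling on exactly the paper's estimate.
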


We prove the claim by induction on $n$. The case $n=1$ is
trivial. 
If $n\geq 2$ let~$z$ be
    a centroid in this tree, that is a node whose removal splits the tree into
    $\ell\geq 1$
    connected components $T_1$,\ldots, $T_\ell$ of at most $n/2$ vertices each. It is a classical result that every tree has at least one centroid~\cite{Jordan1869}. Let $\vec v_1$, \ldots, $\vec v_\ell$ be vectors of the variables
    in these components, so ($\{z\}$, $\tilde v_1$, \ldots,
    $\tilde v_\ell$) partitions $\tilde u$.
    Let $x_i\in V(T_i)$ be the neighbors of $z$ in $T$. We want to branch on $z$ and recurse on the connected components
    $T_i$. To this end, for each assignment $z\mapsto a$ we remove for each
    neighbor $x_i$ those values that cannot be extended to $z\mapsto a$. That
    is, $D^a_{x_i}:= \{b\colon \{x_i\mapsto b, z\mapsto
    a\} \in \sol(U_{D_{x_i}}(x_i)\land R_{\{x_i,z\}}(x_i,z)\land U_{D_{z}}(z))\}$.
    Now we let $F^a_i(\vec v_i) := \bigwedge_{\{v,w\}
        \in E(T_i)} R_{\{v,w\}}(v,w) \land \bigwedge_{v\in
        \tilde v_i\setminus \{x_i\}}U_{D_v}(v)\land U_{D^a_{x_i}}(x_i)$
      and observe that
      \begin{equation}
         \operatorname{sol}(F_T(\vec u)) = \dot\bigcup_{a\in D}\operatorname{sol}(U_a(z)\times F^a_1(\vec
         v_1)\times\cdots \times F^a_\ell(\vec v_\ell)).
      \end{equation}
    By induction assumption, for each $i\in[\ell]$ there is an order
    $<_i$ of $\tilde v_i$ such that each $F^a_i(\vec v_i)$ has an
    ODD$^a_{<_i}$ of size $f(n_i)$ for $n_i:=|\tilde v_i|$.
    Now we start our ODD for $F_T(\vec u)$ with branching on $z$
    followed by the sequential combination of ODD$^a_{<_1}$, \ldots, ODD$^a_{<_\ell}$ for each assignment
    $a\in D$ to~$z$. This completes the inductive construction. Since
    its size is bounded by
    $1+|D|\sum_{i\in[\ell]}f(n_i)$, the following easy
    estimations finish the proof of the claim (recall that $\sum_{i\in[\ell]}
  n_i = n-1$):
\begin{align*}
   \label{eq:11}
  1+|D|\textstyle\sum_{i\in[\ell]}f(n_i)
                               &\le  1+|D|\textstyle\sum_{i\in[\ell]}
  n_i |D|^{\log(n/2)} \\
      &=  1+ |D|^{\log(n)} (n-1) \\& \le n |D|^{\log(n)}\qedhere
\end{align*}

    \end{proof}

\subsection{Polynomial time construction of FDDs for strongly blockwise decomposable constraint languages}\label{subsec:FDDalgo}

For blockwise decomposable constraints that are \emph{not} uniformly blockwise
decomposable, a good variable order may depend on the values assigned
to variables that are already chosen, so it is not surprising that the
tree approach for ODDs does not work in this setting.

For the construction of the FDD, we first compute the indecomposable
factors (this can be done by Corollary~\ref{cor:counting-block-factors}.\ref{enum:factors} and treat them
independently. This, of course, could have also been done for the ODD
construction. The key point now is how we treat the indecomposable
factors: every selection matrix $M^R_{x,y}$ for a (blockwise decomposable) indecomposable constraint
necessarily has two non-empty blocks. But then every row
$x=a$ must have at least one empty entry
$\operatorname{sol}(M^R_{x,y}[a,b])=\emptyset$. This in turn implies
that, once we have chosen $x=a$, we can exclude $b$ as a possible
value for $y$! As we have chosen $y$ arbitrarily, this also applies to
any other variable (maybe with a different domain element $b$). So the set of possible values for every variable
shrinks by one and since the
domain is finite, this cannot happen too
often. Algorithm~\ref{alg:FDD} formalizes this recursive idea. To
bound the runtime of this algorithm, we analyze the size of the
recursion tree.%

\begin{algorithm}[h]
  \begin{algorithmic}[1]
    \Statex \textbf{Input:} $\Gamma$-formula $F(x_1,\ldots,x_n)$ for  strongly
    blockwise decomposable $\Gamma$ over domain $D$.
    \Statex \textbf{Output:} An FDD deciding $F(x_1,\ldots,x_n)$.
    \State Initialize variable domains $D_{x_i}\gets D$ for $i=1,\ldots, n$.
    \State \Return
    \Call{ConstructFDD}{$F(x_1,\ldots,x_n)$; $D_{x_1}$,\ldots,$D_{x_n}$}
    \Statex
    \Procedure{ConstructFDD}{$R(x_1,\ldots,x_n)$; $D_{x_1}$,\ldots,$D_{x_n}$}
    \State $R(x_1,\ldots,x_n) \gets R(x_1,\ldots,x_n)\land
    \bigwedge_{i\in [n]} U_{D_{x_i}}(x_i)$
    \If{$n=1$}
    \State \Return 1-node FDD deciding $R(x_1)$, branching on all
    values $D_{x_1}$.
    \EndIf
    \State Compute the indecomposable factors $R_1(\vec u_1)$, \ldots,
    $R_m(\vec u_m)$  of $R(x_1,\ldots,x_n)$. \label{line:indecfac}
    \If{$m\geq 2$}
       \For{$i=1\ldots m$}
       FDD$_i \gets$ \Call{ConstructFDD}{$R_i(\vec u_i)$; $D_{y}$ for
          $y\in\tilde u_i$}\label{line:recursivebecausedecompose}
       \EndFor
       \State \Return Sequential composition of FDD$_1$, \ldots, FDD$_m$ 
    \Else
    \State Introduce branching node for $x_1$.
    \For{$a\in D_{x_1}$}
      \For{$i=2,\ldots,n$}
      \For{$b\in D_{x_i}$}
        \If{$\operatorname{sol}(M^R_{x_1,x_i}[a,b]) = \emptyset$}\label{line:testmatrixentry}
         \State $D_{x_i}\gets D_{x_i}\setminus \{b\}$
        \Comment{This happens for at least one $b\in D_{x_i}$.}
        \EndIf
      \EndFor
      \EndFor
      \State $S_a(x_2,\ldots,x_n)\gets \pi_{{x_2,\ldots,x_n}}(R(x_1,\ldots,x_n)|_{x_1=a}) $
      \State FDD$^a \gets $ \Call{ConstructFDD}{$S_a(x_2,\ldots,x_n)$; $D_{x_2}$,\ldots,$D_{x_n}$}\label{line:recursebecauseshrinking}
    \EndFor
    \State Connect $x\stackrel{a}{\rightarrow} $ FDD$^a$ for all $a\in
    D_{x_1}$ and \Return resulting FDD
    \EndIf
    \EndProcedure
  \end{algorithmic}
  \caption{FDD construction algorithm}
  \label{alg:FDD}
\end{algorithm}

\begin{proof}[Proof of Item \ref{thm:FDD:upper} of Theorem~\ref{thm:FDD}]
  The algorithm is formalized in Algorithm~\ref{alg:FDD}. It is
  straightforward to verify that this algorithm computes an FDD that
  decides $F(x_1,\ldots,x_n)$. It remains to discuss the size of the
  FDD and the running time. First note that the decomposition into
  indecomposable factors (Line~\ref{line:indecfac}) can be computed in polynomial time by
  Corollary~\ref{cor:counting-block-factors}.\ref{enum:factors}. Moreover, (non-)emptiness of the entries of the selection
  matrices (Line~\ref{line:testmatrixentry}) can be tested in
  polynomial time by
  Corollary~\ref{cor:counting-block-factors}.\ref{enum:counting}. %
  Hence, every call has only polynomial computation
  overhead and we proceed to bound the total number of recursive calls.

  To this end, let us bound the size of the recursion tree, starting by bounding its depth.
  As discussed above, the crucial point is that each considered
  selection matrix $M^R_{x_1,x_i}$ in Line~\ref{line:testmatrixentry}
  has at least two blocks, otherwise, the relation would have been
  decomposable, because by definition of blockwise decomposability every block of $M^R_{x_1, x_2}$ is decomposable. Therefore, the test for empty entries will succeed at
  least once and each considered variable domain shrinks. Therefore, in every root-leaf-path in the recursion tree, there are at most $|D|$ recursive in Line~\ref{line:recursebecauseshrinking}. Moreover, on such a path there cannot be two consecutive calls from Line~\ref{line:recursivebecausedecompose}, because we decompose into indecomposable factors before any such call. It follows that the recursion tree has depth at most $2|D|+1$.
  
  Let the height $h_v$ of a node $v$ in the recursion tree be the distance to the furthest leaf in the subtree below $v$. Let $n_v$ be the number of variables of the constraint in that call. We claim that the number of leaves below $v$ is at most $|D|^{h_v} n_v$. We show this by induction on $h_v$. If $h_v=0$, then $v$ is a leaf, so we make no further recursive calls. This only happens if $n_v=1$ and the claim is true. Now consider $h_v> 0$. Let $v_1, \ldots, v_r$ be the children of~$v$. If in $v$ we make a recursive call as in Line~\ref{line:recursivebecausedecompose}, then $n_{v_1}+ \ldots + n_{v_r} = n_v$. Also for all $v_i$ we have $h_{v_i}< h_v$ and the number of leaves below $v$ is bounded by $\sum_{i\in [r]} n_i |D|^{h_i} \le d^{h_v} \sum_{i\in [r]} n_i = |D|^{h_v} n$. If in $v$ we make a recursive call in Line~\ref{line:recursebecauseshrinking}, then we know that $r\le |D|$, because we make at most $|D|$ calls. Moreover, we have again that $h_{v_i}< h_v$, so the number of leaves below $v$ is bounded by $\sum_{i\in [r]} n_{v_i} |D|^{v_i}\le n_v |D|^{h_v}$ which completes the induction and thus proves the claim.
  
  It follows that the recursion tree of the algorithm has at most $n |D|^{2|D|+1}$ leaves and thus at most $2n |D|^{2|D|+1}$ nodes. Since we add at most one FDD-node per recursive call, this is also a bound for the size of the computed FDD.   
\end{proof}

\section{Lower Bounds}\label{sct:lower}

In this section, we will prove the lower bounds of Theorem~\ref{thm:ODD} and Theorem~\ref{thm:FDD}. In the proofs, we will use the approach developed in~\cite{BovaCMS16} that makes a connection between DNNF size and rectangle covers. We will use the following variant:

\begin{lemma}\label{lem:DNNFmakeRectangles}
        Let $O$ be a \textup{DNNF} of size $s$ representing a constraint $R(\vec x)$ and let $Z\subseteq \tilde x$. Then, for every $\beta>0$, there is a $Z$-$\beta$-balanced rectangle cover of $f$ of size $s$. Moreover, if~$O$ is structured, then the rectangles in the cover are all with respect to the same variable partition.
\end{lemma}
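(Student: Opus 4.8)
The plan is to extract rectangles from the DNNF $O$ via its proof trees, exactly as in~\cite{BovaCMS16}, but carefully tracking the variable partition so that we get $Z$-$\beta$-balancedness. Recall from \eqref{eq:prooftrees} that $S(O) = \bigcup_{T\in\mathcal T(O)} S(T)$, so it suffices to understand the set captured by a single proof tree $T$. A proof tree is itself a DNNF whose underlying graph is a tree; by repeatedly applying decomposability at its $\times$-gates one sees that $S(T)$ is a Cartesian product indexed by the ``deepest'' gates of $T$, and in particular for any way of splitting the leaf set of $T$ consistently with its $\times$-structure, $S(T)$ is a rectangle. So the first step is: for each proof tree $T$, show that one can choose a bipartition $(\tilde x_T, \tilde y_T)$ of $\tilde x$ with respect to which $S(T)$ (after padding with unconstrained variables, if some variable does not appear in $T$) is a combinatorial rectangle. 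This is where decomposability of $\times$-gates is used.

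The key quantitative point is controlling the balance of the partition with respect to $Z$. Here I would use the standard ``balanced partition of a DNNF'' argument: walk down from the output gate of $O$, at each $\times$-gate follow the child below which the larger share of $Z$-variables occurs, and stop at the first gate $v$ below which at most $\frac{\beta|Z|}{2}\cdot\frac{2}{\beta}$... more precisely, stop at the first gate $v^*$ such that the number of $Z$-variables appearing below $v^*$ is at most $\beta|Z|$ but the number below its relevant parent was more than $\beta|Z|/2$ (such a gate exists since a single input has at most one variable below it, so the count drops gradually once $\beta|Z|\ge 1$; the case $\beta|Z|<1$ is degenerate and handled separately). Setting $\tilde x := \var(v^*)\cap\tilde x$ (the variables below $v^*$) and $\tilde y := \tilde x\setminus\tilde x$... i.e.\ the rest, gives a $Z$-$\beta$-balanced partition of $\tilde x$ that depends only on the chosen root-to-$v^*$ path, hence on $O$, not on the individual proof tree. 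Then for each proof tree $T$ of $O$ passing through $v^*$, the captured set $S(T)$ decomposes as a product across $v^*$ and its complement in $T$, so it is a rectangle w.r.t.\ this \emph{single} partition; proof trees not passing through $v^*$ capture the empty set once we restrict to those compatible with the chosen path, so we only need the (at most $s$) proof trees, one per gate, in the usual way. Actually the cleaner route, and the one I would write up, is: apply the ``$v$-rooted subcircuit'' decomposition at $v^*$ to get $R(\vec x) = \bigcup (\text{rectangle})$ with at most $s$ terms, each a rectangle w.r.t.\ $(\tilde x,\tilde y)$, by induction on the structure of $O$ above $v^*$.

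For the structured case, the partition is forced to come from the v-tree $T_{\mathrm{vtree}}$: every sub-representation of $O$ is over $\var(t)$ for some node $t$ of $T_{\mathrm{vtree}}$, so the descent above must at each $\times$-gate follow children whose variable sets are $\var(t_1),\var(t_2)$ for the two children $t_1,t_2$ of the corresponding v-tree node. Thus the whole balanced-descent argument produces a single v-tree node $t^*$ with $\frac{\beta|Z|}{2}\le |\var(t^*)\cap Z|\le \beta|Z|$, and \emph{every} rectangle in the resulting cover is w.r.t.\ the same partition $(\var(t^*), \tilde x\setminus\var(t^*))$. I would phrase this by noting that the descent is exactly a root-to-$t^*$ path in $T_{\mathrm{vtree}}$ and the balance inequality is the standard fact that leaf-counts along a root-to-leaf path in a binary tree decrease, each step by a factor at least... no, not by a constant factor, but each step removes at least a fraction going from the parent count to the child count, and since a leaf has count $\le 1$ we can always stop at the first node whose count drops into $[\beta|Z|/2,\beta|Z|]$.

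The main obstacle is the bookkeeping that the chosen partition is \emph{uniform} across all rectangles in the cover — i.e.\ that the balanced descent can be done once, globally, before splitting into proof trees/rectangles, rather than per rectangle. The resolution is to do the descent at the level of the circuit $O$ (choosing the gate $v^*$ or v-tree node $t^*$ first) and only then invoke the standard ``each proof tree below a fixed gate is a rectangle w.r.t.\ that gate's variable split'' lemma; this is exactly the structure of the argument in~\cite[proof of the rectangle-cover bound]{BovaCMS16} and I would cite it for the parts that are verbatim, adding only the $Z$-balance computation and the observation about v-tree nodes in the structured case.
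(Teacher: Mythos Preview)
Your structured case is essentially the paper's argument: fix a v-tree node $t^*$ whose leaf-set meets $Z$ in a balanced fraction, then every rectangle in the cover respects the single partition $(\var(t^*),\tilde x\setminus\var(t^*))$. That part is fine.

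The gap is in the unstructured case. You aim for a \emph{single} gate $v^*$ of $O$ and a cover by rectangles all respecting the one partition $(\var(v^*,O),\tilde x\setminus\var(v^*,O))$. But the lemma does not assert a uniform partition for general DNNF, and you cannot get one: for any non-root gate $v^*$ there are proof trees of $O$ that avoid $v^*$ entirely, and the assignments they capture need not lie in any rectangle with respect to that fixed partition. Your line ``proof trees not passing through $v^*$ capture the empty set once we restrict to those compatible with the chosen path'' is exactly where the argument breaks---nothing forces an arbitrary proof tree to be compatible with your root-to-$v^*$ path, and the ``$v$-rooted subcircuit decomposition'' accounts only for $S(v^*,O)\subsetneq S(O)$ in general. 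Relatedly, a global descent ``from the output gate of $O$'' is not well-defined at $\cup$-gates.

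What the paper does instead is allow a \emph{different} balanced gate (hence a different partition) per rectangle. One iterates: while $O$ still captures some assignment, pick a proof tree $T$ witnessing it, descend inside $T$ to a gate $v$ with $|\var(v,O)\cap Z|\in[\tfrac{\beta}{2}|Z|,\beta|Z|]$, add the rectangle $S(v,O)$ to the cover, delete $v$ from $O$, and repeat. Each iteration removes a gate, so at most $s$ rectangles are produced; each is $Z$-$\beta$-balanced, but with respect to its own partition $(\var(v,O),\tilde x\setminus\var(v,O))$. The uniform-partition claim is reserved for the structured case, where the v-tree fixes the split independently of the proof tree.
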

The proof of Lemma~\ref{lem:DNNFmakeRectangles} is very similar to
existing proofs in~\cite{BovaCMS16}, so we defer it to the appendix.
\begin{toappendix}
\subsection{Proof of Lemma~\ref{lem:DNNFmakeRectangles}}
In the proof of Lemma~\ref{lem:DNNFmakeRectangles}, we will again use the concept of proof trees, see Section~\ref{sct:preliminaries}.
The idea of the proof of Lemma~\ref{lem:DNNFmakeRectangles} is to partition $\mathcal T(O)$ in the representation (\ref{eq:prooftrees}), guided by the circuit $O$. To this end, we introduce some more notation. Let, for every gate $v$,  $\mathcal T(v,O)$ denote the set of proof trees of $O$ that contain $v$. Moreover, let $\var(O)$ denote the variables appearing in $O$ and $\var(v,O)$ the variables that appear in $O$ below the gate $v$. Finally, let $S(v,O):=\bigcup_{T\in \mathcal T(v,O)} S(T)$.
\begin{claim}\label{clm:rectangle}
        $S(v,O)$ is a rectangle w.r.t.~$(\var(v,O), \var(O)\setminus \var(v,O))$.
\end{claim}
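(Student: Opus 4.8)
The plan is to prove the statement by structural induction on the gate $v$, following the inductive definition of the captured assignment sets. Write $X_1 := \var(v,O)$ and $X_2 := \var(O)\setminus\var(v,O)$; we must show $S(v,O) = \pi_{X_1}(S(v,O)) \times \pi_{X_2}(S(v,O))$, i.e.\ that membership of an assignment in $S(v,O)$ only depends on its restrictions to $X_1$ and $X_2$ separately. The base case is an input gate $v$ labeled $x\mapsto a$ (or $\varepsilon$, or $\emptyset$): here $\var(v,O)=\{x\}$, and $S(v,O)$ is $\{x\mapsto a\}$ extended in all possible ways by proof trees passing through $v$ — more precisely $S(v,O)$ is the set of assignments captured by proof trees containing $v$, which restricted to $X_1=\{x\}$ is forced to be $a$ and on $X_2$ ranges over whatever the rest of the circuit permits; the key point, which I would isolate as a small sub-observation, is that the choices made in the ``upward'' part of a proof tree (above $v$) are independent of the choices made below $v$, so the set factors.

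For the inductive step I would treat $\cup$-gates and $\times$-gates separately. For a $\cup$-gate $v$ with children $v_1,v_2$: every proof tree through $v$ goes through exactly one $v_i$, so $\mathcal T(v,O)$ partitions into proof trees through $v_1$ and through $v_2$, and $\var(v,O)=\var(v_1,O)\cup\var(v_2,O)$. Here the subtlety is that $\var(v_1,O)$ and $\var(v_2,O)$ need not be equal (a $\cup$-gate need not be ``smooth''), so I would again use the above/below independence observation: $S(v,O) = S(v_1,O)\cup S(v_2,O)$ up to the bookkeeping of which variables are read, and since each $S(v_i,O)$ is a rectangle w.r.t.\ $(\var(v_i,O), \var(O)\setminus\var(v_i,O))$ by induction, and the ``upward'' part is shared, the union is still a rectangle w.r.t.\ $(X_1,X_2)$. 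For a $\times$-gate $v$ with children $v_1,v_2$: decomposability gives that $\var(v_1,O)$ and $\var(v_2,O)$ are disjoint and partition $\var(v,O)=X_1$; every proof tree through $v$ contains both $v_1$ and $v_2$, and the sub-proof-tree below $v_1$, the one below $v_2$, and the part above $v$ are chosen completely independently of one another. Combining the two inductive rectangle structures (w.r.t.\ $\var(v_i,O)$ versus its complement) with the independence of the part above $v$ yields that $S(v,O)$ factors as a product over $(X_1,X_2)$.

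The main obstacle — and the step that needs to be written carefully rather than waved at — is making precise the ``independence of the part of the proof tree above $v$ from the part below $v$'' claim, since $S(v,O)$ is defined as a union over \emph{global} proof trees, not just the sub-proof-trees rooted at $v$. The clean way to handle this is to observe that a proof tree $T\in\mathcal T(v,O)$ decomposes uniquely into a ``context'' $C$ (the part of $T$ on and above $v$, together with, at each $\times$-gate on the $v$-to-root path, a full sub-proof-tree hanging off the sibling side) and a sub-proof-tree $T_v$ rooted at $v$, and that the set of valid contexts does not depend on the choice of $T_v$ (and vice versa). The context contributes an assignment to exactly the variables in $X_2 = \var(O)\setminus\var(v,O)$ — here one must check, using decomposability, that no variable below $v$ appears in the context — while $T_v$ contributes the $X_1$-part, and these are chosen independently. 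Once this decomposition lemma is stated, all three cases of the induction (base, $\cup$, $\times$) follow by the routine combinatorics sketched above. I would also note that the $\emptyset$- and $\varepsilon$-inputs cause no trouble: $S(\emptyset)=\emptyset$ makes the relevant rectangle empty (and $\emptyset$ is vacuously a rectangle), and $S(\varepsilon)=\{\varepsilon\}$ behaves like an input reading no variable, consistent with $\var$ of such a gate being empty.
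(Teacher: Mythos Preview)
Your ``main obstacle'' observation --- that every proof tree $T\in\mathcal T(v,O)$ decomposes into a sub-proof-tree $T_v$ rooted at $v$ and a context $C$ above $v$, and that the set of valid contexts and the set of valid $T_v$'s can be chosen independently --- is exactly the paper's entire proof. The paper states precisely this swap property (``any such proof trees $T_1,T_2$ can be combined by combining the part below $v$ from $T_1$ and the rest from $T_2$'') and stops there, because it already gives $S(v,O)=\{\text{assignments from }T_v\}\times\{\text{assignments from contexts}\}$, with the first factor over $\var(v,O)$ and the second over its complement (by decomposability, no variable below $v$ reappears in the context). No induction is needed.

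Your surrounding inductive scaffolding is not only superfluous but, as written, has a gap. For a $\cup$-gate $v$ with children $v_1,v_2$ you write $S(v,O)=S(v_1,O)\cup S(v_2,O)$ ``up to bookkeeping'', but this fails: a proof tree through $v_1$ need not pass through $v$ at all (if $v_1$ has other parents in the DAG), so $\mathcal T(v_1,O)\not\subseteq\mathcal T(v,O)$ in general, and the induction hypothesis on $S(v_1,O)$ speaks about the wrong set. What you actually need is the set of proof trees through $v$ \emph{and} $v_1$, which is a strict subset of $\mathcal T(v_1,O)$ --- and to analyse that set you are back to the context decomposition anyway. So drop the induction and promote your ``clean way'' paragraph to the whole proof; the remaining check (that the context touches only variables outside $\var(v,O)$) is a one-line consequence of decomposability along the $\times$-gates on the root-to-$v$ path.
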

\begin{proof}
        Every proof tree in $\mathcal T(v,O)$ can be partitioned into a part below $v$ and the rest. Moreover, any such proof trees $T_1, T_2$ can be combined by combining the part below $v$ from $T_1$ and the rest from $T_2$. The claim follows directly from this observation.
\end{proof}
With Claim~\ref{clm:rectangle}, we only have to choose the right gates of $O$ to partition $\mathcal T(O)$ to prove Lemma~\ref{lem:DNNFmakeRectangles}. To this end, we construct an initially empty rectangle cover $\mathcal R$ iteratively: while~$O$ still captures an assignment $a$, choose a proof tree $T$ capturing $a$ (which is guaranteed to exist by (\ref{eq:prooftrees})). By descending in $T$, choose a gate $v$ such that $\var(v,O)$ a fraction of the variables in $Z$ between $\frac{\beta}{2}$ and $\beta$\footnote{There is a small edge case here in which $O$ does not contain a third of the variables in $Z$. In that case, we simply take $S(O)$ as a rectangle, balancing it by adding the non-appearing variables appropriately.}. Add $S(v,O)$ to $\mathcal R$, delete $v$ from $O$ and repeat the process. When the iteration stops, we have by Claim~\ref{clm:rectangle} and the choice of $v$ constructed a set $\mathcal R$ of $Z$-balanced rectangle covers. Moreover, $\bigcup_{\mathfrak r\in \mathcal R} S(\mathfrak r) \subseteq S(O)$ by construction. Finally, since in the end $O$ does not capture any assignments anymore, every assignment $a$ initially captured must have been computed by one of the proof trees of $O$ that got destroyed by deleting one of its gates $v$. Thus $a\in S(v,O)\in \mathcal R$ and we have
\begin{align*}
        S(O)= \bigcup_{\mathfrak r\in \mathcal R}S(\mathfrak r),
\end{align*}
which shows the claim of Lemma~\ref{lem:DNNFmakeRectangles} for the unstructured case.

If $O$ is structured, we choose the vertices $v$ in the iteration slightly differently. Let $\mathcal T$ be the v-tree of $O$. Then we can choose a vertex $v^*$ in $\mathcal T$ that has between one and two thirds of the vertices in $Z$ as labels on leaves below $v^*$. Let $X_{\downarrow}$ be the variable in $\mathcal T$ below $v$ and let $X_{\uparrow}$ be the rest of the variables. Now in the construction of $\mathcal R$, in the proof tree $T$ we can find~a gate $v$ below which there are only variables in $X_{\downarrow}$ and which is closest to the root with this property. Then, by Claim~\ref{clm:rectangle}, $S(v, O)$ is a rectangle w.r.t.~$(X_{\downarrow}, X_{\uparrow})$ and thus in particular $Z$-balanced. Since all rectangles we choose this way are with respect to the same partition $(X_{\downarrow}, X_{\uparrow})$ which depends only on $\mathcal T$, the rest of the proof follows as for the unstructured case.
\end{toappendix}

\subsection{Lower Bound for DNNF}\label{sec:DNNFlower}

In this Section, we show the lower bound for Theorem~\ref{thm:FDD} which we reformulate here.

\begin{proposition}\label{prop:lowerFDD}
Let $\Gamma$ be a constraint language that is not strongly blockwise decomposable. Then there is a family of $\Gamma$-formulas $F_{n}$ of size $\Theta(n)$ and
$\varepsilon>0$ such that any \textup{DNNF} for $F_{n}$ has
size at least $2^{\varepsilon\left\Vert F_{n}\right\Vert }$.
\end{proposition}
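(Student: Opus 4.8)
The plan is to follow the standard recipe for unconditional DNNF lower bounds via balanced rectangle covers, combined with a communication-complexity argument on a suitably chosen ``hard'' pp-definable relation. Since $\Gamma$ is not strongly blockwise decomposable, there is some relation $R\in\coclone{\Gamma}$ that is not blockwise decomposable; by definition this means there are two variables $x,y$ in the scope of $R(x_1,\dots,x_k)$ such that the selection matrix $M^R_{x,y}$ either fails to be a proper block matrix, or has a block that is not decomposable over any partition separating $x$ and $y$. By Lemma~\ref{lem:2x2blockmatrix}, in the first case $M^R_{x,y}$ contains a $2\times 2$-submatrix with exactly one empty entry; in the second case there is a block $(A,B)$ whose associated constraint is indecomposable ``across'' $x,y$. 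In both cases I want to extract a small fixed ``gadget'' relation $G$ on a bounded number of variables that witnesses the failure, and then build $F_n$ by taking $n$ disjoint copies of (a pp-definition over $\Gamma$ of) $G$, glued together along the two distinguished coordinates so that the solution set becomes essentially a product structure forcing a large fooling set.

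Concretely, first I would reduce to the binary-ish case: using Corollary~\ref{cor:unary} we may add all unary relations to $\Gamma$, and using projection/selection closure (Lemma~\ref{lem:closedProjectionSeletion} applied in its contrapositive form, together with the fact that these operations only shrink selection matrices) I would isolate a pp-definable constraint $G(x,y,\vec z)$ over $\Gamma^\bullet$, on $O(1)$ variables, that is not blockwise decomposable in $x,y$ and that exhibits one of the two obstructions above in a minimal form — e.g., a $2\times2$ pattern $\{(x\!=\!a,y\!=\!c),(x\!=\!a,y\!=\!d),(x\!=\!b,y\!=\!c)\}$ present with $(x\!=\!b,y\!=\!d)$ absent, or an indecomposable block forcing a ``crossing'' dependency between a $\vec z$-variable and both $x$ and $y$. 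Then I define $F_n$ to be $\bigwedge_{i=1}^n G(x_i,y_i,\vec z_i)$ on fresh variable tuples, so that $\|F_n\|=\Theta(n)$ and $\sol(F_n)$ is the $n$-fold product of $\sol(G)$. Next, I set $Z$ to be the set of all the ``$x$-side'' and ``$y$-side'' variables (or whatever pair of coordinate-families the obstruction lives on), apply Lemma~\ref{lem:DNNFmakeRectangles} with an appropriate constant $\beta$ to get a $Z$-$\beta$-balanced rectangle cover $\mathcal R$ of $F_n$ of size equal to the DNNF size $s$, and argue that any single $Z$-$\beta$-balanced rectangle can contain only a $2^{-\Omega(n)}$ fraction of $\sol(F_n)$. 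Summing over the cover then forces $s\geq 2^{\Omega(n)}=2^{\varepsilon\|F_n\|}$.

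The heart of the argument — and the step I expect to be the main obstacle — is the rectangle-size bound: showing that for the chosen gadget $G$, no rectangle with respect to a balanced partition of the index set $[n]$ can capture more than an exponentially small fraction of the $n$-fold product. For the ``$2\times2$ with one hole'' obstruction this is the classic $\mathrm{DISJ}$-style fooling-set argument: pick, in each coordinate $i$ that the partition splits between the two sides, the pair $(a,c),(b,d)$; since the split is balanced there are $\Omega(n)$ such coordinates, and for each of them swapping sides produces an assignment outside $\sol(G)$ in coordinate $i$ (because $(b,d)\notin$ the allowed set), so the rectangle can use at most one of $2^{\Omega(n)}$ fooling assignments. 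The genuinely delicate case is the indecomposable-block obstruction, where the ``crossing'' variable is some $z\in\vec z_i$ rather than $y_i$ itself: here I would use Lemma~\ref{lem:set-decomposable} (blockwise set-decomposability) to see that indecomposability of a block really does produce, after fixing the block, a pair of value-settings on the two parts of the scope that are individually extendable but not jointly extendable, and then feed that into the same fooling-set machinery — being careful that $Z$ is chosen so that a balanced partition on $Z$ actually splits $\Omega(n)$ of the gadget copies in a way that separates the two ``incompatible'' halves. Handling this case uniformly, without case distinctions that blow up, is where the real work lies; the structured-DNNF version (needed for Theorem~\ref{thm:ODD:lower}) then follows because Lemma~\ref{lem:DNNFmakeRectangles} additionally guarantees a single fixed partition, so one only needs the failure of \emph{uniform} blockwise decomposability to pin down a bad pair $(x,y)$ together with a bad \emph{global} partition.
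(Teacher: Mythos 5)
There is a genuine gap, and it sits exactly where you place the ``real work'': your concrete construction $F_n:=\bigwedge_{i=1}^n G(x_i,y_i,\vec z_i)$ on pairwise disjoint variable tuples cannot yield the lower bound. The problem is that a $Z$-$\beta$-balanced partition of the variables is only required to split the \emph{set} $Z$ roughly in half; it is not required to split any individual copy of the gadget. An adversarial (and perfectly legal) balanced partition simply assigns whole copies $\{x_i,y_i\}\cup\tilde z_i$ to one side until it has collected about half of $Z$, and puts the remaining copies on the other side. With respect to such a partition, $\sol(F_n)$ \emph{is itself a rectangle} (it is a product over the copies), so it admits a balanced rectangle cover of size $1$, and neither your fooling-set count nor your ``each rectangle captures a $2^{-\Omega(n)}$ fraction'' claim can get off the ground. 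No choice of $Z$ repairs this, since the gadgets have constant size. You flag the need to ensure that ``a balanced partition on $Z$ actually splits $\Omega(n)$ of the gadget copies,'' but with disjoint copies this is simply false, and the missing ingredient is the whole point of the construction. The paper's fix is to hang the constraints on the edges of a bounded-degree bipartite expander (Lemma~\ref{lem:expanders}): one variable per vertex, one constraint $R(x_u,x_v,\vec z_e)$ per edge with private auxiliary variables, so that \emph{every} balanced cut of the vertex variables is crossed by an induced matching of $\Omega(n)$ constraint scopes, each of which separates its two distinguished variables. Note that the paper's Lemma~\ref{lem:lowerboundmatchingformula} does use your disjoint-copies formula, but only under the explicit hypothesis that the given partition splits every copy --- a hypothesis that is discharged via the expander, not assumed for free.

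A secondary issue: for unstructured DNNF the rectangles in the cover from Lemma~\ref{lem:DNNFmakeRectangles} may each respect a \emph{different} balanced partition, so a fooling set built for one partition does not bound the other rectangles. The paper therefore uses a counting argument in the unstructured case (Lemmas~\ref{lem:NoBlockNoDNNF} and~\ref{lem:NoDecNoDNNF}): for each crossing matching edge one perturbs any captured solution into many solutions the rectangle misses, giving $|\sol(F_n')|\geq 2^{\varepsilon' n}|\sol(\mathfrak r)|$ for every rectangle $\mathfrak r$ regardless of its partition. The fooling-set argument is reserved for the structured case (Proposition~\ref{prop:lowerODD}), where all rectangles share one partition --- consistent with your closing remark, but your main proof for the unstructured case should not lean on it.
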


In the remainder of this section, we show Proposition~\ref{prop:lowerFDD}, splitting the proof into two cases. 
First, we consider the case where $M_{x,y}^{R}$ is not a proper block matrix. 

\begin{lemma}\label{lem:NoBlockNoDNNF}
Let $R(x,y,\vec z)$ be a constraint such that $M_{x,y}^{R}$ is not a proper block matrix. Then 
there is a family of $\{R\}$-formulas $F_{n}$ and
$\varepsilon>0$ such that any \textup{DNNF} for $F_{n}$ has size at least $2^{\varepsilon\left\Vert F_{n}\right\Vert }$.
\end{lemma}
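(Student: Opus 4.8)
The plan is to reduce to the known exponential DNNF lower bounds for monotone $2$-CNF formulas of bounded-degree expanders~\cite{BovaCMS16,Capelli17,AmarilliCMS20}, using $R$ as a gadget that simulates a ``not both'' clause. First, since $M^R_{x,y}$ is not a proper block matrix, Lemma~\ref{lem:2x2blockmatrix} gives a $2\times 2$ submatrix with exactly one empty entry: there are $a_1\ne a_2$ and $b_1\ne b_2$ in $D$ with $\sol(M^R_{x,y}[a_i,b_j])\ne\emptyset$ for $(i,j)\in\{(1,1),(1,2),(2,1)\}$ and $\sol(M^R_{x,y}[a_2,b_2])=\emptyset$. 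Equivalently, $\pi_{\{x,y\}}\bigl(R(x,y,\vec z)|_{x\in\{a_1,a_2\},\,y\in\{b_1,b_2\}}\bigr)=\{(a_1,b_1),(a_1,b_2),(a_2,b_1)\}$, which up to renaming values is the relation $\{(0,0),(0,1),(1,0)\}$.

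Next I would fix a family $(H_n)$ of $d$-regular bipartite expanders with parts $P_n$, $Q_n$, $|P_n|=|Q_n|=n$ and $d$ constant, whose \emph{independent set constraint} (the Boolean constraint whose solutions are the characteristic vectors of the independent sets of $H_n$) requires DNNF of size $2^{\Omega(n)}$; such families are provided by the expander-based constructions of~\cite{BovaCMS16,Capelli17,AmarilliCMS20}, which may be taken bipartite (alternatively, the lower bound can be reproved directly for bipartite expanders via Lemma~\ref{lem:DNNFmakeRectangles}, as discussed below). For each edge $\{p,q\}\in E(H_n)$ with $p\in P_n$, $q\in Q_n$, introduce fresh variables $\vec z_{pq}$ (one per coordinate of $R$ other than the $x$- and $y$-coordinate) and set
\begin{align*}
  F_n := \bigwedge_{\{p,q\}\in E(H_n)} R(x_p,\,y_q,\,\vec z_{pq}),
\end{align*}
where the $x_p$ ($p\in P_n$), the $y_q$ ($q\in Q_n$) and the $\vec z_{pq}$ are pairwise distinct, with $x_p$ always placed in the $x$-coordinate and $y_q$ always in the $y$-coordinate. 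Since $d$ is constant, $\|F_n\|=\Theta(n)$, and bipartiteness of $H_n$ is exactly what guarantees that no variable is used both in an $x$- and a $y$-coordinate, so the following restriction step is well defined.

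Now suppose $O$ is a DNNF for $F_n$ of size $s$. Applying Lemma~\ref{lem:restrictDNNF} to each $x_p$ (restricted to $\{a_1,a_2\}$) and each $y_q$ (restricted to $\{b_1,b_2\}$), and then Lemma~\ref{lem:projectDNNF} to project away all the $\vec z_{pq}$, we obtain a DNNF of size at most $s$ for the constraint $G_n$ on $\{x_p\}\cup\{y_q\}$ whose solutions are the assignments $\alpha$ with $\alpha(x_p)\in\{a_1,a_2\}$, $\alpha(y_q)\in\{b_1,b_2\}$, and $(\alpha(x_p),\alpha(y_q))\ne(a_2,b_2)$ for every edge $\{p,q\}$ (the last equivalence is because $\sol(M^R_{x,y}[c,d])\ne\emptyset$ iff $(c,d)\ne(a_2,b_2)$ for $c\in\{a_1,a_2\}$, $d\in\{b_1,b_2\}$). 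Identifying $a_2$ (resp.\ $b_2$) with ``selected'' and $a_1$ (resp.\ $b_1$) with ``not selected'', $G_n$ is exactly the independent set constraint of $H_n$. Hence $s\ge 2^{\Omega(n)}=2^{\Omega(\|F_n\|)}$, which proves the lemma with a suitable $\varepsilon>0$ (obtained from the two hidden constants).

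The step I expect to be the main obstacle is the cited lower bound for the independent set constraint of an expander, because the rectangle covers produced from DNNF by Lemma~\ref{lem:DNNFmakeRectangles} may use many different balanced variable partitions. Expansion of $H_n$ is used precisely here: it ensures that \emph{every} balanced partition of the variable set is crossed by an induced matching of linear size (greedily extracted from the $\Omega(n)$ crossing edges of the cut, using bounded degree), and that for such a partition any rectangle in the cover must, for each crossing matching edge, exclude one of its two endpoints from all of its solutions; a counting argument over the $3^{\Omega(n)}$ ``matching independent sets'' then forces $2^{\Omega(n)}$ rectangles. Carrying this out uniformly over all balanced partitions at once is the delicate part, and it is handled exactly as in~\cite{BovaCMS16}; everything else in the proof is routine application of the projection/selection lemmas for DNNF.
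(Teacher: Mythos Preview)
Your proposal is correct and essentially complete. The construction of $F_n$, the restriction of the $x_p$ and $y_q$ to two values each, the identification of the projected constraint with the independent-set constraint of a bipartite bounded-degree expander, and the appeal to Lemma~\ref{lem:restrictDNNF} and Lemma~\ref{lem:projectDNNF} are all sound; your sketch of the rectangle-cover argument in the final paragraph is the standard one and goes through.

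The paper proceeds differently: it performs the same restriction of the $x_p$ and $y_q$ but does \emph{not} project away the auxiliary variables $\vec z_e$. Instead it runs the rectangle-cover counting argument directly on the restricted formula $F_n'$, constructing for each rectangle $\mathfrak r$ and each subset of crossing matching edges a new solution of $F_n'$ outside $\mathfrak r$ by flipping values at the matching edge and its neighbourhood and patching up the $\vec z$-variables accordingly. This yields $|\sol(F_n')|\ge (1+c)^{\Omega(n)}|\sol(\mathfrak r)|$ for every rectangle and hence the same $2^{\Omega(n)}$ bound. Your projection route is cleaner and more modular here, since it reduces to an already-analysed Boolean problem; the price is that it does not extend to the companion Lemma~\ref{lem:NoDecNoDNNF}, where $M^R_{x,y}$ \emph{is} a proper block matrix and the binary projection onto $\{x,y\}$ inside a block is the full product, so all information is lost after projecting away $\vec z$. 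The paper's direct argument is set up precisely so that the same template handles both lemmas.
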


\newcommand{\degree}{\ensuremath{r}}

In the proof of Lemma~\ref{lem:NoBlockNoDNNF}, we will use a specific family of graphs. We remind the reader that a matching is a set of edges in a graph that do not share any end-points. The matching is called induced if the graph induced by the end-points of the matching contains exactly the edges of the matching.
\begin{lemma}\label{lem:expanders}
        There is an integer $\degree$ and constants $\alpha>0$, $\varepsilon> 0$ such that there is an infinite family $(G_n)$
         of  bipartite graphs with maximum degree at most $\degree$ such that for each set $X\subseteq V(G_{n})$ with $|X|\le \alpha n=\alpha |V(G_n)|/2$ there is an induced matching of size at least $\varepsilon|X|$ in which each edge has exactly one endpoint in $X$.
\end{lemma}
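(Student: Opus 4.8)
The plan is to realise $(G_n)$ by a family of bounded‑degree bipartite expanders and then extract the required induced matching greedily. Fix a suitable constant $d$ and let $(G_n)$ be $d$‑regular bipartite graphs with vertex classes $L_n,R_n$, $|L_n|=|R_n|=n$, that expand well on \emph{both} sides; for instance one can take bipartite double covers of Lubotzky--Phillips--Sarnak Ramanujan graphs, which exist for a fixed $d$ and infinitely many $n$. Applying the expander mixing lemma to such graphs (with $d$ chosen large enough) yields a constant $\alpha>0$ such that every vertex set $S$ with $S\subseteq L_n$ or $S\subseteq R_n$ and $|S|\le \alpha n$ satisfies $|N(S)|\ge 2|S|$, where $N(\cdot)$ denotes the neighbourhood in $G_n$. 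We set $r:=d$. Note $|V(G_n)|=2n$, so the hypothesis $|X|\le \alpha n=\alpha|V(G_n)|/2$ forces $|X\cap L_n|\le\alpha n$ and $|X\cap R_n|\le\alpha n$.

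Now let $X\subseteq V(G_n)$ with $|X|\le\alpha n$ and write $X_L:=X\cap L_n$, $X_R:=X\cap R_n$. Since the expansion property is symmetric between the two sides, we may assume w.l.o.g.\ that $|X_L|\ge|X_R|$, hence $|X_L|\ge|X|/2$. Put $A:=X_L$ and $B:=R_n\setminus X_R$. Every edge of $G_n$ between $A$ and $B$ has exactly one endpoint in $X$, namely its endpoint in $A$. Therefore it suffices to exhibit an induced matching of $G_n$ of size at least $c|A|$, for a constant $c>0$, all of whose edges run between $A$ and $B$; since $|A|\ge|X|/2$ this gives the lemma with $\varepsilon:=c/2$.

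We build such a matching greedily. Maintain an induced matching $M=\{(v_1,w_1),\dots,(v_k,w_k)\}$ with all $v_i\in A$, all $w_i\in B$, together with $\mathrm{Used}_A:=\{v_1,\dots,v_k\}$, $\mathrm{Used}_B:=\{w_1,\dots,w_k\}$, and the forbidden set $\mathrm{Forb}_B:=X_R\cup\mathrm{Used}_B\cup N(\mathrm{Used}_A)\subseteq R_n$. As long as there is a vertex $v\in A\setminus(\mathrm{Used}_A\cup N(\mathrm{Used}_B))$ with $N(v)\not\subseteq\mathrm{Forb}_B$, pick such a $v$, pick any $w\in N(v)\setminus\mathrm{Forb}_B$, and add $(v,w)$ to $M$. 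This keeps $M$ an induced matching of the required form: $w\notin X_R$ and $w\notin\mathrm{Used}_B$ keep $M$ a matching with edges between $A$ and $B$ and the $w_i$ distinct, while $w\notin N(\mathrm{Used}_A)$ forbids all edges $v_iw$ with $i<k$ and $v\notin N(\mathrm{Used}_B)$ forbids all edges $v\,w_i$ with $i<k$; together with the fact that $\mathrm{Forb}_B$ and $N(\mathrm{Used}_B)$ only grow, this rules out every non‑matching edge among the endpoints of $M$.

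It remains to bound the size $k$ of the matching produced. When the process stops, every $v\in A\setminus(\mathrm{Used}_A\cup N(\mathrm{Used}_B))$ has $N(v)\subseteq\mathrm{Forb}_B$, so $A\subseteq\mathrm{Used}_A\cup\bigl(A\cap N(\mathrm{Used}_B)\bigr)\cup P$, where $P:=\{v\in A: N(v)\subseteq\mathrm{Forb}_B\}$. Here $|\mathrm{Used}_A|=k$ and $|A\cap N(\mathrm{Used}_B)|\le|N(\mathrm{Used}_B)|\le dk$. For $P$ we invoke expansion: $P\subseteq A\subseteq L_n$ with $|P|\le|A|\le\alpha n$, so $|N(P)|\ge 2|P|$; but $N(P)\subseteq\mathrm{Forb}_B$ and $|\mathrm{Forb}_B|\le|X_R|+|\mathrm{Used}_B|+|N(\mathrm{Used}_A)|\le|A|+k+dk$, whence $2|P|\le|A|+(d+1)k$, i.e.\ $|P|\le|A|/2+(d+1)k$. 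Substituting, $|A|\le k+dk+|A|/2+(d+1)k$, so $|A|/2\le(2d+2)k$ and $k\ge|A|/(4d+4)=:c|A|$. Thus $\varepsilon:=c/2=1/(8d+8)$ works. The main obstacle is the first step: securing an expander family whose vertex expansion is at least $|N(S)|\ge 2|S|$ for small $S$ and, crucially, on \emph{both} sides of the bipartition; plain spectral expanders do this once the degree is a large enough constant, so this is standard but must be set up with care. The combinatorial heart of the argument is the greedy extraction, and its one non‑routine point is exactly the bound above, which controls the set $P$ of vertices that become unusable because all their neighbours are forbidden: this is where expansion enters, via $N(P)\subseteq\mathrm{Forb}_B$ together with $|\mathrm{Forb}_B|=O(k+|A|)$.
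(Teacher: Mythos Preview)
Your proof is correct and follows essentially the same overall strategy as the paper: construct a family of bounded-degree bipartite vertex expanders (expanding on both sides), then extract the required induced matching greedily from the edges crossing the given cut $X$.

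The differences are at the level of implementation rather than idea. The paper establishes the expander family by a probabilistic argument (random permutations) and then proceeds in two greedy steps: first it finds an ordinary matching of size $\Omega(|X|/r)$ between $X$ and $V\setminus X$ using only that $|N(X)\setminus X|\ge c'|X|$, and then it thins this to an induced matching by observing that each edge conflicts with at most $O(r^2)$ others. You instead appeal to explicit spectral expanders (double covers of Ramanujan graphs) and run a single, more carefully controlled greedy pass, using expansion a second time to bound the ``stuck'' set $P$ of vertices all of whose neighbours have become forbidden. Your route gives slightly better constants and is self-contained, while the paper's two-step extraction is shorter to write down; both are standard. Your remark that one must take $d$ large enough so that the expander mixing lemma yields $|N(S)|\ge 2|S|$ for small $S$ is the right caveat and matches the spirit of the paper's probabilistic Lemma, where the constants are likewise fixed only at the end.
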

The proof of Lemma~\ref{lem:expanders} uses a specific class of so-called expander graphs. Since the arguments are rather standard for the area, we defer the proof to Appendix~\ref{app:expander}.
\begin{toappendix}
        \subsection{Bipartite graphs with large induced matchings over every cut}\label{app:expander}

        In this appendix, we will show how to prove Lemma~\ref{lem:expanders}. The construction is based on expander graphs in a rather standard way, but since we have not found an explicit reference, we give it here for completeness.

        We will use the following construction.

        \begin{lemma}\label{lem:probabilistic}
            There are constants $\degree\in \mathbb{N}$, $\alpha >0$,
            $c > 1$ such that there is a class of bipartite graphs
            $G_n= (A_n, B_n, E_n)$ of degree at most $\degree$ and with $|A_n|= |B_n| = n$ such that for every set $S\subseteq A_n$ or $S\subseteq B_n$ with $|S|\le \alpha n$ we have $|N(S)| > c |S|$.
        \end{lemma}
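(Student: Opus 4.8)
The plan is to prove Lemma~\ref{lem:probabilistic} by the probabilistic method. I would take $G_n$ to be the union of $\degree$ independent, uniformly random perfect matchings between two sides $A_n,B_n$ with $|A_n|=|B_n|=n$, where $\degree$ is a constant fixed later together with $c>1$ and $\alpha>0$. This is a $\degree$-regular bipartite multigraph; deleting parallel edges yields a simple bipartite graph of maximum degree at most $\degree$ with exactly the same neighbourhood sets $N(S)$, so it suffices to prove $|N(S)|>c|S|$ (for nonempty $S$) in the multigraph. Moreover, the inverse of a uniformly random perfect matching is again uniformly random, so the distribution of $N(S)$ depends only on $|S|$ and not on which side $S$ sits in; it is therefore enough to bound the probability that some small $S\subseteq A_n$ fails to expand, and then double it to account for $B_n$.

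Next I would run a first-moment computation. Fix $s$ with $1\le s\le\alpha n$, put $t:=\lfloor cs\rfloor$, and fix $S\subseteq A_n$, $T\subseteq B_n$ with $|S|=s$, $|T|=t$. For a single random perfect matching $M$ one has $\Pr[M(S)\subseteq T]=\prod_{i=0}^{s-1}\frac{t-i}{n-i}\le (t/n)^s$, and since $N(S)=\bigcup_{j=1}^{\degree}M_j(S)$ with the $M_j$ independent, $\Pr[N(S)\subseteq T]\le (t/n)^{\degree s}$. A union bound over the $\binom{n}{s}$ choices of $S$ and the $\binom{n}{t}$ choices of $T$, together with $\binom{n}{k}\le (en/k)^k$ and $t\le cs<n$, gives for the probability $p_s$ that some $s$-subset of $A_n$ has $|N(S)|\le cs$:
\[
p_s \;\le\; \binom{n}{s}\binom{n}{t}\left(\frac{t}{n}\right)^{\degree s} \;\le\; \left(e^{1+c}\,c^{\,\degree-c}\,(s/n)^{\,\degree-1-c}\right)^{s}.
\]

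Finally I would fix the constants so the bracketed quantity is uniformly small on $1\le s\le\alpha n$. Choosing $\degree>c+1$ (concretely $c=2$, $\degree=4$) makes the exponent $\degree-1-c$ of $s/n$ strictly positive, so the bracket is increasing in $s/n$ and is bounded throughout by $e^{1+c}c^{\,\degree-c}\alpha^{\,\degree-1-c}$; picking $\alpha>0$ small enough (and $\alpha<1/2$, so indeed $t<n$) makes this at most $1/4$. Then $\sum_{s\ge1}p_s\le\sum_{s\ge1}4^{-s}=1/3$, and adding the symmetric contribution from subsets of $B_n$, the total failure probability is at most $2/3<1$; hence for every sufficiently large $n$ a graph with the required expansion exists, which gives the infinite family. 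The only genuinely delicate point is this choice of parameters: one needs the blow-up factor $c$ strictly below $\degree-1$ so that the entropy terms $\binom{n}{s}\binom{n}{t}$ are beaten by the collision probability $(t/n)^{\degree s}$. Everything else — the passage to a simple graph and the reduction of the two-sided statement to a one-sided one by symmetry — is routine. (Alternatively one could cite an explicit construction of unbalanced bipartite vertex expanders, but the random construction keeps the argument self-contained.)
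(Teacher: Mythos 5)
Your proposal is correct and follows essentially the same route as the paper: the graph is the union of $\degree$ independent uniformly random perfect matchings (equivalently, $\degree$ random permutations), the failure probability for a fixed pair $(S,T)$ is bounded by $\prod_{i=0}^{s-1}\frac{t-i}{n-i}$ raised to the power $\degree$, and a first-moment union bound over $S$ and $T$ with $\binom{n}{k}\le(en/k)^k$ yields a geometric series once $\degree>c+1$ and $\alpha$ is small. The only differences are cosmetic (your slightly cleaner bound $(t/n)^s$ versus the paper's $(cs/(n-s))^s$, different numerical constants, and your explicit handling of the multigraph and $A$/$B$ symmetry points that the paper leaves implicit).
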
        
        \begin{proof}
            The proof is an adaption of \cite[Theorem 5.6]{MotwaniR95} to our slightly different setting, using the probabilistic method. We will choose the constants $\degree,\alpha, c$ later to make the calculations work, so we let them be variable for now. We fix $n$ and construct $G_n$ as follows: Set $A_n:=\{(0, i)\mid i\in [n]\}$ and $B_n:=\{(1,i)\mid i\in [n]\}$. Then choose $\degree$ permutations $\pi_1, \ldots, \pi_\degree$ of $[n]$ and set $E_n:= \{((0,i), (1, \pi_j(i)))\mid i\in [n], j\in [\degree]\}$. Then $G_n$ is by construction bipartite and has maximum degree of $\degree$ as required, and it remains only to show the condition on the neighborhoods.
            
            Let $\mathcal E_s$ be the random event that there is a set
            $S\subset A_n$ of size $s$ with at most $c s$
            neighbors. There are $\binom{n}{s}$ possible choices of
            such a set $S$. Also, for every $S$ there are $\binom{n}{c
              s}$ sets $T$ in which the neighbors of $S$ can be in
            case $E_s$ is true for $S$. Since the probability of
            $N(S)\subseteq T$ only depends on the size of $T$ but not
            on $T$ itself, we get
            \begin{align*}
                \Pr(\mathcal E_s) &\le \binom{n}{s}\binom{n}{cs} \Pr(N(S)\subseteq \{(1, i)\mid i\in [cs]\})\\
                & \le \left(\frac{ne}{s}\right)^s\left(\frac{ne}{cs}\right)^{cs} \Pr(N(S)\subseteq \{(1, i)\mid i\in [cs]\}).
            \end{align*}
        We have that $N(S)\subseteq \{(1, i)\mid i\in [cs]\}$ if and only if the $\degree$ permutations $\pi_j$ all map $S$ into $[cs]$. So let us first bound the number of permutations which map $S$ into $[cs]$: we first choose the elements into which $S$ is mapped; there are $\prod_{i=0}^{s-1}(cs -i)$. Then we map the rest of $A$ arbitrarily; there are $(n-s)!$ ways of doing this. So the overall number of such permutations is $\prod_{i=0}^{s-1}(cs -i) (n-s)!$. Since the permutations $\pi_j$ are chosen independently, we get
        \begin{align*}
            \Pr(N(S)\subseteq \{(1, i)\mid i\in [cs]\}) &= \left( \frac{\prod_{i=0}^{s-1}(cs -i) (n-s)!}{n!}\right)^\degree\\
            &= \left(\prod_{i=0}^{s-1} \frac{cs -i}{n-i}\right)^\degree\\
            & \le \left(\frac{cs}{n-s}\right)^{\degree s}. 
        \end{align*}
        Plugging this in and then using $s\le \alpha n$, we get
        \begin{align*}
            \Pr(\mathcal E_s) &\le
                                \left(\frac{ne}{s}\right)^s\left(\frac{ne}{cs}\right)^{cs}\left(\frac{cs}{n-s}\right)^{\degree
                                s}\\
            &\le \left( \left(\frac{s}{n} \right)^{\degree-c-1}  e^{1+c}  c^{\degree-c}\left(\frac{1}{1-\alpha}\right)^\degree \right)^s \\
            &\le \left( \alpha^{\degree-c-1}  e^{1+c} c^{\degree-c} \left(\frac{1}{1-\alpha}\right)^\degree \right)^s.
        \end{align*}
        We now set our constants to $\alpha= 1/5$, $\degree=18$, and $c=1.1$ and get
        \begin{align*}
        \Pr(\mathcal E_s) \le 0.1^s.
        \end{align*}
        Now let $\mathcal E_A$ be the event that there is a subset $S$ of $A_n$ of size at most $\alpha n$ that has at most $c|S|$ neighbors. We get 
        \begin{align*}
            \Pr(\mathcal E_A) \le \sum_{s=1}^{\alpha n} \Pr(\mathcal E_s) \le \sum_{s=1}^{\alpha n} 0.1^s \le \frac{0.1}{1-0.1} \le 0.2.
        \end{align*}
        The same analysis for subsets of $B$ yields that the probability that there is a set $S$ in $A_n$ or $B_n$ of size at most $\alpha n$ that has too few neighbors is at most $0.4$. It follows that there is a graph $G_n$ with the desired properties.
        \end{proof}

        We call a graph $G$ an $(n,\degree,c,\alpha)$ vertex expander
        if $|V(G)|=n$, the maximum degree is at most $\degree$, and for all sets $S$ of at most $\alpha n$ vertices, the neighborhood $N(S)=\{v\in V(G)\setminus S \mid \exists u \in S, uv\in E\}$ has size at least $c |S|$.
        
        \begin{corollary}\label{cor:app1}
                There are constants $\degree\in \mathbb{N}$, $\alpha >0$, $c' > 0$ such that there is a class of bipartite $(n,\degree,c', \alpha)$-expander with $n$ vertices in every color class for infinitely many values $n$.
        \end{corollary}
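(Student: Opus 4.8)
The plan is to obtain Corollary~\ref{cor:app1} directly from Lemma~\ref{lem:probabilistic} by upgrading its \emph{one-sided} expansion guarantee --- which only bounds $|N(S)|$ for sets $S$ contained in a single colour class --- to the \emph{two-sided} guarantee required by the definition of a vertex expander, where $S$ may be an arbitrary subset of the vertex set. Concretely, I would take the family $G_n = (A_n,B_n,E_n)$ together with the constants $\degree,\alpha,c$ produced by Lemma~\ref{lem:probabilistic}, and claim that each $G_n$ is a vertex expander on its $2n$ vertices with maximum degree $\degree$, expansion factor $c'=c-1>0$, and size fraction $\alpha/2$ (this is exactly the statement of the corollary once one keeps in mind that here the number of vertices is $2n$, so ``$\alpha/2$ of the vertices'' means ``$\alpha n$ vertices''). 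The maximum-degree condition is inherited verbatim, so only the neighbourhood bound needs to be verified.

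For the neighbourhood bound, fix $S\subseteq A_n\cup B_n$ with $|S|\le \tfrac{\alpha}{2}\cdot 2n=\alpha n$ and split it as $S=S_A\cup S_B$ with $S_A=S\cap A_n$ and $S_B=S\cap B_n$ (a disjoint union). Since $|S_A|\le|S|\le\alpha n$ and likewise $|S_B|\le\alpha n$, Lemma~\ref{lem:probabilistic} yields $|N(S_A)|> c|S_A|$ and $|N(S_B)|> c|S_B|$. By bipartiteness $N(S_A)\subseteq B_n$ and $N(S_B)\subseteq A_n$, hence the sets $N(S_A)\setminus S_B$ and $N(S_B)\setminus S_A$ are disjoint and both are contained in $N(S)$ (in the sense of the definition, which excludes $S$ itself, since a vertex of $N(S_A)$ outside $S_B$ lies in $B_n$ and therefore outside $S$). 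Estimating $|N(S_A)\setminus S_B|\ge |N(S_A)|-|S_B|> c|S_A|-|S_B|$ and symmetrically $|N(S_B)\setminus S_A|> c|S_B|-|S_A|$, and adding the two inequalities, I get $|N(S)|> (c-1)(|S_A|+|S_B|)=(c-1)|S|$. Thus $c':=c-1>0$ has the required property, which proves the corollary.

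The argument is entirely routine; I expect no genuine obstacle. The only point that requires a little care is the handling of ``mixed'' sets $S$ meeting both colour classes, together with the accompanying halving of the expansion fraction (and the bookkeeping that the relevant vertex count is $2n$ rather than $n$). In particular, note that $|N(S_A)\setminus S_B|\ge c|S_A|-|S_B|$ is a valid lower bound even when its right-hand side is negative, so no case distinction on which of $S_A,S_B$ dominates is needed; summing the two one-sided bounds then automatically gives a clean bound. This is the standard reduction from one-sided to two-sided expansion for balanced bipartite graphs.
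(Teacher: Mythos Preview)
Your argument is correct and follows the same route as the paper: take the one-sided bipartite expanders of Lemma~\ref{lem:probabilistic} and upgrade the expansion guarantee to arbitrary vertex sets by splitting $S$ along the bipartition and subtracting off the part of the neighbourhood that might fall back into $S$. The only cosmetic difference is that the paper keeps just the larger of the two halves (assuming w.l.o.g.\ $|S\cap A_n|\ge |S\cap B_n|$, so $|S\cap A_n|\ge |S|/2$) and bounds $|N(S)|\ge |N(S\cap A_n)|-|S\cap B_n|\ge c|S|/2-|S|/2$, obtaining $c'=(c-1)/2$, whereas you treat the two halves symmetrically, sum the two estimates, and obtain $c'=c-1$. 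Your version yields a slightly better constant and avoids the case distinction; the paper's version avoids having to note that the individual summands may be negative. Both are equally routine.
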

        \begin{proof}
            We take the graphs $G_n$ from Lemma~\ref{lem:probabilistic} with the same values $\alpha$, $\degree$ and $c':= \frac{c-1}{2}$. Fix $G_n$ and consider any set $S$ of size at most $\alpha n$. Assume w.l.o.g.~that $|S\cap A_n| \ge |S\cap B_n|$, so $|A_n\cap S| \ge |S|/2$. Then we get that 
            \begin{align*}
                |N(S)\setminus S| &\ge |N(S\cap A_n)\setminus S\cap B|\\
                & \ge |N(S\cap A_n) | - |S\cap B|\\
                &\ge c |S\cap A_n| - |S|/2\\
                & =\frac{c-1}{2} |S|\\ 
                &= c'|S|.
            \end{align*}
        \end{proof}
        The class of graphs in Corollary~\ref{cor:app1} will be the class of graphs for Lemma~\ref{lem:expanders}. We now construct the distant matchings. To this end, consider a graph $G=(V,E)$ from this class and a set $S$ of vertices of size at most $\alpha n$. First construct a matching between $S$ and $V\setminus S$. Since $S$ has $c' |S|$ neighbors outside of $S$ and every vertex had degree at most $\degree$, one can greedily find such a matching of size $\frac{c'|S|}{\degree}$. In a second step, we choose an induced matching out of this matching greedily. Since every edge has at most $O(\degree^2)$ edges at distance $2$, this yields an induced matching of size $\Omega(|V|/\degree^3)$ which completes the proof.
\end{toappendix}
 
\begin{proof}[Proof of Lemma~\ref{lem:NoBlockNoDNNF}]

If $M_{x,y}^{R}$ is not a proper block matrix, then, by Lemma~\ref{lem: 2x2 submatrix}, the matrix $M_{x,y}^{R}$ has a $2\times2$-submatrix 
with exactly three non-empty entries. So let $a,b,c,d \in D$ such that $M_{x,y}^{R}(b,d)=\emptyset$ and 
$M_{x,y}^{R}(a,c)$, $M_{x,y}^{R}(a,d)$ and $M_{x,y}^{R}(b,c)$ are all non-empty.

We describe a construction that to every bipartite graph $G= (A,B,E)$ gives a formula $F(G)$ as follows: for every vertex $u\in A$, we introduce a variable $x_u$ and for every vertex $v\in B$ we introduce a variable $x_v$. Then, for every edge $e=uv\in E$ where $u\in A$ and $v\in V$, we add a constraint $R(x_u, x_v, \vec z_e)$ where $\vec z_e$ consists of variables only used in this constraint. We fix the notation $X_A:= \{x_v\mid v\in A\}$, $X_B:= \{x_v\mid v\in B\}$ and $X:= X_A\cup X_B$.

Let $(F_{n})$ be the family of formulas defined by $F_{n}=F(G_{n})$ where $(G_{n})$ is the family from Lemma~\ref{lem:expanders}. Clearly, $\|F_n\|= \Theta(|E(G_n)|)=\Theta(n)$, as required. Fix~$n$ for the remainder of the proof and let $G_n =(V,E)$. Let $F'_n$ 
be the formula we get from $F_n$ by restricting all variables $x_u\in X_A$ to $\{a,b\}$ and all variables $x_v\in X_B$ to $\{c,d\}$ by adding some unary constraints.
Let $\mathcal{R}$ be an $X$-$\alpha$-balanced rectangle cover of $F'_n$ where $\alpha$ is the constant from Lemma~\ref{lem:expanders}. We claim that the size of $\mathcal{R}$ is at least $2^{\varepsilon' n}$, where
\[
\varepsilon' = \frac{1}{2} \cdot \varepsilon \cdot \alpha \cdot \log_2\left( 1 + \frac{1}{2^{\degree+1}|R|^{\degree^2}} \right)
\]
and $\degree$ is the degree of $G_n$.
To prove this, we first show that for every $\mathfrak r\in \mathcal{R}$, 
$
2^{\varepsilon' |X|} \cdot |\sol(\mathfrak r)|  \le |\sol(F'_n)|.
$
So let $\mathfrak r(\vec v, \vec w)= \mathfrak r_1(\vec v)\times \mathfrak r_2(\vec w)\in \mathcal{R}$. Since $\mathfrak r$ is an $X$-$\alpha$-balanced rectangle, we may assume $\alpha |X|/2 \leq | \vec v \cap V | \leq \alpha |X|$. By choice of~$G_n$, we have that there is an induced matching $\mathcal M$ in $G_n$ of size at least $\varepsilon \alpha |X|/2$ consisting of edges that have 
one endpoint corresponding to a variable in $\tilde v$ and one endpoint corresponding to a variable in $\tilde w$. Consider an edge $e=uv\in \mathcal M$. Assume that $x_u\in \tilde v$ and $x_v\in \tilde w$. Since we have $\mathfrak r(\vec v, \vec w)= \mathfrak r(\vec v)\times \mathfrak r(\vec w)$, we get 
\begin{align*}
\pi_e \mathfrak r(\vec v, \vec w) &= \pi_{\tilde v \cap e}(r(\vec v)) \times \pi_{\tilde w \cap e}(r(\vec w)).
\end{align*}

By construction $\pi_e \mathfrak r(\vec v, \vec w) \subseteq \{(a,c),(a,d),(b,c)\}$, so it follows that either $\pi_e \mathfrak r(\vec v, \vec w) \subseteq \{(a,c),(a,d)\} = \{a\} \times \{c,d\}$ or $\pi_e \mathfrak r(\vec v, \vec w) \subseteq \{(a,c),(b,c)\} = \{a,b\} \times \{c\}$. Assume w.l.o.g.~that $\{b,c\} \notin \pi_e r(\vec v, \vec w)$ (the other case can be treated analogously). It follows that for each solution $\beta\in \sol(\mathfrak r)$, we get a solution $q(\beta)\in \sol(F_{n}')\backslash \sol(\mathfrak r(\vec v, \vec w))$ by setting
\begin{itemize}
\item $q(\beta)(x_{u}):=b$,
\item For all $x_{\ell}\in N(x_{u})$, we set $q(\beta)(x_\ell) := c$,
\item For all $x_{\ell}\in N(x_{u})$ and all $x_{m}\in N(y_{\ell})$ we set $q(\beta)(\vec{z}_{m\ell}) := \vec g$ where $\vec g$ is such that
$(b,c,\vec g) \in R$.
\end{itemize}
Note that values $\vec g$ exist because $M_{x,y}^{R}(b,c)$ is non-empty. Observe that 
for two different solutions $\beta$ and $\beta'$ the solutions $q(\beta)$
and $q(\beta')$ may be the same. However, we can bound the number $\left|q^{-1}(q(\beta))\right|$,
giving a lower bound on the set of solutions not in $\mathfrak r$.
To this end, suppose that $q(\beta)=q(\beta')$. Since $q$ only changes the values
of $x_{u}$, exactly $\degree$ $x_\ell$-variables and at most $\degree^2$ vectors of $\vec{z}$-variables (the two latter bounds come from the degree bounds on $G_n$),
$q(\beta)=q(\beta')$ implies that $\beta$ and $\beta'$ coincide
on all other variables. This implies
\[
\left|q^{-1}(q(\beta))\right|\le 2^{\degree+1}|R|^{\degree^2},
\]
because there are only that many possibilities for the variables that
$q$ might change. By considering $\{x_{u},y_{v}\}$, we have shown that
\[
|\sol(F_{n})| \ge |\sol(\mathfrak r)| + \frac{1}{2^{\degree+1}|R|^{\degree^2}} |\sol(\mathfrak r)|.
\]
So we have constructed $\frac{|\sol(\mathfrak r)|}{2^{\degree+1}|R|^{\degree^2}}$
solutions not in $\mathfrak r$. Now we consider not only one
edge but all possible subsets $I$ of edges in $\mathcal M$: 
for a solution $\beta\in \sol(\mathfrak r(\vec v, \vec w)$, the assignment $q_{I}(\beta)$ is constructed as the $q$
above, but for all edges $e\in I$. Reasoning as above, we get
\[
\left|q_{I}^{-1}(q_{I}(\beta))\right|\le\left(2^{\degree+1}|R|^{\degree^2}\right)^{|I|}.
\]
It is immediate to see that $q_{I}(\beta)\neq q_{I'}(\beta)$
for $I\neq I'$. Thus we get 
\begin{align*}
|\sol(F'_{n})| & \ge\sum_{m=0}^{\frac{1}{2}\varepsilon \alpha |X|}\binom{\frac{1}{2}\varepsilon \alpha |X|}{m} \left(\frac{1}{2^{\degree+1}|R|^{\degree^2}}\right)^{m} |\sol(\mathfrak r)|\\
 & =\left(1+\frac{1}{2^{\degree+1}|R|^{\degree^2}}\right)^{\frac{1}{2}\varepsilon \alpha |X|} |\sol(\mathfrak r)|%
  =2^{\varepsilon' n} |\sol(\mathfrak r)|.
\end{align*}
It follows that every $X$-balanced rectangle cover of $F_n'$ has to have a size of at least $2^{\varepsilon' |X|}$. 
With Lemma \ref{lem:DNNFmakeRectangles} and Lemma~\ref{lem:restrictDNNF} it follows that any DNNF for $F_n$ has to have a size of at least $2^{\varepsilon' n}$. 

\end{proof}

Now we consider the case that $M_{x,y}^{R}$ is a proper block matrix, but $R$ is not blockwise decomposable in some pair of variables $x, y$.

\begin{lemma}\label{lem:NoDecNoDNNF}
Let $R(x,y,\vec z)$ be a relation such that $M_{x,y}^{R}$ is a proper block matrix but $R(x,y,\vec z)$ in not blockwise decomposable in $x$ and $y$. Then there is a family of formulas $F_{n}$ and
$\varepsilon>0$ such that a \textup{DNNF} for $F_{n}$ needs to have a
size of at least $2^{\varepsilon\| F_{n}\|}$.
\end{lemma}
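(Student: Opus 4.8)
The plan is to follow the blueprint of the proof of Lemma~\ref{lem:NoBlockNoDNNF}: build a family of $\{R\}$-formulas from bipartite expanders, restrict, turn a small \textup{DNNF} into a balanced rectangle cover via Lemma~\ref{lem:DNNFmakeRectangles}, and then show that every balanced rectangle covers only an exponentially small fraction of the solution set. What changes is the \emph{forbidden configuration} each edge gadget produces, and why the repair operation that exhibits the extra solutions stays local.

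\textbf{Setup and construction.} Since $M^R_{x,y}$ is a proper block matrix but $R$ is not blockwise decomposable in $x,y$, Definition~\ref{def:blockwisedecomposable} provides a block $(A,B)$ (say $A=A_i$, $B=B_i$) such that $\hat R := R(x,y,\vec z)|_{x\in A,\,y\in B}$ is \emph{not} decomposable with respect to any bipartition of $\{x,y\}\cup\tilde z$ separating $x$ from $y$. Two facts about $\hat R$ will be used repeatedly. First, $\pi_{x,y}(\hat R)=A\times B$: for $(a,b)\in A\times B$ we have $a\in A_i$, $b\in B_i$, so $\sol(M^R_{x,y}[a,b])\neq\emptyset$ by the proper–block–matrix property, i.e.\ $(a,b)\in\pi_{x,y}(\hat R)$. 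Second, $|A|,|B|\ge 2$ and $\tilde z\neq\emptyset$, since otherwise $x$ or $y$ would be a singleton factor of $\hat R$, contradicting indecomposability. Now let $(G_n)$ be the bounded‑degree bipartite graphs of Lemma~\ref{lem:expanders} with colour classes $P_n\cup Q_n$ and degree bound $\degree$. Introduce a variable $x_v$ for each vertex $v$ and, for each edge $pq$ with $p\in P_n$, $q\in Q_n$, a constraint $R(x_p,x_q,\vec z_{pq})$ with fresh variables $\vec z_{pq}$; this defines the $\{R\}$‑formula $F_n$, with $\|F_n\|=\Theta(|E(G_n)|)=\Theta(n)$. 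Let $F_n'$ be $F_n$ together with unary constraints forcing $x_p\in A$ for $p\in P_n$ and $x_q\in B$ for $q\in Q_n$; by Lemma~\ref{lem:restrictDNNF} a \textup{DNNF} of size $s$ for $F_n$ yields one of size $\le s$ for $F_n'$, and in $\sol(F_n')$ each edge gadget behaves exactly like $\hat R$. By Lemma~\ref{lem:DNNFmakeRectangles} with $Z:=\{x_v\mid v\in V(G_n)\}$ it then suffices to show that every $Z$-$\alpha$-balanced rectangle cover of $\sol(F_n')$ has size $2^{\Omega(n)}$, where $\alpha>0$ is from Lemma~\ref{lem:expanders}.

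\textbf{Core argument.} Fix a $Z$-$\alpha$-balanced rectangle $\mathfrak r=\mathfrak r_1\times\mathfrak r_2\subseteq\sol(F_n')$ with respect to a partition $(V,W)$. Lemma~\ref{lem:expanders} gives an induced matching $\mathcal M$ of size $\Omega(|Z\cap V|)=\Omega(n)$ whose edges each have one endpoint variable in $V$ and one in $W$. For $e=pq\in\mathcal M$ with $x_p\in V$, $x_q\in W$, the partition $(V,W)$ induces on the gadget variables $U_e:=\{x_p,x_q\}\cup\tilde z_{pq}$ a bipartition separating $x_p$ from $x_q$; identifying $U_e$ with $\{x,y\}\cup\tilde z$ this is a bipartition separating $x$ from $y$, over which $\hat R$ does not decompose. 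Since $\mathfrak r$ restricted to $U_e$ is a rectangle with respect to this bipartition and is contained in $\sol(\hat R(x_p,x_q,\vec z_{pq}))$, it must be a \emph{proper} subset; pick a tuple $\mathbf t^e\in\hat R$ not attained by any assignment of $\mathfrak r$. For $I\subseteq\mathcal M$ define $q_I$ on $\sol(\mathfrak r)$: given $\beta$, first set $\beta|_{U_e}:=\mathbf t^e$ for all $e\in I$ (consistent since $\mathcal M$ is a matching, so the edges of $I$ are vertex‑disjoint), and then, for every edge $f\notin I$ incident to an endpoint of some edge of $I$, re‑choose $\vec z_f$ so that the constraint on $f$ holds again. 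This is always possible — and this is the crucial point — because $\pi_{x,y}(\hat R)=A\times B$ is a \emph{full} product and both endpoints of $f$ now carry values in $A$, resp.\ $B$; moreover no variable $x_v$ outside the endpoints of $I$ is changed, so the repair touches only $O(|I|)$ variables and does not propagate. (Contrast Lemma~\ref{lem:NoBlockNoDNNF}, where the restricted gadget has $\pi_{x,y}$ equal to $\{(a,c),(a,d),(b,c)\}$ and one additionally has to move neighbouring endpoints to the ``universal'' value; here that move is unnecessary.)

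\textbf{Conclusion and main obstacle.} Exactly as in Lemma~\ref{lem:NoBlockNoDNNF} one then checks that $q_I(\beta)\in\sol(F_n')$; that $q_I(\beta)\notin\mathfrak r$ once $I\neq\emptyset$, since $q_I(\beta)|_{U_e}=\mathbf t^e$ for $e\in I$; and that the sets $q_I(\sol(\mathfrak r))$ for distinct $I$ are pairwise disjoint, because for $e\in I\setminus I'$ no variable of $U_e$ is touched by $q_{I'}$ (again using that $\mathcal M$ is an induced matching), so $q_{I'}(\beta')|_{U_e}=\beta'|_{U_e}\in\mathfrak r|_{U_e}\neq\mathbf t^e$. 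Since $q_I$ has fibres of size at most $c_0^{|I|}$ for a constant $c_0=|D|^{O(\degree)}$ (the variables possibly changed by $q_I$ form a set depending only on $I$ and $G_n$, of size $O(|I|)$), we obtain
\[
|\sol(F_n')| \;\ge\; \sum_{I\subseteq\mathcal M}\frac{|\sol(\mathfrak r)|}{c_0^{|I|}} \;=\; \Bigl(1+\tfrac1{c_0}\Bigr)^{|\mathcal M|}\,|\sol(\mathfrak r)| ,
\]
hence $|\mathfrak r|\le 2^{-\varepsilon' n}|\sol(F_n')|$ for some $\varepsilon'>0$. Therefore every balanced rectangle cover, and by Lemma~\ref{lem:DNNFmakeRectangles} and Lemma~\ref{lem:restrictDNNF} every \textup{DNNF} for $F_n$, has size $2^{\Omega(n)}=2^{\Omega(\|F_n\|)}$. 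The step I expect to be delicate is verifying the locality of $q_I$ — that resetting the matching endpoints to the prescribed tuples can always be completed to a full solution by changing only the private $\vec z$‑variables of the $O(|I|)$ nearby gadgets, so that the fibre bound $c_0^{|I|}$ holds. This is precisely where the hypothesis that $M^R_{x,y}$ is a \emph{proper} block matrix (equivalently $\pi_{x,y}(\hat R)=A\times B$) is indispensable; without it a single endpoint change could force a cascade through the whole formula.
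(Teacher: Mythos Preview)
Your proof is correct and follows essentially the same approach as the paper: both build the expander-based formula, restrict to a non-decomposable block, find for each split matching edge a tuple of $\hat R$ missed by the rectangle, and repair only the $\vec z$-variables of incident edges using that $\pi_{x,y}(\hat R)=A\times B$ is a full product. Your write-up is in fact more careful than the paper's sketch in two places---you make explicit why $\pi_{U_e}(\mathfrak r)\subsetneq\sol(\hat R)$ (namely, equality would exhibit $\hat R$ as a rectangle for the induced partition), and you argue pairwise disjointness of the images $q_I(\sol(\mathfrak r))$ across $I$ rather than only $q_I(\beta)\neq q_{I'}(\beta)$ for fixed $\beta$, which is what the summation actually needs.
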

\begin{proof}
The proof follows the same ideas as that of Lemma~\ref{lem:NoBlockNoDNNF}, so we state only the differences. If $M_{x,y}^{R}$ is a 
 proper block matrix but $R$ is not blockwise decomposable in $x$ and $y$, there is a block $B = D_x \times D_y$ such that $R(x, y, \vec z)|_{(x,y 
 )\in B}$ is not decomposable, in such a way that $x$ and $y$ appear in different factors. This implies that if $\vec z = \vec v \dot \cup \vec 
  w$, $V \subseteq \pi_{x,\vec v} R(x,y,\vec z)|_{(x,y)\in B}$ and $W \subseteq \pi_{y,\vec w} R(x,y,\vec z)|_{(x,y)\in B}$, then $V\times W \subsetneq R|_{(x,y)\in B}$. 
  For one
matching edge  $e=uv \in \mathcal M$
  the projection onto $\mathfrak r$ is
  of the form $V\times W$, so there must exist $(a,b,\vec c) \in R$ and not in $\pi_e \mathfrak r$. 
  This $(a,b,\vec c)$ is used for the edge $e$ to construct solutions in $sol(F_n) \setminus \mathfrak r$. Since $B$ is a block, we define $
 q(\beta)$ as follows:
\begin{itemize}
\item[(a)] $q(\beta)(x_{u}):=a$, $q(\beta)(x_v):=b$
\item[(b)] For all $x_{\ell}\in N(x_{v})$ we set  $q(\beta)(\vec{z}_{\ell v}) := \vec g$ such that $(\beta(x_{\ell}),b,\vec g) \in R$.
\item[(c)] For all $x_{\ell}\in N(x_{u})$ we set
  $q(\beta)(\vec{z}_{u\ell}) := \vec g$ such that
  $(a,\beta(x_{\ell}),\vec g)\in R$.
\end{itemize}
To bound $|q^{-1}_I(q_I(\beta))|$, note that we have at most $|D_x||D_y|$
possibilities in case (a) and, since $|N(x_u)|,|N(x_v)|\leq \degree$, at most $|R|^\degree$ possibilities in case (b)
and (c), respectively.
It follows that we can bound the number $|q^{-1}_I(q_I(\beta))|$ for a solution $\beta$ by
\[
|q^{-1}_I(q_I(\beta))|\le \left(|D_x||D_y||R|^{2\degree}\right)^{|I|}
\]
so we get:
\[
\varepsilon = \frac{1}{3} \cdot \varepsilon \cdot \alpha \cdot \log_2\left( 1 + \frac{1}{|D_x||D_y||R|^{2\degree}} \right).
\]
The rest of the proof is unchanged.
\end{proof}

We now have everything in place to prove Proposition~\ref{prop:lowerFDD}.

\begin{proof}[Proof of Proposition~\ref{prop:lowerFDD}]
Since $\Gamma$ is not strongly blockwise decomposable, there is a relation $R\in \langle\Gamma\rangle$ that is not blockwise decomposable in $x$ and $y$. Then, by definition of co-clones and Lemma~\ref{lem:closedProjectionSeletion}, there is a $\Gamma$-formula that defines $R$. 
If there are variables $x,y\in \tilde x$ such that $M_{x,y}^{R}$ is not a proper block matrix, then we can apply Lemma~\ref{lem:NoBlockNoDNNF} to get $R$-formulas that require exponential size DNNF. Then by substituting all occurrences of $R$ in these formula by the $\Gamma$-formula defining $R$, we get the required hard $\Gamma$-formulas. If all $M_{x,y}^{R}$ are proper block matrices, then there are variables $x, y$ such that $M_{x,y}^{R}$ is not blockwise decomposable. Using Lemma~\ref{lem:NoDecNoDNNF} and reasoning as before, then completes the proof.
\end{proof}

\subsection{Lower Bound for structured DNNF}\label{sec:SDNNFlower}

In this section, we prove the lower bound of Theorem~\ref{thm:ODD} which we formulate here again. 

\begin{proposition}\label{prop:lowerODD}
        Let $\Gamma$ be a constraint language that is not strongly uniformly blockwise decomposable. Then there is a family of $\Gamma$-formulas $F_{n}$ of size $\Theta(n)$ and
        $\varepsilon>0$ such that any structured \textup{DNNF} for $F_{n}$ has
        size at least $2^{\varepsilon\left\Vert F_{n}\right\Vert }$.
\end{proposition}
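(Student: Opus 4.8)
The plan is to reduce to the case already treated and then handle the one genuinely new case. If $\Gamma$ is \emph{not} strongly blockwise decomposable, Proposition~\ref{prop:lowerFDD} already produces $\Gamma$-formulas $F_n$ with $\|F_n\|=\Theta(n)$ for which every DNNF, and hence in particular every structured DNNF, has size $2^{\Omega(\|F_n\|)}$, so nothing more is needed. It remains to treat the case that $\Gamma$ is strongly blockwise decomposable but not strongly uniformly blockwise decomposable. Then there is a relation $R\in\coclone{\Gamma}$ and variables $x,y$ such that $R(x,y,\vec z)$ is blockwise decomposable in $x,y$ --- so $M^R_{x,y}$ is a proper block matrix with nonempty blocks $(A_1,B_1),\dots,(A_k,B_k)$ and, writing $\pi_j$ for the indecomposable-factor partition of $R|_{x\in A_j,\,y\in B_j}$, the variables $x$ and $y$ lie in different $\pi_j$-classes --- but $R$ is \emph{not} uniformly blockwise decomposable in $x,y$. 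The first step is purely combinatorial: using that $M^R_{x,y}$ is a proper block matrix, uniform blockwise decomposability in $x,y$ is equivalent (this is a strengthening of the reasoning behind Lemma~\ref{lem:onlycheckprojections}) to the existence of a bipartition of $\tilde u$ separating $x$ and $y$ that coarsens every $\pi_j$, i.e.\ to $x$ and $y$ lying in different classes of the join $\pi_1\vee\dots\vee\pi_k$. So its failure means that $x$ and $y$ lie in a common class of that join, i.e.\ there is a chain $x=w_0,w_1,\dots,w_m=y$ in which every consecutive pair $w_{t-1},w_t$ shares a factor of some block. Fix such a chain of minimal length; let $\{J_1,\dots,J_p\}$ be the set of blocks it uses and note that $\pi_{J_1}\vee\dots\vee\pi_{J_p}$ still identifies $x$ and $y$.

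For the hard instances I would reuse the expander blow-up from the proof of Lemma~\ref{lem:NoDecNoDNNF}: take the bipartite graphs $G_n=(A_G,B_G,E)$ of Lemma~\ref{lem:expanders} (which we may assume connected), introduce a variable $x_u$ per vertex and a constraint $R(x_u,x_v,\vec z_e)$ with a private tuple $\vec z_e$ per edge $e=uv$, restrict by unary constraints every $x_u$ with $u\in A_G$ to $\bigcup_i A_{J_i}$ and every $x_v$ with $v\in B_G$ to $\bigcup_i B_{J_i}$, and finally replace each copy of $R$ by a fixed $\Gamma$-pp-definition with private auxiliary variables; call the resulting $\Gamma$-formula $G_n$, so $\|G_n\|=\Theta(n)$, and write $X=\{x_v\mid v\in V(G_n)\}$. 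Since the input substitutions used in the proofs of Lemma~\ref{lem:restrictDNNF} and Lemma~\ref{lem:projectDNNF} do not change the circuit structure, they preserve being structured, so a structured DNNF for $G_n$ selects and projects down to a structured DNNF for the intermediate formula $F_n$ obtained before replacing $R$, without increasing its size; hence it suffices to prove the bound for $F_n$. Given a structured DNNF for $F_n$, Lemma~\ref{lem:DNNFmakeRectangles} (with $Z=X$ and a suitably small $\beta$, so that the smaller side of the partition has at most $\alpha|V(G_n)|/2$ of the $X$-variables) yields an $X$-balanced rectangle cover $\mathcal R$ whose size bounds the size of the DNNF and in which \emph{all} rectangles are with respect to one fixed partition $(\mathcal V_\downarrow,\mathcal V_\uparrow)$ of the entire variable set.

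Two observations then do the work. First, every nonempty $\mathfrak r\in\mathcal R$ \emph{commits to a single block}: since $\pi_{\{x_u,x_v\}}(\sol(F_n))\subseteq\bigcup_i A_{J_i}\times B_{J_i}$ is a disjoint union of combinatorial rectangles and $\pi_{\{x_u,x_v\}}(\sol(\mathfrak r))$ is a product, it lies in one $A_{J_{i(e)}}\times B_{J_{i(e)}}$; disjointness of the $A_{J_i}$ and of the $B_{J_i}$, together with connectedness of $G_n$, forces $i(e)$ to be constant over all edges, so $\sol(\mathfrak r)\subseteq\sol(F_n|_J)$ for one active block $J=J(\mathfrak r)$, where $F_n|_J$ is $F_n$ restricted to block $J$. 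Second, for any crossing edge $e=uv$ (i.e.\ $x_u\in\mathcal V_\downarrow$, $x_v\in\mathcal V_\uparrow$) the fixed partition splits the gadget variables $\{x_u,x_v\}\cup\tilde z_e$ into two parts that separate $x_u$ from $x_v$; as $\pi_{J_1}\vee\dots\vee\pi_{J_p}$ identifies $x$ and $y$, this split cannot coarsen all of $\pi_{J_1},\dots,\pi_{J_p}$, so there is a block $J_i$ for which $R|_{x\in A_{J_i},y\in B_{J_i}}$ is not a rectangle over the induced split, and consequently $\pi_{\{x_u,x_v\}\cup\tilde z_e}(\sol(\mathfrak r))$ --- a product contained in that block but not a rectangle over the induced partition --- is a proper subset of it, missing a tuple that still belongs to the block. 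Applying Lemma~\ref{lem:expanders} to the smaller of $X\cap\mathcal V_\downarrow$, $X\cap\mathcal V_\uparrow$ gives an induced matching $\mathcal M\subseteq E$ of size $\Theta(n)$ of crossing edges, and since each of them violates at least one of the $p$ blocks, pigeonhole isolates a single block $J^\dagger$ violated on a sub-matching $\mathcal M^\dagger\subseteq\mathcal M$ with $|\mathcal M^\dagger|=\Theta(n)$. For every rectangle $\mathfrak r$ with $J(\mathfrak r)=J^\dagger$ the missing tuples on $\mathcal M^\dagger$ feed the fan-out counting argument of Lemma~\ref{lem:NoDecNoDNNF}: summing over all subsets of the induced matching $\mathcal M^\dagger$ and rewriting $\mathfrak r$-solutions via the missing tuples produces pairwise-distinct new solutions --- and these stay in block $J^\dagger$ because the rewriting only moves values within $A_{J^\dagger}$ and $B_{J^\dagger}$ --- yielding $|\sol(\mathfrak r)|\le 2^{-\varepsilon' n}\,|\sol(F_n|_{J^\dagger})|$ for a constant $\varepsilon'>0$ depending only on $R$ and the expander parameters. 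Finally, any solution of $F_n$ lying in block $J^\dagger$ can be captured only by a rectangle committing to $J^\dagger$, so such rectangles must jointly cover $\sol(F_n|_{J^\dagger})\ne\emptyset$; comparing sizes gives $|\mathcal R|\ge 2^{\varepsilon' n}$, hence the structured DNNF for $F_n$, and with it the one for $G_n$, has size $2^{\Omega(\|G_n\|)}$.

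The parts I expect to require the most care are the combinatorial lemma at the start --- pinning down that failure of uniform blockwise decomposability in $x,y$ is exactly ``$x\sim y$ in the join of the factor partitions'' --- and the bookkeeping that makes the ``commit to one block'' step precise in the presence of the auxiliary variables coming from the $\Gamma$-pp-definition, and that keeps all constructed solutions inside the single block $J^\dagger$; the latter is what lets the counting bound be taken relative to $|\sol(F_n|_{J^\dagger})|$ rather than to $|\sol(F_n)|$, which would be too weak without an extra balance assumption on the blocks. Everything else --- the expander, the rectangle-cover machinery of Lemma~\ref{lem:DNNFmakeRectangles}, and the fan-out counting of Lemma~\ref{lem:NoDecNoDNNF} --- only needs to be re-instantiated for the new edge behaviour.
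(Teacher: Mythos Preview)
Your plan is correct in its essentials, and the overall skeleton (expander blow-up, structured rectangle cover, pigeonhole onto a single block, then a size bound per rectangle) matches the paper. But the execution you sketch diverges from the paper's in two places, and your write-up has two imprecisions worth flagging.

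\textbf{Where you differ from the paper.} The paper does \emph{not} prove a ``commit to one block'' statement at the rectangle level, and it does \emph{not} use the fan-out counting of Lemma~\ref{lem:NoDecNoDNNF}. Instead, after extracting the fixed partition $(X_1,X_2)$ and the induced matching $\mathcal M$, it pigeonholes (as you do) onto one block $D_1\times D_2$ and a sub-matching $E^*$, and then applies Lemma~\ref{lem:restrictDNNF} and Lemma~\ref{lem:projectDNNF} \emph{to the structured DNNF itself}: restrict all vertex variables to $D_1$ resp.\ $D_2$ and project away everything outside the $E^*$-gadgets. This produces a structured DNNF $O'$, no larger than $O$, for the single-block matching formula $F^*=\bigwedge_{e\in E^*} R^{D_1\times D_2}(x_u,y_v,\vec z_e)$. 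The lower bound for $F^*$ is then obtained via a clean fooling-set argument (Lemma~\ref{lem:lowerboundmatchingformula}): the $2^{|E^*|}$ assignments built from the two witnesses $a,b$ of Claim~\ref{clm:noproduct} cannot share a rectangle. This bypasses both your block-commitment lemma and the fan-out bookkeeping; the join-of-partitions characterisation you develop is also unnecessary, since one only needs the immediate contrapositive of the definition: for every split of $\tilde z$ some block fails to decompose.

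\textbf{Imprecisions in your sketch.} First, your ``commit to one block'' step asserts that $\pi_{\{x_u,x_v\}}(\sol(\mathfrak r))$ is a product for \emph{every} edge; this is only true for crossing edges. The conclusion is still correct: use one crossing edge (which exists by balance) together with the rectangle swap and connectedness of $G_n$ to force a single block, but the argument needs rewriting. Second, you place the unary domain restrictions into the formula $G_n$; those unary relations need not be in $\Gamma$. The fix is exactly what the paper does: let $G_n$ be the pure $\Gamma$-formula (pp-expand $R$ only), and apply the restrictions at the DNNF level via Lemma~\ref{lem:restrictDNNF} before invoking Lemma~\ref{lem:DNNFmakeRectangles}. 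With these two corrections your route goes through, but the paper's DNNF-level reduction plus fooling set is noticeably shorter.
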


Note that for all constraint languages that are not strongly blockwise decomposable, the result follows directly from Proposition~\ref{prop:lowerFDD}, so we only have to consider constraint languages which are strongly blockwise decomposable but not strongly uniformly blockwise decomposable. We start with a simple observation.

\begin{observation}\label{obs:projectRectangles}
                Let $\mathfrak r(\vec x, \vec y)$ be a rectangle with respect to the partition $(\tilde x, \tilde y)$. Let $Z\subseteq \tilde x\cup \tilde y$, then $\pi_Z(\mathfrak r(\tilde x, \tilde y))$ is a rectangle with respect to the partition $(\tilde x\cap Z, \tilde y\cap Z)$.
\end{observation}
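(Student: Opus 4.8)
The plan is to unfold the definition of ``rectangle'' and exploit the fact that projection distributes over Cartesian products of constraints with disjoint scopes. By hypothesis we have
\[
\mathfrak r(\vec x, \vec y) = \pi_{\tilde x}(\mathfrak r(\vec x, \vec y)) \times \pi_{\tilde y}(\mathfrak r(\vec x, \vec y)),
\]
so the first step is simply to apply $\pi_Z$ to both sides of this identity and then simplify the right-hand side.

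The key step is the following elementary distributivity fact, which I would state and verify directly from the definitions of $\pi$ and $\times$ on solution sets: if $S(\vec v)$ and $T(\vec w)$ are constraints with $\tilde v \cap \tilde w = \emptyset$ and $Z \subseteq \tilde v \cup \tilde w$, then
\[
\pi_Z\bigl(S(\vec v) \times T(\vec w)\bigr) = \pi_{Z \cap \tilde v}(S(\vec v)) \times \pi_{Z \cap \tilde w}(T(\vec w)).
\]
Indeed, an assignment $\beta \colon Z \to D$ lies in the left-hand side iff it extends to an assignment on $\tilde v \cup \tilde w$ whose restriction to $\tilde v$ is in $\sol(S(\vec v))$ and whose restriction to $\tilde w$ is in $\sol(T(\vec w))$; since $\tilde v$ and $\tilde w$ are disjoint, this is equivalent to $\beta|_{Z \cap \tilde v}$ extending into $\sol(S(\vec v))$ and $\beta|_{Z \cap \tilde w}$ extending into $\sol(T(\vec w))$ independently, which is exactly membership in the right-hand side.

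Applying this with $S = \pi_{\tilde x}(\mathfrak r(\vec x, \vec y))$ and $T = \pi_{\tilde y}(\mathfrak r(\vec x, \vec y))$ gives $\pi_Z(\mathfrak r(\vec x, \vec y)) = \pi_{Z \cap \tilde x}(\pi_{\tilde x}(\mathfrak r(\vec x, \vec y))) \times \pi_{Z \cap \tilde y}(\pi_{\tilde y}(\mathfrak r(\vec x, \vec y)))$. Next I would collapse the nested projections: since $Z \cap \tilde x \subseteq \tilde x$ we have $\pi_{Z \cap \tilde x}(\pi_{\tilde x}(\mathfrak r(\vec x, \vec y))) = \pi_{Z \cap \tilde x}(\mathfrak r(\vec x, \vec y))$, and since also $Z \cap \tilde x \subseteq Z$ this equals $\pi_{Z \cap \tilde x}(\pi_Z(\mathfrak r(\vec x, \vec y)))$; symmetrically for $\tilde y$. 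Hence
\[
\pi_Z(\mathfrak r(\vec x, \vec y)) = \pi_{Z \cap \tilde x}\bigl(\pi_Z(\mathfrak r(\vec x, \vec y))\bigr) \times \pi_{Z \cap \tilde y}\bigl(\pi_Z(\mathfrak r(\vec x, \vec y))\bigr),
\]
and since $(Z \cap \tilde x)$ and $(Z \cap \tilde y)$ are disjoint with union $Z$, this is precisely the statement that $\pi_Z(\mathfrak r(\vec x, \vec y))$ is a rectangle with respect to the partition $(\tilde x \cap Z, \tilde y \cap Z)$.

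There is no genuine obstacle here; the only thing to track carefully is which projections may be composed or reordered and the bookkeeping that the index sets $(Z\cap\tilde x)$ and $(Z\cap\tilde y)$ indeed form a partition of the scope $Z$. I would flag the distributivity identity above as the single ``substantive'' ingredient, even though it follows immediately from the definitions.
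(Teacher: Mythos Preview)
Your proof is correct. The paper does not actually give a proof of this statement at all; it is stated as an \emph{Observation} and left without argument, presumably because it is considered immediate from the definitions. Your unpacking via the distributivity of projection over Cartesian products of disjoint-scope constraints is exactly the natural way to verify it.
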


We start our proof of Proposition~\ref{prop:lowerODD} by considering a special case.

\begin{lemma}\label{lem:lowerboundmatchingformula}
                Let $C = R(x, y, \vec{z})$ be a constraint such that there are two assignments $a, b\in \sol(C)$ such that for every partition $\tilde z_1,\tilde z_2$ of $\tilde z$ we have that $a|_{\{x\}\cup \tilde z_1}\cup b|_{\{y\}\cup \tilde z_2}\notin \sol(C)$ or $a|_{\{y\}\cup \tilde z_2}\cup b|_{\{x\}\cup \tilde z_1}\notin \sol(C)$. Consider 
                \begin{align*}
                                F_n := \bigwedge_{i\in [n]} R(x_i, y_i, \vec z_i)
                \end{align*}
                where the $\vec z_i$ are disjoint variable vectors.
                Let $(\tilde x,\tilde y)$ be a variable partition of the variables of $F_n$ and $\mathcal R$ be a rectangle cover of $F_n$ such that each rectangle $\mathfrak r_j$ in $\mathcal R$ respects the partition $(\tilde x,\tilde y)$. If for all $i\in [n]$ we have that all $x_i\in \tilde x$ and $y_i\in \tilde y$ or $x_i\in \tilde y$ and $y_i\in \tilde x$, then $\mathcal R$ has size at least $2^n$.
\end{lemma}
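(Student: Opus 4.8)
The plan is a fooling-set argument: exhibit $2^n$ solutions of $F_n$ no two of which can lie in a common rectangle of the cover, so $|\mathcal R|\ge 2^n$. For each $i\in[n]$ the constraint $R(x_i,y_i,\vec z_i)$ is, via the natural variable renaming $\phi_i$ sending $x,y$ to $x_i,y_i$ and $\vec z$ to $\vec z_i$, an isomorphic copy of $C$, so $a$ and $b$ induce solutions $a^{(i)}:=a\circ\phi_i^{-1}$ and $b^{(i)}:=b\circ\phi_i^{-1}$ of that copy. For $I\subseteq[n]$ let $\alpha_I$ be the assignment to all variables of $F_n$ coinciding with $a^{(i)}$ on the scope of the $i$-th copy if $i\in I$ and with $b^{(i)}$ if $i\notin I$. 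Since the copies have pairwise disjoint scopes and $\alpha_I$ satisfies each copy, $\alpha_I\in\sol(F_n)$; and $I\mapsto\alpha_I$ is injective, giving $2^n$ distinct solutions.

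The core claim is that no rectangle $\mathfrak r\in\mathcal R$ has two distinct $\alpha_I,\alpha_J$ among its solutions. Suppose otherwise; since $\mathcal R$ is a cover of $F_n$ we have $\sol(\mathfrak r)\subseteq\sol(F_n)$. Pick $i$ in the symmetric difference of $I$ and $J$ and, renaming $I,J$ if needed, assume $i\in I\setminus J$, so on the $i$-th copy $\alpha_I$ restricts to $a^{(i)}$ and $\alpha_J$ to $b^{(i)}$. Because $\mathfrak r=\pi_{\tilde x}(\mathfrak r)\times\pi_{\tilde y}(\mathfrak r)$, the crossed assignments $\gamma:=\alpha_I|_{\tilde x}\cup\alpha_J|_{\tilde y}$ and $\delta:=\alpha_J|_{\tilde x}\cup\alpha_I|_{\tilde y}$ again lie in $\mathfrak r$, hence in $\sol(F_n)$, so in particular each satisfies the $i$-th copy $R(x_i,y_i,\vec z_i)$.

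It remains to read off $\gamma$ and $\delta$ on the $i$-th copy. Let $\tilde z_1$ consist of the $w\in\tilde z$ with $\phi_i(w)$ on the same side of $(\tilde x,\tilde y)$ as $x_i$, and $\tilde z_2:=\tilde z\setminus\tilde z_1$; the hypothesis on the partition forces $x_i$ and $y_i$ onto different sides, so $(\tilde z_1,\tilde z_2)$ is a partition of $\tilde z$ and $\phi_i(\tilde z_2)$ lies on the side of $y_i$. Checking the cases (which side $x_i$ lies on, and whether we take $\gamma$ or $\delta$), one finds that, pulled back through $\phi_i$, one of $\gamma,\delta$ restricts on the $i$-th copy to $a|_{\{x\}\cup\tilde z_1}\cup b|_{\{y\}\cup\tilde z_2}$ and the other to $a|_{\{y\}\cup\tilde z_2}\cup b|_{\{x\}\cup\tilde z_1}$, the orientation of $x_i,y_i$ only swapping which is which. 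Both are then solutions of $C$, contradicting the assumed property of $a,b$ for the partition $(\tilde z_1,\tilde z_2)$. So every $\mathfrak r\in\mathcal R$ contains at most one $\alpha_I$, and since $\mathcal R$ covers them all, $|\mathcal R|\ge 2^n$.

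The only genuinely fiddly step, which I would spell out in the full write-up, is this last bookkeeping: identifying the split of each $\vec z_i$ under $(\tilde x,\tilde y)$ with an honest partition of $\tilde z$, and confirming that both orientations of $x_i,y_i$ land inside the hypothesis on $C$. That the $\alpha_I$ are solutions, that crossed assignments stay inside a rectangle, and that a fooling set lower-bounds a cover are all routine.
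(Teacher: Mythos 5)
Your proposal is correct and follows essentially the same route as the paper: the same fooling set of $2^n$ mixed assignments, the same use of the rectangle property to produce crossed assignments, and the same contradiction with the hypothesis on $a,b$ via the partition of $\tilde z$ induced by $(\tilde x,\tilde y)$. The only cosmetic difference is that you cross at the level of the full rectangle and then project to the $i$-th copy, whereas the paper first projects the rectangle (noting that projections of rectangles are rectangles) and then crosses; these are equivalent.
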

\begin{proof}
                We use the so-called fooling set method from communication complexity, see e.g.~\cite[Section~1.3]{KushilevitzN97}. To this end, we will construct a set $\mathcal S$ of satisfying assignments of $F_n$ such that every rectangle of $\mathcal R$ can contain at most one assignment in $\mathcal S$.
                
                So let $a_i$ be the assignment to $\{x_i, y_i\}\cup \tilde z_i$ that assigns the variables analogously to $a$, so $a_i(x_i):= a(x)$, $a_i(y_i):= a(y)$, and $a_i(\vec z_i):= a(\vec z)$. Define analogously $b_i$. Then the set $\mathcal S$ consists of all assignments that we get by choosing for every $i\in [n]$ an assignment $d_i$ as either $a_i$ or $b_i$ and then combining the $d_i$ to one assignment to all variables of $F_n$. Note that $\mathcal S$ contains $2^n$ assignments and that all of them satisfy $F_n$, so all of them must be in a rectangle of $\mathcal R$.
                
                We claim that none of the rectangles $\mathfrak r_j$ of $\mathcal R$ can contain more than one element $d\in \mathcal S$. By way of contradiction, assume this were not true. Then there is an $\mathfrak r_j$ that contains two assignments $d, d'\in \mathcal S$, so there is an $i\in [n]$ such that in the construction of $d$ we have chosen $a_i$ while in the construction of $d'$ we have chosen $b_i$. Let $\tilde x_j := \tilde z_j \cap \tilde x$ and $\tilde y_j:= \tilde z_j \cap \tilde y$. Since $d, d'\in \sol(\mathfrak r_j)$, we have that $a_i, b_i\in \sol(\pi_{\{x_i, y_i\}\cup \tilde x_j\cup \tilde y_j}(\mathfrak r_j))$. Moreover, by Observation~\ref{obs:projectRectangles}, $\pi_{\{x_i, y_i\}\cup \tilde x_j\cup \tilde y_j}(\mathfrak r_j)$ is a rectangle and so we have that $a_i|_{\{x_i\}\cup \tilde x_j}\cup  b_i|_{\{y_i\}\cup y_j}\in \sol(\pi_{\{x_i, y_i\}\cup \tilde x_j\cup \tilde y_j}(\mathfrak r_j))$ and $a_i|_{\{y_i\}\cup y_j}\cup  b_i|_{\{x_i\}\cup y_j}\in \sol(\pi_{\{x_i, y_i\}\cup \tilde x_j\cup \tilde y_j}(\mathfrak r_j))$. But $\mathfrak r$ consists only of solutions of $F_n$ and thus $\sol(\pi_{\{x_i, y_i\}\cup \tilde x_j\cup \tilde y_j}(\mathfrak r))\subseteq \sol(R(x_i, y_i, \vec z_i))$, so $a_i|_{\{x_i\}\cup \tilde x_j}\cup  b_i|_{\{y_i\}\cup y_j}, a_i|_{\{y_i\}\cup y_j}\cup  b_i|_{\{x_i\}\cup y_j}\in \sol(R(x_i, y_i, \vec z_i))$. It follows by construction that there is a partition $(\tilde x, \tilde y)$ of $\tilde z$ such that $a|_{\{x\}\cup \tilde x}\cup b|_{\{y\}\cup \tilde y}$ and $a|_{\{y\}\cup \tilde y}\cup b|_{\{x\}\cup \tilde x}$ are in $\sol(C)$. This contradicts the assumption on $a$ and $b$ and thus $\mathfrak r_j$ can only contain one assignment from $\mathcal S$.
                
                Since $\mathcal S$ has size $2^n$ and all of assignments in $\mathcal S$ must be in one rectangle of~$\mathcal R$, it follows that $\mathcal R$ consists of at least $2^n$ rectangles.
\end{proof}

We now prove the lower bound of Proposition~\ref{prop:lowerODD}.

\begin{proof}[Proof of Proposition~\ref{prop:lowerODD}]
Since $\Gamma$ is not strongly uniformly blockwise decomposable, let $R(x,y,\vec z)$ be a constraint in $\langle\Gamma\rangle$ that is not uniformly blockwise decomposable in $x$ and $y$.
If $R(x,y, \vec z)$ is such that $M^R_{x,y}$ is not a proper block matrix, then the lemma follows directly from Lemma~\ref{lem:NoBlockNoDNNF}, so we assume in the remainder that $M^R_{x,y}$ is a proper block matrix. We denote for every block $D_1\times D_2$ of $R(x, y, \vec z)$ by $R^{D_1\times D_2}(x, y , \vec z)$ the sub-constraint of $R(x,y,\vec z)$ we get by restricting $x$ to $D_1$ and $y$ to $D_2$. Since $R(x, y, \vec z)$ is not uniformly blockwise decomposable, for every partition $(\tilde z_1, \tilde z_2)$ of $\tilde z$ there is a block $D_1\times D_2$ such that $R^{D_1\times D_2}(x, y , \vec z) \ne \pi_{\{x\}\cup \tilde z_1}R^{D_1\times D_2}(x, y, \vec z) \times \pi_{\{y\}\cup \tilde z_2}R^{D_1\times D_2}(x, y, \vec z)$.
                
Given a bipartite graph $G=(A,B,E)$, we construct the same formula $F(G)$ as in the proof of Lemma~\ref{lem:NoBlockNoDNNF}. Consider again the graphs $G_n$ of the family from Lemma~\ref{lem:expanders} and let $F_n:=F(G_n)$. Fix $n$ in the remainder of the proof. Let $O$ be a structured DNNF representing $F_n$ of size $s$. Then, by Proposition~\ref{lem:DNNFmakeRectangles}, there is  an $\alpha$-balanced partition $(X_1, X_2)$ of $X_A\cup X_B$ such that there is a rectangle cover of $F_G$ of size at most $s$ and such that all rectangles respect the partition  $(X_1, X_2)$. Let $E'$ be the set of edges $uv\in E$ such that $x_u$ and $x_v$ are in different parts of the partition $(X_1, X_2)$. By the properties of $G_n$, there is an induced matching $\mathcal M$ of size $\Omega(|A|+|B|)$ consisting of edges in $E'$. 
                
For every edge $e=uv\in \mathcal M$ let $\tilde z_{e,1} := \tilde z_e\cap X_1$ and $\tilde z_{e,2} := \tilde z_2\cap X_2$. Assume that $x_u\in X_1$ and $y_v\in X_2$ (the other case is treated analogously). Then we know that there is a block $D_1\times D_2$ of $M^{R}_{x,y}$ such that 
\begin{align}\label{eq:doesnotdecompose}
R^{D_1\times D_2}(x_u, y_v , \vec z_e) \ne \pi_{\{x_u\}\cup \vec z_{e,1}}R^{D_1\times D_2}(x_u, y_v , \vec z_e) \times \pi_{\{y_v\}\cup \vec z_{e,2}}R^{D_1\times D_2}(x_u, y_v , \vec z_e).
\end{align} 
Since there are only at most $|D|$ blocks in $M^R_{x,y}$, there is a block $D_1\times D_2$ such that for at least $\Omega\left(\frac{|A|+|B|}{|D|}\right)= \Omega({|A|+|B|})$ edges Equation (\ref{eq:doesnotdecompose}) is true. Call this set of edges $E^*$.
                
Let $X^*:= \{x_u\mid u \text{ is an endpoint of an edge } e\in E^*\}$. We construct a structured DNNF $O'$ from $O$ by existentially quantifying all variables not in a constraint $R(x_u, y_v, \vec z_e)$ for $e=uv\in E^*$ and for all $x_u\in X^*$ restricting the domain to $D_1$ $u\in A$ and to $D_2$ if $u\in B$. Note that every assignment to $X^*$ that assigns every variable $x_v$ with $v\in A$ to a value in $D_1$ and every $x_v$ with $v\in B$ to a value in $D_2$ can be extended to a satisfying assignment of $F_n$, because $D_1\times D_2$ is a block. Thus, $O'$ is a representation of 
\begin{align*}
F^*:= \bigwedge_{e=uv\in E^*} R^{D_1\times D_2}(x_u, y_v, \vec z_e).
\end{align*}

We now use the following simple observation.
\begin{claim}\label{clm:noproduct}
Let $R'(\vec x, \vec y)$ be a constraint such that $R'(\vec x, \vec y) \ne \pi_{\vec x}(R'(\vec x, \vec y)) \times \pi_{\vec y}(R'(\vec x, \vec y))$. Then there are assignments $a,b\in \sol(R'(\vec x, \vec y))$ such that $a|_{\vec x} \cup b|_{\vec y}\notin \sol(R'(\vec x, \vec y))$ or $a|_{\vec x} \cup b|_{\vec y}\notin \sol(R'(\vec x, \vec y))$.
\end{claim}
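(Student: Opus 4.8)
The plan is to use the trivial containment that always holds between a constraint and the product of its projections, and then read off a witness of the strictness. Concretely, for every $c \in \sol(R'(\vec x,\vec y))$ we have $c|_{\vec x} \in \pi_{\vec x}(R'(\vec x,\vec y))$ and $c|_{\vec y} \in \pi_{\vec y}(R'(\vec x,\vec y))$ directly from the definitions of projection and Cartesian product, so
\[
\sol(R'(\vec x,\vec y)) \subseteq \pi_{\vec x}(R'(\vec x,\vec y)) \times \pi_{\vec y}(R'(\vec x,\vec y)).
\]
Hence the hypothesis $R'(\vec x,\vec y) \ne \pi_{\vec x}(R'(\vec x,\vec y)) \times \pi_{\vec y}(R'(\vec x,\vec y))$ says precisely that this inclusion is strict.

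The key step is then to pick a witness of strictness: there is a pair $(p,q) \in \pi_{\vec x}(R'(\vec x,\vec y)) \times \pi_{\vec y}(R'(\vec x,\vec y))$ with $(p,q) \notin \sol(R'(\vec x,\vec y))$. By definition of $\pi_{\vec x}$ there is $a \in \sol(R'(\vec x,\vec y))$ with $a|_{\vec x} = p$, and by definition of $\pi_{\vec y}$ there is $b \in \sol(R'(\vec x,\vec y))$ with $b|_{\vec y} = q$. Then $a|_{\vec x} \cup b|_{\vec y} = (p,q) \notin \sol(R'(\vec x,\vec y))$, which establishes the claim (in particular the stated disjunction holds, since one of its disjuncts does).

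I expect no real obstacle here: the statement is immediate from the definitions of projection and product. The only thing to note is that the claim as printed lists two coinciding disjuncts; the intended second disjunct is presumably $b|_{\vec x} \cup a|_{\vec y} \notin \sol(R'(\vec x,\vec y))$, which follows by the symmetric choice of witness, but producing either one of the non-membership statements already suffices for the uses of the claim in the proof of Proposition~\ref{prop:lowerODD}.
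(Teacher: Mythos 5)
Your proof is correct and follows essentially the same route as the paper: observe that the hypothesis forces the containment $\sol(R')\subsetneq \pi_{\vec x}(R')\times\pi_{\vec y}(R')$ to be strict, pick a non-solution pair $(p,q)$ from the product, and extend each component to a full solution $a$, $b$ of $R'$. Your remark about the duplicated disjunct in the printed statement is also accurate; the paper's own proof likewise only produces one of the two non-membership statements.
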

\begin{proof}
Since $R(\vec x, \vec y) \ne \pi_{\vec x}(R(\vec x, \vec y)) \times \pi_{\vec y}(R(\vec x, \vec y))$ and thus $R(\vec x, \vec y) \subsetneq \pi_{\vec x}(R(\vec x, \vec y)^) \times \pi_{\vec y}(R(\vec x, \vec y))$, we have that there is $a_x\in \sol(\pi_{\vec x}(R(\vec x, \vec y)))$ and $b_y\in \sol(\pi_{\vec y}(R(\vec x, \vec y)))$ such that $a|_{\vec x} \cup b|_{\vec y}\notin \sol(R(\vec x, \vec y))$. Simply extending $a_x$ and $b_y$ to an assignment in $R(\vec x, \vec y)$ yields the claim.
\end{proof}

Since Claim~\ref{clm:noproduct} applies to all constraints in $F^*$, we are now in a situation where we can use Lemma~\ref{lem:lowerboundmatchingformula} which shows that any rectangle cover respecting the partition $(X_1, X_2)$ for $F^*$ has size $2^{|E^*|} = 2^{\Omega(|A|+|B|)}$. With Lemma~\ref{lem:DNNFmakeRectangles}, we know that $\|O'\| = 2^{\Omega(|A|+|B|)}$ and since the construction of $O'$ from $O$ does not increase the size of the DNNF, we get $s= \|O\| = 2^{\Omega(|A|+|B|)}= 2^{\Omega(\|F^*\|)}= 2^{\Omega(\|F\|)}$.
\end{proof}

\section{The Boolean Case}\label{sct:boolean}

In this section, we will specialize our dichotomy results for the Boolean domain $\{0,1\}$.
A relation $R$ over $\{0,1\}$ is called \emph{bijunctive affine} if it can be written as a conjunction of the relations $x=y$ and $x\ne y$ and unary relations, so $R \subseteq E$ with $E = \left< \{=, \neq, U_0, U_1\} \right>$ where $U_0=\{0\}$ and $U_1= \{1\}$\}. A set of relations $\Gamma$ is called bijunctive affine if all $R\in \Gamma$ are bijunctive affine. We will show the following dichotomy for the Boolean case:

\begin{theorem}\label{thm:boolean}
    Let $\Gamma$ be a constraint language over the Boolean domain. If all relations in $\Gamma$ are bijunctive affine, then there is an polynomial time algorithm that, given
    a $\Gamma$-formula $F$, constructs an \textup{OBDD} for $F$. If not, then
    there is a family of $\Gamma$-formulas $F_{n}$ and $\varepsilon>0$ such that a \textup{DNNF}
    for $F_{n}$ needs to have a size of at least $2^{\varepsilon\left\Vert F_{n}\right\Vert }$.
\end{theorem}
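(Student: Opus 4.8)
The plan is to deduce Theorem~\ref{thm:boolean} from the two main dichotomies, Theorem~\ref{thm:ODD} and Theorem~\ref{thm:FDD}, by showing that over $D=\{0,1\}$ the three classes ``bijunctive affine'', ``strongly uniformly blockwise decomposable'' and ``strongly blockwise decomposable'' all coincide. Concretely I would prove two implications: \emph{(i)} if $\Gamma$ is bijunctive affine then $\Gamma$ is strongly uniformly blockwise decomposable; and \emph{(ii)} if $\Gamma$ is strongly blockwise decomposable then $\Gamma$ is bijunctive affine. Granting these, the positive half of the theorem is immediate from part~\ref{thm:ODD:upper} of Theorem~\ref{thm:ODD} (an ODD over the Boolean domain is by definition an OBDD), and the negative half follows by contraposing~(ii) and invoking Proposition~\ref{prop:lowerFDD}, which yields $\Gamma$-formulas $F_n$ of size $\Theta(n)$ all of whose DNNFs have size $2^{\Omega(\|F_n\|)}$.

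For~(i): bijunctive affine means $\Gamma\subseteq E$, hence $\coclone{\Gamma}\subseteq\coclone{E}=E$, so it suffices to show every $R\in E$ is uniformly blockwise decomposable. I would use the standard fact that a relation in $E$, read as a constraint $R(x_1,\dots,x_k)$ on distinct variables, is exactly a \emph{$2$-affine} relation --- a conjunction of constraints $x_i=x_j$, $x_i\ne x_j$, $x_i=0$, $x_i=1$ --- because projecting existentially quantified variables out of such a conjunction again produces one of this form. Thus the variables split into the connected components of the graph recording the enforced equality/disequality constraints, each component being either \emph{frozen} to a constant or \emph{free} (any one coordinate determines all others via the parity of a connecting path), with $R=\emptyset$ if some component is inconsistent. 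Now fix a pair $x,y$ in the scope. If $x$ and $y$ lie in different components, then $R$ is a Cartesian product over its components and $M^R_{x,y}$ is a single full block, which decomposes along any partition placing $x$'s component on one side and $y$'s on the other. If $x$ and $y$ lie in a common component $C$, then fixing $x$ forces $y$, so $M^R_{x,y}$ is a proper block matrix with at most two singleton blocks; each such block restricts $C$ to a single point (which decomposes along \emph{any} sub-partition of $\tilde C$) and keeps the other components as a product, so the \emph{same} partition --- $x$ with its side of $\tilde C$ and some whole components, against $y$ with the rest of $\tilde C$ and the remaining whole components --- decomposes both blocks. Hence $R$ is uniformly blockwise decomposable in every pair.

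For~(ii): let $\Gamma$ be strongly blockwise decomposable and $R\in\Gamma$ of arity $k$. Then $R$ is blockwise decomposable, so Lemma~\ref{lem:binarize} gives $R(x_1,\dots,x_k)=\bigwedge_{R'(\vec y)\in\Pi_2(R(x_1,\dots,x_k))}R'(\vec y)$, and by Lemma~\ref{lem:closedProjectionSeletion} each projection $R'$ is again blockwise decomposable, so its selection matrix is a proper block matrix. Over $D=\{0,1\}$ this matrix is $2\times2$ with entries $\{\varepsilon\}$ or $\emptyset$, and by Lemma~\ref{lem:2x2blockmatrix} being a proper block matrix forbids exactly one empty entry, so the underlying relation has $0$, $1$, $2$ or $4$ tuples. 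A short inspection shows every Boolean relation of arity at most $2$ and size $\ne 3$ belongs to $E$: the size-$2$ binary ones are $x=y$, $x\ne y$, or a constant on one coordinate; the size-$0$ and size-$4$ ones are $\emptyset$ and $\{0,1\}^2$; and relations of arity at most $1$ are trivial. Since $E$ is a co-clone, hence closed under conjunction, $R\in E$; thus $\Gamma\subseteq E$ is bijunctive affine.

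I expect the main work to be the structural analysis in~(i) --- verifying that the relations of $E$ are precisely the $2$-affine ones (via the projection argument) and, in the common-component case, exhibiting a \emph{single} partition that simultaneously decomposes all blocks of $M^R_{x,y}$, which is what ``uniform'' requires. Part~(ii) is comparatively short, as the $2\times2$ instance of Lemma~\ref{lem:2x2blockmatrix} already pins every binary projection of a blockwise decomposable Boolean relation down to a bijunctive affine relation, so no appeal to Post's lattice of Boolean co-clones is needed.
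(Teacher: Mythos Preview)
Your proposal is correct and follows essentially the same two-step reduction as the paper: showing (i) bijunctive affine $\Rightarrow$ strongly uniformly blockwise decomposable and (ii) strongly blockwise decomposable $\Rightarrow$ bijunctive affine, then invoking Theorems~\ref{thm:ODD} and~\ref{thm:FDD}. Your treatment of (i) via the component structure of the equality/disequality graph matches the paper's argument almost verbatim; the only cosmetic difference is that you handle projections by the closure of $2$-affine formulas under existential quantification, whereas the paper instead drops the projections and appeals to Lemma~\ref{lem:closedProjectionSeletion}.

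For (ii) you take a mildly different route: you invoke Lemma~\ref{lem:binarize} to write $R$ as a conjunction of its binary projections and then classify all Boolean relations of arity $\le 2$ whose selection matrix is a proper block matrix (equivalently, those of size $\ne 3$), observing they all lie in $E$. The paper instead decomposes $R$ into indecomposable factors, argues that each factor has exactly two complementary solutions, and reads off an $E$-definition from that. Your argument is slightly more direct since it bypasses the factor decomposition entirely and reduces everything to a finite check on binary Boolean relations; the paper's version, on the other hand, gives a bit more structural insight (each indecomposable factor is literally a single parity class). Both are short and either would serve.
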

Let us remark here that, in contrast to general domains $D$, there is no advantage of FDD over ODD in the Boolean case: either a constraint language allows for efficient representation by ODD or it is hard even for DNNF. So in a sense, the situation over the Boolean domain is easier. Also note that the tractable cases over the Boolean domain are very restricted, allowing only equalities and disequalities.

\begin{proof}
    With Theorem~\ref{thm:ODD} and Theorem~\ref{thm:FDD}, we only need to show the following:
    \begin{enumerate}
        \item Every $\Gamma \subseteq \langle E\rangle$ is strongly uniformly blockwise decomposable.\label{it:boolean1}
        \item If $\Gamma$ is strongly blockwise decomposable, then $\Gamma \subseteq \langle E\rangle $.\label{it:boolean2}
    \end{enumerate}
    We show that $E$ is strongly uniformly blockwise decomposable -- which implies that every $\Gamma \subseteq \langle E\rangle$ is also strongly uniformly blockwise decomposable. Let $R \in \langle E\rangle$. Then $R$ can be constructed by conjunctions of constraints in the relations of $E$ and projections. Let $R'$ be the relation we get by the same conjunctions as for $R$ but not doing any of the projections. By Lemma~\ref{lem:closedProjectionSeletion}, it suffices to show that $R'$ is uniformly blockwise decomposable. To this end, consider the graph $G_{R'}$ with $V(G_{R'})=var(R')$, two vertices $x$ and $y$ are connected with a blue edge if the representation of $R'$ contains $x=y$ and connected with a red edge if $R'$ contains $x\neq y$. A vertex $x$ is blue if $R'$ contains $U_1(x)$ and red if $R$ contains $U_0(x)$. 
    \begin{example}\label{exa:boolean}
        If $R'$ is represented by 
        \[
        R'(a,b,c,d,e,f) = (a=b)\wedge(c=d)\wedge(b \neq c)\wedge(b \neq d)\wedge(e \neq f)\wedge(b = 1) \wedge(d = 0),
        \]
        then $G_R'$ is the graph in Figure~\ref{fig:boolean}.
    \end{example}
    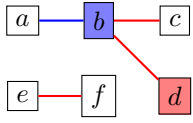
\begin{figure}
      \centering
        \begin{tikzpicture}
            \node[draw] (e) at (0,0) {$e$};
            \node[draw] (f) at (1,0) {$f$};
            \node[draw,fill=red!50] (d) at (2,0) {$d$};
            \node[draw] (a) at (0,1) {$a$};
            \node[draw,fill=blue!50] (b) at (1,1) {$b$};
            \node[draw] (c) at (2,1) {$c$};
            \begin{scope}[red, thick]
                \draw (e)--(f);
                \draw (c)--(b)--(d);
            \end{scope}
            \begin{scope}[blue, thick]
                \draw (a)--(b);
            \end{scope}
        \end{tikzpicture}
        \caption{The graph $G_{R'}$ from Example~\ref{exa:boolean}.}
        \label{fig:boolean}
    \end{figure}
    
    We show that $M_{x,y}^{R'}$ is a proper block matrix and each block is decomposable using the same variable partition. If $x$ and $y$ are in different connected components of $G_{R'}$, $R'$ is decomposable such that $x$ and $y$ appear in different factors, so $M_{x,y}^R$ has only one block which is decomposable. If $R'$ has no satisfying assignments, there is nothing to show. It remains the case where $R$ has satisfying assignments and $x$ and $y$ are in the same connected component.
    
    So let $R'(\vec x)$ have at least one model and let $G_{R'}$ have only one connected component. Note that setting one variable in $\tilde x$ to $0$ or $1$ determines the value of all other variables in $\tilde x$. This implies that if a vertex in $G_{R'}$ is colored, then $M_{x,y}^R$ has three empty entries and one entry with exactly one element. If no vertex in $G_{R'}$ is colored, then $\pi_{x,y}(R'(\vec x))$ is either $x=y$ (if every path from $x$ to $y$ has an even number of red edges) or $x\neq y$ (if every path from $x$ to $y$ has an odd number of edges). So $M_{x,y}^{R'}$ has exactly two non-empty entries with one element each, so $R$ is decomposable in $x$ and $y$. This completes the proof of Item~\ref{it:boolean1}.
    
    Let now $\Gamma$ be strongly blockwise decomposable and $R\in \left< \Gamma \right>$ be decomposed into indecomposable factors
    \[
    R(\vec x) = R_1(\vec x_1) \times \cdots \times R_k(\vec x_k)
    \]
    We have to show that $R_i \in \langle E\rangle$ for all $i$ which implies that $R \in \langle E\rangle$). Since $R_i(\vec x_i)$ is indecomposable, $M_{x,y}^{R_i}$ has to have at least two blocks for every $x,y \in var(R_i)$. Since $R$ is Boolean, only two cases remain: $\pi_{x,y} R_i(\vec x_i)$ is either equality or disequality on $x$ and $y$. In every case, the value of $y$ determines the value of $x$ and vice versa. Since $x$ and $y$ are arbitrary, each variable in $\tilde x_i$ determines the value of every other variable in $\tilde x_ix`$. This implies that non-empty entries of $M_{x,y}^{R_i}$ have exactly one element. It follows that $R_i(\vec x_i)$ has exactly two satisfying assignments and moreover, no variable can take the same value in these two assignments since otherwise $R_i(\vec x_i)$ would be decomposable. It follows that after fixing a value for a variable $x$, the values of all other variables $y$ are determined by the constraint $x\ne y$ or $x=y$, so $R_i(\vec x_i)$ is a conjunction of constraints with relations in $E$ which completes the proof of Item~\ref{it:boolean2} and thus the theorem.
\end{proof}

\section{Decidability of Strong (Uniform) Blockwise Decomposability}\label{sct:decidability}

In this section, we will show that strong (uniform) blockwise decomposability is decidable. 

\subsection{The Algorithm}

We will here first give the algorithm to decide strong (uniform) blockwise decomposability. The algorithm will rely on properties of constraint languages formulated below in Proposition~\ref{prop:ternary} which we will then prove in the following sections.
For our algorithm, we will heavily rely on the following well-known result for deciding containment in co-clones whose proof can be found in~\cite[Lemma~42]{Dalmau00}.

\begin{theorem}\label{thm:computableCocloneContainment}
    There is an algorithm that, given~a constraint language $\Gamma$ and~a relation $R$, decides if $R\in \coclone{\Gamma}$.
\end{theorem}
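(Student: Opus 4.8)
The plan is to reduce the problem to deciding \emph{pp-definability} and to invoke the classical Galois connection between relational clones and clones of operations, in its quantitative form. Recall that an \emph{$m$-ary polymorphism} of $\Gamma$ is a map $f\colon D^m\to D$ such that, for every relation $S\in\Gamma$ of arity $r$ and all (not necessarily distinct) $t_1,\dots,t_m\in S$, the $r$-tuple obtained by applying $f$ column-wise to $t_1,\dots,t_m$ again lies in $S$. The single fact I would rely on is: for a relation $R$ with $m:=|R|$ tuples, one has $R\in\coclone{\Gamma}$ if and only if every $m$-ary polymorphism of $\Gamma$ preserves $R$. Granting this, the algorithm is immediate: since $D$ is finite there are only $|D|^{|D|^{m}}$ candidate maps $D^m\to D$; for each one it is a finite check --- here I use that $\Gamma$ is finite and each of its relations is finite --- whether it is a polymorphism of $\Gamma$, and for each polymorphism found it is another finite check whether it preserves $R$. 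The algorithm accepts iff all of them do.

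The forward direction of the displayed equivalence is a routine structural induction: an $m$-ary polymorphism preserves every relation of $\Gamma$ by definition, preserves the equality relation trivially, and preservation is inherited under conjunction and projection; hence it preserves every pp-definable relation, in particular $R$. The substance of the proof is the backward direction, which I would establish via the \emph{indicator (canonical instance) construction}. Fix an enumeration $R=\{t_1,\dots,t_m\}$ of the tuples of $R\subseteq D^k$ and, for $i\in[k]$, let $\vec c_i:=(t_1(i),\dots,t_m(i))\in D^m$ be the $i$-th column of $R$. Form the $\mathrm{CSP}(\Gamma)$-instance $J$ with variable set $D^m$ in which, for every $S\in\Gamma$ of arity $r$ and every $(\vec a_1,\dots,\vec a_r)\in(D^m)^r$ whose associated $m\times r$ matrix has all of its rows in $S$, one places the constraint $S(\vec a_1,\dots,\vec a_r)$. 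Then $J$ is finite and computable from $\Gamma$ and $m$, and unwinding the definitions shows that $\sol(J)$ is exactly the set of $m$-ary polymorphisms of $\Gamma$. Consequently the relation $R^{*}:=\pi_{\{\vec c_1,\dots,\vec c_k\}}(\sol(J))$, read as a $k$-ary relation along $\vec c_1,\dots,\vec c_k$, is pp-definable over $\Gamma$ (using the equality relation to account for any columns of $R$ that coincide), and $R^{*}=\{(f(\vec c_1),\dots,f(\vec c_k)):f\text{ an }m\text{-ary polymorphism of }\Gamma\}$. The coordinate projections $D^m\to D$ are polymorphisms and send $(\vec c_1,\dots,\vec c_k)$ to $t_1,\dots,t_m$ in turn, so $R\subseteq R^{*}$; and, since composing an $m$-ary polymorphism with a tuple of coordinate projections again yields an $m$-ary polymorphism, $R^{*}\subseteq R$ holds precisely when every $m$-ary polymorphism of $\Gamma$ preserves $R$ in the usual sense. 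Hence that hypothesis forces $R=R^{*}\in\coclone{\Gamma}$, which is the backward direction; the same identification of $\sol(J)$ also yields the algorithm directly, by computing $R^{*}$ and testing $R=R^{*}$.

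The main obstacle is exactly this quantitative Galois fact --- that polymorphisms of arity $|R|$ suffice --- for which the indicator construction is the standard tool; everything else is bookkeeping. One genuinely degenerate case is $R=\emptyset$ (so $m=0$): there the construction collapses, but $\emptyset\in\coclone{\Gamma}$ can be decided on the spot, since it holds iff $\Gamma$ admits no nullary polymorphism, i.e.\ iff there is no $d\in D$ with $(d,\dots,d)\in S$ for all $S\in\Gamma$.
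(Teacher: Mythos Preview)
Your proof is correct and is precisely the standard argument via the Galois connection and the indicator (free-structure) construction. Note, however, that the paper does not give its own proof of this theorem at all: it merely invokes the result as well known, citing \cite[Lemma~42]{Dalmau00}. So there is no ``paper's proof'' to compare against; your write-up is essentially the argument one finds in that reference (and in the Bodnarchuk--Kalu\v{z}nin--Kotov--Romov / Geiger tradition more generally), with the quantitative observation that polymorphisms of arity $|R|$ suffice. Your handling of the coincident-columns case via equality and of the degenerate $R=\emptyset$ case is also correct.
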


In a first step in our algorithm, we would like to restrict to the
case of binary relations by taking binary projections as it is done in
Proposition~\ref{prop:restrictionToBinary}. However, as the following example shows, this is in general not correct since there are constraint languages that are not blockwise decomposable while their binary projection is.

\begin{example}\label{exa:projectionArgh}
    Consider the relation $R^\oplus:= \{(0,0,1), (0,1,0), (1,0,0), (1,1,1)\}$ which is the parity relation on three variables. Consider the constraint $R^\oplus(x,y,z)$ and the selection matrix
  \begin{align*}
    \label{eq:7}
    M_{x,y}^{R^\oplus}=\left(\begin{array}{c|cc}
        x\backslash y & 0 & 1 \\
        \hline 0 & \left\{ 1\right\}  & \left\{ 0\right\}  \\
        1 & \left\{ 0\right\}  & \left\{ 1\right\}.
    \end{array}\right)
\end{align*}
The single block of this matrix is obviously not decomposable, so $R$ is not blockwise decomposable. 

Let us now compute $\Pi_2(\{R^\oplus\})$. To this end, call a relation \emph{affine} if it is the solution set of a system of linear equations over the finite field $\mathbb{F}_2$ with $2$ elements. Obviously, all $R^\oplus$-formulas define affine relations. The following fact is well known, see e.g.~\cite[Lemma 5]{Dalmau00} for a proof.
\begin{fact*}
    Every projection of an affine constraint is affine.
\end{fact*}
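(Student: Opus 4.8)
The plan is to reduce the statement to the case where a single coordinate is projected away, and then to invoke the correspondence between affine relations over $\mathbb{F}_2$ and affine subspaces. First I would record that a non-empty relation $R\subseteq\mathbb{F}_2^k$ is affine if and only if it is a coset $c+V$ of a linear subspace $V\le\mathbb{F}_2^k$: the solution set of a consistent system $Ax=b$ equals $x_0+\ker A$, and conversely a coset $c+V$ with $V=\ker C$ is the solution set of $Cx=Cc$. The empty relation is affine as well, being the solution set of the inconsistent equation $0=1$, and its projections are empty, so that case is immediate. Since $\pi_Y\bigl(R(\vec x)\bigr)$ for $Y\subseteq\tilde x$ is obtained by forgetting the coordinates of $\tilde x\setminus Y$ one at a time (and since repeated variables in $\vec x$ can first be handled by conjoining affine equality constraints), it is enough to treat the projection $\pi$ that forgets a single coordinate.

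For this step I would argue purely linear-algebraically: write $R=c+V$ with $V$ a linear subspace, the empty case being already handled. The coordinate-forgetting map $\pi\colon\mathbb{F}_2^k\to\mathbb{F}_2^{k-1}$ is $\mathbb{F}_2$-linear, hence $\pi(R)=\pi(c)+\pi(V)$, and $\pi(V)$ is a linear subspace because the image of a subspace under a linear map is a subspace. Thus $\pi(R)$ is a coset of a linear subspace, hence affine, and iterating over the coordinates of $\tilde x\setminus Y$ finishes the proof.

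For concreteness, and since the surrounding discussion manipulates explicit linear systems, I would also spell out the same step by Gaussian elimination: given $R=\{x:Ax=b\}$ and the coordinate $x_k$ to be eliminated, if the $x_k$-column of $A$ is zero then $\pi(R)$ is the solution set of $A$ with that column deleted; otherwise pick a row $j$ whose $x_k$-entry is $1$, add row $j$ to every other row with a $1$ in that column so that only row $j$ mentions $x_k$, and then $\pi(R)$ is exactly the solution set of the system obtained by deleting row $j$ (and the now-zero $x_k$-column), since any assignment to the remaining variables satisfying those rows extends uniquely through row $j$ to a solution of $Ax=b$. There is no genuine obstacle in the argument; the only points that need a moment of care are that the convention ``affine'' includes the empty relation, so inconsistent systems cause no trouble, and that an arbitrary projection is reduced to iterated single-coordinate projections.
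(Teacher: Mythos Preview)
Your argument is correct: the characterization of non-empty affine relations as cosets $c+V$ of linear subspaces, together with the fact that coordinate-forgetting is $\mathbb{F}_2$-linear, immediately gives that $\pi(R)=\pi(c)+\pi(V)$ is again a coset, and the empty case and the reduction to single-coordinate projections are handled cleanly. The Gaussian-elimination version is also fine and is essentially the constructive form of the same observation.

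As for comparison with the paper: there is nothing to compare, since the paper does not prove this fact at all but simply records it as well known and points to \cite[Lemma~5]{Dalmau00}. Your write-up is thus strictly more than what the paper provides; if anything, you could trim it to just the coset argument, as the Gaussian-elimination paragraph is redundant once the linear-algebraic version is stated.
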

It follows that $\Pi_2(\{R^\oplus\})$ only contains affine relations of arity at most $2$. But, as we saw in the proof of Theorem~\ref{thm:boolean}, in that case $\Pi_2(\{R^\oplus\})$ is strongly uniformly blockwise decomposable.
As a consequence, we have that $R^\oplus$ is \emph{not} blockwise decomposable while its binary projection is even strongly uniformly blockwise decomposable.
\end{example}

To avoid the problem of Example~\ref{exa:projectionArgh}, we will make
sure that for the constraint language at hand we have
$\coclone{\Gamma} = \coclone{\Pi_2(\Gamma)}$, which we will test with the help of 
Theorem~\ref{thm:computableCocloneContainment} below. If this is
not the case, then by the contraposition of
Proposition~\ref{prop:restrictionToBinary}, we already know that the
language $\Gamma$ is not strongly blockwise decomposable.
Afterwards, we can focus on $\Pi_2(\Gamma)$ and utilize the following
proposition, which is the main technical contribution of this section.

\newcommand{\Gammatwo}{\ensuremath{\Gamma\!_2}}

\begin{proposition}\label{prop:ternary}
    Let $\Gamma$ be a constraint language and
    $\Gammatwo:=\Pi_2(\Gamma)$ the constraint language consisting of all unary and binary pp-definable relations over $\Gamma$.
    \begin{enumerate}
    \item $\Gammatwo$ is strongly uniformly blockwise decomposable if and
      only if all relations of arity at most $3$ in
      $\coclone{\Gammatwo}$ are uniformly blockwise decomposable.
    \item $\Gammatwo$ is strongly blockwise decomposable if and only if all relations of arity at most $3$ in $\coclone{\Gammatwo}$ are blockwise decomposable.
    \end{enumerate}
\end{proposition}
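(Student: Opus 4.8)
The ``only if'' direction of both parts is immediate: by definition, $\Gammatwo$ strongly (uniformly) blockwise decomposable means that \emph{every} relation in $\coclone{\Gammatwo}$ is (uniformly) blockwise decomposable, so in particular the relations of arity at most $3$ are. Hence the whole content is the ``if'' direction. The plan is to prove, by induction on the arity $k$, that every relation $R\in\coclone{\Gammatwo}$ is (uniformly) blockwise decomposable; the base cases $k\le 3$ are exactly the hypothesis. For the inductive step I would first invoke Lemma~\ref{lem:closedProjectionSeletion} to reduce to relations defined by a \emph{projection-free} $\Gammatwo$-formula, i.e.\ conjunctions of unary and binary constraints from $\Gammatwo$ (after contracting equalities): since every relation in $\coclone{\Gammatwo}$ is a projection of such a relation, it suffices to treat these. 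The advantage is that such a conjunction $R$ trivially equals the conjunction of its binary projections, $R=\bigwedge_{\{p,q\}}\pi_{\{p,q\}}(R)$, and the same is true of every selection $R|_{z\in S}$, because that is again a conjunction of unary and binary constraints.

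Now fix a pair $x,y\in\tilde u$. One may further split $R$ into indecomposable factors and handle each separately; a factor of arity $<k$ is (uniformly) blockwise decomposable by the induction hypothesis and products of such relations are again (uniformly) blockwise decomposable, so I may assume $R$ is indecomposable of arity $k\ge 4$. The relation $\pi_{\{x,y\}}(R)$ is binary and in $\coclone{\Gammatwo}$, hence (uniformly) blockwise decomposable by hypothesis, hence a disjoint union of Cartesian products $\bigcup_\ell A_\ell\times B_\ell$; this gives exactly the ``proper block matrix'' part of the definition, with blocks $(A_\ell,B_\ell)$ of $M^R_{x,y}$. It remains to show that each block $R':=R|_{x\in A_\ell,y\in B_\ell}$ is decomposable along a partition separating $x$ and $y$. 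Since $R'$ again equals the conjunction of its binary projections, its structure is controlled by its \emph{dependency graph} $G'$ on $\tilde u$ (edge $\{p,q\}$ iff $\pi_{\{p,q\}}(R')$ is not a Cartesian product): if $x$ and $y$ lie in different connected components of $G'$, then letting $C$ be the component of $x$ one checks directly, pair by pair, that $R'=\pi_{C}(R')\times\pi_{\tilde u\setminus C}(R')$, which is the desired decomposition.

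So the heart of the proof is to show that $x$ and $y$ cannot lie in the same $G'$-component. Assume they do, and take a shortest $G'$-path $x=v_0-v_1-\dots-v_m=y$; note $m\ge 2$ because $\pi_{\{x,y\}}(R')=A_\ell\times B_\ell$ is a product. Consider $\pi_I(R)$ for $I=\{v_0,\dots,v_m\}$: this is a relation in $\coclone{\Gammatwo}$ of arity $m+1$, so if $m+1\le 3$ it is (uniformly) blockwise decomposable by hypothesis and if $4\le m+1\le k-1$ it is by the induction hypothesis; in either case the block $(A_\ell,B_\ell)$ of $M^{\pi_I(R)}_{x,y}$ — which is precisely $\pi_I(R')$ — decomposes along some partition $(V'',W'')$ with $x\in V''$ and $y\in W''$. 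But every consecutive pair $v_i,v_{i+1}$ satisfies $\pi_{\{v_i,v_{i+1}\}}(R')\neq$ a product, which forces $v_i,v_{i+1}$ onto the same side of $(V'',W'')$; hence all $v_i$ are on one side, contradicting $v_0=x\in V''$ and $v_m=y\in W''$. This leaves only the case $m+1=k$, i.e.\ a \emph{chordless Hamiltonian} $G'$-path, where $\pi_I(R')=R'$ itself and the induction hypothesis is no longer available; this ``path-CSP'' case has to be excluded by a separate argument exploiting that its length-$3$ subpatterns must be (uniformly) blockwise decomposable, which is incompatible with a genuinely entangled chordless path.

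For the \emph{uniform} version one additionally has to exhibit a single partition $(V,W)$ witnessing decomposability of all blocks $(A_\ell,B_\ell)$ at once. Here I would use that the ternary relations $\pi_{\{x,y,z\}}(R)\in\coclone{\Gammatwo}$ are now \emph{uniformly} blockwise decomposable, so the ``side'' of each variable $z$ relative to $x$ and $y$ is the same across all blocks of $M^{\pi_{\{x,y,z\}}(R)}_{x,y}$; putting $z\in V$ accordingly and verifying, via the conjunction-of-binary-projections form of each block, that no block has a dependency edge crossing $(V,W)$ yields the common partition. I expect the main obstacle to be twofold: the chordless-Hamiltonian-path subcase just mentioned, and the bookkeeping point that the unary relations $U_{A_\ell}$ and $U_{B_\ell}$ needed to ``name'' a block of $M^R_{x,y}$ need not be pp-definable over $\Gammatwo$, so the whole dependency-graph analysis must be carried out through projections of relations genuinely lying in $\coclone{\Gammatwo}$ (which is why the path variables in the argument above are always taken together with both $x$ and $y$), or else after a preliminary reduction to a language containing the required singletons.
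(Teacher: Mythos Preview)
Your reduction to projection-free conjunctions of binary constraints and the dependency-graph strategy on each block are on the right track and mirror the paper's setup. But the gap you flag yourself---the chordless Hamiltonian $G'$-path case $m+1=k$---is real and is precisely where the whole argument lives; your suggestion to handle it via ``length-$3$ subpatterns'' is not a proof, and no purely local argument along consecutive triples suffices. The paper does \emph{not} proceed by induction on arity. For Part~2 it introduces an intermediate notion (an \emph{incompatible block structure}: two binary relations $R_1(x,z),R_2(z,y)\in\Gammatwo$ whose block partitions on the $z$-column cross in all four ways) and shows that (i) such a structure yields a ternary relation $R_1(x,z)\wedge R_2(z,y)\in\coclone{\Gammatwo}$ that is not blockwise decomposable, and (ii) in its absence, the dependency edge relation on any block $R^B$ is \emph{transitive}. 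Transitivity is exactly the missing ingredient in your scheme: it makes the dependency graph a disjoint union of cliques, so any path from $x$ to $y$ would force a direct edge $xy$, contradicting $\pi_{\{x,y\}}(R')=A_\ell\times B_\ell$. In particular no chordless path of length $\ge 2$ can exist, and your problematic case never arises.

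For Part~1 the paper again takes a different route: rather than assigning each $z$ a ``side'' from $\pi_{\{x,y,z\}}(R)$ and then verifying, it shows (using the ternary hypothesis via Lemma~\ref{lem:onlycheckprojections}, peeling off one variable at a time) that $R$ admits a tree representation $R=\bigwedge_{\{p,q\}\in E(T)}\pi_{\{p,q\}}(R)$ with $\{x,y\}\in E(T)$; removing that edge gives one partition that works for all blocks simultaneously. Your sketch does not explain how to rule out a cross-edge $pq$ with $p\in V$, $q\in W$ and $\{p,q\}\cap\{x,y\}=\emptyset$, since the ternary projections $\pi_{\{x,y,z\}}(R)$ you rely on never see $p$ and $q$ together. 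Your second anticipated obstacle (that $U_{A_\ell},U_{B_\ell}$ need not lie in $\Gammatwo$) is handled in the paper by working with $R^B$ and its binary projections directly---$R^B$ is still a conjunction of binary and unary constraints, hence equals the conjunction of its own binary projections---while the transitivity argument uses only block structures of relations genuinely in $\Gammatwo$.
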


Before we prove Proposition~\ref{prop:ternary} in the next two subsections, let us show how it yields the desired decidability results.

\begin{theorem}\label{thm:decidability}
  There is an algorithm that, given a constraint language $\Gamma$, decides if $\Gamma$ is strongly blockwise decomposable. Moreover, there is also an algorithm that, given a constraint language $\Gamma$, decides if $\Gamma$ is strongly uniformly blockwise decomposable.
\end{theorem}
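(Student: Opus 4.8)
The plan is to reduce the decidability question to the finitely many ternary relations in $\coclone{\Gammatwo}$, using Proposition~\ref{prop:ternary} together with the co-clone membership algorithm of Theorem~\ref{thm:computableCocloneContainment}. First I would handle the case of \emph{strong blockwise decomposability}. Given $\Gamma$, compute $\Gammatwo := \Pi_2(\Gamma)$; this is a finite set of relations since it contains only relations of arity at most two, and each such relation can be obtained by enumerating all pp-formulas over $\Gamma$ up to a bounded size — or more cleanly, by enumerating all relations of arity at most two over $D$ and testing membership in $\coclone{\Gamma}$ via Theorem~\ref{thm:computableCocloneContainment}. Next, test whether $\coclone{\Gamma} = \coclone{\Gammatwo}$; by Proposition~\ref{prop:restrictionToBinary} this holds whenever $\Gamma$ is strongly blockwise decomposable, so if the test fails we may immediately output ``not strongly blockwise decomposable''. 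The equality of co-clones is decidable: $\coclone{\Gammatwo}\subseteq\coclone{\Gamma}$ always holds by construction, and $\coclone{\Gamma}\subseteq\coclone{\Gammatwo}$ can be checked by testing, for each $R\in\Gamma$, whether $R\in\coclone{\Gammatwo}$ using Theorem~\ref{thm:computableCocloneContainment}.

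Once we know $\coclone{\Gamma} = \coclone{\Gammatwo}$, it suffices to decide whether $\Gammatwo$ is strongly blockwise decomposable, and by Proposition~\ref{prop:ternary}(2) this is equivalent to all relations of arity at most three in $\coclone{\Gammatwo}$ being blockwise decomposable. The key point is that there are only finitely many relations of arity at most three over the finite domain $D$, namely at most $2^{|D|}+2^{|D|^2}+2^{|D|^3}$ of them. For each such relation $R$, we first test $R\in\coclone{\Gammatwo}$ with Theorem~\ref{thm:computableCocloneContainment}; if so, we check directly whether $R$ is blockwise decomposable. This last check is a finite computation: for each pair of the (at most three) coordinates of $R$, form the selection matrix, which has entries that are relations over a domain of size at most $|D|$ and at most $|D|$ free variables, test via Lemma~\ref{lem: 2x2 submatrix} whether it is a proper block matrix, and for each block test decomposability by going through all partitions of the remaining (at most one) variable — all of which is brute-force over finite objects. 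If every such $R$ passes, output ``strongly blockwise decomposable''; otherwise output ``not''.

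For \emph{strong uniform blockwise decomposability} the algorithm is essentially identical, using Proposition~\ref{prop:ternary}(1) in place of part (2): compute $\Gammatwo$, check $\coclone{\Gamma}=\coclone{\Gammatwo}$ (here one should note that the analog of Proposition~\ref{prop:restrictionToBinary} for the uniform case is needed; if it is not available we instead argue that strong uniform blockwise decomposability implies strong blockwise decomposability implies $\coclone{\Gamma}=\coclone{\Gammatwo}$, so the same reduction is valid), then enumerate all relations of arity at most three, filter to those in $\coclone{\Gammatwo}$ via Theorem~\ref{thm:computableCocloneContainment}, and test each one for \emph{uniform} blockwise decomposability — which differs from the non-uniform test only in that one additionally checks that a single partition $(V,W)$ works simultaneously for all blocks, again a finite search. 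Correctness then follows by chaining Proposition~\ref{prop:restrictionToBinary} (or its uniform counterpart), Proposition~\ref{prop:ternary}, and the definitions.

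The main obstacle is not any single step in isolation but making sure the reduction chain is logically tight: specifically, one must be careful that when the co-clone equality test $\coclone{\Gamma}=\coclone{\Gammatwo}$ fails we are genuinely entitled to reject, which relies on the contrapositive of Proposition~\ref{prop:restrictionToBinary} (and, in the uniform case, on the implication from uniform to non-uniform strong blockwise decomposability noted above). The remaining work — enumerating bounded-arity relations, invoking Theorem~\ref{thm:computableCocloneContainment}, and brute-forcing the block-matrix and decomposability conditions — is routine finite combinatorics; the only place requiring genuine mathematical content is Proposition~\ref{prop:ternary} itself, which is proved separately, so here we merely assemble the pieces.
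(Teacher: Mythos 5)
Your proposal is correct and follows essentially the same route as the paper: compute $\Pi_2(\Gamma)$ via Theorem~\ref{thm:computableCocloneContainment}, reject if $\coclone{\Gamma}\neq\coclone{\Pi_2(\Gamma)}$ using the contrapositive of Proposition~\ref{prop:restrictionToBinary}, and otherwise brute-force the finitely many at-most-ternary relations in the co-clone via Proposition~\ref{prop:ternary}. Your explicit remark that the uniform case can reuse Proposition~\ref{prop:restrictionToBinary} because strong uniform blockwise decomposability implies the non-uniform version is a point the paper glosses over with ``completely analogous,'' but it does not change the argument.
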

\begin{proof}
    We only consider blockwise decomposability since the proof for uniform decomposability is completely analogous.
    Let $\Gamma$ be the given constraint language over the domain $D$.
    \begin{enumerate}
    \item Compute $\Gammatwo:=\Pi_2(\Gamma)$ by testing for every
      unary and binary relation $R$ over $D$ whether
      $R\in\coclone{\Gamma}$ using
      Theorem~\ref{thm:computableCocloneContainment}.
    \item Again using Theorem~\ref{thm:computableCocloneContainment},
      we test for every $R\in \Gamma$, whether
      $R\in\coclone{\Gammatwo}$. If the answer is no, then by
      we can conclude by Proposition~\ref{prop:restrictionToBinary} that
      $\Gamma$ is not strongly blockwise decomposable. Otherwise, we know that $\coclone{\Gamma}=\coclone{\Gammatwo}$.
    \item By applying Theorem~\ref{thm:computableCocloneContainment} a
      third time, we compute all at most ternary relations in
      $\coclone{\Gammatwo}$ and test whether they are blockwise
      decomposable by a brute-force application of Definition~\ref{def:blockwisedecomposable}. By
      Proposition~\ref{prop:ternary} this is the case if and only if
      $\Gammatwo$ and hence $\Gamma$ is strongly blockwise
      decomposable. \qedhere
    \end{enumerate}
\end{proof}

\subsection{Proof of Proposition~\ref{prop:ternary}.1 (the uniform case)}

In this section, we will show Proposition~\ref{prop:ternary} for the
case of strong uniform blockwise decomposability. Obviously, if
$\coclone{\Gammatwo}$ contains a relation of arity at most $3$ that is
not uniformly blockwise decomposable, then $\Gammatwo$ is not strongly
uniformly blockwise decomposable. So we only have to show the other
direction of the claim.

We first aim to get a better understanding of $\coclone{\Gammatwo}$. Let $R\in \coclone{\Gammatwo}$. By Lemma~\ref{lem:closedProjectionSeletion}, it suffices to consider the case in which $R$ has a pp-definition without any projections, so there is a pp-definition of the constraint $R(\vec x)$ of the form
\begin{equation}
    F := \bigwedge_{x,y\in \tilde x} R_{xy}(x,y),\label{eq:binaryDecomp}
\end{equation}
where $R_{xy}$ is a relation from $\Gammatwo$ (here we use that
$R^\triv:=D^2\in \Gammatwo$ as it is pp-definable and that $\Gammatwo$
is closed under intersections).
We show that in our setting we get a decomposition as in Lemma~\ref{lem:thereisatree}.

\begin{claim}\label{clm:treelike}
    For any $\hat x, \hat y \in \tilde x$ there is an undirected tree
    $T$ with vertex set $V(T) = \tilde x$ and edge $\{\hat x, \hat y\}\in E(T)$ such that 
    \begin{align*}
        R(\vec x) =  \bigwedge_{\{p,q\} \in E(T)}
        \pi_{\{p,q\}}(R(\vec x)).
    \end{align*}
\end{claim}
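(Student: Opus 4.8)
The plan is to mimic the proof of the Tree Structure Lemma (Lemma~\ref{lem:thereisatree}), but working directly with the binary pp-definition $F=\bigwedge_{x,y\in\tilde x}R_{xy}(x,y)$ rather than with a general uniformly blockwise decomposable constraint, since at this point in the argument we only know that the at-most-ternary relations of $\coclone{\Gammatwo}$ are uniformly blockwise decomposable, not all of $\coclone{\Gammatwo}$. The key observation is that Lemma~\ref{lem:onlycheckprojections} is driven entirely by what happens in selection matrices $M^R_{z_1,z_2}$ and in the decompositions of their blocks, and each such decomposition of a block of $M^{R(\vec x)}_{z_1,z_2}$ is witnessed inside the ternary projections $\pi_{\{z_1,z_2,w\}}(R(\vec x))$ for the various $w$. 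So first I would isolate the following local statement: for any three variables $z_1,z_2,w\in\tilde x$, the ternary constraint $\pi_{\{z_1,z_2,w\}}(R(\vec x))$ lies in $\coclone{\Gammatwo}$ (it is a projection of a pp-definable constraint, hence pp-definable) and therefore, by hypothesis, is uniformly blockwise decomposable. This lets me invoke the machinery of Section~\ref{sct:properties} one variable at a time.

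Second, I would run the same recursive construction as in Lemma~\ref{lem:thereisatree}: start by picking the prescribed pair $\hat x,\hat y$, and apply Lemma~\ref{lem:onlycheckprojections} to $R(\vec x)$ in the pair $\hat x,\hat y$ to get a tri-partition $(\tilde v,\{\hat x,\hat y\},\tilde w)$ with
\[
R(\vec x)=\pi_{\{\hat x\}\cup\tilde v}(R(\vec x))\wedge\pi_{\{\hat x,\hat y\}}(R(\vec x))\wedge\pi_{\{\hat y\}\cup\tilde w}(R(\vec x)).
\]
Add the edge $\{\hat x,\hat y\}$ to $T$. By Lemma~\ref{lem:closedProjectionSeletion}, $\pi_{\{\hat x\}\cup\tilde v}(R(\vec x))$ and $\pi_{\{\hat y\}\cup\tilde w}(R(\vec x))$ are again uniformly blockwise decomposable, and they are again pp-definable over $\Gammatwo$, so the recursion applies to each of them (fixing $\hat x$ resp.\ $\hat y$ as one endpoint and choosing an arbitrary second variable from $\tilde v$ resp.\ $\tilde w$). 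Continuing until every projection has at most two variables in its scope produces the tree $T$ on vertex set $\tilde x$ containing the edge $\{\hat x,\hat y\}$, and unfolding the recursive identities yields $R(\vec x)=\bigwedge_{\{p,q\}\in E(T)}\pi_{\{p,q\}}(R(\vec x))$.

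The one point that needs care — and the main obstacle — is the applicability of Lemma~\ref{lem:onlycheckprojections} at each recursive step: that lemma is stated for a constraint that is \emph{uniformly blockwise decomposable}, but here I only have uniform blockwise decomposability of the at-most-ternary members of $\coclone{\Gammatwo}$. So I must argue that the relevant objects — the selection matrix $M^{R(\vec x)}_{z_1,z_2}$, its block structure, and the decompositions of its blocks in a partition separating $z_1$ and $z_2$ — are all determined by, and consistent with, the ternary projections $\pi_{\{z_1,z_2,w\}}(R(\vec x))$, which \emph{are} uniformly blockwise decomposable by hypothesis. Concretely: $M^{R(\vec x)}_{z_1,z_2}[a,b]=\emptyset$ iff $M^{\pi_{\{z_1,z_2,w\}}(R(\vec x))}_{z_1,z_2}[a,b]=\emptyset$ for all $w$, giving the proper-block-matrix property and a common block structure; and the block of $M^{R(\vec x)}_{z_1,z_2}$ indexed by $(A,B)$ decomposes in $(V,W)$ with $z_1\in V$, $z_2\in W$ precisely when, for every $w$, the variable $w$ is consistently sorted into $V$ or $W$ by the ternary decomposition of the $(A,B)$-block of $\pi_{\{z_1,z_2,w\}}(R(\vec x))$ — and the algorithmic set-growing argument at the end of the proof of Lemma~\ref{lem:onlycheckprojections} shows such a consistent $(V,W)$ exists because it exists on each ternary projection. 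Having established this compatibility once, the rest of the construction is a routine recursion exactly as in Lemma~\ref{lem:thereisatree}, and the claim follows.
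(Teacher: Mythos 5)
Your high-level plan (top-down recursive splitting as in Lemma~\ref{lem:thereisatree}) correctly identifies the main obstacle: Lemma~\ref{lem:onlycheckprojections} requires the \emph{full} constraint $R(\vec x)$ to be uniformly blockwise decomposable in the chosen pair, whereas the hypothesis only covers the at-most-ternary members of $\coclone{\Gammatwo}$. However, the fix you propose does not close this gap. The emptiness pattern of $M^{R(\vec x)}_{z_1,z_2}$ is indeed inherited from any single projection (an entry is empty iff the corresponding selection of $R$ is empty), so the proper-block-matrix part is fine. The problem is the second half of your claimed equivalence: knowing that, for \emph{each individual} $w$, the $(A,B)$-block of $\pi_{\{z_1,z_2,w\}}(R(\vec x))$ decomposes with $w$ consistently sorted to the $z_1$- or $z_2$-side does \emph{not} imply that the $(A,B)$-block of $R(\vec x)$ itself decomposes across the induced partition $(V,W)$. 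A ternary projection $\pi_{\{z_1,z_2,w\}}$ is blind to correlations between two variables $w_1\in V$ and $w_2\in W$ (e.g.\ a nontrivial conjunct $R_{w_1w_2}(w_1,w_2)$ in the binary pp-definition), which can destroy the product structure of the block while leaving every projection $\pi_{\{z_1,z_2,w\}}$ perfectly decomposable. Ruling this out under the hypothesis is exactly the content that needs proving; the set-growing argument from Lemma~\ref{lem:onlycheckprojections} that you invoke operates on factor structures of the full constraint and cannot be transplanted to a family of ternary shadows without a new argument.

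The paper's proof takes a different, essentially syntactic route that avoids this issue. It inducts on $|\tilde x|$, peeling off one variable $z$ at a time from the projection-free binary pp-definition $F=\bigwedge_{x,y}R_{xy}(x,y)$. The ternary hypothesis is applied not to projections $\pi_{\{x,y,z\}}(R(\vec x))$ but to auxiliary ternary constraints $F'':=R'_{xy}(x,y)\land R_{xz}(x,z)\land R_{yz}(y,z)$, where $R'_{xy}=\pi_{\{x,y\}}(R')\in\Gammatwo$ and $R'$ is defined by the formula with $z$ removed; statement \textbf{($\ast$)} then licenses rewriting $F''$ so that one of $R_{xz},R_{yz}$ becomes trivial. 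Iterating this leaves $z$ incident to at most one nontrivial binary constraint $R_{x^*z}$, so $F=F'\land R_{x^*z}(x^*,z)$ and $z$ can be attached as a leaf to the tree obtained inductively for $F'$. This rewriting step — which exploits that the correlations among the remaining variables are themselves expressed by binary constraints that can be folded into $R'_{xy}$ — is the ingredient your semantic argument is missing. To repair your proof you would need to supply an argument of comparable strength; as written, the central "precisely when" is unjustified.
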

\begin{proof}
    If $R$ has less than three variables, there is nothing to show, so
    we will first consider the case of $|\tilde x| = 3$, showing first the
    following slightly stronger statement \textbf{($\mathbf{\ast}$)}: 
    \begin{enumerate}
        \item[\textbf{($\mathbf{\ast}$)}] Let $R(x,y,z)$ be a ternary
          constraint with constraint relation in $\coclone{\Gammatwo}$, then in the representation (\ref{eq:binaryDecomp}) one of the constraints $R_{xz}$ or $R_{yz}$ is the trivial constraint $R^\triv$.
    \end{enumerate}
    Since we are proving the backward direction of Proposition~\ref{prop:ternary}.1, $R(x,y,z)$ is
    uniformly blockwise decomposable by assumption. Thus, we can apply
    Lemma \ref{lem:onlycheckprojections} to see that $F$ can be rewritten in one of the forms
    \begin{align*}
        F &= R'_{xz}(x,z) \land R'_{xy}(x,y) \text{ or}\\
        F &= R'_{xy}(x,y) \land R'_{yz}(y,z).
    \end{align*}
    Assertion \textbf{($\mathbf{\ast}$)} immediately follows and
    setting $x=\hat x$ and $y=\hat y$ proves Claim~\ref{clm:treelike}
    for $|\tilde x| = 3$. 
 
    Now assume that $|\tilde x| \geq 4$. Choose any variable $z\in
    \tilde x\setminus\{\hat x, \hat y\}$ and consider the formula
    \begin{align*}
        F':= \bigwedge_{x,y\in \tilde x\setminus \{z\}} R_{xy}(x,y).
    \end{align*}
    Then we have by definition that 
    \begin{align*}
        F = F' \land \bigwedge_{x\in \tilde x \setminus \{z\}} R_{xz}(x,z).
    \end{align*}
    We claim that we can rewrite $F$ such that only at most one of the $R_{xz}$ is not the trivial relation $R^\triv$. To see this, assume that there are two different variables $x,y\in \tilde x$ such that $R_{xz}$ and $R_{yz}$ are both nontrivial. Let $R'(\vec x \setminus \{z\})$ be the constraint defined by $F'$ and let $R_{xy}'(x,y) = \pi_{\{x,y\}}R'(\vec x \setminus \{z\})$. Consider the formula $F'':= R'_{xy}(x,y)\land R_{xz}(x,z) \land R_{yz}(y,z)$. Applying \textbf{($\mathbf{\ast}$)}, we get that we can rewrite $F''$ such that one of $R_{xz}$ or $R_{yz}$ is trivial. Substituting this rewrite in for $F''$ in $F$ and iterating the process yields that there is in the end only one non-trivial $R_{xz}$. Let $x^*$ be the only variable for which $R_{x^*z}$ might be non-trivial, then we can assume that 
    \begin{align*}
        F = F' \land R_{x^*, z}(x^*,z).
    \end{align*}
    Since $F'$ has fewer variables than $F$, we get by induction that
    there is a tree $T$ with vertex set $\tilde x\setminus \{z\}$ and
    $\{\hat x, \hat y\}\in E(T)$ such that 
        \begin{align*}
        R'(\vec x\setminus \{z\}) =  \bigwedge_{\{p,q\} \in E(T)}
        \pi_{\{p,q\}}(R'(\vec x\setminus \{z\})).
    \end{align*}
    Adding $z$ as a new leaf connected to $x^*$ gives the desired tree for $R$.
\end{proof}

Using Claim~\ref{clm:treelike}, we now show that $R(\vec x)$ is
uniformly blockwise decomposable. To this end, we fix two variables
$x,y\in \tilde x$ and show that $R(\vec x)$ is uniformly blockwise decomposable in $x,y$.
To see that $M^{R}(x,y)$ is a proper block matrix, observe that $M^{R}_{x,y}$ has the same non-empty entries as $M^{\pi_{\{x,y\}}(R)}_{x,y}$. But $\pi_{\{x,y\}}(R)$ is in $\coclone{\Gammatwo}$ and thus by assumption uniformly blockwise decomposable. It follows that its selection matrix is a proper block matrix which is then also true for~$M^{R}_{x,y}$.

We now apply Claim~\ref{clm:treelike} to $x,y$ and let $T$ be the
resulting tree. Since $\{x,y\}$ is an edge in $T$,
$(V(T),E(T) - \{x,y\})$ consists of two trees $T_x$ and $T_y$,
containing $x$ and $y$, respectively. By setting
$U=V(T_x)\setminus\{x\}$ and $W=V(T_y)\setminus\{y\}$,
Claim~\ref{clm:treelike} implies that $R(\vec x)$ can be written as
\begin{equation}
  \label{eq:9}
        R(\vec x) =  \pi_{U\cup\{x\}}(R(\vec x)) \land
        \pi_{\{x,y\}}(R(\vec x))  \land \pi_{W\cup\{y\}}(R(\vec x)).
\end{equation}
This implies that each block of $M^{R}(x,y)$ is decomposable in
$(U\cup\{x\},W\cup\{y\})$ and hence that $R(\vec x)$ is uniformly blockwise
decomposable.

\subsection{Proof of Proposition~\ref{prop:ternary}.2 (the nonuniform case)}

In this section, we prove Proposition~\ref{prop:ternary} for the case
of blockwise decomposability, so let $\Gamma$ be a
constraint language and $\Gammatwo = \Pi_2(\Gamma)$. %
We will first define the following new property of $\Gammatwo$.

\begin{definition}[Incompatible block structure]\label{def:incompatibleblocks}
    We say that $\Gammatwo$ has an \emph{incompatible block structure} if and only if there are binary relations $R_1, R_2$ in $\Gammatwo$ such that $R_1(x,z)$ has blocks $(A_x, A_z)$ and $(B_x, B_z)$ and $R_2(z,y)$ has blocks $(C_z, C_y)$ and $(D_z, D_y)$ such that $A_z\cap C_z$, $A_z\cap D_z$, $B_z\cap C_z$, and $B_z\cap D_z$ are all non-empty.
\end{definition}

In the remainder of this section we will show that the following are
equivalent, the equivalence between (1) and (2) then establishes Proposition~\ref{prop:ternary}.2:
\begin{itemize}
\item[(1)] $\Gammatwo$ is strongly blockwise decomposable.
\item[(2)] Every ternary relation in $\coclone{\Gammatwo}$ is blockwise decomposable.
\item[(3)] $\Gammatwo$ is blockwise decomposable and has no incompatible block structure.
\end{itemize}
The direction (1)~$\Rightarrow$~(2) is trivial, (2)~$\Rightarrow$~(3)
will be shown in Lemma~\ref{lem:notPtoNotDecomposable}, and (3)~$\Rightarrow$~(1) is stated in Lemma~\ref{lem:PtoDecomposable}.

\begin{lemma}\label{lem:notPtoNotDecomposable}
    If $\Gammatwo$ has an incompatible block structure, then $\coclone{\Gammatwo}$ contains a ternary relation that is not blockwise decomposable.
\end{lemma}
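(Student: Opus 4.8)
The plan is to construct the offending ternary relation explicitly. Let $R_1,R_2\in\Gammatwo$ witness the incompatible block structure, with $R_1(x,z)$ having blocks $(A_x,A_z)$, $(B_x,B_z)$ and $R_2(z,y)$ having blocks $(C_z,C_y)$, $(D_z,D_y)$, where $x,y,z$ are pairwise distinct. Since a conjunction of two $\Gammatwo$-constraints is a $\Gammatwo$-formula, the constraint
\[
R(x,y,z):=R_1(x,z)\wedge R_2(z,y)
\]
defines a ternary relation $R\in\coclone{\Gammatwo}$. The goal is to show that $R$ is not blockwise decomposable in the pair $x,y$, which by Definition~\ref{def:blockwisedecomposable} already implies that $R$ is not blockwise decomposable.

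First I would record the elementary consequences of the hypothesis. As $(A_x,A_z)$ and $(B_x,B_z)$ are distinct blocks of the proper block matrix $M^{R_1}_{x,z}$, their row- and column-index sets are pairwise disjoint, so $A_x\cap B_x=\emptyset$ and $A_z\cap B_z=\emptyset$; likewise $C_z\cap D_z=\emptyset$ and $C_y\cap D_y=\emptyset$. Moreover, for $a\in A_x$ one has $(a,c)\in R_1\iff c\in A_z$, and analogously for $B_x,C_y,D_y$. Now fix representatives $a_A\in A_x$, $a_B\in B_x$, $b_C\in C_y$, $b_D\in D_y$ (blocks have non-empty sides), and, using the incompatibility, fix $c_{AC}\in A_z\cap C_z$, $c_{AD}\in A_z\cap D_z$, $c_{BC}\in B_z\cap C_z$. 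A direct check of the two conjuncts gives $(a_A,b_C,c_{AC}),(a_A,b_D,c_{AD}),(a_B,b_C,c_{BC})\in\sol(R)$, so the entries $M^R_{x,y}[a_A,b_C]$, $M^R_{x,y}[a_A,b_D]$, $M^R_{x,y}[a_B,b_C]$ are all non-empty.

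Then I split into two cases. If $M^R_{x,y}$ is not a proper block matrix, then $R$ is not blockwise decomposable in $x,y$ by definition (equivalently, by Lemma~\ref{lem:2x2blockmatrix}), and we are done. Otherwise $M^R_{x,y}$ is a proper block matrix; chasing rows and columns through the three non-empty entries above (two share the row $a_A$, two share the column $b_C$), using disjointness of the index sets of distinct blocks, places $a_A,a_B$ in one row-set $E_x$ and $b_C,b_D$ in one column-set $E_y$ of a single block $(E_x,E_y)$. If $R$ were blockwise decomposable in $x,y$, the block $R':=R|_{x\in E_x,\,y\in E_y}$ would decompose as $\pi_V(R')\times\pi_W(R')$ for a partition $(V,W)$ of $\{x,y,z\}$ with $x\in V$, $y\in W$, hence either $W=\{y\}$ or $V=\{x\}$. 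If $W=\{y\}$: since $(a_A,b_D,c_{AD})$ and $(a_A,b_C,c_{AC})$ lie in $\sol(R')$, decomposability forces $(a_A,b_C,c_{AD})\in\sol(R')$, hence $(c_{AD},b_C)\in R_2$; but $c_{AD}\in D_z$ gives $c_{AD}\notin C_z$ while $b_C\in C_y$, a contradiction. If $V=\{x\}$: since $(a_B,b_C,c_{BC})$ and $(a_A,b_C,c_{AC})$ lie in $\sol(R')$, decomposability forces $(a_A,b_C,c_{BC})\in\sol(R')$, hence $(a_A,c_{BC})\in R_1$, so $c_{BC}\in A_z$, contradicting $c_{BC}\in B_z$ and $A_z\cap B_z=\emptyset$. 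In both cases the block $(E_x,E_y)$ fails to decompose, so $R$ is not blockwise decomposable. I expect no serious obstacle here; the only point requiring care is reading off from Definition~\ref{def:incompatibleblocks} that $M^{R_1}_{x,z}$ and $M^{R_2}_{z,y}$ are genuinely proper block matrices having $(A_x,A_z),(B_x,B_z)$ and $(C_z,C_y),(D_z,D_y)$ as (distinct, non-empty) blocks, which is what licenses the disjointness and the row/column identities used above.
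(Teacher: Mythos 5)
Your proposal is correct and follows essentially the same route as the paper: it builds the same witness $R(x,y,z)=R_1(x,z)\land R_2(z,y)$, places the relevant non-empty entries of $M^R_{x,y}$ in a single block, and refutes both possible decompositions of that block ($z$ grouped with $x$, or $z$ grouped with $y$) by exactly the contradictions with the block structures of $R_2$ and $R_1$, respectively. The only cosmetic differences are that you use three of the four non-empty intersections instead of all four and phrase the contradiction via explicit tuple-mixing rather than via the projection identity $\pi_{z}R|_{x=a,y=c}=\pi_{z}R|_{x=a}=\pi_{z}R|_{x=a,y=d}$.
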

\begin{proof}
    Let $R_1, R_2$ and the corresponding blocks be chosen as in Definition~\ref{def:incompatibleblocks}. We claim that the constraint 
    \begin{align*}
        R(x,y,z) := R_1(x,z)\land R_2(z,y)
    \end{align*}
    is not blockwise decomposable. To this end, choose values $a\in A_x, b\in B_x, c\in C_y, d\in D_y$. Then we have by construction that
    \begin{align*}
        \pi_{z}R(x,y,z)|_{x=a,y=c} & =A_{z}\cap C_{z}\\
        \pi_{z}R(x,y,z)|_{x=a,y=d} & =A_{z}\cap D_{z}\\
        \pi_{z}R(x,y,z)|_{x=b,y=c} & =B_{z}\cap C_{z}\\
        \pi_{z}R(x,y,z)|_{x=b,y=d} & =B_{z}\cap D_{z}
    \end{align*}
    and all these sets are all non-empty because of the incompatible block structure. Now assume, by way of contradiction, that $R$ is blockwise decomposable. Then $M^R_{x,y}$ is a proper block matrix and the entries $(a,c), (a,d), (b,c), (b,d)$ all lie in the same block $B$. Then $B$ decoposes with respect to $x,y$, because we assumed that $R$ is blockwise decomposable, so 
\begin{align*}
    R|_{(x,y)\in B} & =\pi_{x,z}R|_{(x,y)\in B}\times\pi_{y}R|_{(x,y)\in B}\\
    \text{or}\quad R|_{(x,y)\in B} & =\pi_{x}R|_{(x,y)\in B}\times\pi_{z,y}R|_{(x,y)\in B}
\end{align*}

However, in the first case we have \begin{align*}
    \pi_{z}R|_{x=a,y=c}=\pi_{z}R|_{x=a}=\pi_{z}R|_{x=a,y=d}\ne \emptyset,
\end{align*} so in particular there is an element of $D$ in both $\pi_z R_2(z,y)|_{y=c} = C_z$ and $\pi_z R_2(z,y)|_{y=d} = D_z$ which contradicts the assumption that $(C_z, C_y)$ and $(D_z, D_y)$ are different blocks of $R_2$.
In the second case, we get 
\begin{align*}
    \pi_{z}R|_{x=a,y=c}=\pi_{z}R|_{y=c}=\pi_{z}R|_{x=b,y=c}\ne \emptyset
\end{align*}
which leads to an analogous contradiction to $(A_x, A_z)$ and $(B_x, B_z)$ being blocks of $R_1$. Thus, in both cases we get a contradiction, so $R$ cannot be blockwise decomposable.
\end{proof}

\begin{lemma}\label{lem:PtoDecomposable}
    If $\Gammatwo$ is blockwise decomposable and has no incompatible block structure, then it is strongly blockwise decomposable.
\end{lemma}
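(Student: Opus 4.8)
The plan is to prove the slightly stronger statement that every relation in $\coclone{\Delta}$ is blockwise decomposable, where $\Delta:=\Gammatwo\cup\{S:S\subseteq D\}$ is the individualization of $\Gammatwo$; this implies the lemma, and it is convenient because $\Delta$ still satisfies both hypotheses of the lemma (adding unary relations creates no new binary relations, and unary relations are vacuously blockwise decomposable) and is closed under the selections used below. By Lemma~\ref{lem:closedProjectionSeletion} it suffices to treat a constraint $R(\vec x)=\bigwedge_{p,q\in\tilde x}R_{pq}(p,q)\wedge\bigwedge_{p\in\tilde x}U_{S_p}(p)$ given by a projection-free pp-definition over $\Delta$, with $R_{pq}\in\Gammatwo$ and $S_p\subseteq D$. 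I would argue by induction on $|\tilde x|$. Fixing a pair $x,y\in\tilde x$, the matrix $M^R_{x,y}$ has the same non-empty pattern as $M^{\pi_{\{x,y\}}(R)}_{x,y}$, and $\pi_{\{x,y\}}(R)\in\Gammatwo$ is a proper block matrix by hypothesis, so $M^R_{x,y}$ is a proper block matrix as well; the whole difficulty is then to show that each of its blocks decomposes with $x$ and $y$ in different factors.

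The core is a ternary analysis, which is where ``no incompatible block structure'' comes in. Fix a block $(A,B)$ of $M^R_{x,y}$ and put $R':=R(\vec x)|_{x\in A,y\in B}$, so that $M^{R'}_{x,y}$ has the single block $(A,B)$. Consider first $|\tilde x|=3$ with third variable $z$: then $R_{xy}$ is vacuous on $A\times B$ (because $(A,B)$ is a block), so $\sol(M^{R'}_{x,y}[a,b])=P_a\cap Q_b$ for $a\in A,b\in B$, where $P_a:=\pi_z(R_{xz}|_{x=a})$ and $Q_b:=\pi_z(R_{yz}|_{y=b})$; since $R_{xz},R_{yz}\in\Gammatwo$ are proper block matrices, $P_a$ depends only on the block of $R_{xz}$ containing $a$ and distinct $P_a$ are disjoint, and analogously for $Q_b$. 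If $P_a$ is constant over $A$ then $R'=\pi_{\{x\}}(R')\times\pi_{\{y,z\}}(R')$, and symmetrically if $Q_b$ is constant over $B$. Otherwise pick $a,a'\in A$ with $P_a\ne P_{a'}$ and $b,b'\in B$ with $Q_b\ne Q_{b'}$; then $P_a,P_{a'}$ are the $z$-parts of two distinct blocks of $R_{xz}$, $Q_b,Q_{b'}$ are the $z$-parts of two distinct blocks of $R_{yz}$, and all four sets $P_a\cap Q_b$, $P_a\cap Q_{b'}$, $P_{a'}\cap Q_b$, $P_{a'}\cap Q_{b'}$ are non-empty (again because $(A,B)$ is a block) --- an incompatible block structure, contradicting the hypothesis. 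So the ternary block always decomposes separating $x$ and $y$.

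For $|\tilde x|\ge 4$ I would reduce the relevant ternary projections to the case above. For a third variable $z$, the relation $\pi_{\{x,y,z\}}(R)$ has arity $3<|\tilde x|$, so by the induction hypothesis it is blockwise decomposable, hence by Lemma~\ref{lem:binarize} it equals $\pi_{\{x,y\}}(R)(x,y)\wedge\pi_{\{x,z\}}(R)(x,z)\wedge\pi_{\{y,z\}}(R)(y,z)$ with all factors in $\Gammatwo$; restricting $x\in A$ and $y\in B$ (projection and selection commute here) brings $\pi_{\{x,y,z\}}(R')$ into exactly the shape the ternary analysis needs, and since the blocks of the restricted binary factors are sub-blocks of $\Gammatwo$-blocks, any incompatible block structure extracted from them already occurs in $\Gammatwo$. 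Thus for every $z$ at least one of $\pi_{\{x,z\}}(R')$, $\pi_{\{y,z\}}(R')$ is a Cartesian product. Running the same analysis (via the induction hypothesis, whose hypotheses are met since all the sub-relations in question lie in $\coclone{\Delta}$) on arbitrary triples shows that the graph $H$ on $\tilde x$ in which $p$ and $q$ are adjacent exactly when $\pi_{\{p,q\}}(R')$ is not a Cartesian product has no induced path on three vertices, so $H$ is a disjoint union of cliques; since $\pi_{\{x,y\}}(R')=A\times B$, the vertices $x$ and $y$ are non-adjacent in $H$, hence lie in distinct cliques. It then remains to show that $R'$ is the Cartesian product of its projections onto the cliques of $H$, which in particular separates $x$ from $y$.

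The main obstacle I anticipate is this last step: deducing an actual product decomposition of $R'$ from the pairwise and ternary data. This is precisely the statement of Corollary~\ref{cor:counting-block-factors}.\ref{enum:factors}, but that corollary presupposes strong blockwise decomposability, so it cannot be cited and has to be re-established inside the arity induction --- peeling off a single clique (equivalently, projecting away one variable and adding it back) and applying the induction hypothesis to the resulting smaller relations, while checking that the re-added variable is forced onto the side dictated by $H$. A secondary point requiring care throughout is that the ternary analysis is always applied to projections and restrictions of $R$, not to $R$ itself, which is why the induction is formulated over the whole of $\coclone{\Delta}$ and run on arity, keeping these sub-relations (and the hypotheses of Lemma~\ref{lem:binarize}) within the reach of the induction hypothesis.
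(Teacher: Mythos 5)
Your overall strategy is the one the paper itself uses: reduce to a projection-free binary pp-definition, restrict to a block of $M^R_{x,y}$, form the graph whose edges are the variable pairs whose binary projection of the restriction is not a Cartesian product, prove that this edge relation is transitive by extracting an incompatible block structure from a bad triple, and then decompose across the resulting cliques. Your ternary analysis (constancy of $P_a$ over the block versus an incompatible block structure on $R_{xz},R_{yz}$) is a clean equivalent of the paper's Claim~\ref{claim:trans}.

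The genuine gap is exactly the step you flag as the ``main obstacle'', and it should not be one. For the block restriction $R'$ you already have the representation $R'(\vec x)=\bigwedge_{p,q\in\tilde x}\pi_{\{p,q\}}(R')(p,q)$: each binary projection is contained in the corresponding conjunct of your projection-free definition, so the conjunction of all binary projections is sandwiched between $R'$ and $R'$. Once $H$ is a disjoint union of cliques, every cross-clique conjunct equals $\pi_{\{p\}}(R')\times\pi_{\{q\}}(R')$ and is implied by the unary projections, so $R'=\bigwedge_{K}\bigwedge_{p,q\in K}\pi_{\{p,q\}}(R')$, a conjunction of constraints with pairwise disjoint scopes; then $R'\subseteq\prod_K\pi_K(R')\subseteq\prod_K\bigl(\bigwedge_{p,q\in K}\pi_{\{p,q\}}(R')\bigr)=R'$ gives the product decomposition in one line. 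No re-proof of Corollary~\ref{cor:counting-block-factors}.\ref{enum:factors} and no clique-peeling induction is needed; this is precisely how the paper concludes. Relatedly, your arity induction (introduced only to binarize $\pi_{\{p,q,z\}}(R')$ via Lemma~\ref{lem:binarize}) is avoidable: for transitivity one only needs the trivial containment that any solution of $R'$ restricted to a pair lies in that pair's projection, so if the two binary projections through $z$ have no common $z$-value on some row/column pair, then $\pi_{\{p,q\}}(R')$ already acquires an empty entry and hence a second block. Arguing this way, as the paper does, also spares you the shaky step where you claim that an incompatible block structure of the \emph{projected} binary factors of $R'$ (as opposed to the original constraints $R_{pz}\in\Gammatwo$) can be traced back to $\Gammatwo$ via ``sub-blocks'': projections, unlike row/column selections, can split or create blocks, so that justification does not apply as stated.
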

\begin{proof}
    Let $R(\vec x)$ be in $\coclone{\Gammatwo}$. We will show that $R$
    is blockwise decomposable. Because of
    Lemma~\ref{lem:closedProjectionSeletion}, we may w.l.o.g.~assume
    that $R$ is pp-definable by a $\Gammatwo$-formula without
    projection. Thus, as in the proof of Proposition~\ref{prop:ternary}.1 we can write $R$ as 
    \begin{align*}
        R(\vec x) = \bigwedge_{x,y\in \tilde x} R_{xy}(x,y),
    \end{align*}
    where $R_{xy}\in \Gammatwo$. Now using again the fact that
    $\Gammatwo = \Pi_2(\Gamma) = \Pi_2(\Gammatwo)$ and thus in particular $\pi_{\{x,y\}} R(\vec x)\in \Gammatwo$ for all $x,y\in \tilde x$, we can actually write $R(\vec x)$ as 
    \begin{align*}
    R(\vec x) = \bigwedge_{x,y\in \tilde x} \pi_{\{x,y\}} R(x,y).
    \end{align*}
Fix two variables $u,v$. Since $\Gammatwo$ is blockwise decomposable
by assumption of the lemma, we have that $M^{\pi_{\{u,v\}}(R(\vec
  x))}_{uv}$ and hence $M^R_{uv}$ is a proper block matrix (because
they have the same block structure). We have to show that all of its blocks decompose. So fix a block $B$ of $M^R_{uv}$ and consider the restriction $R^B(\vec x):= R(\vec x)|_{(u,v)\in B}$. As before, we get a representation
    \begin{align}
    R^B(\vec x) = \bigwedge_{x,y\in \tilde x} \pi_{\{x,y\}} R^B(x,y).\label{eq:propertyP}
\end{align}
We now assign a graph $G$ to $R^B$ as follows: vertices are the
variables in $\tilde x$; two variables $x,y$ are connected by an edge
if and only if $M_{xy}^{R^B}$ has more than one block. Since $\Gammatwo$
is blockwise decomposable, we have, again using that $\Gamma_2 = \Pi_2(\Gamma_2)$, that all $M^{R^B}_{xy}$ are proper
block matrices. So the variables $x,y$ that are not connected are
exactly those where $M^{R^B}_{xy}$ has exactly one block. So in
particular, there is no edge between $u$ and $v$. The crucial 
property is that the edge relation is transitive: 

\begin{claim}\label{claim:trans}
    For all $x,y,z\in \tilde x$, if $xz$ and $zy$ are edges in $G$, then $xy$ is also an edge.
\end{claim}
\begin{proof}
    Again, for all $x',y'\in \tilde x$, the matrices $M^{R^B}_{x'y'}$ and $M^{{\pi_{\{x',y'\}}R^B}}_{x'y'}$ have the same blocks. Thus, $x'y'$ is an edge in $G$ if and only if $M^{{\pi_{\{x',y'\}}R^B}}_{x'y'}$ has more than one block.
    
    We will use the following observation throughout the remainder of this proof which follows directly from the fact that we have $\pi_{\{z\}} R^B= \pi_{\{z\}} (\pi_{\{x,z\}}R^B) = \pi_{\{z\}}(\pi_{\{z,y\}}R^B)$.
    \begin{observation}\label{obs:coverz}
        For every element $d\in \pi_{\{z\}}R^B$ there must be a block $(A_x, A_z)$ of $\pi_{\{x,z\}}R^B$ and a block $(C_z, C_y)$ of $\pi_{\{z, y\}}R^B$ that both contain $d$.
    \end{observation}

Let $(A^1_x,A^1_z),\ldots,(A^p_x,A^p_z)$ be the blocks of
$M^{\pi_{\{x,z\}}R^B}_{xz}$ and $(C^1_z,C^1_y),\ldots,(C^q_z,C^q_y)$ be the blocks of
$M^{\pi_{\{z,y\}}R^B}_{zy}$. By assumption of the claim we have $p,q\geq 2$ and
by Observation~\ref{obs:coverz} $\bigcup^p_{\ell=1}A^\ell_z =
\bigcup^q_{\ell=1}C^\ell_z$. We claim that then there are two blocks
$(A^i_x,A^i_z)$ and $(A^j_x,A^j_z)$ such that for all $\ell \in [q]$ we have $A^i_z\cup A^j_z
\not\subseteq C^\ell_z$. To see this, consider two cases: if there is an $A^i_z$ that is not fully contained in any $C^\ell_z$, then we can choose an arbitrary other set $A^j_z$ to satisfy the claim. Otherwise, choose $i, \ell$ arbitrarily such that $A_z^i \subseteq C_z^\ell$. Since $q\ge 2$, there are elements from $D$ that are not in $C_z^\ell$ and thus there is an $A_z^j\nsubseteq C_z^\ell$. But then $A^i_z\cup A^j_z$ has the desired property since $A^i_z$ contains only elements from $C^\ell_z$. So in any case we can choose $i,j\in [p]$ such that for all $\ell\in [q]$ we have that $A^i_z\cup A^j_z
\not\subseteq C^\ell_z$.

It follows that we can fix two distinct
blocks $(C^s_z,C^s_y)$ and $(C^t_z,C^t_y)$ such that $A^i_z\cap
C^s_z\neq \emptyset$ and $A^j_z\cap C^t_z\neq \emptyset$. Since
$\Gammatwo$ does not have an incompatible block structure, it must be
the case that either $A^j_z\cap C^s_z =
\emptyset$ or  $A^i_z\cap C^t_z = \emptyset$. W.l.o.g. assume
the former. It follows that for any $a\in A^j_x$, $b\in C^s_y$ and
every domain element $c\in D$ it is not the case that $(a,c)\in
\pi_{\{x,z\}}R^B$ and $(c,b)\in
\pi_{\{z,y\}}R^B$. Therefore $(a,b)\notin \pi_{\{x,y\}}R^B$ and $M^{\pi_{\{x,y\}} R^B}_{xy}[a,b]=\emptyset$. Since by
the choice of $a$ and $b$ we have $a\in \pi_{\{x\}}R^B$ and $b\in
\pi_{\{y\}}R^B$, we get that $M^{\pi_{\{x,y\}} R^B}_{xy}$ has
non-empty entries in row $a$ and column $b$ but not at their intersection, implying that the matrix contains at least
two blocks. This finishes the proof of Claim~\ref{claim:trans}.
\end{proof}
It remains to prove that $R^B(\vec x)$ is decomposable w.r.t.~some partition $(U,V)$
with $u\in U$ and $v\in V$. To this end, we let $U$ be the connected
component of $u$ in the graph $G$ and $V:=\tilde x\setminus U$. Note
that $v\in V$ because $uv$ is not an edge and the edge relation is
transitive. For every $x\in U$ and $y\in V$ the matrix
$M^{\pi_{\{x,y\}} R^B}_{xy}$ has exactly one block and therefore
$\pi_{\{x,y\}} R^B = \pi_{\{x\}} R^B \times \pi_{\{y\}} R^B$. This
implies that 
    \begin{align}
    R^B(\vec x) &= \bigwedge_{x,y\in \tilde x} \pi_{\{x,y\}} R^B(x,y)\\
                &= \bigwedge_{x,y\in U} \pi_{\{x,y\}} R^B(x,y)\land
                  \bigwedge_{x,y\in V} \pi_{\{x,y\}} R^B(x,y),
     \end{align}
proving that $R^B(\vec x)$ is decomposable w.r.t. $(U,V)$. 
\end{proof}

\subsection{A separating example}

We have established that strong blockwise decomposability as well
as strong uniform blockwise decomposability are decidable. It follows
that there is an algorithm that decides for a given constraint language $\Gamma$
if either
\begin{itemize}
\item 
every CSP($\Gamma$) instance can be encoded into a polynomial-size ODD or
\item
every CSP($\Gamma$) instance can be encoded into a polynomial-size FDD,
but some
CSP($\Gamma$) instances require exponential-size ODDs (and structured
DNNFs) or
\item
there are CSP($\Gamma$) instances that require exponential-size FDDs
(and DNNFs).
\end{itemize}
We have seen that there are constraint languages falling in the first
and third category. Furthermore, there is no \emph{Boolean} constraint
language falling in the second category. However, in
the non-Boolean case there are constraint languages with this property. 
To see this, we utilize our new criterion for strong blockwise
decomposability: A constraint language is strongly blockwise
decomposable if and only if every binary relation in $\Pi_2(\Gamma)$ is
blockwise decomposable (which is the same as being \emph{rectangular} \cite{DyerR13}) and there are no two relations in $\Pi_2(\Gamma)$
with an incompatible block structure. Lets come back to our running
example $\Gamma=\{R\}$ from Section~\ref{sec:basic-properties}, see
Figure~\ref{fig:bin-projections}. We have already observed in Example~\ref{exa:separateNotions} that
$R$ is not uniformly blockwise decomposable and hence $\Gamma$ is not
strongly uniformly blockwise decomposable. Moreover, we have computed $\Pi_2(\Gamma)=\{D,R_{\text{triv}}, R_=, R', R'', R'''\}$ in
Example~\ref{exa:projections2}. By inspecting these relations (see Figure~\ref{fig:no-incompatible-blocks}) it follows that they are
all blockwise decomposable and that no two relations have pairs of incompatible
blocks as stated in Definition~\ref{def:incompatibleblocks}. It
follows by Lemma~\ref{lem:PtoDecomposable} that $\Gamma$ is strongly
blockwise decomposable and thus $\Gamma$ serves as a separating example
of our two central notions. This leads to the following theorem, which
should be contrasted with the Boolean case where both notions collapse
(Theorem~\ref{thm:boolean}).

    \begin{figure}
      \centering
      \begin{tikzpicture}

        \node[minimum size = 4mm, inner sep=0pt, circle] (1d) at (1,0*0.5) {$d$};
        \node[minimum size = 4mm, inner sep=0pt, circle] (1c) at (1,1*0.5) {$c$};
        \node[minimum size = 4mm, inner sep=0pt, circle] (1b) at (1,2*0.5) {$b$};
        \node[minimum size = 4mm, inner sep=0pt, circle] (1a) at (1,3*0.5) {$a$};

        \node[minimum size = 4mm, inner sep=0pt, circle] (2d) at (2,0*0.5) {$d$};
        \node[minimum size = 4mm, inner sep=0pt, circle] (2c) at (2,1*0.5) {$c$};
        \node[minimum size = 4mm, inner sep=0pt, circle] (2b) at (2,2*0.5) {$b$};
        \node[minimum size = 4mm, inner sep=0pt, circle] (2a) at (2,3*0.5) {$a$};

        \draw[-,semithick] (1a) -- (2a);
        \draw[-,semithick] (1b) -- (2b);
        \draw[-,semithick] (1c) -- (2c);
        \draw[-,semithick] (1d) -- (2d);

        \draw[-,semithick] (1c) -- (2d);
        \draw[-,semithick] (1d) -- (2c);

        \node[minimum size = 4mm,inner sep=0pt, circle] (3d) at (3,0*0.5) {$d$};
        \node[minimum size = 4mm,inner sep=0pt, circle] (3c) at (3,1*0.5) {$c$};
        \node[minimum size = 4mm,inner sep=0pt, circle] (3b) at (3,2*0.5) {$b$};
        \node[minimum size = 4mm,inner sep=0pt, circle] (3a) at (3,3*0.5) {$a$};

        \node[minimum size = 4mm,inner sep=0pt, circle] (4d) at (4,0*0.5) {$d$};
        \node[minimum size = 4mm,inner sep=0pt, circle] (4c) at (4,1*0.5) {$c$};
        \node[minimum size = 4mm,inner sep=0pt, circle] (4b) at (4,2*0.5) {$b$};
        \node[minimum size = 4mm,inner sep=0pt, circle] (4a) at (4,3*0.5) {$a$};

        \draw[-,semithick] (3a) -- (4a);
        \draw[-,semithick] (3b) -- (4b);
        \draw[-,semithick] (3c) -- (4c);
        \draw[-,semithick] (3d) -- (4d);

        \draw[-,semithick] (3a) -- (4b);
        \draw[-,semithick] (3b) -- (4a);

        \node[minimum size = 4mm,inner sep=0pt, circle] (5d) at (5,0*0.5) {$d$};
        \node[minimum size = 4mm,inner sep=0pt, circle] (5c) at (5,1*0.5) {$c$};
        \node[minimum size = 4mm,inner sep=0pt, circle] (5b) at (5,2*0.5) {$b$};
        \node[minimum size = 4mm,inner sep=0pt, circle] (5a) at (5,3*0.5) {$a$};

        \node[minimum size = 4mm,inner sep=0pt, circle] (6d) at (6,0*0.5) {$d$};
        \node[minimum size = 4mm,inner sep=0pt, circle] (6c) at (6,1*0.5) {$c$};
        \node[minimum size = 4mm,inner sep=0pt, circle] (6b) at (6,2*0.5) {$b$};
        \node[minimum size = 4mm,inner sep=0pt, circle] (6a) at (6,3*0.5) {$a$};

        \draw[-,semithick] (5a) -- (6a);
        \draw[-,semithick] (5b) -- (6b);
        \draw[-,semithick] (5c) -- (6c);
        \draw[-,semithick] (5d) -- (6d);

        \draw[-,semithick] (5a) -- (6b);
        \draw[-,semithick] (5b) -- (6a);
        \draw[-,semithick] (5c) -- (6d);
        \draw[-,semithick] (5d) -- (6c);

        \node[minimum size = 4mm,inner sep=0pt, circle] (7d) at (7,0*0.5) {$d$};
        \node[minimum size = 4mm,inner sep=0pt, circle] (7c) at (7,1*0.5) {$c$};
        \node[minimum size = 4mm,inner sep=0pt, circle] (7b) at (7,2*0.5) {$b$};
        \node[minimum size = 4mm,inner sep=0pt, circle] (7a) at (7,3*0.5) {$a$};

        \node[minimum size = 4mm,inner sep=0pt, circle] (8d) at (8,0*0.5) {$d$};
        \node[minimum size = 4mm,inner sep=0pt, circle] (8c) at (8,1*0.5) {$c$};
        \node[minimum size = 4mm,inner sep=0pt, circle] (8b) at (8,2*0.5) {$b$};
        \node[minimum size = 4mm,inner sep=0pt, circle] (8a) at (8,3*0.5) {$a$};

        \draw[-,semithick] (7a) -- (8a);
        \draw[-,semithick] (7b) -- (8b);
        \draw[-,semithick] (7c) -- (8c);
        \draw[-,semithick] (7d) -- (8d);

        \node[minimum size = 4mm,inner sep=0pt, circle] (9d) at (9,0*0.5) {$d$};
        \node[minimum size = 4mm,inner sep=0pt, circle] (9c) at (9,1*0.5) {$c$};
        \node[minimum size = 4mm,inner sep=0pt, circle] (9b) at (9,2*0.5) {$b$};
        \node[minimum size = 4mm,inner sep=0pt, circle] (9a) at (9,3*0.5) {$a$};

        \node[minimum size = 4mm,inner sep=0pt, circle] (10d) at (10,0*0.5) {$d$};
        \node[minimum size = 4mm,inner sep=0pt, circle] (10c) at (10,1*0.5) {$c$};
        \node[minimum size = 4mm,inner sep=0pt, circle] (10b) at (10,2*0.5) {$b$};
        \node[minimum size = 4mm,inner sep=0pt, circle] (10a) at (10,3*0.5) {$a$};

        \draw[-,semithick] (9a) -- (10a);
        \draw[-,semithick] (9b) -- (10a);
        \draw[-,semithick] (9c) -- (10a);
        \draw[-,semithick] (9d) -- (10a);

        \draw[-,semithick] (9a) -- (10b);
        \draw[-,semithick] (9b) -- (10b);
        \draw[-,semithick] (9c) -- (10b);
        \draw[-,semithick] (9d) -- (10b);

        \draw[-,semithick] (9a) -- (10c);
        \draw[-,semithick] (9b) -- (10c);
        \draw[-,semithick] (9c) -- (10c);
        \draw[-,semithick] (9d) -- (10c);

        \draw[-,semithick] (9a) -- (10d);
        \draw[-,semithick] (9b) -- (10d);
        \draw[-,semithick] (9c) -- (10d);
        \draw[-,semithick] (9d) -- (10d);
        
        \node at (1.5,4*0.5) {$R'$};
        \node at (3.5,4*0.5) {$R''$};
        \node at (5.5,4*0.5) {$R'''$};
        \node at (7.5,4*0.5) {$R_=$};
        \node at (9.5,4*0.5) {$R_{\text{triv}}$};

        \draw[rectangle, rounded corners] (1a.north west)+(-.2mm,.5mm) rectangle (1a.south east)+(.3mm,-.5mm);
        \draw[rectangle, rounded corners] (1b.north west)+(-.2mm,.5mm) rectangle (1b.south east)+(.3mm,-.5mm);
        \draw[rectangle, rounded corners] (1c.north west)+(-.2mm,.5mm) rectangle (1d.south east)+(.3mm,-.5mm);
        \draw[rectangle, rounded corners] (2a.north west)+(-.2mm,.5mm) rectangle (2a.south east)+(.3mm,-.5mm);
        \draw[rectangle, rounded corners] (2b.north west)+(-.2mm,.5mm) rectangle (2b.south east)+(.3mm,-.5mm);
        \draw[rectangle, rounded corners] (2c.north west)+(-.2mm,.5mm) rectangle (2d.south east)+(.3mm,-.5mm);

        \draw[rectangle, rounded corners] (3a.north west)+(-.2mm,.5mm) rectangle (3b.south east)+(.3mm,-.5mm);
        \draw[rectangle, rounded corners] (3c.north west)+(-.2mm,.5mm) rectangle (3c.south east)+(.3mm,-.5mm);
        \draw[rectangle, rounded corners] (3d.north west)+(-.2mm,.5mm) rectangle (3d.south east)+(.3mm,-.5mm);
        \draw[rectangle, rounded corners] (4a.north west)+(-.2mm,.5mm) rectangle (4b.south east)+(.3mm,-.5mm);
        \draw[rectangle, rounded corners] (4c.north west)+(-.2mm,.5mm) rectangle (4c.south east)+(.3mm,-.5mm);
        \draw[rectangle, rounded corners] (4d.north west)+(-.2mm,.5mm) rectangle (4d.south east)+(.3mm,-.5mm);

        \draw[rectangle, rounded corners] (5a.north west)+(-.2mm,.5mm) rectangle (5b.south east)+(.3mm,-.5mm);
        \draw[rectangle, rounded corners] (5c.north west)+(-.2mm,.5mm) rectangle (5d.south east)+(.3mm,-.5mm);
        \draw[rectangle, rounded corners] (6a.north west)+(-.2mm,.5mm) rectangle (6b.south east)+(.3mm,-.5mm);
        \draw[rectangle, rounded corners] (6c.north west)+(-.2mm,.5mm) rectangle (6d.south east)+(.3mm,-.5mm);

        \draw[rectangle, rounded corners] (7a.north west)+(-.2mm,.5mm) rectangle (7a.south east)+(.3mm,-.5mm);        
        \draw[rectangle, rounded corners] (7b.north west)+(-.2mm,.5mm) rectangle (7b.south east)+(.3mm,-.5mm);        
        \draw[rectangle, rounded corners] (7c.north west)+(-.2mm,.5mm) rectangle (7c.south east)+(.3mm,-.5mm);        
        \draw[rectangle, rounded corners] (7d.north west)+(-.2mm,.5mm) rectangle (7d.south east)+(.3mm,-.5mm);        
        \draw[rectangle, rounded corners] (8a.north west)+(-.2mm,.5mm) rectangle (8a.south east)+(.3mm,-.5mm);        
        \draw[rectangle, rounded corners] (8b.north west)+(-.2mm,.5mm) rectangle (8b.south east)+(.3mm,-.5mm);        
        \draw[rectangle, rounded corners] (8c.north west)+(-.2mm,.5mm) rectangle (8c.south east)+(.3mm,-.5mm);        
        \draw[rectangle, rounded corners] (8d.north west)+(-.2mm,.5mm) rectangle (8d.south east)+(.3mm,-.5mm);        
        
        \draw[rectangle, rounded corners] (9a.north west)+(-.2mm,.5mm) rectangle (9d.south east)+(.3mm,-.5mm);                
        \draw[rectangle, rounded corners] (10a.north west)+(-.2mm,.5mm) rectangle (10d.south east)+(.3mm,-.5mm);                

      \end{tikzpicture}
      \caption{All binary relations in $\Pi_2(R)$ and their compatible
        block structure.
        (Example~\ref{exa:projections2})}
      \label{fig:no-incompatible-blocks}
    \end{figure}
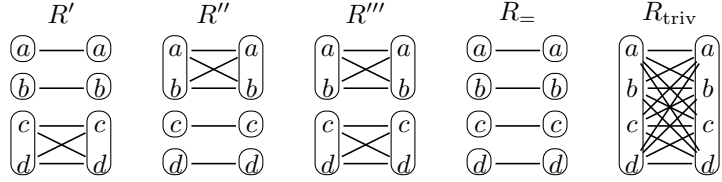

\begin{theorem}\label{thm:separation}
  There is a constraint language $\Gamma$ over a 4-element domain that is strongly
  blockwise decomposable, but not strongly uniformly
  blockwise decomposable. Thus, every \textup{CSP($\Gamma$)} instance can be
  decided by a polynomial-size \textup{FDD}, but there are \textup{CSP($\Gamma$)}
  instances that require structured \textup{DNNF}s (and \textup{ODD}s) of exponential size.
\end{theorem}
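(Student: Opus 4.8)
The plan is to take $\Gamma=\{R\}$, where $R$ is the $4$-ary relation over $D=\{a,b,c,d\}$ from Example~\ref{exa:separateNotions}, and to verify the two halves of the statement for this $\Gamma$. The easy half is that $\Gamma$ is \emph{not} strongly uniformly blockwise decomposable: Example~\ref{exa:separateNotions} already shows that $R\in\coclone{\Gamma}$ is blockwise but not uniformly blockwise decomposable (its two blocks in the pair $x,y$ decompose only over the incompatible partitions $\{x,u\},\{y,v\}$ and $\{x,v\},\{y,u\}$). Hence by definition $\Gamma$ is not strongly uniformly blockwise decomposable, and part~\ref{thm:ODD:lower} of Theorem~\ref{thm:ODD} then yields a family of CSP$(\Gamma)$ instances with no subexponential structured DNNF, in particular no subexponential ODD.

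The work is in showing that $\Gamma$ \emph{is} strongly blockwise decomposable; the polynomial-size FDD claim then follows from part~\ref{thm:FDD:upper} of Theorem~\ref{thm:FDD}. I would use the finite characterization developed in Section~\ref{sct:decidability}: setting $\Gammatwo:=\Pi_2(\Gamma)$, Lemma~\ref{lem:PtoDecomposable} says that $\Gammatwo$ is strongly blockwise decomposable as soon as every relation of $\Gammatwo$ is blockwise decomposable and $\Gammatwo$ has no incompatible block structure in the sense of Definition~\ref{def:incompatibleblocks}. To pass from $\Gammatwo$ back to $\Gamma$ one needs $\coclone{\Gamma}=\coclone{\Gammatwo}$, which is \emph{not} automatic---this is exactly the phenomenon of Example~\ref{exa:projectionArgh}---but for our $R$ it was already established in Example~\ref{exa:projections2}, where $\coclone{\Gamma}=\coclone{\{R',R''\}}$ with $R',R''\in\Pi_2(\Gamma)$, so $\coclone{\Gamma}\subseteq\coclone{\Gammatwo}\subseteq\coclone{\Gamma}$. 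It thus remains to check the two finite conditions on the six relations $\Pi_2(\Gamma)=\{D,R_{\text{triv}},R_=,R',R'',R'''\}$ computed in Example~\ref{exa:projections2}.

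For the first condition: every relation of $\Pi_2(\Gamma)$ has arity at most two, and in a proper block matrix every block consists entirely of non-empty entries, so for a binary relation each block relation is the full product of its two coordinate-projections and hence decomposes; thus blockwise decomposability of such a relation is equivalent to its selection matrix being a proper block matrix, i.e.\ to the relation being rectangular, and inspecting the block structures displayed in Figure~\ref{fig:no-incompatible-blocks} settles this for each of the six relations. For the second condition, the key point is a short counting argument. Suppose $R_1(x,z),R_2(z,y)\in\Gammatwo$ gave an incompatible block structure with \emph{distinct} blocks $(A_x,A_z),(B_x,B_z)$ of $R_1$ and \emph{distinct} blocks $(C_z,C_y),(D_z,D_y)$ of $R_2$. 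Since the second-coordinate parts of distinct blocks of a binary relation are disjoint, $A_z\cap B_z=\emptyset$ and $C_z\cap D_z=\emptyset$; and since $A_z$ meets both disjoint sets $C_z,D_z$ we get $|A_z|\ge 2$, likewise $|B_z|\ge 2$, so over the $4$-element domain $\{A_z,B_z\}$ and $\{C_z,D_z\}$ are both partitions of $D$ into two pairs, with $A_z$ crossing the partition $\{C_z,D_z\}$. But among the relations of $\Pi_2(\Gamma)$ only $R'''$ has two size-$\ge 2$ blocks, with second-coordinate parts $\{a,b\}$ and $\{c,d\}$; hence we would be forced to take $R_1=R_2=R'''$ with $\{A_z,B_z\}=\{C_z,D_z\}=\{\{a,b\},\{c,d\}\}$, which does not cross and therefore has $A_z\cap D_z=\emptyset$, a contradiction. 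So $\Gammatwo$ has no incompatible block structure.

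Combining these, Lemma~\ref{lem:PtoDecomposable} gives that $\Gammatwo$ is strongly blockwise decomposable, and since $\coclone{\Gamma}=\coclone{\Gammatwo}$ so is $\Gamma$; the FDD consequence is then part~\ref{thm:FDD:upper} of Theorem~\ref{thm:FDD}. I do not anticipate a genuine obstacle: once the decidability machinery of Section~\ref{sct:decidability} is in place the entire argument is a bounded verification. The one place where care is needed is the hypothesis $\coclone{\Gamma}=\coclone{\Gammatwo}$, which must be checked rather than assumed (cf.\ Example~\ref{exa:projectionArgh}), and which for this particular $R$ is supplied by Example~\ref{exa:projections2}.
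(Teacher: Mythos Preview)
Your proposal is correct and follows essentially the same route as the paper: take $\Gamma=\{R\}$ from Example~\ref{exa:separateNotions}, invoke that example for the failure of uniform blockwise decomposability, and for strong blockwise decomposability apply Lemma~\ref{lem:PtoDecomposable} to $\Pi_2(\Gamma)$ as computed in Example~\ref{exa:projections2}. You are in fact more explicit than the paper on two points: you flag that $\coclone{\Gamma}=\coclone{\Gammatwo}$ must be checked separately (the paper uses this implicitly via Example~\ref{exa:projections2}), and you replace the paper's ``by inspecting Figure~\ref{fig:no-incompatible-blocks}'' with a clean pigeonhole argument showing that an incompatible block structure would force $|A_z|,|B_z|,|C_z|,|D_z|\ge 2$, hence $R_1=R_2=R'''$, which is self-contradictory.
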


\section{Conclusion}\label{sct:conclusion}

We have seen that there is a dichotomy for compiling systems of constraints into DNNF based on the constraint languages. It turns out that the constraint languages that allow efficient compilation are rather restrictive, in the Boolean setting they consist essentially only of equality and disequality. From a practical perspective, our results are thus largely negative since interesting settings will most likely lie outside the tractable cases we have identified. 

Within the polynomially compilable constraint languages we have identified and
separated two categories, depending on whether they guarantee polynomial-size
structured representations. Moreover, both properties are decidable.

A few questions remain open. The first is to get a better grasp on the
efficiently compilable constraint languages. Is there is simpler
combinatorial description, or is there an algebraic characterization
using polymorphisms? Is there a simpler way of testing strong
(uniform) blockwise decomposability that avoids
Theorem~\ref{thm:computableCocloneContainment}? What is the exact
complexity?

\bibliographystyle{plain}
\bibliography{Paper}

\begin{thebibliography}{10}

\bibitem{AlonKV20}
Noga Alon, Mrinal Kumar, and Ben~Lee Volk.
\newblock Unbalancing sets and an almost quadratic lower bound for
  syntactically multilinear arithmetic circuits.
\newblock {\em Comb.}, 40(2):149--178, 2020.

\bibitem{AmarilliBJM17}
Antoine Amarilli, Pierre Bourhis, Louis Jachiet, and Stefan Mengel.
\newblock A circuit-based approach to efficient enumeration.
\newblock In Ioannis Chatzigiannakis, Piotr Indyk, Fabian Kuhn, and Anca
  Muscholl, editors, {\em 44th International Colloquium on Automata, Languages,
  and Programming, {ICALP} 2017, July 10-14, 2017, Warsaw, Poland}, volume~80
  of {\em LIPIcs}, pages 111:1--111:15. Schloss Dagstuhl - Leibniz-Zentrum
  f{\"{u}}r Informatik, 2017.

\bibitem{AmarilliCMS20}
Antoine Amarilli, Florent Capelli, Mika{\"{e}}l Monet, and Pierre Senellart.
\newblock Connecting knowledge compilation classes and width parameters.
\newblock {\em Theory Comput. Syst.}, 64(5):861--914, 2020.

\bibitem{AmilhastreFNP14}
J{\'{e}}r{\^{o}}me Amilhastre, H{\'{e}}l{\`{e}}ne Fargier, Alexandre Niveau,
  and C{\'{e}}dric Pralet.
\newblock Compiling {CSP}s: {A} complexity map of (non-deterministic)
  multivalued decision diagrams.
\newblock {\em Int. J. Artif. Intell. Tools}, 23(4), 2014.

\bibitem{DBLP:conf/stacs/BerkholzMW24}
Christoph Berkholz, Stefan Mengel, and Hermann Wilhelm.
\newblock A characterization of efficiently compilable constraint languages.
\newblock In Olaf Beyersdorff, Mamadou~Moustapha Kant{\'{e}}, Orna Kupferman,
  and Daniel Lokshtanov, editors, {\em 41st International Symposium on
  Theoretical Aspects of Computer Science, {STACS} 2024, March 12-14, 2024,
  Clermont-Ferrand, France}, volume 289 of {\em LIPIcs}, pages 11:1--11:19.
  Schloss Dagstuhl - Leibniz-Zentrum f{\"{u}}r Informatik, 2024.

\bibitem{BerkholzV23}
Christoph Berkholz and Harry Vinall{-}Smeeth.
\newblock A dichotomy for succinct representations of homomorphisms.
\newblock In Kousha Etessami, Uriel Feige, and Gabriele Puppis, editors, {\em
  50th International Colloquium on Automata, Languages, and Programming,
  {ICALP} 2023, July 10-14, 2023, Paderborn, Germany}, volume 261 of {\em
  LIPIcs}, pages 113:1--113:19. Schloss Dagstuhl - Leibniz-Zentrum f{\"{u}}r
  Informatik, 2023.

\bibitem{BovaCMS14}
Simone Bova, Florent Capelli, Stefan Mengel, and Friedrich Slivovsky.
\newblock Expander {CNF}s have exponential {DNNF} size.
\newblock {\em CoRR}, abs/1411.1995, 2014.

\bibitem{BovaCMS16}
Simone Bova, Florent Capelli, Stefan Mengel, and Friedrich Slivovsky.
\newblock Knowledge compilation meets communication complexity.
\newblock In Subbarao Kambhampati, editor, {\em Proceedings of the Twenty-Fifth
  International Joint Conference on Artificial Intelligence, {IJCAI} 2016, New
  York, NY, USA, 9-15 July 2016}, pages 1008--1014. {IJCAI/AAAI} Press, 2016.

\bibitem{Bryant86}
Randal~E. Bryant.
\newblock Graph-based algorithms for boolean function manipulation.
\newblock {\em {IEEE} Trans. Computers}, 35(8):677--691, 1986.

\bibitem{Bulatov13}
Andrei~A. Bulatov.
\newblock The complexity of the counting constraint satisfaction problem.
\newblock {\em J. {ACM}}, 60(5):34:1--34:41, 2013.

\bibitem{Bulatov17}
Andrei~A. Bulatov.
\newblock A dichotomy theorem for nonuniform {CSP}s.
\newblock In Chris Umans, editor, {\em 58th {IEEE} Annual Symposium on
  Foundations of Computer Science, {FOCS} 2017, Berkeley, CA, USA, October
  15-17, 2017}, pages 319--330. {IEEE} Computer Society, 2017.

\bibitem{DBLP:journals/jcss/BulatovDGM12}
Andrei~A. Bulatov, V{\'{\i}}ctor Dalmau, Martin Grohe, and D{\'{a}}niel Marx.
\newblock Enumerating homomorphisms.
\newblock {\em J. Comput. Syst. Sci.}, 78(2):638--650, 2012.

\bibitem{DBLP:journals/jacm/CaiC17}
Jin{-}Yi Cai and Xi~Chen.
\newblock Complexity of counting {CSP} with complex weights.
\newblock {\em J. {ACM}}, 64(3):19:1--19:39, 2017.

\bibitem{Capelli17}
Florent Capelli.
\newblock Understanding the complexity of {\#}{SAT} using knowledge
  compilation.
\newblock In {\em 32nd Annual {ACM/IEEE} Symposium on Logic in Computer
  Science, {LICS} 2017, Reykjavik, Iceland, June 20-23, 2017}, pages 1--10.
  {IEEE} Computer Society, 2017.

\bibitem{CapelliM19}
Florent Capelli and Stefan Mengel.
\newblock {Tractable QBF by Knowledge Compilation}.
\newblock In Rolf Niedermeier and Christophe Paul, editors, {\em 36th
  International Symposium on Theoretical Aspects of Computer Science (STACS
  2019)}, volume 126 of {\em Leibniz International Proceedings in Informatics
  (LIPIcs)}, pages 18:1--18:16, Dagstuhl, Germany, 2019. Schloss
  Dagstuhl--Leibniz-Zentrum fuer Informatik.

\bibitem{CarbonnelRZ22}
Cl{\'{e}}ment Carbonnel, Miguel Romero, and Stanislav Zivn{\'{y}}.
\newblock The complexity of general-valued constraint satisfaction problems
  seen from the other side.
\newblock {\em {SIAM} J. Comput.}, 51(1):19--69, 2022.

\bibitem{ChoiD13}
Arthur Choi and Adnan Darwiche.
\newblock Dynamic minimization of sentential decision diagrams.
\newblock In Marie desJardins and Michael~L. Littman, editors, {\em Proceedings
  of the Twenty-Seventh {AAAI} Conference on Artificial Intelligence, July
  14-18, 2013, Bellevue, Washington, {USA}}, pages 187--194. {AAAI} Press,
  2013.

\bibitem{Creignou95}
Nadia Creignou.
\newblock A dichotomy theorem for maximum generalized satisfiability problems.
\newblock {\em J. Comput. Syst. Sci.}, 51(3):511--522, 1995.

\bibitem{CreignouH96}
Nadia Creignou and Miki Hermann.
\newblock Complexity of generalized satisfiability counting problems.
\newblock {\em Inf. Comput.}, 125(1):1--12, 1996.

\bibitem{CreignouKS2001}
Nadia Creignou, Sanjeev Khanna, and Madhu Sudan.
\newblock {\em Complexity classifications of boolean constraint satisfaction
  problems}.
\newblock SIAM, 2001.

\bibitem{CreignouOS11}
Nadia Creignou, Fr{\'{e}}d{\'{e}}ric Olive, and Johannes Schmidt.
\newblock Enumerating all solutions of a boolean {CSP} by non-decreasing
  weight.
\newblock In Karem~A. Sakallah and Laurent Simon, editors, {\em Theory and
  Applications of Satisfiability Testing - {SAT} 2011 - 14th International
  Conference, {SAT} 2011, Ann Arbor, MI, USA, June 19-22, 2011. Proceedings},
  volume 6695 of {\em Lecture Notes in Computer Science}, pages 120--133.
  Springer, 2011.

\bibitem{Dalmau00}
V\'{i}ctor Dalmau.
\newblock {\em Computational Complexity of Problems over Generalized Formulas}.
\newblock PhD thesis, Universitat Polit\'{e}cnica de Catalunya, 2000.

\bibitem{DalmauJ04}
V{\'{\i}}ctor Dalmau and Peter Jonsson.
\newblock The complexity of counting homomorphisms seen from the other side.
\newblock {\em Theor. Comput. Sci.}, 329(1-3):315--323, 2004.

\bibitem{Darwiche01}
Adnan Darwiche.
\newblock Decomposable negation normal form.
\newblock {\em J. {ACM}}, 48(4):608--647, 2001.

\bibitem{Darwiche04}
Adnan Darwiche.
\newblock New advances in compiling {CNF} into decomposable negation normal
  form.
\newblock In Ram{\'{o}}n~L{\'{o}}pez de~M{\'{a}}ntaras and Lorenza Saitta,
  editors, {\em Proceedings of the 16th Eureopean Conference on Artificial
  Intelligence, ECAI'2004, including Prestigious Applicants of Intelligent
  Systems, {PAIS} 2004, Valencia, Spain, August 22-27, 2004}, pages 328--332.
  {IOS} Press, 2004.

\bibitem{DarwicheM02}
Adnan Darwiche and Pierre Marquis.
\newblock A knowledge compilation map.
\newblock {\em J. Artif. Intell. Res.}, 17:229--264, 2002.

\bibitem{BroeckS17}
Guy~Van den Broeck and Dan Suciu.
\newblock Query processing on probabilistic data: {A} survey.
\newblock {\em Found. Trends Databases}, 7(3-4):197--341, 2017.

\bibitem{DyerR13}
Martin~E. Dyer and David Richerby.
\newblock An effective dichotomy for the counting constraint satisfaction
  problem.
\newblock {\em {SIAM} J. Comput.}, 42(3):1245--1274, 2013.

\bibitem{FargierM06}
H{\'{e}}l{\`{e}}ne Fargier and Pierre Marquis.
\newblock On the use of partially ordered decision graphs in knowledge
  compilation and quantified boolean formulae.
\newblock In {\em Proceedings, The Twenty-First National Conference on
  Artificial Intelligence and the Eighteenth Innovative Applications of
  Artificial Intelligence Conference, July 16-20, 2006, Boston, Massachusetts,
  {USA}}, pages 42--47. {AAAI} Press, 2006.

\bibitem{Grohe2007}
Martin Grohe.
\newblock The complexity of homomorphism and constraint satisfaction problems
  seen from the other side.
\newblock {\em J. ACM}, 54(1), mar 2007.

\bibitem{Jordan1869}
Camille Jordan.
\newblock Sur les assemblages de lignes.
\newblock {\em Journal f\"{u}r die reine und angewandte Mathematik},
  70:185--190, 1869.

\bibitem{KhannaSTW00}
Sanjeev Khanna, Madhu Sudan, Luca Trevisan, and David~P. Williamson.
\newblock The approximability of constraint satisfaction problems.
\newblock {\em {SIAM} J. Comput.}, 30(6):1863--1920, 2000.

\bibitem{KoricheLMT15}
Fr{\'{e}}d{\'{e}}ric Koriche, Jean{-}Marie Lagniez, Pierre Marquis, and Samuel
  Thomas.
\newblock Compiling constraint networks into multivalued decomposable decision
  graphs.
\newblock In Qiang Yang and Michael Wooldridge, editors, {\em Proceedings of
  the Twenty-Fourth International Joint Conference on Artificial Intelligence,
  {IJCAI} 2015, Buenos Aires, Argentina, July 25-31, 2015}, pages 332--338.
  {AAAI} Press, 2015.

\bibitem{KushilevitzN97}
Eyal Kushilevitz and Noam Nisan.
\newblock {\em Communication complexity}.
\newblock Cambridge University Press, 1997.

\bibitem{LagniezM17}
Jean{-}Marie Lagniez and Pierre Marquis.
\newblock An improved decision-dnnf compiler.
\newblock In Carles Sierra, editor, {\em Proceedings of the Twenty-Sixth
  International Joint Conference on Artificial Intelligence, {IJCAI} 2017,
  Melbourne, Australia, August 19-25, 2017}, pages 667--673. ijcai.org, 2017.

\bibitem{MalodP08}
Guillaume Malod and Natacha Portier.
\newblock Characterizing valiant's algebraic complexity classes.
\newblock {\em J. Complex.}, 24(1):16--38, 2008.

\bibitem{MateescuD06}
Robert Mateescu and Rina Dechter.
\newblock Compiling constraint networks into {AND/OR} multi-valued decision
  diagrams (aomdds).
\newblock In Fr{\'{e}}d{\'{e}}ric Benhamou, editor, {\em Principles and
  Practice of Constraint Programming - {CP} 2006, 12th International
  Conference, {CP} 2006, Nantes, France, September 25-29, 2006, Proceedings},
  volume 4204 of {\em Lecture Notes in Computer Science}, pages 329--343.
  Springer, 2006.

\bibitem{MateescuDM08}
Robert Mateescu, Rina Dechter, and Radu Marinescu.
\newblock {AND/OR} multi-valued decision diagrams (aomdds) for graphical
  models.
\newblock {\em J. Artif. Intell. Res.}, 33:465--519, 2008.

\bibitem[MR95]{MotwaniR95}
Rajeev Motwani and Prabhakar Raghavan.
\newblock {\em Randomized Algorithms}.
\newblock Cambridge University Press, 1995.


\bibitem{MuiseMBH12}
Christian~J. Muise, Sheila~A. McIlraith, J.~Christopher Beck, and Eric~I. Hsu.
\newblock Dsharp: Fast d-dnnf compilation with sharp{SAT}.
\newblock In Leila Kosseim and Diana Inkpen, editors, {\em Advances in
  Artificial Intelligence - 25th Canadian Conference on Artificial
  Intelligence, Canadian {AI} 2012, Toronto, ON, Canada, May 28-30, 2012.
  Proceedings}, volume 7310 of {\em Lecture Notes in Computer Science}, pages
  356--361. Springer, 2012.

\bibitem{Olteanu16}
Dan Olteanu.
\newblock Factorized databases: {A} knowledge compilation perspective.
\newblock In Adnan Darwiche, editor, {\em Beyond NP, Papers from the 2016
  {AAAI} Workshop, Phoenix, Arizona, USA, February 12, 2016}, volume {WS-16-05}
  of {\em {AAAI} Technical Report}. {AAAI} Press, 2016.

\bibitem{OlteanuZ15}
Dan Olteanu and Jakub Z{\'{a}}vodn{\'{y}}.
\newblock Size bounds for factorised representations of query results.
\newblock {\em {ACM} Trans. Database Syst.}, 40(1):2:1--2:44, 2015.

\bibitem{OztokD15}
Umut Oztok and Adnan Darwiche.
\newblock A top-down compiler for sentential decision diagrams.
\newblock In Qiang Yang and Michael~J. Wooldridge, editors, {\em Proceedings of
  the Twenty-Fourth International Joint Conference on Artificial Intelligence,
  {IJCAI} 2015, Buenos Aires, Argentina, July 25-31, 2015}, pages 3141--3148.
  {AAAI} Press, 2015.

\bibitem{PipatsrisawatD08}
Knot Pipatsrisawat and Adnan Darwiche.
\newblock New compilation languages based on structured decomposability.
\newblock In Dieter Fox and Carla~P. Gomes, editors, {\em Proceedings of the
  Twenty-Third {AAAI} Conference on Artificial Intelligence, {AAAI} 2008,
  Chicago, Illinois, USA, July 13-17, 2008}, pages 517--522. {AAAI} Press,
  2008.

\bibitem{RazSY08}
Ran Raz, Amir Shpilka, and Amir Yehudayoff.
\newblock A lower bound for the size of syntactically multilinear arithmetic
  circuits.
\newblock {\em {SIAM} J. Comput.}, 38(4):1624--1647, 2008.

\bibitem{Schaefer78}
Thomas~J. Schaefer.
\newblock The complexity of satisfiability problems.
\newblock In Richard~J. Lipton, Walter~A. Burkhard, Walter~J. Savitch, Emily~P.
  Friedman, and Alfred~V. Aho, editors, {\em Proceedings of the 10th Annual
  {ACM} Symposium on Theory of Computing, May 1-3, 1978, San Diego, California,
  {USA}}, pages 216--226. {ACM}, 1978.

\bibitem{ThapperZ16}
Johan Thapper and Stanislav Zivn{\'{y}}.
\newblock The complexity of finite-valued csps.
\newblock {\em J. {ACM}}, 63(4):37:1--37:33, 2016.

\bibitem{Venkateswaran87}
H.~Venkateswaran.
\newblock Properties that characterize {LOGCFL}.
\newblock In Alfred~V. Aho, editor, {\em Proceedings of the 19th Annual {ACM}
  Symposium on Theory of Computing, 1987, New York, New York, {USA}}, pages
  141--150. {ACM}, 1987.

\bibitem{Wegener00}
Ingo Wegener.
\newblock {\em Branching Programs and Binary Decision Diagrams}.
\newblock {SIAM}, 2000.

\bibitem{Zhuk17}
Dmitriy Zhuk.
\newblock A proof of {CSP} dichotomy conjecture.
\newblock In Chris Umans, editor, {\em 58th {IEEE} Annual Symposium on
  Foundations of Computer Science, {FOCS} 2017, Berkeley, CA, USA, October
  15-17, 2017}, pages 331--342. {IEEE} Computer Society, 2017.

\end{thebibliography}
\end{document}